\PassOptionsToPackage{unicode}{hyperref}
\PassOptionsToPackage{hypertexnames=false}{hyperref} 
\PassOptionsToPackage{hyphens}{url}
\documentclass[
  12pt,
  a4paper,
]{article}
\makeatletter
\let\mc@listI\@listI \let\mc@listi\@listi \let\mc@listii\@listii \let\mc@listiii\@listiii
\edef\mc@parskip{\the\parskip}\edef\mc@parindent{\the\parindent}
\edef\mc@parfillskip{\the\parfillskip}\edef\mc@partopsep{\the\partopsep}
\let\mc@fpsfigure\fps@figure
\let\mc@topfraction\topfraction \let\mc@bottomfraction\bottomfraction
\let\mc@textfraction\textfraction \let\mc@floatpagefraction\floatpagefraction
\edef\mc@emergencystretch{\the\emergencystretch}
\edef\mc@secnumdepth{\the\c@secnumdepth}
\let\mc@thefigure\thefigure
\makeatother

\ifx\pdfminorversion\undefined\else\pdfminorversion=7\fi
\usepackage{amsmath}
\usepackage{amsthm} 
\usepackage{setspace}
\usepackage{iftex}
\usepackage[T1]{fontenc}
\usepackage[utf8]{inputenc}
\usepackage{textcomp} %
\usepackage{newtxtext}   %
\usepackage{newtxmath}   %
\IfFileExists{upquote.sty}{\usepackage{upquote}}{}
\IfFileExists{microtype.sty}{%
  \usepackage[]{microtype}
  \UseMicrotypeSet[protrusion]{basicmath} %
}{}
\makeatletter
\@ifundefined{KOMAClassName}{%
  \IfFileExists{parskip.sty}{%
    \usepackage{parskip}
  }{%
    \setlength{\parindent}{0pt}
    \setlength{\parskip}{6pt plus 2pt minus 1pt}}
}{%
  \KOMAoptions{parskip=half}}
\makeatother
\usepackage{xcolor}
\usepackage[top=20mm,left=25mm,right=25mm,heightrounded]{geometry}
\usepackage{graphicx}
\makeatletter
\def\maxwidth{\ifdim\Gin@nat@width>\linewidth\linewidth\else\Gin@nat@width\fi}
\def\maxheight{\ifdim\Gin@nat@height>\textheight\textheight\else\Gin@nat@height\fi}
\makeatother
\makeatletter\let\mc@Gin@ewidth\Gin@ewidth \let\mc@Gin@eheight\Gin@eheight\makeatother
\setkeys{Gin}{width=\maxwidth,height=\maxheight,keepaspectratio}
\makeatletter
\def\fps@figure{htbp}
\makeatother
\renewcommand{\topfraction}{0.9}    %
\renewcommand{\bottomfraction}{0.9} %
\renewcommand{\textfraction}{0.1}   %
\renewcommand{\floatpagefraction}{0.8} %
\usepackage[sort&compress,super,numbers]{natbib}
\makeatletter
\renewcommand\@biblabel[1]{#1.}
\makeatother
\usepackage{tikz}
\usepackage{float}
\usepackage{pifont}
\IfFileExists{bookmark.sty}{\usepackage{bookmark}}{\usepackage{hyperref}}
\IfFileExists{xurl.sty}{\usepackage{xurl}}{} %
\hypersetup{
  pdftitle={Cooperation thrives when moral judgment matches moral conduct},
  pdfauthor={Jorge Peña, Julian Garcia, Toby Handfield},
  colorlinks=true,
  allcolors=blue,
  pdfcreator={LaTeX via pandoc}}

\usepackage{todonotes}
\usepackage{bm}

\newcommand{\Jorge}{red!40} %
\newcommand{\Julian}{blue!40} %
\newcommand{\Toby}{yellow!40} %
\newcommand{\Fernando}{green!40} %
\newcommand{\QDset}{\mathcal{Q}_{\mathcal{D}}}
\newcommand{\QDisc}{\mathcal{Q}_{\text{DISC}}}
\newcommand{\QCDisc}{\mathcal{Q}_{\text{CDISC}}}
\newcommand{\QPDisc}{\mathcal{Q}_{\text{PDISC}}}
\newcommand{\QODisc}{\mathcal{Q}_{\text{ODISC}}}

\usepackage{color, colortbl}
\definecolor{LRed}{rgb}{1,.44,.46}
\definecolor{LBlue}{rgb}{.7,.8,1}
\definecolor{LPurple}{rgb}{.8,.2,.8}

\usepackage[english]{babel}
\usepackage[autolanguage]{numprint} %

\newcommand{\Rr}{R}
\newcommand{\Ss}{S}
\newcommand{\Tt}{T}
\newcommand{\Pp}{P}

\newcommand{\af}{a_1} %
\newcommand{\ac}{a_2} %
\newcommand{\rf}{r_1} %
\newcommand{\rc}{r_2} %

\title{Cooperation thrives when moral judgment matches moral conduct}

\usepackage{authblk}

\author[1,$\ast$]{Jorge Peña}
\author[2]{Julian Garcia}
\author[3]{Toby Handfield}
\affil[1]{Department of Social and Behavioral Sciences, Toulouse School of Economics, Toulouse, France}
\affil[2]{Department of Data Science and AI, Faculty of Information Technology, Monash University, Melbourne, Australia}
\affil[3]{School of Philosophical, Historical and International Studies, Monash University, Melbourne, Australia}
\affil[$\ast$]{Corresponding author: \href{mailto:jorge.pena@tse-fr.eu}{jorge.pena@tse-fr.eu}}

\date{}

\makeatletter\let\mc@k@caption\caption \let\mc@k@@caption\@caption \let\mc@k@makecaption\@makecaption\makeatother
\usepackage{subcaption}
\definecolor{SubduedGray}{rgb}{0.8, 0.8, 0.8}
\newtheorem{theorem}{Theorem}
\newtheorem{lemma}{Lemma}
\newtheorem{proposition}{Proposition}
\newtheorem{corollary}{Corollary}
\newtheorem{definition}{Definition}
\newtheorem{remark}{Remark}
\makeatletter
\newcommand\mc@xref[2]{\@ifundefined{r@#2}{}{\expandafter\let\csname r@#1\expandafter\endcsname\csname r@#2\endcsname}}
\AtBeginDocument{%
\mc@xref{SI-subsec:labeling-conventions}{subsec:labeling-conventions}%
\mc@xref{SI-C=G=1,D=B=0}{C=G=1,D=B=0}%
\mc@xref{SI-subsec:payoff-function}{subsec:payoff-function}%
\mc@xref{SI-payoff-matrix}{payoff-matrix}%
\mc@xref{SI-subsec:cooperative-dilemmas}{subsec:cooperative-dilemmas}%
\mc@xref{SI-R>P}{R>P}%
\mc@xref{SI-R>S}{R>S}%
\mc@xref{SI-T>P}{T>P}%
\mc@xref{SI-2R>T+S}{2R>T+S}%
\mc@xref{SI-T>R}{T>R}%
\mc@xref{SI-P>S}{P>S}%
\mc@xref{SI-PD}{PD}%
\mc@xref{SI-SG}{SG}%
\mc@xref{SI-SH}{SH}%
\mc@xref{SI-T>S}{T>S}%
\mc@xref{SI-normalized-PD}{normalized-PD}%
\mc@xref{SI-normalized-SG}{normalized-SG}%
\mc@xref{SI-normalized-SH}{normalized-SH}%
\mc@xref{SI-subsec:shorthand}{subsec:shorthand}%
\mc@xref{SI-dr1r2-shorthand}{dr1r2-shorthand}%
\mc@xref{SI-tildedr1r2-shorthand}{tildedr1r2-shorthand}%
\mc@xref{SI-notdr1r2-shorthand}{notdr1r2-shorthand}%
\mc@xref{SI-subsec:norm-counting}{subsec:norm-counting}%
\mc@xref{SI-prop:norm-counting}{prop:norm-counting}%
\mc@xref{SI-Nk-recursion}{Nk-recursion}%
\mc@xref{SI-Ok-cardinality}{Ok-cardinality}%
\mc@xref{SI-subsec:resident-dynamics}{subsec:resident-dynamics}%
\mc@xref{SI-deltaij}{deltaij}%
\mc@xref{SI-gdot}{gdot}%
\mc@xref{SI-g*}{g*}%
\mc@xref{SI-abc}{abc}%
\mc@xref{SI-subsec:cooperation-index}{subsec:cooperation-index}%
\mc@xref{SI-coop-index}{coop-index}%
\mc@xref{SI-subsec:mutant-dynamics}{subsec:mutant-dynamics}%
\mc@xref{SI-tildedeltaij}{tildedeltaij}%
\mc@xref{SI-hdot}{hdot}%
\mc@xref{SI-h*}{h*}%
\mc@xref{SI-subsec:payoffs-stability}{subsec:payoffs-stability}%
\mc@xref{SI-omegaij}{omegaij}%
\mc@xref{SI-w}{w}%
\mc@xref{SI-tildeomega}{tildeomega}%
\mc@xref{SI-tildew}{tildew}%
\mc@xref{SI-evol-stab-cond}{evol-stab-cond}%
\mc@xref{SI-subsec:stability-indices}{subsec:stability-indices}%
\mc@xref{SI-avg-sigma}{avg-sigma}%
\mc@xref{SI-sigma-gamma}{sigma-gamma}%
\mc@xref{SI-shoelace}{shoelace}%
\mc@xref{SI-A-PD}{A-PD}%
\mc@xref{SI-A-SG}{A-SG}%
\mc@xref{SI-A-SH}{A-SH}%
\mc@xref{SI-A-SG-R+P>2S}{A-SG-R+P>2S}%
\mc@xref{SI-fig:lemma-regions}{fig:lemma-regions}%
\mc@xref{SI-lem:sigma-SG-R+P>2S}{lem:sigma-SG-R+P>2S}%
\mc@xref{SI-A-PD-2R>T+P}{A-PD-2R>T+P}%
\mc@xref{SI-lem:sigma-PD-2R>T+P}{lem:sigma-PD-2R>T+P}%
\mc@xref{SI-subsubsec:weighted-sigma}{subsubsec:weighted-sigma}%
\mc@xref{SI-subsec:def-conditional-discriminating}{subsec:def-conditional-discriminating}%
\mc@xref{SI-def:cond}{def:cond}%
\mc@xref{SI-def:disc}{def:disc}%
\mc@xref{SI-tab:strategies}{tab:strategies}%
\mc@xref{SI-def:contrite}{def:contrite}%
\mc@xref{SI-def:conditional}{def:conditional}%
\mc@xref{SI-def:discriminating}{def:discriminating}%
\mc@xref{SI-def:contrite-system}{def:contrite-system}%
\mc@xref{SI-subsec:def-self-approving-righteous}{subsec:def-self-approving-righteous}%
\mc@xref{SI-def:self-approving}{def:self-approving}%
\mc@xref{SI-self-approving-cond}{self-approving-cond}%
\mc@xref{SI-def:self-disapproving}{def:self-disapproving}%
\mc@xref{SI-self-disapproving-cond}{self-disapproving-cond}%
\mc@xref{SI-def:self-cooperative}{def:self-cooperative}%
\mc@xref{SI-def:righteous}{def:righteous}%
\mc@xref{SI-righteous-set}{righteous-set}%
\mc@xref{SI-def:universally-stable}{def:universally-stable}%
\mc@xref{SI-subsec:def-consistent}{subsec:def-consistent}%
\mc@xref{SI-def:consistent}{def:consistent}%
\mc@xref{SI-consistent-cond}{si-consistent-cond}%
\mc@xref{SI-consistent-cond-G}{si-consistent-cond-G}%
\mc@xref{SI-consistent-cond-B}{si-consistent-cond-B}%
\mc@xref{SI-rem:consistent-equivalence}{rem:consistent-equivalence}%
\mc@xref{SI-def:consistent-disc}{def:consistent-disc}%
\mc@xref{SI-QDisc-set}{QDisc-set}%
\mc@xref{SI-QCDisc-set}{QCDisc-set}%
\mc@xref{SI-QODisc-set}{QODisc-set}%
\mc@xref{SI-QPDisc-set}{QPDisc-set}%
\mc@xref{SI-subsec:leading8}{subsec:leading8}%
\mc@xref{SI-leading-eight-assessment}{leading-eight-assessment}%
\mc@xref{SI-dBGD=dGGD=B}{dBGD=dGGD=B}%
\mc@xref{SI-dBGC=dGGC=G}{dBGC=dGGC=G}%
\mc@xref{SI-dGBD=G}{dGBD=G}%
\mc@xref{SI-fig:consistent-discriminating}{fig:consistent-discriminating}%
\mc@xref{SI-LI-set}{LI-set}%
\mc@xref{SI-L2}{L2}%
\mc@xref{SI-LII-set}{LII-set}%
\mc@xref{SI-fig:leading-eight}{fig:leading-eight}%
\mc@xref{SI-L6}{L6}%
\mc@xref{SI-LIII-set}{LIII-set}%
\mc@xref{SI-leading-eight}{leading-eight}%
\mc@xref{SI-subsec:mirror-systems}{subsec:mirror-systems}%
\mc@xref{SI-def:mirror-system}{def:mirror-system}%
\mc@xref{SI-mirror-def}{mirror-def}%
\mc@xref{SI-mirror-def-p}{mirror-def-p}%
\mc@xref{SI-mirror-def-d}{mirror-def-d}%
\mc@xref{SI-lem:mirror-reputation}{lem:mirror-reputation}%
\mc@xref{SI-mirror-delta}{mirror-delta}%
\mc@xref{SI-mirror-g-h}{mirror-g-h}%
\mc@xref{SI-prop:mirror-invariance}{prop:mirror-invariance}%
\mc@xref{SI-cor:mirror-classes}{cor:mirror-classes}%
\mc@xref{SI-subsec:self-approving}{subsec:self-approving}%
\mc@xref{SI-prop:self-approving}{prop:self-approving}%
\mc@xref{SI-dBB+dGG=dBG+dGB}{dBB+dGG=dBG+dGB}%
\mc@xref{SI-limg_linear}{limg_linear}%
\mc@xref{SI-dBB+dGGneqdBG+dGB}{dBB+dGGneqdBG+dGB}%
\mc@xref{SI-limg_quadratic}{limg_quadratic}%
\mc@xref{SI-limg_quadratic_1}{limg_quadratic_1}%
\mc@xref{SI-limg_quadratic_2}{limg_quadratic_2}%
\mc@xref{SI-lem:partition-SA}{lem:partition-SA}%
\mc@xref{SI-cond-SA-1**1}{cond-SA-1**1}%
\mc@xref{SI-g-SA-1**1}{g-SA-1**1}%
\mc@xref{SI-tab:limg}{tab:limg}%
\mc@xref{SI-cond-SA-0111}{cond-SA-0111}%
\mc@xref{SI-g-SA-0111}{g-SA-0111}%
\mc@xref{SI-cond-SA-1111}{cond-SA-1111}%
\mc@xref{SI-g-SA-1111}{g-SA-1111}%
\mc@xref{SI-subsec:self-cooperative-righteous}{subsec:self-cooperative-righteous}%
\mc@xref{SI-lem:kappa-q}{lem:kappa-q}%
\mc@xref{SI-q}{q}%
\mc@xref{SI-kappa-q}{kappa-q}%
\mc@xref{SI-lim-kappa-q}{lim-kappa-q}%
\mc@xref{SI-tab:q}{tab:q}%
\mc@xref{SI-prop:self-cooperative}{prop:self-cooperative}%
\mc@xref{SI-gamma}{gamma}%
\mc@xref{SI-cor:righteous}{cor:righteous}%
\mc@xref{SI-righteous-set-simple}{righteous-set-simple}%
\mc@xref{SI-prop:cardinality-R}{prop:cardinality-R}%
\mc@xref{SI-subsec:consistent-systems}{subsec:consistent-systems}%
\mc@xref{SI-prop:cons->SA}{prop:cons->SA}%
\mc@xref{SI-prop:consistentdisc->righteous}{prop:consistentdisc->righteous}%
\mc@xref{SI-prop:consistent-rep-dominance}{prop:consistent-rep-dominance}%
\mc@xref{SI-tilde-delta-consistent}{tilde-delta-consistent}%
\mc@xref{SI-h-fixed-point}{h-fixed-point}%
\mc@xref{SI-alpha-beta}{alpha-beta}%
\mc@xref{SI-alpha-def}{alpha-def}%
\mc@xref{SI-beta-def}{beta-def}%
\mc@xref{SI-rho-identity}{rho-identity}%
\mc@xref{SI-grouping-identity}{grouping-identity}%
\mc@xref{SI-D-expansion}{D-expansion}%
\mc@xref{SI-prop:inconsistent-rep-nondominance}{prop:inconsistent-rep-nondominance}%
\mc@xref{SI-cor:rep-consistency}{cor:rep-consistency}%
\mc@xref{SI-subsec:leading-eight-righteous}{subsec:leading-eight-righteous}%
\mc@xref{SI-dBG=dGB=dGG=G}{dBG=dGB=dGG=G}%
\mc@xref{SI-prop:leading-eight-righteous}{prop:leading-eight-righteous}%
\mc@xref{SI-subsec:which-righteous-are-stable}{subsec:which-righteous-are-stable}%
\mc@xref{SI-subsubsec:preliminaries-stability}{subsubsec:preliminaries-stability}%
\mc@xref{SI-gchi-def}{gchi-def}%
\mc@xref{SI-hchi-def}{hchi-def}%
\mc@xref{SI-gchi-explicit}{gchi-explicit}%
\mc@xref{SI-fig:venn_analytic_result}{fig:venn_analytic_result}%
\mc@xref{SI-wchi-def}{wchi-def}%
\mc@xref{SI-tildewchi-def}{tildewchi-def}%
\mc@xref{SI-wchi-explicit}{wchi-explicit}%
\mc@xref{SI-tildewchi-explicit}{tildewchi-explicit}%
\mc@xref{SI-Deltaw-def}{Deltaw-def}%
\mc@xref{SI-pGG=C}{pGG=C}%
\mc@xref{SI-dGG=G}{dGG=G}%
\mc@xref{SI-subsubsec:invasion-ALLD}{subsubsec:invasion-ALLD}%
\mc@xref{SI-pm-ALLD}{pm-ALLD}%
\mc@xref{SI-prop:stability-against-ALLD-dGB=G}{prop:stability-against-ALLD-dGB=G}%
\mc@xref{SI-prop:stability-against-ALLD-dGB=B}{prop:stability-against-ALLD-dGB=B}%
\mc@xref{SI-limchi0w}{limchi0w}%
\mc@xref{SI-tab:limtildew-righteous}{tab:limtildew-righteous}%
\mc@xref{SI-tildewij-ALLD}{tildewij-ALLD}%
\mc@xref{SI-pGB=D}{pGB=D}%
\mc@xref{SI-limchi0tildew=wBGm}{limchi0tildew=wBGm}%
\mc@xref{SI-dGBm=dGB}{dGBm=dGB}%
\mc@xref{SI-dBGm=B}{dBGm=B}%
\mc@xref{SI-dGGm=B}{dGGm=B}%
\mc@xref{SI-cor:pBG=C}{cor:pBG=C}%
\mc@xref{SI-R'}{R'}%
\mc@xref{SI-prop:L-subset-R'}{prop:L-subset-R'}%
\mc@xref{SI-prop:QD-R'-membership}{prop:QD-R'-membership}%
\mc@xref{SI-prop:cardinality-R'}{prop:cardinality-R'}%
\mc@xref{SI-tab:cardinality-R'}{tab:cardinality-R'}%
\mc@xref{SI-subsubsec:invasion-ALLC}{subsubsec:invasion-ALLC}%
\mc@xref{SI-pm-ALLC}{pm-ALLC}%
\mc@xref{SI-prop:stability-against-ALLC-dGB=G}{prop:stability-against-ALLC-dGB=G}%
\mc@xref{SI-prop:stability-against-ALLC-dBG=B}{prop:stability-against-ALLC-dBG=B}%
\mc@xref{SI-prop:stability-against-ALLC-dBG=G}{prop:stability-against-ALLC-dBG=G}%
\mc@xref{SI-wBG-disc}{wBG-disc}%
\mc@xref{SI-dGGm=G-ALLC}{dGGm=G-ALLC}%
\mc@xref{SI-wchi-expansion-ALLC-AGXXG}{wchi-expansion-ALLC-AGXXG}%
\mc@xref{SI-tab:expansion-tildew-ALLC-AGXXG}{tab:expansion-tildew-ALLC-AGXXG}%
\mc@xref{SI-wBG+wGB-wBGm-wGBm}{wBG+wGB-wBGm-wGBm}%
\mc@xref{SI-wBG+wGB-R-wGBm}{wBG+wGB-R-wGBm}%
\mc@xref{SI-dBGm=B-ALLC}{dBGm=B-ALLC}%
\mc@xref{SI-tab:expansion-tildew-ALLC-A*GGG}{tab:expansion-tildew-ALLC-A*GGG}%
\mc@xref{SI-R''}{R''}%
\mc@xref{SI-prop:L-R''-membership}{prop:L-R''-membership}%
\mc@xref{SI-prop:QD-R''-membership}{prop:QD-R''-membership}%
\mc@xref{SI-prop:cardinality-R''}{prop:cardinality-R''}%
\mc@xref{SI-subsubsec:invasion-disc}{subsubsec:invasion-disc}%
\mc@xref{SI-pm-disc}{pm-disc}%
\mc@xref{SI-tab:cardinality-R''}{tab:cardinality-R''}%
\mc@xref{SI-prop:dBG=B-dBB=B-invaded}{prop:dBG=B-dBB=B-invaded}%
\mc@xref{SI-prop:stability-against-alternative-disc}{prop:stability-against-alternative-disc}%
\mc@xref{SI-wGGm-wGBm-discs}{wGGm-wGBm-discs}%
\mc@xref{SI-dGGm=G-discs}{dGGm=G-discs}%
\mc@xref{SI-dGBm=G-discs}{dGBm=G-discs}%
\mc@xref{SI-wchi-expansion-GBGG}{wchi-expansion-GBGG}%
\mc@xref{SI-tab:Deltaw-disc-ABGGG}{tab:Deltaw-disc-ABGGG}%
\mc@xref{SI-dBGm=B-discs-AGBGG}{dBGm=B-discs-AGBGG}%
\mc@xref{SI-wBG-wBGm-discs-AGBGG}{wBG-wBGm-discs-AGBGG}%
\mc@xref{SI-dBGm=G-discs-A*GGG}{dBGm=G-discs-A*GGG}%
\mc@xref{SI-wBGm-discs-A*GGG}{wBGm-discs-A*GGG}%
\mc@xref{SI-wchi-expansion-BGGG}{wchi-expansion-BGGG}%
\mc@xref{SI-tab:Deltaw-disc-AGGGG}{tab:Deltaw-disc-AGGGG}%
\mc@xref{SI-wchi-expansion-GGGG}{wchi-expansion-GGGG}%
\mc@xref{SI-Rset'''=QCDisc}{Rset'''=QCDisc}%
\mc@xref{SI-prop:L-R'''-membership}{prop:L-R'''-membership}%
\mc@xref{SI-prop:QD-R'''-membership}{prop:QD-R'''-membership}%
\mc@xref{SI-subsubsec:invasion-other-mutants}{subsubsec:invasion-other-mutants}%
\mc@xref{SI-thm:universal-stability}{thm:universal-stability}%
\mc@xref{SI-tab:stability-C-CDISC}{tab:stability-C-CDISC}%
\mc@xref{SI-subsec:stab-anal-disc}{subsec:stab-anal-disc}%
\mc@xref{SI-subsubsec:CDISC-asymptotics-stability}{subsubsec:CDISC-asymptotics-stability}%
\mc@xref{SI-prop:stability-QCDisc}{prop:stability-QCDisc}%
\mc@xref{SI-subsubsec:DISC-asymptotics-stability}{subsubsec:DISC-asymptotics-stability}%
\mc@xref{SI-prop:stability-QDisc}{prop:stability-QDisc}%
\mc@xref{SI-tab:stability-C-DISC}{tab:stability-C-DISC}%
\mc@xref{SI-subsubsec:ODISC-asymptotics-stability}{subsubsec:ODISC-asymptotics-stability}%
\mc@xref{SI-prop:stability-QODisc}{prop:stability-QODisc}%
\mc@xref{SI-2R>T+P}{2R>T+P}%
\mc@xref{SI-tab:stability-C-ODISC}{tab:stability-C-ODISC}%
\mc@xref{SI-subsubsec:PDISC-asymptotics-stability}{subsubsec:PDISC-asymptotics-stability}%
\mc@xref{SI-prop:stability-QPDisc}{prop:stability-QPDisc}%
\mc@xref{SI-tab:stability-C-PDISC}{tab:stability-C-PDISC}%
\mc@xref{SI-tab:consistent-disc-summary}{tab:consistent-disc-summary}%
\mc@xref{SI-subsec:stab-anal-leading-eight}{subsec:stab-anal-leading-eight}%
\mc@xref{SI-tab:stability-L1}{tab:stability-L1}%
\mc@xref{SI-prop:stability-L1}{prop:stability-L1}%
\mc@xref{SI-prop:stability-L2}{prop:stability-L2}%
\mc@xref{SI-prop:stability-L3}{prop:stability-L3}%
\mc@xref{SI-tab:stability-L3}{tab:stability-L3}%
\mc@xref{SI-prop:stability-L4}{prop:stability-L4}%
\mc@xref{SI-tab:stability-L4}{tab:stability-L4}%
\mc@xref{SI-tab:stability-L5}{tab:stability-L5}%
\mc@xref{SI-prop:stability-L5}{prop:stability-L5}%
\mc@xref{SI-prop:stability-L6}{prop:stability-L6}%
\mc@xref{SI-prop:stability-L7}{prop:stability-L7}%
\mc@xref{SI-tab:stability-L7}{tab:stability-L7}%
\mc@xref{SI-prop:stability-L8}{prop:stability-L8}%
\mc@xref{SI-tab:stability-L8}{tab:stability-L8}%
\mc@xref{SI-tab:leading-eight-summary}{tab:leading-eight-summary}%
\mc@xref{SI-sec:supp-numerical}{sec:supp-numerical}%
\mc@xref{SI-subsec:mirror-classes}{subsec:mirror-classes}%
\mc@xref{SI-prop:mirror-consistent}{prop:mirror-consistent}%
\mc@xref{SI-mirror-consistent}{mirror-consistent}%
\mc@xref{SI-mirror-consistent-B}{mirror-consistent-B}%
\mc@xref{SI-mirror-consistent-G}{mirror-consistent-G}%
\mc@xref{SI-prop:selfmirror-count}{prop:selfmirror-count}%
\mc@xref{SI-selfmirror-cond}{selfmirror-cond}%
\mc@xref{SI-selfmirror-cond-p}{selfmirror-cond-p}%
\mc@xref{SI-selfmirror-cond-d}{selfmirror-cond-d}%
\mc@xref{SI-prop:mirror-count}{prop:mirror-count}%
\mc@xref{SI-mirror-class-count}{mirror-class-count}%
\mc@xref{SI-def:mirror-representative}{def:mirror-representative}%
\mc@xref{SI-subsec:numerical-pipeline}{subsec:numerical-pipeline}%
\mc@xref{SI-subsec:alternative-errors}{subsec:alternative-errors}%
\mc@xref{SI-fig:alternative-errors}{fig:alternative-errors}%
\mc@xref{SI-fig:alternative_errors_zoomed_scatter}{fig:alternative_errors_zoomed_scatter}%
\mc@xref{SI-fig:nojitter-chi001-eps00}{fig:nojitter-chi001-eps00}%
\mc@xref{SI-fig:nojitter-chi001-eps01}{fig:nojitter-chi001-eps01}%
\mc@xref{SI-fig:nojitter-chi001-eps02}{fig:nojitter-chi001-eps02}%
\mc@xref{SI-fig:nojitter-chi001-eps1}{fig:nojitter-chi001-eps1}%
\mc@xref{SI-fig:nojitter-chi002-eps00}{fig:nojitter-chi002-eps00}%
\mc@xref{SI-fig:nojitter-chi002-eps01}{fig:nojitter-chi002-eps01}%
\mc@xref{SI-fig:nojitter-chi002-eps02}{fig:nojitter-chi002-eps02}%
\mc@xref{SI-fig:nojitter-chi002-eps1}{fig:nojitter-chi002-eps1}%
\mc@xref{SI-fig:nojitter-chi005-eps00}{fig:nojitter-chi005-eps00}%
\mc@xref{SI-fig:nojitter-chi005-eps01}{fig:nojitter-chi005-eps01}%
\mc@xref{SI-fig:nojitter-chi005-eps02}{fig:nojitter-chi005-eps02}%
\mc@xref{SI-fig:nojitter-chi005-eps1}{fig:nojitter-chi005-eps1}%
\mc@xref{SI-fig:nojitter-chi01-eps00}{fig:nojitter-chi01-eps00}%
\mc@xref{SI-fig:nojitter-chi01-eps01}{fig:nojitter-chi01-eps01}%
\mc@xref{SI-fig:nojitter-chi01-eps02}{fig:nojitter-chi01-eps02}%
\mc@xref{SI-fig:nojitter-chi01-eps1}{fig:nojitter-chi01-eps1}%
\mc@xref{SI-subsec:disaggregated-indices}{subsec:disaggregated-indices}%
\mc@xref{SI-fig:disaggregated_pd}{fig:disaggregated_pd}%
\mc@xref{SI-fig:disaggregated_sg}{fig:disaggregated_sg}%
\mc@xref{SI-fig:disaggregated_sh}{fig:disaggregated_sh}%
\mc@xref{SI-subsec:area-weighted}{subsec:area-weighted}%
\mc@xref{SI-fig:scatter_zoom_area}{fig:scatter_zoom_area}%
\mc@xref{SI-fig:scatter_panel_area}{fig:scatter_panel_area}%
\mc@xref{MT-fig:overview}{fig:overview}%
\mc@xref{MT-fig:scatter}{fig:scatter}%
\mc@xref{MT-fig:norms}{fig:norms}%
\mc@xref{MT-fig:stability_indices}{fig:stability_indices}%
\mc@xref{MT-methods}{methods}%
\mc@xref{MT-consistent-cond}{consistent-cond}%
\mc@xref{MT-consistent-cond-G}{consistent-cond-G}%
\mc@xref{MT-consistent-cond-B}{consistent-cond-B}%
\mc@xref{MT-stability-condition}{stability-condition}%
\mc@xref{MT-code-availability}{code-availability}%
\mc@xref{MT-fig:space}{fig:space}%
\mc@xref{MT-fig:payoff_space}{fig:payoff_space}%
\mc@xref{MT-fig:cdisc_invades}{fig:cdisc_invades}%
\mc@xref{MT-fig:disc_cannot_invade}{fig:disc_cannot_invade}%
}
\makeatother
\makeatletter
\let\mc@maketitle\maketitle \let\mc@@maketitle\@maketitle \let\mc@author\@author
\let\mc@title\title \let\mc@date\date \let\mc@and\and \let\mc@thanks\thanks
\let\mc@fnumfigure\fnum@figure
\makeatother
\begin{document}
\makeatletter\let\mc@c@caption\caption \let\mc@c@@caption\@caption \let\mc@c@makecaption\@makecaption
\let\caption\mc@k@caption \let\@caption\mc@k@@caption \let\@makecaption\mc@k@makecaption\makeatother 
\addtocontents{toc}{\protect\setcounter{tocdepth}{-10}}

\maketitle

\setstretch{1.4}
\frenchspacing

{\bfseries A leading explanation for why humans cooperate with strangers is indirect reciprocity: when conduct is observed, individuals earn good reputations by cooperating with the deserving\cite{alexander:book:1987,nowak:Nature:1998}.
Previous research has identified many `moral systems' that sustain cooperation through indirect reciprocity -- each pairing a strategy for how to act with a norm for how to judge others -- yet what makes them succeed, and whether that success generalizes beyond the donation game, has remained unclear\cite{ohtsuki:JTB:2004a,nowak:Nature:2005,Nakamura2014}.
Here we evaluate all $\numprint{1048576}$ moral systems whose norms condition on both players' reputations and actions, across every two-player social dilemma.
We find that every highly stable, cooperative system pairs a `discriminating' strategy -- one that cooperates with the good and defects against the bad when in good standing -- with a `consistent' norm that deems good exactly those who act as the strategy prescribes.
Such systems `practice what they preach' and preach what they practice.
In the limit of small errors, only one strategy paired with a consistent norm remains stable across all social dilemmas: the `contrite discriminator', which cooperates to make amends when its standing is bad.
Our results unify decades of research on indirect reciprocity, revealing moral consistency as the common ingredient of every stable, cooperative moral system.
}

Humans cooperate on a scale unmatched among other animals, routinely helping people they may never encounter again\cite{fehr_nature_2003,boyd_large-scale_2022}.
A central explanation is reciprocity.
Under direct reciprocity, individuals help those who have helped them\cite{trivers_evolution_1971}; under indirect reciprocity, they help those who have helped others\cite{boyd:SN:1989,kandori_social_1992,sugden1986economics}, so that cooperation is rewarded not by second parties but by third parties who have observed or heard about one's conduct\cite{alexander:book:1987}.
A reputation aggregates an individual's past conduct into a standing that others can act on without any direct experience of them\cite{nowak:Nature:2005}.
What turns such standings into cooperation is the `moral system' adopted by society at large: the rules of behavior and approval about ``what is right and wrong''\cite{alexander:book:1987}, typically formalized as a strategy paired with a social norm.

Prior work has uncovered an array of moral systems capable of sustaining cooperation through indirect reciprocity, the best known being the `leading eight' first identified by Ohtsuki and Iwasa\cite{ohtsuki:JTB:2004a} and studied extensively since\cite{ohtsuki:JTB:2006a,ohtsuki:JTB:2007,pacheco:PlosCB:2006,schmid_quantitative_2023,okada_review_2020}.
Yet this body of work has three important limitations.
First, most of what we know about successful moral systems comes from a single, highly simplified interaction: the donation game, in which one player decides whether to pay a personal cost to confer a larger benefit on the other\cite{nowak:Nature:2005}.
This is also the setting for almost all recent theoretical work, including studies of higher-order norms\cite{santos_social_2018}, the evolution of norms\cite{xu_cooperation_2019,kessinger_evolution_2023}, and indirect reciprocity under private reputations\cite{hilbe_indirect_2018,perret:SciRep:2021,fujimoto_who_2024,michel-mata:Nature:2024}.
Real-world human cooperation, by contrast, unfolds across a far wider range of situations.
These certainly include prisoner's dilemmas -- of which the donation game is a special case -- where cooperation is dominated. But they also include snowdrift (or chicken) games, where the best response to defection is to cooperate, and stag hunts (or assurance games), where each player is best off matching the other's action\cite{Pena2023,skyrms2004stag,doebeli_models_2005}.
Second, previous analyses have each searched only a small subset of the possible social norms, ignoring some of the information available to observers in a single interaction\cite{ohtsuki:JTB:2004a,Nakamura2014}.
Third, and most fundamentally, we lack a unifying account of why some moral systems succeed.
Existing characterizations are disparate and tied to particular settings, leaving no parsimonious explanation for what separates the moral systems that sustain cooperation from those that fail.
Even the study that covered the largest range of games concluded that its successful norms ``have different principles in different dilemmas for suppressing cheaters''\cite{Nakamura2014}.

Here we address all three limitations at once.
As in most models of indirect reciprocity\cite{ohtsuki:JTB:2004a,nowak:Nature:2005}, we take actions and reputations to be binary: an individual either cooperates or defects, and is judged either good or bad.
We study the complete space of social norms that condition on the two players' reputations and actions.
Combined with the $2^{4}=16$ possible strategies, this yields $2^{20}=\numprint{1048576}$ moral systems, which we evaluate across the entire set of symmetric two-player cooperative dilemmas where mutual cooperation is socially optimal\cite{Pena2023}.
Our analysis reveals a single organizing principle behind every successful moral system: whatever the dilemma, those systems that best sustain cooperation pair a `discriminating' strategy, which conditions cooperation on the partner's reputation, with a `consistent' social norm, which awards a good reputation to exactly those who act as the strategy dictates.
Among these consistent systems, those built on a `contrite' strategy -- one that cooperates to make amends when its own standing is bad -- stand out as the only moral systems that remain both stable and cooperative across every class of dilemma at once.

\section{Results}

\subsection{Mapping the full space of moral systems}

Modeling indirect reciprocity requires three ingredients (\hyperref[methods]{Methods}): a payoff matrix, which defines the rewards and costs of cooperating and defecting (Fig.~\ref{fig:overview}a); a strategy (or action rule), which specifies when to cooperate and when to defect (Fig.~\ref{fig:overview}b); and a social norm (or assessment rule), which tells observers when to assign a good or a bad reputation (Fig.~\ref{fig:overview}c).
Strategies and social norms can be classified by the amount and nature of the information they take as input.
For instance, early models considered a moral system where the strategy (`discriminator' or DISC; Fig.~\ref{fig:overview}b) prescribes cooperation if the co-player's reputation is good and defection if the co-player's reputation is bad, and the social norm (`Scoring'; Fig.~\ref{fig:overview}c) assigns a good reputation if the player cooperated and a bad reputation if the player defected\cite{nowak:Nature:1998}.
Both the strategy and the social norm of this moral system require one bit of information, and are thus said to be of first order.
Although this moral system can sustain cooperation against unconditional defectors, it is not robust to invasion by other strategies\cite{leimar:PRSB:2001}.

\begin{figure}[t]
\centering
\includegraphics[width=\textwidth]{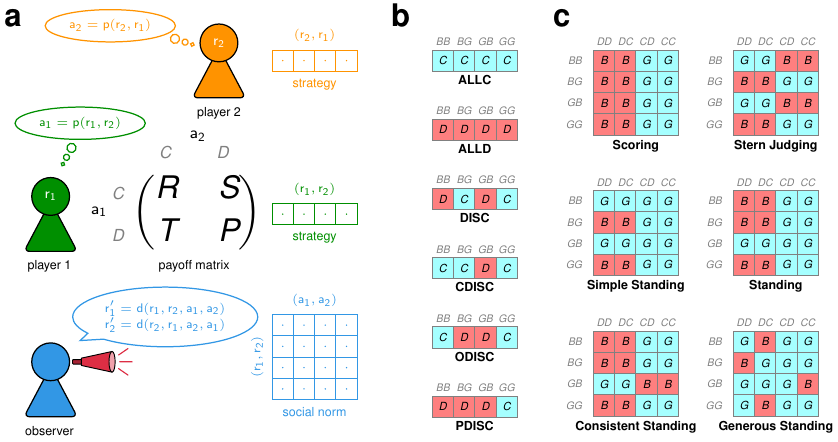}
\caption{{\bf Modeling indirect reciprocity.}
{\bf a}, Player 1 (\emph{green pawn}, reputation $r_1 \in \left\{G,B\right\}$, good or bad) and player 2 (\emph{orange pawn}, reputation $r_2$) play a game in which each can cooperate ($C$) or defect ($D$).
Player 1 chooses her action $a_1 = p(r_1,r_2)$ according to her strategy $p$ (\emph{green table}); player 2 likewise chooses $a_2 = p(r_2,r_1)$ (\emph{orange table}).
The players act simultaneously and obtain the payoffs shown in the matrix (each entry gives the row player's payoff), whose ordering defines the class of cooperative dilemma: prisoner's dilemma, snowdrift game, or stag hunt (\hyperref[methods]{Methods}; \ref{fig:payoff_space}).
An observer (\emph{blue pawn}) witnesses the interaction and assigns new reputations -- $r_1' = d(r_1,r_2,a_1,a_2)$ and $r_2' = d(r_2,r_1,a_2,a_1)$ -- according to the social norm $d$ (\emph{blue table}).
Assessment is public: reputations are shared by the whole population (\hyperref[methods]{Methods}).
\mbox{{\bf b}, A} strategy is a table whose entries give player 1's action for each combination of the players' reputations ($r_1$, $r_2$); shown are the unconditional cooperator (ALLC) and defector (ALLD), and the four discriminating strategies DISC, CDISC, ODISC, and PDISC.
\mbox{{\bf c}, A} social norm is a table whose entries give player 1's new reputation for each combination of reputations ($r_1$, $r_2$) and actions ($a_1$, $a_2$); shown are Scoring, Stern Judging, Simple Standing, Standing, Consistent Standing, and Generous Standing.
}
\label{fig:overview}
\end{figure}

Ohtsuki and Iwasa\cite{ohtsuki:JTB:2004a} were the first to systematically study all moral systems obtained by pairing each of the $2^{4}=16$ possible strategies of up to second order (based on two bits of information, namely the reputations of the focal player and the co-player) with each of the $2^{8}$ possible social norms of up to third order (based on three bits of information, namely the action of the player being assessed and the reputations of player and co-player).
Out of these $2^{12}=\numprint{4096}$ moral systems, only eight -- known as the `leading eight' -- were found both to sustain high levels of cooperation and to be evolutionarily stable when social interactions were modeled as a donation game.
Two examples of the leading eight are L6 and L1 (labels adopted from\cite{hilbe_indirect_2018}).
L6 employs DISC together with a second-order social norm (`Stern Judging'; Fig.~\ref{fig:overview}c) that calls an individual good only if they cooperated with a good partner or defected against a bad one.
In contrast, L1 uses a second-order strategy that takes the individual's own reputation into account, prescribing cooperation when the co-player is good or when one is oneself bad.
This prescription to cooperate when in bad standing can be read as a form of contrition that attempts to repair one's damaged reputation\cite{ohtsuki:JTB:2006a}; we therefore call
this strategy `contrite discriminator' (CDISC; Fig.~\ref{fig:overview}b).
L1 pairs CDISC with the third-order social norm `Standing' (Fig.~\ref{fig:overview}c), which also conditions on the reputation of the individual being judged: good individuals keep their good reputation unless they defect against a good partner, while bad individuals turn good if and only if they cooperate, whatever the partner's reputation.

Within indirect reciprocity, social dilemmas beyond the donation game have rarely been studied.
A notable exception is the study by Nakamura and Ohtsuki\cite{Nakamura2014}, who analyzed indirect reciprocity across prisoner's dilemmas, snowdrift games, and stag hunts, and identified which social norms stabilize cooperation in each game.
Building on their work, we ask whether a single moral system can sustain cooperation across the entire range of cooperative dilemmas.

Prior analyses differ in how much of the information available in a single interaction their social norms use (\ref{fig:space}).
The norms studied by Ohtsuki and Iwasa\cite{ohtsuki:JTB:2004a,ohtsuki:JTB:2006a} condition on the focal player's action and on both players' reputations, but never on the co-player's action; the norms of Nakamura and Ohtsuki\cite{Nakamura2014} condition on both players' actions and the co-player's reputation, but never on the focal player's own reputation.
Norms that use all four pieces of information available in a single interaction -- the two players' reputations and actions -- have so far been left unexplored\cite{Nakamura2014}. %
We study the complete fourth-order space of $2^{16}=\numprint{65536}$ social norms, which, paired with the $16$ strategies of up to second order, yields the $2^{20}=\numprint{1048576}$ moral systems we evaluate across the entire set of symmetric $2 \times 2$ cooperative dilemmas where mutual cooperation is socially optimal (\ref{fig:payoff_space}).
Throughout, we adopt a public-assessment model of reputation: a single observer witnesses each interaction and judges both players, with the reputations they assign shared by the whole community, so that everyone comes to hold the same view of any given individual\cite{ohtsuki:JTB:2004a,murase_indirect_2024}.

\subsection{Highly stable systems are consistent}

We characterize each moral system by two quantities: its cooperation index and its stability index (see \hyperref[methods]{Methods}).
The cooperation index ($\kappa$) is the probability that a randomly chosen resident cooperates at equilibrium.
Because this probability does not depend on the payoffs, it is the same for every game we study.
The stability index for a given dilemma class is the fraction of payoff space over which the resident strategy resists invasion by any alternative strategy, holding fixed the social norm.
We then take the simple average across the three dilemma classes to obtain the average stability index ($\sigma$).
We report results for low error rates in the execution of cooperation ($\epsilon=0.02$) and in the assignment of reputations ($\chi=0.02$), but our conclusions are robust to other small error rates and to alternative constructions of the stability index (SI Figs.~\ref{SI-fig:alternative-errors}, \ref{SI-fig:scatter_zoom_area}, \ref{SI-fig:scatter_panel_area}).

\begin{figure}[tp]
\centering
\includegraphics{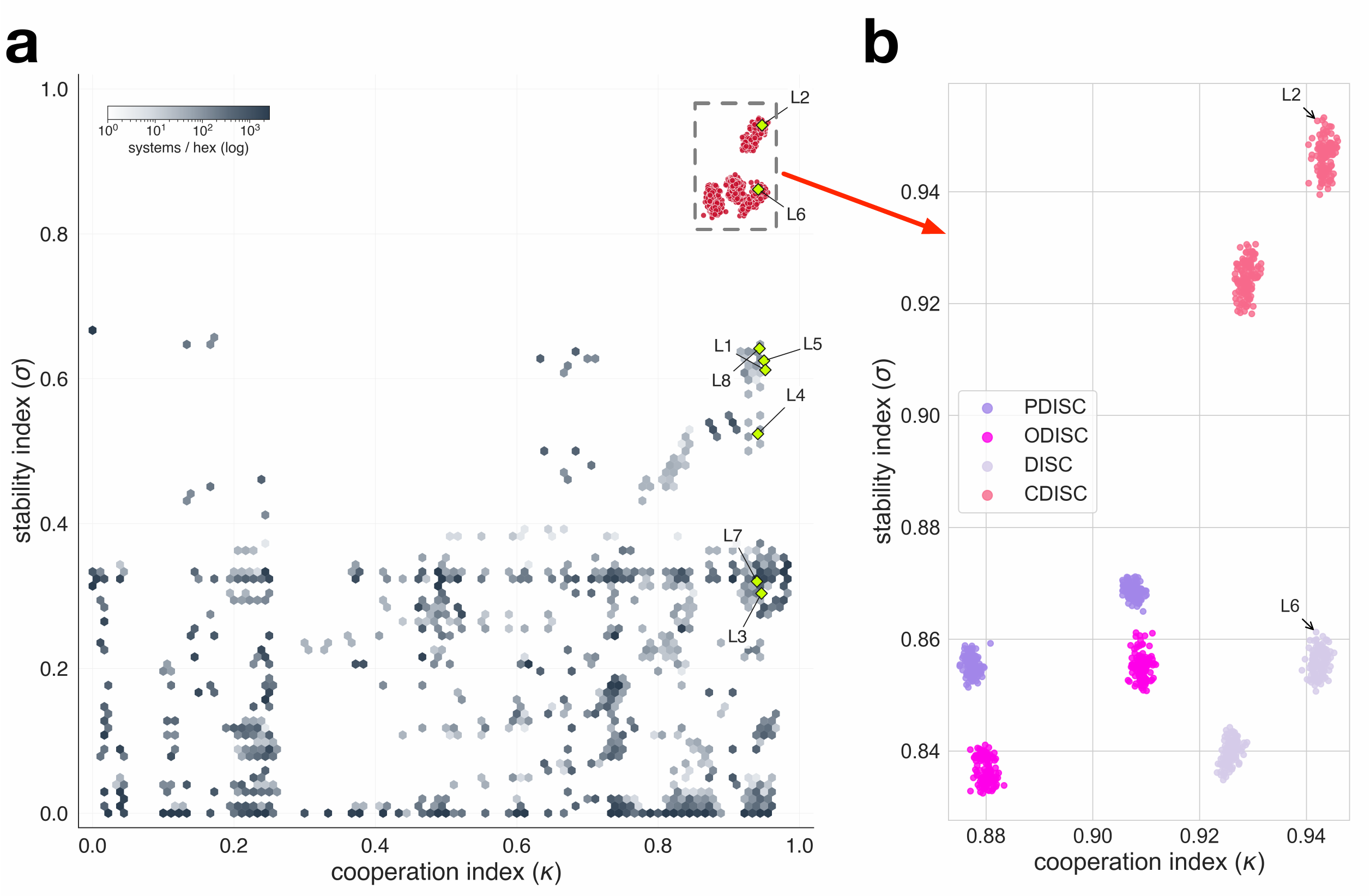}
\caption{\textbf{Stability and cooperation indices for all moral systems.} \textbf{a}, Each of the $\numprint{524800}$ moral systems that are distinct up to the relabeling of reputations (one representative per mirror class; SI Section~\ref{SI-subsec:mirror-classes}) is positioned by its resident cooperation index ($\kappa$, horizontal axis) and average stability index ($\sigma$, vertical axis); here $\chi = \epsilon = 0.02$.
The background hexbin layer shows the density of the $\numprint{523776}$ representatives that are not consistent discriminating systems, on a logarithmic color scale (inset).
The \numprint{1024} consistent discriminating systems (\emph{red}), shown as a jittered scatter plot overlaid on the density map, cluster tightly in the upper-right corner, separated from the bulk of moral systems by a clear gap in stability.
Of the leading-eight moral systems (L1–L8, \emph{green diamonds}), only L2 and L6 fall within this cluster.
\textbf{b}, Zoomed-in scatter plot of the consistent discriminating systems (jittered for clarity), colored by strategy; the $\numprint{256}$ systems using CDISC (the `consistent contrite discriminating' systems) achieve the highest stability.
See SI Figs.~\ref{SI-fig:alternative-errors}--\ref{SI-fig:nojitter-chi01-eps1} for results at alternative error values.}
\label{fig:scatter}
\end{figure}

Computing the cooperation index and the average stability index for all $2^{20}$ moral systems -- where systems that differ only by swapping the two reputation labels, and thus prescribe identical behavior, are evaluated only once (SI Section~\ref{SI-subsec:mirror-classes}) -- reveals our main result (Fig.~\ref{fig:scatter}): a single cluster of $2^{10}=\numprint{1024}$ moral systems stands out as being both highly stable and highly cooperative, separated from the rest by a clear gap in stability.
The moral systems in this cluster share two properties.
First, they each use a `discriminating' strategy: players in good standing cooperate with good co-players and defect against bad ones (the four discriminating strategies -- DISC, CDISC, ODISC, and PDISC -- are shown in Fig.~\ref{fig:overview}b; \hyperref[methods]{Methods}).
Second, they pair such a discriminating strategy with a `consistent' social norm: one that, conditional on the co-player playing the same strategy, assigns a good reputation to all and only those who behave as the system's own strategy prescribes (Fig.~\ref{fig:norms}a; \hyperref[methods]{Methods}).
Because consistency says nothing about cases where the co-player deviates from the system's strategy, it leaves eight bits unconstrained, so each strategy admits $2^8 = \numprint{256}$ consistent norms (for CDISC, these include, for instance, Consistent Standing and Generous Standing; Fig.~\ref{fig:overview}c).
The four discriminating strategies thus give rise to the $\numprint{1024}$ systems of the cluster, which we call `consistent discriminating'.
Intuitively, consistency can be likened to moral integrity and an absence of hypocrisy: the standard to which one holds others is exactly the standard by which one acts.
Interestingly, all leading-eight moral systems use discriminating strategies (DISC and CDISC), but only two are consistent: L2 and L6 (Fig.~\ref{fig:norms}b).

\begin{figure}[tp]
\centering
\includegraphics[width=\textwidth]{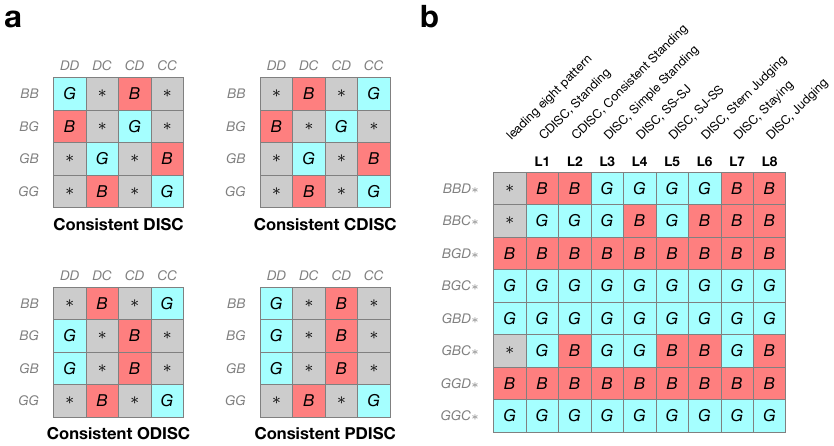}
\caption{\textbf{Social norms of the consistent discriminating systems and of the leading eight.} \textbf{a}, The social norm classes consistent with each of the four discriminating strategies (Fig.~\ref{fig:overview}b); because consistency constrains a norm only when the co-player behaves in accordance with the resident strategy, eight of the sixteen bits are left unconstrained ($\ast$), so each strategy admits $2^8 = \numprint{256}$ consistent norms. \textbf{b}, The social norms of the leading-eight moral systems. Rows are labeled by the reputations and action $(r_1 r_2 a_1)$ of the judged player, with $\ast$ standing for the co-player's action $a_2$, on which these third-order norms do not depend; the shared pattern column shows the entries common to all eight, with $\ast$ marking the three rows on which they differ. Only two of the leading eight feature a consistent social norm: L2 (CDISC, Consistent Standing) and L6 (DISC, Stern Judging).}
\label{fig:norms}
\end{figure}

The success of consistent discriminating systems follows from the basic function of a moral system in indirect reciprocity: namely, to sort the population so that reciprocators can direct cooperation toward individuals with the same strategy, and defect against all others.
Indeed, a consistent social norm is the one best suited to this task, as it assigns reputations so that those who follow the resident strategy are labeled `good', and those who deviate are labeled `bad'.
An inconsistent norm, by contrast, will assign good reputations to some mutant strategies at least as often as to residents, blurring the very distinction on which discrimination depends, thus opening the door to invasion.
We can make this intuition precise in the absence of execution errors: a moral system is consistent if and only if, at equilibrium, residents are judged good strictly more often than any mutant that deviates from the resident strategy (SI Propositions~\ref{SI-prop:consistent-rep-dominance}, \ref{SI-prop:inconsistent-rep-nondominance}).

\subsection{Contrition makes stability universal}

Within the cluster of consistent discriminating systems, a further subdivision emerges according to whether the strategy is contrite (Fig.~\ref{fig:scatter}b).
A `contrite' strategy prescribes cooperation whenever one's own standing is bad (\hyperref[methods]{Methods});
among the discriminating strategies, CDISC is the only one with this property. The \numprint{256} consistent systems associated with this contrite discriminator achieve the highest stability of any moral system for all but the largest assignment error rate we examine (Fig.~\ref{fig:scatter}b; SI Fig.~\ref{SI-fig:alternative_errors_zoomed_scatter}), remaining highly stable in every class of dilemma (SI Section~\ref{SI-subsec:disaggregated-indices}).
We further examine this finding analytically in a tractable regime: no execution errors ($\epsilon=0$) and the limit of vanishing assignment errors ($\chi \to 0$).
Of the $2^{20}$ moral systems we investigate, $\numprint{114688}$ are `righteous': those in which residents both cooperate at near-maximal rates and almost all attain good standing (\hyperref[methods]{Methods}; SI Definition~\ref{SI-def:righteous}, Proposition~\ref{SI-prop:cardinality-R}).
In structural terms, a righteous strategy is reputation-sensitive and prescribes cooperation with a good partner when in good standing.
A righteous norm, assessing a resident whose partner also follows the strategy, judges that resident good in at least three of the four combinations of the two players' standings -- always when both are good, and in at least two of the remaining three (SI Eq.~(\ref{SI-righteous-set-simple})).
The $\numprint{1024}$ consistent discriminating systems and all leading-eight systems are righteous (SI Propositions~\ref{SI-prop:consistentdisc->righteous},~\ref{SI-prop:leading-eight-righteous}).
We ask which righteous systems are `universally stable', that is, resistant to invasion across every cooperative dilemma we consider.
We show that a righteous moral system is universally stable if and only if its strategy is CDISC and its social norm is consistent (SI Theorem~\ref{SI-thm:universal-stability}).

We identify the universally stable systems by discarding, in turn, the righteous systems invadable by three important classes of strategies (SI Section~\ref{SI-subsec:which-righteous-are-stable}; SI Fig.~\ref{SI-fig:venn_analytic_result}).
First, resistance to unconditional defectors (ALLD) requires two conditions. The strategy must prescribe defection against a bad partner when in good standing; otherwise residents behave indistinguishably from unconditional cooperators and ALLD invades directly in prisoner's dilemmas and snowdrift games. All discriminating strategies meet this condition. The social norm must also judge (mutual) defection against a good partner, when in bad standing, as bad (SI Propositions~\ref{SI-prop:stability-against-ALLD-dGB=B},~\ref{SI-prop:stability-against-ALLD-dGB=G}). Among the consistent discriminating systems, those with DISC and CDISC as strategies satisfy this condition, whereas those with ODISC and PDISC do not (SI Proposition~\ref{SI-prop:QD-R'-membership}). All leading-eight systems satisfy this condition as well (SI Proposition~\ref{SI-prop:L-subset-R'}). Overall, $\numprint{12800}$ righteous systems resist unconditional defectors across every dilemma.

Second, resistance to unconditional cooperators (ALLC) is the converse problem. Since a discriminator in good standing defects only against bad partners, and almost all residents hold good reputations, an ALLC mutant differs from a resident mainly by cooperating with the rare bad ones. The social norm must therefore treat this unwarranted cooperation as a moral failure (SI Propositions~\ref{SI-prop:stability-against-ALLC-dGB=G},~\ref{SI-prop:stability-against-ALLC-dBG=B},~\ref{SI-prop:stability-against-ALLC-dBG=G}). The consistent systems built on DISC and CDISC satisfy this condition, but only four of the leading eight -- L2, L5, L6, and L8 -- do; L2 and L6 are among those consistent systems, while L5 and L8 are not. The other four leading-eight systems (L1, L3, L4, and L7) praise cooperation with a bad partner and are invaded by ALLC in stag hunts (SI Propositions~\ref{SI-prop:QD-R''-membership},~\ref{SI-prop:L-R''-membership}; Fig.~\ref{fig:stability_indices}b). Overall, only $\numprint{4608}$ righteous systems resist both unconditional defectors and unconditional cooperators across every dilemma.

Finally, resistance to mutant discriminating strategies singles out CDISC under a consistent norm -- that is, the consistent contrite discriminating systems.
The crucial asymmetry is that CDISC can invade DISC under a consistent norm, but not vice versa.
The two strategies act differently only when both players are bad: DISC defects, whereas CDISC shows contrition and cooperates.
A consistent DISC norm judges this contrition bad, so a CDISC mutant needs two rounds to regain good standing, collecting $2S$ where a DISC resident collects $P+R$ (\ref{fig:cdisc_invades}); whenever $2S>R+P$, as is possible in snowdrift games, CDISC invades.
Among CDISC residents under a consistent norm, a DISC mutant defecting in the rare state where both players are bad earns $T$ but is judged bad, and collects $T+S$ over the two rounds it takes to regain good standing; a resident in the same state cooperates, is immediately judged good, and collects $2R$ (\ref{fig:disc_cannot_invade}). Because mutual cooperation is efficient ($2R>T+S$), the mutant has no margin to invade.

CDISC under a consistent norm resists not only rival discriminators but also every alternative mutant strategy (SI Theorem~\ref{SI-thm:universal-stability}).
This is our central analytical result: with no execution errors and in the limit of vanishing assignment errors, a righteous moral system is evolutionarily stable across all cooperative dilemmas where mutual cooperation is efficient if and only if its strategy is discriminating and contrite, and its social norm is consistent with it -- the $\numprint{256}$ consistent contrite discriminating systems.
Only one member of the leading eight belongs to this set: L2 (CDISC, Consistent Standing).
In this sense, we can call L2 the `leading one'.
L2 is also the simplest consistent contrite discriminating system, as its third-order social norm ignores the co-player's action.

\subsection{Uncontrite and inconsistent systems fail}

Knowing which consistent discriminating systems are universally stable, we can also ask precisely what goes wrong for the rest (Fig.~\ref{fig:stability_indices}).
DISC under a consistent norm is stable across all prisoner's dilemmas and stag hunts, yet, as already mentioned, is invaded by CDISC in snowdrift games with $2S>R+P$ (yielding $\sigma=8/9$).
The remaining two discriminators, ODISC and PDISC, exhibit the complementary vulnerability: under consistent norms, they are stable across all snowdrift games and stag hunts, but invadable in prisoner's dilemmas with $2R<T+P$ (again $\sigma=8/9$).

\begin{figure}[tp]
\centering
\includegraphics[width=11cm]{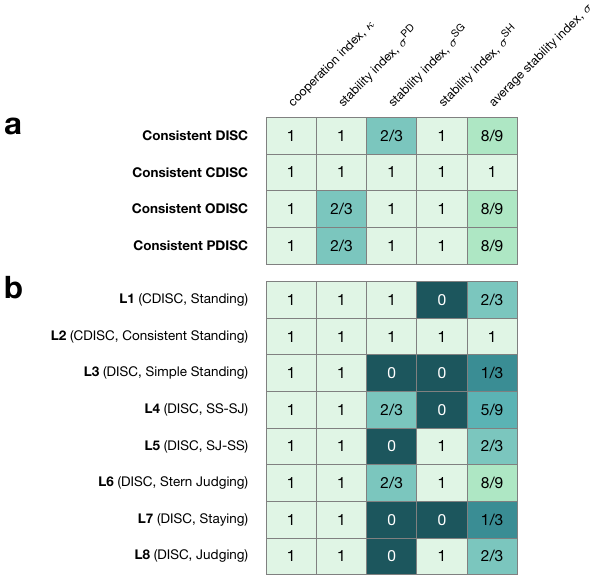}
\caption{\textbf{Cooperation and stability of consistent discriminating systems and of the leading eight.} The cooperation index and the stability indices across the three classes of cooperative dilemma (prisoner's dilemma, PD; snowdrift game, SG; and stag hunt, SH), with no execution errors ($\epsilon=0$) and in the limit of vanishing assignment errors ($\chi \to 0$); cells are shaded by the value of the index, darker for lower values. \textbf{a}, The four classes of consistent discriminating systems (their norms are shown in Fig.~\ref{fig:norms}a); in this limit, all $\numprint{256}$ members of a class share the same indices. \textbf{b}, The leading-eight moral systems, each labeled by its strategy and social norm (their norms are shown in Fig.~\ref{fig:norms}b).}
\label{fig:stability_indices}
\end{figure}

For the leading eight, the analytical results reveal which specific mutant strategies can invade and in which game classes (SI Section~\ref{SI-subsec:stab-anal-leading-eight}).
We find that all leading-eight systems are stable across the entire class of prisoner's dilemmas, and not merely the donation game assumed by the original analysis\cite{ohtsuki:JTB:2004a}.
However, when we expand the analysis to the full space of cooperative dilemmas, L2 and L6 are the only ones that fare particularly well, because they are the only leading-eight moral systems that feature a consistent social norm.
The remaining members of the leading eight all fail to achieve high levels of stability in snowdrift games, stag hunts, or both (Fig.~\ref{fig:stability_indices}b).

\section{Discussion}

The moral systems that succeed under indirect reciprocity have been described in many ways -- as those that distinguish `justified' from `unjustified' defection\cite{ohtsuki:JTB:2004a,Nakamura2014}, or that rely on social norms that are `simple and unambiguous' and `in which forgiveness and repent are welcome, while punishment is implacable'\cite{Chalub:2006tc}.
We have shown that what these successful systems share is moral consistency: the social norm deems good exactly those who act as the strategy prescribes.
For all two-player cooperative dilemmas with efficient mutual cooperation, and across a wide range of error rates,
consistent discriminating systems are both highly stable and highly cooperative.
The underlying mechanism is easy to state: a discriminating system whose social norm is inconsistent will award good reputations to some mutant at least as often as to its own residents, misdirecting cooperation and becoming vulnerable to invasion.
In the limit of small errors, consistent contrite discriminating systems are the only moral systems that are universally stable.
Contrition -- the readiness to cooperate when in bad standing -- is what secures stability in the snowdrift games where a consistent but uncontrite discriminator can be invaded.

Why did earlier literature not identify consistency and contrition as key ingredients of a successful moral system? Part of the answer lies in the limited scope of earlier analyses, which considered a restricted set of invaders, a restricted range of social dilemmas, or a restricted space of social norms. Because each dilemma poses its own profile of invasion risks, an analysis restricted to a subset of social dilemmas subjects moral systems to a weaker stability test: an inconsistent system that resists invasion in one dilemma can still be invaded in another.
Nakamura and Ohtsuki's analysis\cite{Nakamura2014} comes closest to ours: they surveyed the full range of dilemmas and allowed strategies to condition on the focal player's own reputation, the contrite discriminator among them. Yet none of their moral systems was both stable and cooperative across dilemmas.
The limitation lay in the set of norms they explored. Because a contrite discriminator cooperates with a bad partner only when it is itself in bad standing, a consistent norm must reward that act in a bad player while condemning it in a good one, a distinction it can draw only by reading the focal player's own reputation. Nakamura and Ohtsuki's norms conditioned on both players' actions and the co-player's reputation but not the focal player's own reputation, and so could not draw this distinction.

Which moral systems appear successful depends on the range of dilemmas used to test them. Consequently, the catalogues of `good' norms inherited from earlier work are, to some degree, artifacts of the kinds of social dilemma used to model social interactions. Our results thus call for a reappraisal of the best known of these, the leading eight\cite{ohtsuki:JTB:2004a,ohtsuki:JTB:2006a}.
Once we move to the full range of cooperative dilemmas, this set proves at once too inclusive and too exclusive. It is too inclusive because only two of its members -- L2 and L6 -- are consistent; the remaining six can be invaded once we consider social interactions beyond the prisoner's dilemma. It is too exclusive because it admits only the DISC and CDISC strategies, leaving out ODISC and PDISC.
However, consistent systems built on these excluded strategies are stable across all snowdrift games and stag hunts (Fig.~\ref{fig:stability_indices}a), and as a result achieve a higher average stability index than any inconsistent member of the leading eight -- even though leading-eight systems remain more stable within the class of prisoner's dilemmas.

The consistent ODISC and PDISC systems deserve closer attention, because the behavior they prescribe resonates with patterns seen in human societies.
In particular, PDISC encodes a social logic that evokes the phenomenon of honor cultures -- societies in which retaliation for insult is strongly socially endorsed, and in which failure to answer a slight can itself attract contempt\cite{boehm1984blood, nisbett1996culture,thrasher_honor_2018}.
Honor cultures are typically organized around a mixture of direct and indirect reciprocity: people retaliate against the particular individual who wronged them, often under public conditions in which status is at stake.
By contrast, our model is a pure case of indirect reciprocity and does not permit targeting a specific offender.
Even so, a consistent PDISC system can be read as a formal echo of honor culture. When its standing is compromised, a PDISC individual defects regardless of the partner's reputation -- and under its consistent norm this very defection is judged good, so that standing is restored by adversarial display rather than by apology. Viewed through this lens, honor cultures may not be pathologies of reciprocity but one among several consistent equilibria in which a moral system can settle and sustain cooperation.

More broadly, our results speak to a substantial body of empirical work on human indirect reciprocity. Controlled experiments show that people cooperate more with recipients of good reputation, and give more generously when their own conduct is observed and may be reported to future partners\cite{milinski:PRSB:2001,milinski_m._donors_2002,wedekind:CB:2002};
likewise, reputation shapes cooperation across a wide range of natural social settings\cite{romano_reputation_2021}. A recurring question in this literature is which social norm people actually apply when judging others\cite{milinski:PRSB:2001,seinen_social_2006,swakman_reputation-based_2016,okada_experimental_2018,gong_cooperation_2019}. Our theory sharpens that question. It predicts that a population should sustain a norm consistent with the strategy it plays, so that the standard people hold others to coincides with the behavior they display themselves. This alignment of judgment and conduct -- and how it might break down when information becomes private or noisy -- offers a target for empirical tests of when to expect reputation-based cooperation.

Our analysis assumes that individuals apply the same moral system whatever game they happen to be in. An alternative is to make behavior contingent on the game itself. Colnaghi et al.\cite{colnaghi_adaptations_2023} show that a costly ability to infer the structure of a social interaction and adapt to it can evolve and promote cooperation when the mix of games is sufficiently varied. Transposed to indirect reciprocity, this would require an entire society to detect which game it is playing and switch its moral system in concert. For instance, a population could adopt L3 in prisoner's dilemmas, L1 in snowdrift games, and L5 in stag hunts (each stable within its own class, for an average stability index of one). Yet every moral system in such a repertoire is inconsistent. Our results point to a complementary possibility. Any consistent contrite discriminating system is stable across the entire range of cooperative dilemmas, and thus needs no ability to identify the game and incurs no cost of switching. The whole work of sustaining cooperation can thus fall to a single moral system.

We have worked throughout with a public model of reputation, in which a single shared assessment of each individual circulates through the population.
This is a substantive assumption, relaxed by a growing body of work asking what becomes of cooperation when observers form their own private and sometimes conflicting opinions\cite{hilbe_indirect_2018,radzvilavicius_evolution_2019,perret:SciRep:2021,fujimoto_who_2024,michel-mata:Nature:2024,murase_indirect_2024}. We adopt the public case not because private assessment is unimportant, but because it is the cleanest setting in which to study how social norms make reputation a reliable signal on which to condition behavior.
Real populations surely draw on both shared and private information\cite{schmid_evolution_2021}, and tracing how moral consistency fares as assessment grows more private is a natural and important extension of the present work.

We have shown that the moral systems that sustain cooperation across most social dilemmas are discriminating and consistent, and that those that prevail across every dilemma are contrite as well. Consistency is a close cousin of the golden rule, the principle -- found in some form across many moral traditions -- that we should treat others as we would have them treat us. But where the golden rule governs conduct, consistency governs judgment: it asks us to hold others to the standard we keep ourselves.
For the contrite discriminator, consistency pairs a rule of conduct -- cooperate with everyone, but punish a bad partner when in good standing -- with a matching rule of judgment applied to others: that same punishment is deemed good in a partner, and those who cooperate while in bad standing regain their good name.
Across the entire range of cooperative dilemmas, selection favors this alignment of judgment and conduct -- moral systems that practice what they preach, and that preach what they practice.

\section{Methods} \label{methods}

\subsection{Model}

\paragraph{Cooperative dilemmas.}

We model social interactions as symmetric $2 \times 2$ games (Fig.~\ref{fig:overview}a).
The action of a focal player ($\af$) and that of their co-player ($\ac$) can be either `cooperate' ($C$) or `defect' ($D$).
We denote the payoffs for mutual cooperation, unilateral cooperation, unilateral defection, and mutual defection respectively by $\Rr$, $\Ss$, $\Tt$, and $\Pp$.

We focus on games that constitute social dilemmas of cooperation, or `cooperative dilemmas', in the following sense\cite{Pena2023}.
First, actions $C$ and $D$ qualify as cooperation and defection: (i) mutual cooperation is preferred over mutual defection ($\Rr>\Pp$), and (ii) each player prefers their co-player to cooperate rather than defect, whatever their own action ($\Rr>\Ss$ and $\Tt>\Pp$).
Second, the game is a genuine social dilemma: it has a (possibly mixed) evolutionarily stable strategy that does not coincide with the social optimum -- the profile maximizing total payoff -- which is the case if and only if $\Tt>\Rr$ or $\Pp>\Ss$.
Hence, depending on the ranking of payoff values, there are three classes of cooperative dilemma:
(i) prisoner's dilemmas (PD) if $\Tt > \Rr > \Pp > \Ss$;
(ii) snowdrift games (SG, also known as chicken games) if $\Tt > \Rr > \Ss > \Pp$;
and (iii) stag hunts (SH, also known as assurance games) if $\Rr > \Tt > \Pp > \Ss$.
The donation game used in most previous studies is a special case of the prisoner's dilemma, with $\Rr=b-c$, $\Ss=-c$, $\Tt=b$, $\Pp=0$, where the benefit of cooperation $b$ and the cost of cooperation $c$ satisfy $b>c>0$.
We further restrict attention to cooperative dilemmas where mutual cooperation is efficient, that is, where it is itself the social optimum; a sufficient condition for this, which we assume throughout, is $2 \Rr > \Tt + \Ss$\cite{Pena2023}.

Because our analyses are invariant under positive affine transformations of the payoffs -- adding a common constant and multiplying by a positive one -- we may normalize them by the min-max transformation
\begin{equation}
 \Pi' = \frac{\Pi - \min \left\{R,S,T,P\right\}}{\max \left\{R,S,T,P\right\} - \min \left\{R,S,T,P\right\}}, \ \ \text{ for all } \Pi \in \left\{R,S,T,P\right\} ,
\end{equation}
which maps the largest payoff to one and the smallest to zero.
This fixes two of the four payoffs, leaving each cooperative dilemma class $\Gamma \in \left\{\text{PD},\text{SG},\text{SH}\right\}$ with two free parameters; each class can then be represented as a polygon whose points are the individual dilemmas of that class (\ref{fig:payoff_space}), a representation we exploit when computing stability indices (see below).
Our analysis uses the following normalized dilemmas, for which the efficiency condition $2\Rr > \Tt+\Ss$ reduces to the constraint indicated:
(i) $1 > R > P > 0$ with $R > 1/2$ (PD);
(ii) $1 > R > S > 0$ with $2 R > 1+S$ (SG);
and (iii) $1 > T > P > 0$ with no further constraint (SH, where it holds automatically).
The normalized donation game is the special case of the normalized prisoner's dilemma with $P=1-R$.

\paragraph{Strategies and execution errors.}

Strategies are used by individuals when deciding how to play -- when to cooperate with, and when to defect against, a co-player.
The action of an individual in each interaction depends on their own reputation ($\rf$) and on that of the co-player ($\rc$).
With binary reputations (`good'$=G=1$ or `bad'$=B=0$), the strategy used by each individual is a Boolean function $p: \left\{B,G\right\}^2 \to \left\{D,C\right\}$
prescribing an action (`cooperate'$=C=1$ or `defect'$=D=0$) to each combination of these two reputations.
We denote each strategy by a string $p=p_0 p_1 p_2 p_3$, where the bit $p_i \in \left\{D,C\right\}$ gives the action prescribed for one combination of reputations, ordered $\rf$ then $\rc$ with $B$ before $G$: $p_0$ corresponds to $\rf=\rc=B$, $p_1$ to $\rf=B$ and $\rc=G$, $p_2$ to $\rf=G$ and $\rc=B$, and $p_3$ to $\rf=\rc=G$.
Such strategies are exactly the Boolean functions of two arguments, of which there are $2^{2^2}=2^4=16$.

We model execution errors, which capture an individual's occasional inability to carry out an intended action.
As in previous work\cite{ohtsuki:JTB:2004a,Nakamura2014}, these errors affect only intended cooperation: an individual who intends to cooperate defects instead with probability $\epsilon \in [0,1/2)$ (owing, for instance, to a lack of ability, time, or energy), whereas intended defection is always carried out.

\paragraph{Social norms, assignment errors, and observation model.}

Social norms are used by observers when deciding how to assign reputations to individuals after a social interaction.
We model the social norm as a Boolean function $d: \left\{B,G\right\}^2 \times \left\{D,C\right\}^2 \to \left\{B,G\right\}$
prescribing a new reputation (`good'$=G=1$ or `bad'$=B=0$) to each combination of the reputations and actions of the individual being judged and their co-player.
We denote each social norm by a string $d= d_0 d_1 \ldots d_{15}$, where the bit $d_i \in \left\{B,G\right\}$ gives the new reputation assigned to the judged individual for one combination of the current reputations $\rf$, $\rc$ and actions $\af$, $\ac$ of that individual and their co-player, ordered $\rf$, $\rc$, $\af$, $\ac$ with $B$ before $G$ and $D$ before $C$.
For example, $d_0$ corresponds to $\rf=\rc=\af=\ac=0$, $d_5$ to $\rf=\af=0$ and $\rc=\ac=1$, and $d_{15}$ to $\rf=\rc=\af=\ac=1$.
Each social norm can equivalently be displayed as a table, as in Figs~\ref{fig:overview}c and~\ref{fig:norms}a, whose rows list the reputation pairs $(\rf,\rc) = BB, BG, GB, GG$ from top to bottom and whose columns list the action pairs $(\af,\ac) = DD, DC, CD, CC$ from left to right.
The bitstring $d_0 d_1 \ldots d_{15}$ is simply this table read row by row: $d_0$ is the top-left entry (reputations $BB$, actions $DD$) and $d_{15}$ the bottom-right one (reputations $GG$, actions $CC$).
Such social norms are exactly the Boolean functions of four arguments, of which there are $2^{2^4}=2^{16}=\numprint{65536}$, spanning different orders and classes (see~\ref{fig:space}).

We also model assignment errors: with probability $\chi \in (0,1/2)$, the observer assigns the reputation opposite to the one prescribed by the social norm.
We adopt a `public-assessment model'\cite{santos_social_2018,ohtsuki:JTB:2004a,ohtsuki:JTB:2006a,ohtsuki:JTB:2007}: once an individual's reputation is updated, it is transmitted to the whole population (for instance, through gossip or a central authority), so that everyone agrees on the reputation of others.

\paragraph{Moral systems.}

A moral system $(p,d)$ pairs a strategy $p$ with a social norm $d$; a population implements it if, at all times, individuals behave according to $p$ and observers judge according to $d$.
Combining every strategy with every social norm yields $2^{4} \times 2^{16} = 2^{20} = \numprint{1048576}$ moral systems.

\subsection{Classes of moral systems}

\paragraph{Discriminating strategies and systems.}

Among the sixteen strategies, those that cooperate with a good co-player and defect against a bad one when the focal player is in good standing ($\rf=G$) are central to our analysis.
Formally, a strategy $p$ is `discriminating' if $p(G,B)=D$ and $p(G,G)=C$; as a string, $p={*}{*}DC$.
These two fixed bits leave behavior in bad standing unspecified, yielding the four discriminating strategies (Fig.~\ref{fig:overview}b), which differ only in what they prescribe in that case: the `(standard) discriminator' (DISC, $p=DCDC$) discriminates by the co-player's reputation just as it does when in good standing; the `contrite discriminator' (CDISC, $p=CCDC$) cooperates regardless of the co-player's reputation; the `obfuscated discriminator' (ODISC, $p=CDDC$) reverses the discriminating logic; and the `petulant discriminator' (PDISC, $p=DDDC$) defects regardless of the co-player's reputation.
A moral system $(p,d)$ is discriminating if its strategy $p$ is discriminating.

\paragraph{Contrite strategies and systems.}

A strategy $p$ is `contrite' if it cooperates whenever the focal player is in bad standing ($\rf=B$), that is, if $p(B,B)=C$ and $p(B,G)=C$; as a string, $p=CC{*}{*}$.
Contrition thus captures the readiness to make amends -- cooperating even with a bad co-player -- when one's own reputation is bad.
A moral system $(p,d)$ is contrite if its strategy $p$ is contrite; among the four discriminating strategies, CDISC is the only contrite one, so a discriminating system is contrite precisely when its strategy is CDISC.

\paragraph{Consistent systems.}

A moral system $(p,d)$ is `consistent' if the social norm $d$ judges individuals in strict accordance with the strategy $p$, whenever the co-player behaves as $p$ prescribes.
 Formally, $(p,d)$ is consistent if, for all $r_1, r_2 \in \{B,G\}$:
 \begin{subequations} \label{consistent-cond}
 \begin{align}
  d(r_1, r_2, p(r_1,r_2), p(r_2,r_1)) &= G, \label{consistent-cond-G} \\
  d(r_1, r_2, \neg p(r_1,r_2), p(r_2,r_1)) &= B, \label{consistent-cond-B}
 \end{align}
 \end{subequations}
 where $\neg C = D$ and $\neg D = C$.
 Consistency is silent about what reputation to assign when the co-player itself deviates from $p$: conditions~\eqref{consistent-cond} fix eight of the social norm's sixteen bits, and the other eight -- covering exactly these deviation cases -- remain free.
 Each strategy $p$ therefore gives rise to exactly $2^8 = \numprint{256}$ consistent social norms.
 In the absence of execution errors ($\epsilon = 0$), co-players -- who are residents under the rare-mutant assumption -- always execute the prescribed action, so these unconstrained bits are never invoked.
 All $\numprint{256}$ consistent social norms for a given strategy therefore produce identical reputation dynamics, payoffs, and hence stability and cooperation indices in this case (see SI Remark~\ref{SI-rem:consistent-equivalence}).
 A moral system that is both discriminating and consistent is a `consistent discriminating system'; the four discriminating strategies therefore give $4 \times \numprint{256} = \numprint{1024}$ such systems (their social norms are shown in Fig.~\ref{fig:norms}a).

\subsection{Evolutionary analysis}

\paragraph{Reputation dynamics and equilibrium.}

We consider an infinite, well-mixed population whose members are repeatedly and randomly paired to play the game, with a single observer assessing both players after each interaction (public assessment).
In a resident population implementing moral system $(p,d)$, let $g_{p,d}$ denote the fraction of individuals in good standing.
An individual with reputation $\rf$ who meets a co-player with reputation $\rc$ is then assigned a good reputation with probability $\delta_{\rf\rc}$, a function of the social norm $d$ and the error rates $\epsilon$ and $\chi$ (SI Eq.~(\ref{SI-deltaij})).
Averaging over the reputations of randomly matched co-players yields a one-dimensional dynamic for $g_{p,d}$, with a unique, asymptotically stable equilibrium $g^*_{p,d}$ (SI Eqs.~(\ref{SI-gdot}),~(\ref{SI-g*})).
All resident quantities below are evaluated at this reputation equilibrium.

\paragraph{Cooperation index.}

Each moral system $(p,d)$ is characterized by a cooperation index $\kappa_{p,d}$, the probability that, at equilibrium, a randomly chosen resident cooperates in a social interaction.
It thus measures how cooperative the moral system is, and is independent of the game used to model social interactions, though it depends on the error rates $\epsilon$ and $\chi$.
Concretely, $\kappa_{p,d}$ is a resident's probability of cooperating, averaged over the reputations of two randomly matched residents and evaluated at the reputation equilibrium $g^*_{p,d}$ (SI Eq.~(\ref{SI-coop-index})).

\paragraph{Expected payoffs and stability.}

Fix a cooperative dilemma (determined by a class
$\Gamma \in \{\mathrm{PD},\mathrm{SG},\mathrm{SH}\}$ together with specific
values of its two payoff parameters), error rates $(\epsilon,\chi)$, and a social norm $d$.
Strategy $p$ is stable under $d$ if, in a resident population implementing $(p,d)$, its expected payoff $w_{p,d}$ strictly exceeds that of every rare mutant strategy:
\begin{equation} \label{stability-condition}
w_{p,d} > \tilde w_{\tilde p,p,d}
\quad
\text{for all }
\tilde p \neq p,
\end{equation}
where $\tilde w_{\tilde p,p,d}$ denotes the expected payoff of a mutant using $\tilde p$.
If condition \eqref{stability-condition} holds, then $p$ constitutes a strict Nash equilibrium of the induced
population game under social norm $d$.
The payoffs $w_{p,d}$ and $\tilde w_{\tilde p,p,d}$ are computed at the joint reputation equilibrium.
Because mutants are vanishingly rare, they interact only with residents and leave the resident reputation equilibrium $g^*_{p,d}$ unperturbed, while the mutant sub-population settles at its own unique, asymptotically stable equilibrium fraction of good reputations $h^*_{\tilde p,p,d}$ against that resident background (SI Eqs.~(\ref{SI-hdot}),~(\ref{SI-h*})).
Each payoff is then the expected game payoff averaged over the reputations of the interacting players -- resident--resident encounters for $w_{p,d}$, and mutant--resident encounters for $\tilde w_{\tilde p,p,d}$ -- evaluated at $(g^*_{p,d},h^*_{\tilde p,p,d})$ and adjusted for execution errors (SI Eqs.~(\ref{SI-w}),~(\ref{SI-tildew})).
We determine stability conditions with a combination of analytical
and numerical techniques that extend methods previously developed
for the donation game and social norms up to
third order\cite{ohtsuki:JTB:2004a,Nakamura2014}.

\paragraph{Stability index.}

Let $\mathcal G_\Gamma$ denote the normalized two-dimensional payoff
region corresponding to game class
$\Gamma \in \{\mathrm{PD},\mathrm{SG},\mathrm{SH}\}$.
For fixed error rates $(\epsilon,\chi)$ and a moral system $(p,d)$,
define
\begin{equation}
\mathcal S_{p,d}^\Gamma(\epsilon,\chi)
:=
\{\, \bm{\pi} \in \mathcal G_\Gamma \; : \;
p \text{ is stable under } d \text{ at payoff } \bm{\pi} \,\},
\end{equation}
where stability at a given payoff configuration is understood in the
sense defined above.
We define the stability index
\begin{equation}
\sigma_{p,d}^\Gamma(\epsilon,\chi)
:=
\frac{\lambda_2\big(\mathcal S_{p,d}^\Gamma(\epsilon,\chi)\big)}
{\lambda_2\big(\mathcal G_\Gamma\big)},
\end{equation}
where $\lambda_2$ denotes the two-dimensional Lebesgue measure on the
payoff parameter space.
Thus $\sigma_{p,d}^\Gamma \in [0,1]$ represents the normalized area of
the payoff region within class $\Gamma$ for which $p$ is stable under
$d$. Equivalently, $\sigma_{p,d}^\Gamma$ is the
probability that $p$ is stable under $d$ when a payoff configuration
is drawn uniformly at random from $\mathcal G_\Gamma$.

To compare moral systems across the three classes of
cooperative dilemma, we also report the average stability index
\begin{equation}
\sigma_{p,d}(\epsilon,\chi)
:=
\frac{1}{3}\sigma^{\mathrm{PD}}_{p,d}(\epsilon,\chi)
+
\frac{1}{3}\sigma^{\mathrm{SG}}_{p,d}(\epsilon,\chi)
+
\frac{1}{3}\sigma^{\mathrm{SH}}_{p,d}(\epsilon,\chi).
\end{equation}
This average weights the three dilemma classes equally, regardless of the differing areas of their payoff regions.
As a robustness check, we further compute an alternative aggregate
stability index in which each class $\Gamma$ is weighted in proportion
to the area $\lambda_2(\mathcal G_\Gamma)$ of its normalized payoff
region (\ref{fig:payoff_space}); see
SI Fig.~\ref{SI-fig:scatter_panel_area}.

\paragraph{Numerical evaluation.}

For every moral system, we test the stability condition~\eqref{stability-condition} against all $15$ alternative strategies.
With no execution errors and in the limit of vanishing assignment errors ($\epsilon=0$, $\chi \to 0$), the stability region $\mathcal S_{p,d}^\Gamma$ is a polygon whose vertices we obtain analytically and whose area we evaluate in closed form with the shoelace formula (SI Section~\ref{SI-subsec:stability-indices}).
This yields the exact stability indices reported in Fig.~\ref{fig:stability_indices}a,b (computed in SI Sections~\ref{SI-subsec:stab-anal-disc} and~\ref{SI-subsec:stab-anal-leading-eight}, respectively).
At the finite error rates used for our main results ($\epsilon=\chi=0.02$; Fig.~\ref{fig:scatter}), we instead estimate each stability index by Monte Carlo integration, drawing $\numprint{1000}$ payoff configurations uniformly at random from the normalized region $\mathcal G_\Gamma$ of each dilemma class and recording the fraction of configurations at which $p$ is stable under $d$.
Because a moral system and its `mirror' -- the system obtained by swapping the two reputation labels throughout -- are behaviorally equivalent, we evaluate one representative per mirror class, $\numprint{524800}$ in total (SI Sections~\ref{SI-subsec:mirror-classes} and~\ref{SI-subsec:numerical-pipeline}).

\paragraph{Righteous and universally stable systems.}

Our central analytical results concern several further properties of moral systems, all evaluated in a tractable regime: no execution errors ($\epsilon=0$) and the limit of vanishing assignment errors ($\chi \to 0$).
A moral system is `self-approving' if almost all of its residents attain good standing, that is, if the resident reputation equilibrium approaches one, $\lim_{\chi\to0}(g^*_{p,d})_{\epsilon=0}=1$ (SI Definition~\ref{SI-def:self-approving}).
It is `self-cooperative' if its residents cooperate at near-maximal rates, that is, if the cooperation index approaches one, $\lim_{\chi\to0}(\kappa_{p,d})_{\epsilon=0}=1$ (SI Definition~\ref{SI-def:self-cooperative}).
It is `righteous' if it is both self-approving and self-cooperative (SI Definition~\ref{SI-def:righteous}).
Finally, it is `universally stable' if it is stable across every cooperative dilemma we consider, that is, if its stability index approaches one in each of the three classes, $\lim_{\chi\to0}(\sigma^\Gamma_{p,d})_{\epsilon=0}=1$ for all $\Gamma\in\{\mathrm{PD},\mathrm{SG},\mathrm{SH}\}$ (SI Definition~\ref{SI-def:universally-stable}).

\subsection{Data availability}

The archived payoff samples used for the Monte Carlo stability estimates are distributed with the code (see \hyperref[code-availability]{Code availability}). The computed cooperation and stability indices for all moral systems and error-rate combinations, from which the figures are generated (SI Section~\ref{SI-subsec:numerical-pipeline}), are deposited at \url{https://doi.org/10.26180/33068042}.

\subsection{Code availability}\label{code-availability}

The Julia package \texttt{MoralConsistency}, which implements the reputation dynamics, fitness computations, and stability tests, together with the Python scripts that generate the figures, is openly available at \url{https://github.com/ghostleopold/moral-consistency-reproducibility} and permanently archived at \url{https://doi.org/10.5281/zenodo.21519395}.

\bibliographystyle{naturemag}
\bibliography{refs}

\section*{Acknowledgements}
We are grateful to Fernando P. Santos for valuable discussions during the early stages of this project, and to Jean-Fran\c{c}ois Bonnefon and Catherine Molho for feedback on an earlier draft.
J.P. is grateful to the Institute for Advanced Study (IAS) of the University of Amsterdam for hosting the sabbatical research fellowship during which this project began.
J.P. acknowledges financial support from the French National Research Agency (ANR-22-CE26-0014-01 and ANR-17-EURE-0010).
T.H.'s work on this project was supported by the Australian Research Council, FT180100067.

\section*{Author contributions}
J.P., J.G., and T.H. conceived and designed the study.
J.P. derived the analytical results and wrote the Supplementary Information.
J.G. developed the Julia package and performed the numerical analyses.
T.H. produced the figures and numerical robustness analyses.
J.P. and T.H. wrote the first draft of the manuscript.
All authors revised the manuscript.

\section*{Competing interests}
The authors declare no competing interests.

\section*{Additional information}
Correspondence and requests for materials should be addressed to Jorge Peña (\href{mailto:jorge.pena@tse-fr.eu}{jorge.pena@tse-fr.eu}).

\newpage
\setcounter{figure}{0}
\renewcommand{\thefigure}{Extended Data Fig.~\arabic{figure}}
\renewcommand{\theHfigure}{ED.\arabic{figure}} %
\makeatletter
\renewcommand{\fnum@figure}{\thefigure}
\makeatother

\begin{figure}[p]
\centering
\includegraphics[width=\textwidth]{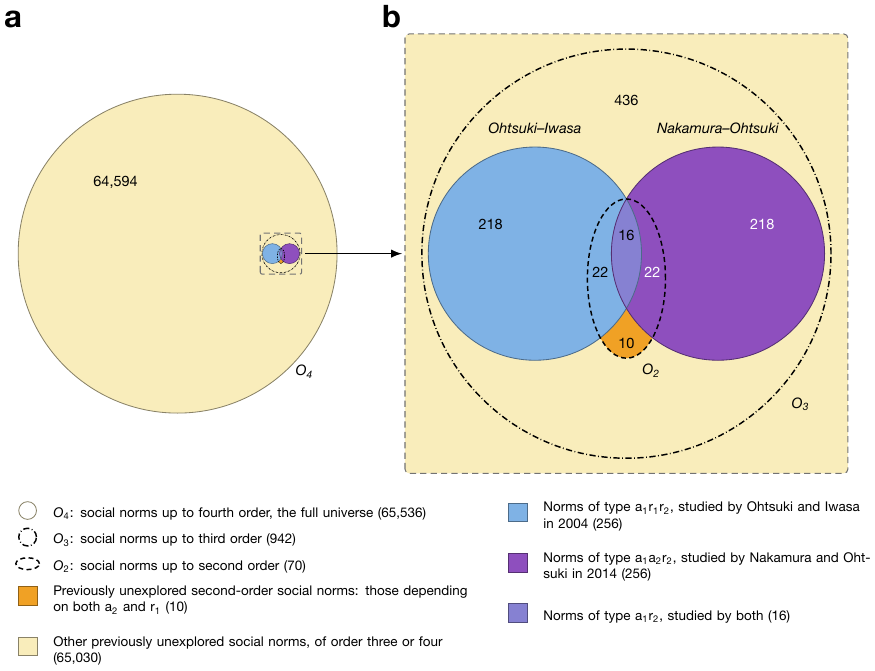}
\caption{\textbf{The space of social norms and its previously explored subsets.}
\textbf{a}, The universe $O_4$ of the $2^{16} = \numprint{65536}$ fourth-order social norms studied in this paper, which depend on the two players' actions and reputations; $\af$ and $\rf$ denote the action and the reputation of the judged player, $\ac$ and $\rc$ those of the co-player.
\textbf{b}, Magnification of the dashed rectangle in \textbf{a}.
Most previous work, following the study by Ohtsuki and Iwasa\cite{ohtsuki:JTB:2004a}, focused on the 256 third-order norms of type $\af\rf\rc$ (\emph{blue}); Nakamura and Ohtsuki\cite{Nakamura2014} considered the 256 third-order norms of type $\af\ac\rc$ (\emph{purple}); the two spaces share the 16 norms of type $\af\rc$ (\emph{violet}).
Although one existing paper studies fourth-order norms\citep{santos_social_2018}, it goes beyond the simple setting of previous research by allowing reputation to depend on information from past encounters.
The dashed and dash-dotted boundaries mark the sets $O_2$ and $O_3$ of norms of order at most two and three; \emph{yellow} marks the previously unexplored second-order norms.
Within each panel, the area of every region is proportional to the number of social norms it contains.
All counts are derived in SI Section~\ref{SI-subsec:norm-counting}.
}
\label{fig:space}
\end{figure}

\begin{figure}[p]
\centering
\includegraphics[width=\textwidth]{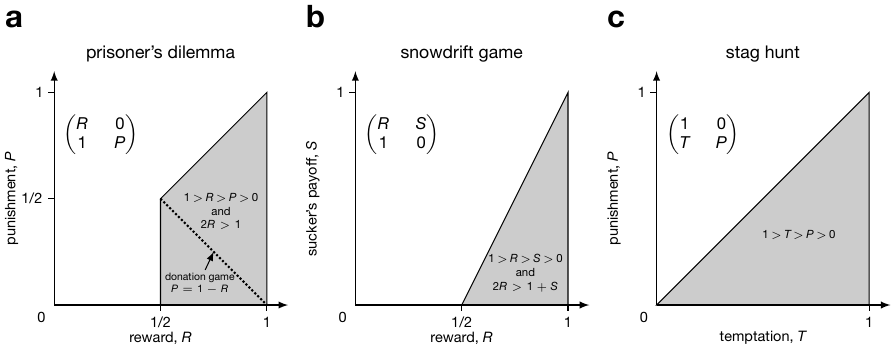}
\caption{{\bf The space of $2 \times 2$ cooperative dilemmas.}
We consider the three classes of $2 \times 2$ cooperative dilemma, shown after the min-max normalization of payoffs to $[0,1]$ (\hyperref[methods]{Methods}); throughout, we impose $2R > T+S$ so that mutual cooperation is socially optimal.
\textbf{a}, Prisoner's dilemmas (PD), with payoff ranking $1 > R > P > 0$; the efficiency condition adds the constraint $2R > 1$. The donation game, used in most previous studies of indirect reciprocity, is the special case with $P = 1-R$ (\emph{dotted line}).
\textbf{b}, Snowdrift games (SG), with payoff ranking $1 > R > S > 0$; the efficiency condition adds the constraint $2R > 1+S$.
\textbf{c}, Stag hunts (SH), with payoff ranking $1 > T > P > 0$; the efficiency condition holds automatically.}
\label{fig:payoff_space}
\end{figure}

\begin{figure}[p]
  \centering
  \includegraphics[width=0.98\textwidth]{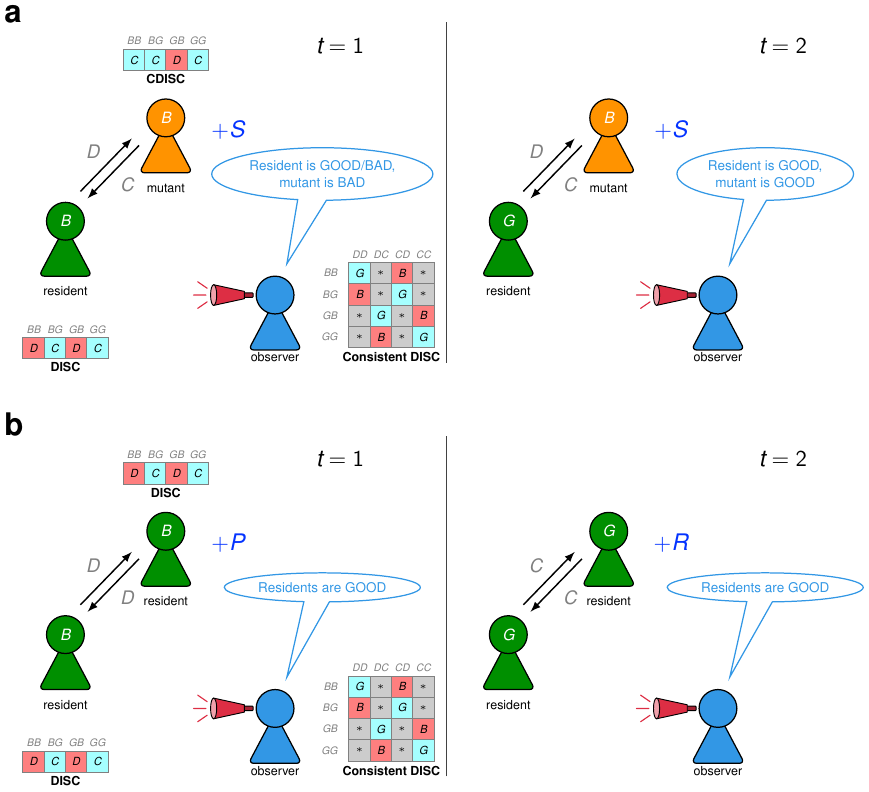}
  \caption{\textbf{How CDISC can invade a consistent DISC system.}
  CDISC and DISC differ only in the rare, error-induced state where both players have bad reputations.
  Pawns carry their reputations ($G$ or $B$) and follow the strategies shown beside them (Fig.~\ref{fig:overview}b); grey letters give actions, and blue annotations the focal player's payoffs.
  Each round, an observer (\emph{blue pawn}) judges both players by a DISC-consistent social norm ($\ast$ marks entries unconstrained by consistency; Fig.~\ref{fig:norms}a) and broadcasts the verdicts.
  \textbf{a}, Trajectory of a mutant CDISC (\emph{orange}) starting in bad standing.
  At $t=1$, the mutant meets a resident also in bad standing and contritely cooperates, earning $S$; because cooperating with a bad partner deviates from DISC's prescription, it is judged bad (the resident's verdict is unconstrained, hence `GOOD/BAD').
  At $t=2$, it meets another resident, almost surely in good standing, and cooperates, matching DISC's prescription; it is judged good but its co-player defects, so it earns $S$ again.
  Its reputation is restored only after two rounds, for a total payoff of $2S$.
  \textbf{b}, Trajectory of a resident DISC in the analogous situation.
  At $t=1$, it defects, as DISC prescribes, earns $P$, and is judged good; at $t=2$, in good standing like almost all residents, it cooperates and earns $R$, for a total of $P+R$.
  CDISC thus gains a net advantage, and can invade, whenever $2S > R+P$, which only snowdrift games satisfy.}
  \label{fig:cdisc_invades}
\end{figure}

\begin{figure}[p]
  \centering
  \includegraphics[width=0.98\textwidth]{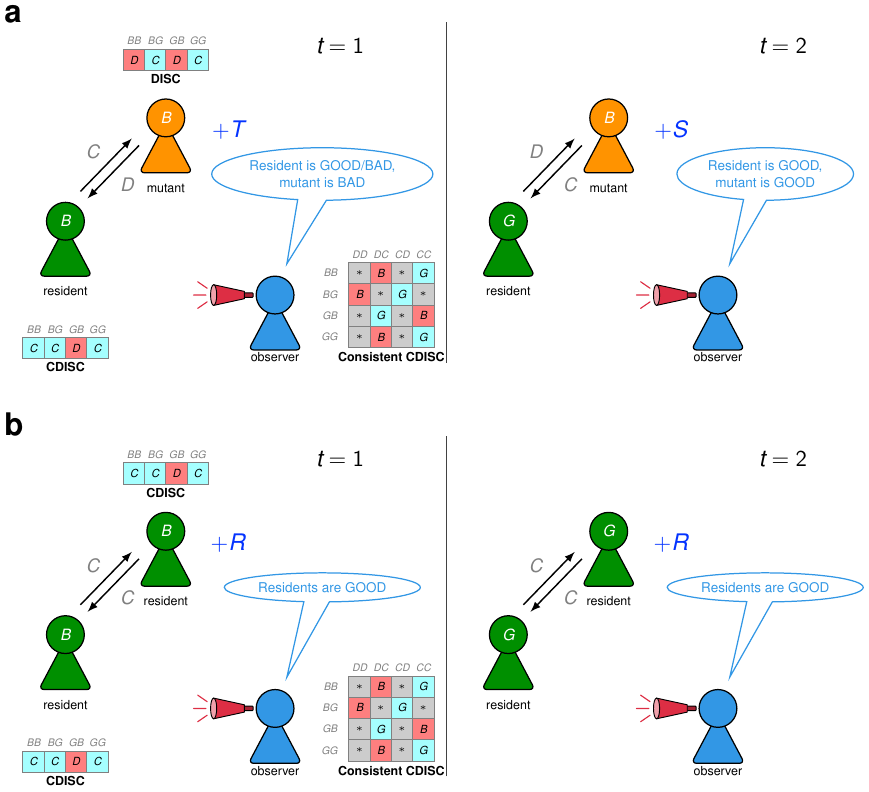}
  \caption{\textbf{Why DISC cannot invade a consistent CDISC system.}
  The converse of \ref{fig:cdisc_invades}: a rare DISC mutant among CDISC residents, judged by a social norm consistent with CDISC; notation as in \ref{fig:cdisc_invades}.
  \textbf{a}, Trajectory of a mutant DISC (\emph{orange}) starting from a bad reputation.
  At $t=1$, the mutant meets a resident who is also in bad standing and defects on its contritely cooperating partner; it earns the temptation payoff $T$ but, because defecting on a bad partner deviates from what CDISC prescribes, is judged bad (the resident's verdict again falls on an unconstrained entry, hence `GOOD/BAD').
  At $t=2$, the mutant meets another resident, almost surely in good standing, and cooperates, as both strategies prescribe; it is judged good, but its co-player defects and it earns only $S$.
  Its reputation is restored after two rounds, over which it accumulates $T+S$.
  \textbf{b}, Trajectory of a resident CDISC in the analogous situation.
  At $t=1$, it cooperates with its bad partner, as CDISC prescribes, earns $R$, and is immediately judged good; at $t=2$ it cooperates again and earns $R$, for a total of $2R$.
  Because we require mutual cooperation to be efficient, $2R > T+S$ holds in every dilemma we study: DISC never gains an advantage and cannot invade a consistent CDISC system.}
  \label{fig:disc_cannot_invade}
\end{figure}

\clearpage
\makeatletter
\setstretch{1}
\everydisplay{\@displaytrue}
\setlength\parskip{\mc@parskip}\setlength\parindent{\mc@parindent}
\setlength\parfillskip{\mc@parfillskip}\setlength\partopsep{\mc@partopsep}
\let\@listI\mc@listI \let\@listi\mc@listi \let\@listii\mc@listii \let\@listiii\mc@listiii
\@listi
{\@listi \global\bibsep\itemsep \global\advance\bibsep by\parsep}
\patchcmd\@starttoc
  {\begingroup \makeatletter
      \skip@\lastskip
      \advance\skip@\parskip
      \vskip-\lastskip
      \vskip\skip@
      \parskip\parskip@tocskip}
  {\begingroup \makeatletter}{}{\PackageError{merge}{Could not unpatch starttoc}{}}
\patchcmd\@startsection
  {\advance\@tempskipa-\parskip\addvspace\@tempskipa}{\addvspace\@tempskipa}
  {}{\PackageError{merge}{Could not unpatch startsection}{}}
\patchcmd\@xsect
  {\advance\@tempskipa-\parskip\vskip\@tempskipa}{\vskip\@tempskipa}
  {}{\PackageError{merge}{Could not unpatch xsect}{}}
\let\fps@figure\mc@fpsfigure
\let\topfraction\mc@topfraction \let\bottomfraction\mc@bottomfraction
\let\textfraction\mc@textfraction \let\floatpagefraction\mc@floatpagefraction
\setlength\emergencystretch{\mc@emergencystretch}
\setcounter{secnumdepth}{\mc@secnumdepth}
\let\Gin@ewidth\mc@Gin@ewidth \let\Gin@eheight\mc@Gin@eheight \Gin@isofalse
\setcounter{page}{1}\setcounter{section}{0}\setcounter{figure}{0}\setcounter{table}{0}
\setcounter{equation}{0}\setcounter{footnote}{0}
\let\thefigure\mc@thefigure \let\fnum@figure\mc@fnumfigure 
\let\maketitle\mc@maketitle \let\@maketitle\mc@@maketitle \let\@author\mc@author
\let\title\mc@title \let\date\mc@date \let\and\mc@and \let\thanks\mc@thanks
\let\caption\mc@c@caption \let\@caption\mc@c@@caption \let\@makecaption\mc@c@makecaption
\let\mc@newcommand\newcommand
\def\newcommand#1{\let#1\relax\mc@newcommand#1}
\makeatother
\newcommand{\Jorge}{red!40} %
\newcommand{\Julian}{blue!40} %
\newcommand{\Toby}{yellow!40} %
\newcommand{\Fernando}{green!40} %
\newcommand{\cmark}{\textcolor{LBlue}{\ding{51}}} %
\newcommand{\xmark}{\textcolor{LRed}{\ding{55}}}  %
\newcommand{\Rr}{R}
\newcommand{\Ss}{S}
\newcommand{\Tt}{T}
\newcommand{\Pp}{P}
\newcommand{\af}{a_1} %
\newcommand{\ac}{a_2} %
\newcommand{\rf}{r_1} %
\newcommand{\rc}{r_2} %
\newcommand{\dd}{\mathrm{d}} %
\newcommand{\C}{\texttt{C}}
\newcommand{\D}{\texttt{D}}
\newcommand{\G}{\texttt{G}}
\newcommand{\B}{\texttt{B}}
\newcommand{\wildcard}{\texttt{*}}
\newcommand{\vDD}{\mathsf{v}_{\D\D}}
\newcommand{\vDC}{\mathsf{v}_{\D\C}}
\newcommand{\vCD}{\mathsf{v}_{\C\D}}
\newcommand{\vCC}{\mathsf{v}_{\C\C}}
\newcommand{\prfrc}{\mathsf{p}_{\rf\rc}}
\newcommand{\prcrf}{\mathsf{p}_{\rc\rf}}
\newcommand{\drfrc}{\mathsf{d}_{\rf\rc}}
\newcommand{\drfrcm}{\tilde{\mathsf{d}}_{\rf\rc}}
\newcommand{\notdrfrc}{\bar{\mathsf{d}}_{\rf\rc}}
\newcommand{\pBB}{\mathsf{p}_{\B\B}}
\newcommand{\pBG}{\mathsf{p}_{\B\G}}
\newcommand{\pGB}{\mathsf{p}_{\G\B}}
\newcommand{\pGG}{\mathsf{p}_{\G\G}}
\newcommand{\prfrcm}{\tilde{\mathsf{p}}_{\rf \rc}}
\newcommand{\pBBm}{\tilde{\mathsf{p}}_{\B\B}}
\newcommand{\pBGm}{\tilde{\mathsf{p}}_{\B\G}}
\newcommand{\pGBm}{\tilde{\mathsf{p}}_{\G\B}}
\newcommand{\pGGm}{\tilde{\mathsf{p}}_{\G\G}}
\newcommand{\dBB}{\mathsf{d}_{\B\B}}
\newcommand{\dBG}{\mathsf{d}_{\B\G}}
\newcommand{\dGB}{\mathsf{d}_{\G\B}}
\newcommand{\dGG}{\mathsf{d}_{\G\G}}
\newcommand{\dBBm}{\tilde{\mathsf{d}}_{\B\B}}
\newcommand{\dBGm}{\tilde{\mathsf{d}}_{\B\G}}
\newcommand{\dGBm}{\tilde{\mathsf{d}}_{\G\B}}
\newcommand{\dGGm}{\tilde{\mathsf{d}}_{\G\G}}
\newcommand{\notdBB}{\bar{\mathsf{d}}_{\B\B}}
\newcommand{\notdBG}{\bar{\mathsf{d}}_{\B\G}}
\newcommand{\notdGB}{\bar{\mathsf{d}}_{\G\B}}
\newcommand{\notdGG}{\bar{\mathsf{d}}_{\G\G}}
\newcommand{\dBBDD}{\mathsf{d}_{\B\B\D\D}}
\newcommand{\dBBDC}{\mathsf{d}_{\B\B\D\C}}
\newcommand{\dBBD}{\mathsf{d}_{\B\B\D*}}
\newcommand{\dBBCD}{\mathsf{d}_{\B\B\C\D}}
\newcommand{\dBBCC}{\mathsf{d}_{\B\B\C\C}}
\newcommand{\dBBC}{\mathsf{d}_{\B\B\C*}}
\newcommand{\dBGDD}{\mathsf{d}_{\B\G\D\D}}
\newcommand{\dBGDC}{\mathsf{d}_{\B\G\D\C}}
\newcommand{\dBGD}{\mathsf{d}_{\B\G\D*}}
\newcommand{\dBGCD}{\mathsf{d}_{\B\G\C\D}}
\newcommand{\dBGCC}{\mathsf{d}_{\B\G\C\C}}
\newcommand{\dBGC}{\mathsf{d}_{\B\G\C*}}
\newcommand{\dGBDD}{\mathsf{d}_{\G\B\D\D}}
\newcommand{\dGBDC}{\mathsf{d}_{\G\B\D\C}}
\newcommand{\dGBD}{\mathsf{d}_{\G\B\D*}}
\newcommand{\dGBCD}{\mathsf{d}_{\G\B\C\D}}
\newcommand{\dGBCC}{\mathsf{d}_{\G\B\C\C}}
\newcommand{\dGBC}{\mathsf{d}_{\G\B\C*}}
\newcommand{\dGGDD}{\mathsf{d}_{\G\G\D\D}}
\newcommand{\dGGDC}{\mathsf{d}_{\G\G\D\C}}
\newcommand{\dGGD}{\mathsf{d}_{\G\G\D*}}
\newcommand{\dGGCD}{\mathsf{d}_{\G\G\C\D}}
\newcommand{\dGGCC}{\mathsf{d}_{\G\G\C\C}}
\newcommand{\dGGC}{\mathsf{d}_{\G\G\C*}}
\newcommand{\DDDD}{\D\D\D\D} %
\newcommand{\CDDD}{\C\D\D\D} %
\newcommand{\DCDD}{\D\C\D\D} %
\newcommand{\CCDD}{\C\C\D\D} %
\newcommand{\DDCD}{\D\D\C\D} %
\newcommand{\CDCD}{\C\D\C\D} %
\newcommand{\DCCD}{\D\C\C\D} %
\newcommand{\CCCD}{\C\C\C\D} %
\newcommand{\DDDC}{\D\D\D\C} %
\newcommand{\CDDC}{\C\D\D\C} %
\newcommand{\DCDC}{\D\C\D\C} %
\newcommand{\CCDC}{\C\C\D\C} %
\newcommand{\DDCC}{\D\D\C\C} %
\newcommand{\CDCC}{\C\D\C\C} %
\newcommand{\DCCC}{\D\C\C\C} %
\newcommand{\CCCC}{\C\C\C\C} %
\newcommand{\ALLD}{\mathrm{ALLD}}
\newcommand{\ALLC}{\mathrm{ALLC}}
\newcommand{\DISC}{\mathrm{DISC}}
\newcommand{\CDISC}{\mathrm{CDISC}}
\newcommand{\ODISC}{\mathrm{ODISC}}
\newcommand{\PDISC}{\mathrm{PDISC}}
\newcommand{\Cset}{\mathcal{C}}
\newcommand{\Dset}{\mathcal{D}}
\newcommand{\Aset}{\mathcal{A}}
\newcommand{\AsetGXXG}{\mathcal{A}_{\G\wildcard\wildcard\G}}
\newcommand{\AsetGBGG}{\mathcal{A}_{\G\B\G\G}}
\newcommand{\AsetGGBG}{\mathcal{A}_{\G\G\B\G}}
\newcommand{\AsetBGGG}{\mathcal{A}_{\B\G\G\G}}
\newcommand{\AsetGGGG}{\mathcal{A}_{\G\G\G\G}}
\newcommand{\Zset}{\mathcal{Z}}
\newcommand{\maxAset}{\mathcal{A}_*}
\newcommand{\Rset}{\mathcal{R}}
\newcommand{\maxRset}{\mathcal{R}_*}
\newcommand{\Kset}{\mathcal{K}}
\newcommand{\Qset}{\mathcal{Q}}
\newcommand{\QDset}{\mathcal{Q}_{\mathcal{D}}}
\newcommand{\QDisc}{\mathcal{Q}_{\DISC}}
\newcommand{\QCDisc}{\mathcal{Q}_{\CDISC}}
\newcommand{\QPDisc}{\mathcal{Q}_{\PDISC}}
\newcommand{\QODisc}{\mathcal{Q}_{\ODISC}}
\newcommand{\Lset}{\mathcal{L}}
\newcommand{\LI}{\mathcal{L}_{\text{I}}}
\newcommand{\LII}{\mathcal{L}_{\text{II}}}
\newcommand{\LIII}{\mathcal{L}_{\text{III}}}
\newcommand{\wBB}{\left(\omega_{\B\B}\right)_{\epsilon=0}}
\newcommand{\wBG}{\left(\omega_{\B\G}\right)_{\epsilon=0}}
\newcommand{\wGB}{\left(\omega_{\G\B}\right)_{\epsilon=0}}
\newcommand{\wGG}{\left(\omega_{\G\G}\right)_{\epsilon=0}}
\newcommand{\wBBm}{\left(\tilde{\omega}_{\B\B}\right)_{\epsilon=0}}
\newcommand{\wBGm}{\left(\tilde{\omega}_{\B\G}\right)_{\epsilon=0}}
\newcommand{\wGBm}{\left(\tilde{\omega}_{\G\B}\right)_{\epsilon=0}}
\newcommand{\wGGm}{\left(\tilde{\omega}_{\G\G}\right)_{\epsilon=0}}
\newcommand{\sigmaPD}{\sigma^{\mathrm{PD}}}
\newcommand{\sigmaSG}{\sigma^{\mathrm{SG}}}
\newcommand{\sigmaSH}{\sigma^{\mathrm{SH}}}
\newcommand{\sigmabar}{\sigma}
\newcommand{\sigmabarw}{\sigmabar^{\mathsf{w}}}
\newcommand{\BigO}{O}
\makeatletter\let\newcommand\mc@newcommand\makeatother
\addtocontents{toc}{\protect\setcounter{tocdepth}{3}}
\title{Cooperation thrives when moral judgment matches moral conduct: Supplementary Information}
\date{}

\frenchspacing
\maketitle

\tableofcontents

\clearpage

\section{Model}

This section provides a detailed description of the model used in the main text.
We consider an infinite, well-mixed population whose members interact in pairwise encounters, each modeled as a symmetric two-player game.
Each game belongs to one of three classes of cooperative dilemmas (prisoner's dilemmas, snowdrift games, or stag hunts); throughout we restrict attention to those for which mutual cooperation is the social optimum, an efficiency condition on the payoffs that we make precise below.
Individuals choose their actions according to a strategy (or action rule) that conditions on the binary reputations of the focal player and their co-player, while an observer updates both players' reputations according to an assessment rule (or social norm) that conditions additionally on the two actions taken.
Sections~\ref{subsec:labeling-conventions}--\ref{subsec:cooperative-dilemmas} establish labeling conventions and define the payoff function and the three classes of cooperative dilemmas.
Section~\ref{subsec:shorthand} introduces notation for strategies and assessment rules, and Section~\ref{subsec:norm-counting} counts assessment rules by the arguments they use, situating the space of social norms we explore relative to those studied in previous work.
Section~\ref{subsec:resident-dynamics} describes the reputation dynamics of a resident population, from which Section~\ref{subsec:cooperation-index} derives the cooperation index; Section~\ref{subsec:mutant-dynamics} extends the reputation dynamics to rare mutants, and Section~\ref{subsec:payoffs-stability} derives the equilibrium payoffs and defines evolutionary stability.
Finally, Section~\ref{subsec:stability-indices} defines the stability indices we use to assess the evolutionary robustness of moral systems across the full space of cooperative dilemmas.

\subsection{Labeling conventions for actions and reputations}\label{subsec:labeling-conventions}

Actions and reputations are binary.
In each interaction, an individual either `cooperates' ($\C$) or `defects' ($\D$); after the interaction, an observer may assign the individual a `good' ($\G$) or a `bad' ($\B$) reputation.
For notational convenience, we identify cooperation and a good reputation with $1$, and defection and a bad reputation with $0$:
\begin{subequations} \label{C=G=1,D=B=0}
\begin{align}
 \C & = \G = 1, \ \text{ and} \\
 \D & = \B = 0.
\end{align}
\end{subequations}
We also use the wildcard symbol $*$ to denote a placeholder that may represent either $0$ ($\D$ or $\B$) or $1$ ($\C$ or $\G$).

\subsection{Payoff function}\label{subsec:payoff-function}

Social interactions are modeled as symmetric $2 \times 2$ games.
A focal player choosing action $\af \in \left\{\D,\C\right\}$ against a co-player choosing action $\ac \in \left\{\D,\C\right\}$ receives a payoff $v(a_1,a_2)$, where $v: \left\{\D,\C\right\}^2 \to \mathbb{R}$.
Writing the focal player's action as the row and the co-player's as the column, this payoff function takes the matrix form
\begin{equation} \label{payoff-matrix}
 \bordermatrix{~ & \C & \D \cr
              \C & v(\C,\C) & v(\C,\D) \cr
              \D & v(\D,\C) & v(\D,\D) \cr} .
\end{equation}
Identifying
\begin{align*}
 v(\C,\C) & = R, \ v(\C,\D) = S, \ v(\D,\C) = T, \ \text{ and } \ v(\D,\D) = P,
\end{align*}
we recover the formulation given in the main text (Fig.~\ref{MT-fig:overview}a; see also \hyperref[MT-methods]{Methods}).

Throughout, we also use the shorthand notation:
\begin{equation*}
 \mathsf{v}_{\af\ac} = v(\af,\ac) .
\end{equation*}

\subsection{Cooperative dilemmas and payoff spaces}\label{subsec:cooperative-dilemmas}

We focus on the generic $2 \times 2$ games that constitute cooperative dilemmas in which mutual cooperation is the social optimum\citep{si-Pena2023}.
This requires that all of the following inequalities hold:
\begin{align}
 R & > P, \label{R>P}\\
 R & > S, \label{R>S}\\
 T & > P, \label{T>P} \\
 2R & > T+S, \label{2R>T+S}
\end{align}
and that either
\begin{align}
 T & > R \label{T>R}
\end{align}
or
\begin{align}
 P & > S \label{P>S}
\end{align}
also holds.
Inequality \eqref{R>P} means that mutual cooperation is preferred over mutual defection.
Inequalities \eqref{R>S} and \eqref{T>P} imply that each player prefers the co-player to cooperate rather than defect.
Inequality \eqref{2R>T+S} ensures that two cooperators earn a higher total payoff ($2R$) than a pair in which one cooperates and the other defects ($T+S$); together with \eqref{R>P}, this makes mutual cooperation the social optimum.
Inequalities \eqref{T>R} and \eqref{P>S} indicate that there is some temptation for players to defect; this is a necessary condition for the game to be a social dilemma\citep{si-Pena2023}.

Three classes of games fulfill these requirements\citep{si-Pena2023}: prisoner's dilemmas (PD), characterized by
\begin{equation} \label{PD}
 T>R>P>S \ \text{ and } \ 2R>T+S;
\end{equation}
snowdrift games (SG), characterized by
\begin{equation} \label{SG}
 T>R>S>P \ \text{ and } \ 2R>T+S;
\end{equation}
and stag hunts (SH), characterized by
\begin{equation} \label{SH}
 R>T>P>S,
\end{equation}
which trivially satisfy~\eqref{2R>T+S}.
Note that all three classes satisfy
\begin{equation} \label{T>S}
 T>S .
\end{equation}

Normalizing the payoffs by a min-max transformation (\hyperref[MT-methods]{Methods}), so that the largest payoff equals one and the smallest equals zero, we obtain
\begin{equation} \label{normalized-PD}
 1>R>P>0 \ \text{ and } \ 2R>1
\end{equation}
for prisoner's dilemmas,
\begin{equation} \label{normalized-SG}
 1>R>S>0 \ \text{ and } \ 2R>1+S
\end{equation}
for snowdrift games,
and
\begin{equation} \label{normalized-SH}
 1>T>P>0
\end{equation}
for stag hunts. This fixes two of the four payoffs, leaving each class with two free parameters (see \ref{MT-fig:payoff_space} of the main text).

\subsection{Strategies, assessment rules, and shorthand notations}\label{subsec:shorthand}

Consider a population where observers judge players according to assessment rule $d$, resident players adopt strategy $p$, and a small fraction of mutant players adopt strategy $\tilde{p}$.
As specified in the \hyperref[MT-methods]{Methods} of the main text, the strategies $p$ and $\tilde{p}$ are Boolean functions $p,\tilde{p}: \left\{\B,\G\right\}^{2} \to \left\{\D,\C\right\}$ prescribing an action for each combination of the reputations of the focal player and their co-player.
Likewise, the assessment rule is a Boolean function
$d: \left\{\B,\G\right\}^2 \times \left\{\D,\C\right\}^2 \to \left\{\B,\G\right\}$ prescribing a reputation
for each combination of the reputations and actions of the focal player and their co-player.

Consider a focal player who interacts with a co-player and is judged by an observer.
Let $r_1 \in \left\{\B,\G\right\}$ denote the reputation of the focal player and $r_2 \in \left\{\B,\G\right\}$ that of the co-player.
The focal player's strategy prescribes action $p(r_1,r_2)$ if they are a resident and $\tilde{p}(r_1,r_2)$ if they are a mutant.
Once the focal player has taken an action $a_1 \in \left\{\D,\C\right\}$ against a co-player who has played $a_2 \in \left\{\D,\C\right\}$, the assessment rule prescribes a new reputation $d(r_1,r_2,a_1,a_2)$ for the focal player.

It is convenient to introduce shorthand notation for the actions and reputations prescribed in different situations.
We use the following conventions throughout:
\begin{subequations}
\begin{align}
 \prfrc & := p(\rf,\rc), \\
 \prfrcm & := \tilde{p}(\rf,\rc), \\
 \mathsf{d}_{\rf\rc\af\ac} & := d(\rf,\rc,\af,\ac), \\
 \drfrc & := d(\rf,\rc,\prfrc,\prcrf), \label{dr1r2-shorthand} \\
 \drfrcm & := d(\rf,\rc,\prfrcm,\prcrf), \label{tildedr1r2-shorthand} \\
 \notdrfrc & := d(\rf,\rc,\neg\prfrc,\prcrf), \label{notdr1r2-shorthand}
\end{align}
\end{subequations}
where $\neg$ stands for the logical negation (with our labeling convention \eqref{C=G=1,D=B=0}, $\neg \C = \D$ and $\neg \D = \C$).
In expressions \eqref{dr1r2-shorthand}--\eqref{notdr1r2-shorthand} the co-player is a resident who behaves as prescribed, playing $p(r_2,r_1)$; the three differ only in the focal player's action.
Expression \eqref{dr1r2-shorthand} is the new reputation prescribed to a resident focal player, who plays $p(r_1,r_2)$, and \eqref{tildedr1r2-shorthand} that prescribed to a mutant focal player, who plays $\tilde{p}(r_1,r_2)$ -- both in the absence of execution errors.
Expression \eqref{notdr1r2-shorthand} is the new reputation prescribed to a focal player (resident or mutant) whose action deviates from the resident prescription, playing $\neg p(r_1,r_2)$.

\subsection{Counting assessment rules by the arguments they use}\label{subsec:norm-counting}

\ref{MT-fig:space} of the main text situates the space of assessment rules we study relative to the spaces explored in previous work.
Here we define the sets appearing there and derive their cardinalities.

An assessment rule is a Boolean function of the four binary arguments in $\mathcal{V} = \left\{\rf, \rc, \af, \ac\right\}$.
For a subset $\mathcal{U} \subseteq \mathcal{V}$, let $\mathcal{T}_{\mathcal{U}}$ denote the set of assessment rules of type $\mathcal{U}$: those that can be written as functions of the arguments in $\mathcal{U}$ alone.
There are $|\mathcal{T}_{\mathcal{U}}| = 2^{2^{|\mathcal{U}|}}$ rules of type $\mathcal{U}$, and $\mathcal{T}_{\mathcal{U}} \cap \mathcal{T}_{\mathcal{U}'} = \mathcal{T}_{\mathcal{U} \cap \mathcal{U}'}$, since a rule that ignores the arguments outside $\mathcal{U}$ and the arguments outside $\mathcal{U}'$ ignores all arguments outside $\mathcal{U} \cap \mathcal{U}'$.
For $k = 0, 1, \ldots, 4$, let
\begin{equation*}
O_k := \bigcup_{\mathcal{U} \subseteq \mathcal{V},\, |\mathcal{U}| = k} \mathcal{T}_{\mathcal{U}}
\end{equation*}
denote the set of assessment rules of order at most $k$; in particular, $O_4$ is the universe of $2^{2^4} = 2^{16} = \numprint{65536}$ assessment rules studied in this paper.

A rule is said to depend \emph{essentially} on the subset $\mathcal{U}$ if it is of type $\mathcal{U}$ but not of type $\mathcal{U}'$ for any proper subset $\mathcal{U}' \subset \mathcal{U}$.
The following proposition counts assessment rules by their essential arguments and, with them, the sets $O_k$.

\begin{proposition}[Number of assessment rules of each order] \label{prop:norm-counting}
For $k = 0, 1, \ldots, 4$, the number of assessment rules that depend essentially on a given subset $\mathcal{U} \subseteq \mathcal{V}$ with $k$ elements depends only on $k$.
Denoting this number by $N_k$, it satisfies the recursion
\begin{equation} \label{Nk-recursion}
 N_k = 2^{2^{k}} - \sum_{j<k} \binom{k}{j} N_j ,
\end{equation}
giving $N_0 = 2$, $N_1 = 2$, $N_2 = 10$, $N_3 = 218$, and $N_4 = \numprint{64594}$.
Moreover, the sets of assessment rules of order at most $k$ have cardinalities
\begin{equation} \label{Ok-cardinality}
 |O_k| = \sum_{j \leq k} \binom{4}{j} N_j ,
\end{equation}
giving $|O_0| = 2$, $|O_1| = 10$, $|O_2| = 70$, $|O_3| = 942$, and $|O_4| = 2^{16} = \numprint{65536}$.
\end{proposition}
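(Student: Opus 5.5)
The plan is to partition each set $\mathcal{T}_{\mathcal{U}}$ according to the \emph{essential support} of its members. For a Boolean function $f$ of the four arguments in $\mathcal{V}$, define its essential support $\mathrm{ess}(f)\subseteq\mathcal{V}$ as the set of arguments on which $f$ actually depends. An argument $x$ belongs to $\mathrm{ess}(f)$ if there are two inputs differing only in $x$ on which $f$ takes different values.

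The first step is to show that $f\in\mathcal{T}_{\mathcal{U}}$ if and only if $\mathrm{ess}(f)\subseteq\mathcal{U}$. One direction is immediate. For the other, flip the arguments outside $\mathcal{U}$ one at a time; by definition of $\mathrm{ess}$, no single flip changes the value. This gives the useful consequence that $f$ is of type $\mathrm{ess}(f)$. It follows that $f$ depends essentially on $\mathcal{U}$ in the sense of the statement exactly when $\mathrm{ess}(f)=\mathcal{U}$. The paper's remark $\mathcal{T}_{\mathcal{U}}\cap\mathcal{T}_{\mathcal{U}'}=\mathcal{T}_{\mathcal{U}\cap\mathcal{U}'}$ is the same fact in another form.

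Next, observe that the number of rules with $\mathrm{ess}(f)=\mathcal{U}$ depends only on $|\mathcal{U}|$. Restricting $f$ to the $2^{|\mathcal{U}|}$ settings of the arguments in $\mathcal{U}$ is a bijection between $\mathcal{T}_{\mathcal{U}}$ and the Boolean functions of $|\mathcal{U}|$ variables. This bijection preserves essential support, and any relabeling of the variables carries functions with full support onto functions with full support. Call this common count $N_k$.

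Now partition $\mathcal{T}_{\mathcal{U}}$ by essential support. Since the support can be any $\mathcal{U}'\subseteq\mathcal{U}$, this gives
\begin{equation*}
2^{2^{k}}=|\mathcal{T}_{\mathcal{U}}|=\sum_{\mathcal{U}'\subseteq\mathcal{U}} N_{|\mathcal{U}'|}=\sum_{j\le k}\binom{k}{j}N_j ,
\end{equation*}
which rearranges to \eqref{Nk-recursion}. The numerical values then follow by substitution:
\begin{itemize}
\item $N_0=2$, the two constant rules;
\item $N_1=4-2=2$;
\item $N_2=16-2-4=10$;
\item $N_3=256-2-6-30=218$;
\item $N_4=65536-2-8-60-872=64594$.
\end{itemize}

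For \eqref{Ok-cardinality}, note that $f\in O_k$ if and only if $|\mathrm{ess}(f)|\le k$. Indeed, if $|\mathrm{ess}(f)|\le k$, we can pad $\mathrm{ess}(f)$ to a set of size exactly $k$. Conversely, if $f\in\mathcal{T}_{\mathcal{U}}$ with $|\mathcal{U}|=k$, then $\mathrm{ess}(f)\subseteq\mathcal{U}$. Counting rules in $O_k$ by essential support, with $\binom{4}{j}$ supports of each size $j$, gives $|O_k|=\sum_{j\le k}\binom{4}{j}N_j$. Substituting yields the values:
\begin{itemize}
\item $|O_0|=2$;
\item $|O_1|=2+8=10$;
\item $|O_2|=10+60=70$;
\item $|O_3|=70+872=942$;
\item $|O_4|=942+64594=65536$.
\end{itemize}
The last value confirms the total count $2^{16}$.

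The only non-routine point, and the one I expect to require care, is the first step: a rule that is insensitive to each individual argument outside $\mathcal{U}$ is a function of $\mathcal{U}$ alone. This is what makes the essential support well defined and the partition exhaustive. Without it, a rule could be counted under several supports, or under none. The one-at-a-time flipping argument handles this, and everything after it is bookkeeping.
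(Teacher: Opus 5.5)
Your proof is correct and follows essentially the same route as the paper: partition each $\mathcal{T}_{\mathcal{U}}$ by the unique essential support (the paper's unique minimal type), read off the recursion, and sum over supports of size at most $k$ to obtain $|O_k|$. The differences are minor. You justify uniqueness of the minimal type with the one-at-a-time flipping argument, which is the content behind the paper's intersection property $\mathcal{T}_{\mathcal{U}}\cap\mathcal{T}_{\mathcal{U}'}=\mathcal{T}_{\mathcal{U}\cap\mathcal{U}'}$; and you show the count depends only on $k$ by a restriction-and-relabeling bijection, whereas the paper uses induction on the recursion.
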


\begin{proof}
Since type classes are closed under intersection, each rule is of a unique minimal type, that is, it depends essentially on exactly one subset of $\mathcal{V}$.
Consequently, the $2^{2^{|\mathcal{U}|}}$ rules of type $\mathcal{U}$ partition according to the subset $\mathcal{U}' \subseteq \mathcal{U}$ on which they depend essentially, so the number $E(\mathcal{U})$ of rules depending essentially on $\mathcal{U}$ satisfies $E(\mathcal{U}) = 2^{2^{|\mathcal{U}|}} - \sum_{\mathcal{U}' \subsetneq\, \mathcal{U}} E(\mathcal{U}')$.
By induction on $|\mathcal{U}|$, starting from $E(\emptyset) = 2$ (the two constant rules, which assign $\G$ to everybody or $\B$ to everybody), $E(\mathcal{U})$ depends only on $k = |\mathcal{U}|$, and grouping the proper subsets of $\mathcal{U}$ by their cardinalities yields the recursion~\eqref{Nk-recursion}.
Finally, a rule belongs to $O_k$ if and only if the subset on which it depends essentially has at most $k$ elements; summing $N_j$ over the $\binom{4}{j}$ subsets of $\mathcal{V}$ with $j \leq k$ elements yields Eq.~\eqref{Ok-cardinality}.
\end{proof}

The spaces explored by previous literature are $\mathcal{T}_{\left\{\af,\rf,\rc\right\}}$, the $2^{2^3} = 256$ third-order rules of type $\af\rf\rc$\citep{si-ohtsuki:JTB:2004a}, and $\mathcal{T}_{\left\{\af,\ac,\rc\right\}}$, the $256$ third-order rules of type $\af\ac\rc$\citep{si-Nakamura2014}.
By the intersection property above, a rule belongs to both spaces if and only if it is of type $\af\rc$.
The two spaces therefore share $|\mathcal{T}_{\left\{\af,\rc\right\}}| = 2^{2^2} = 16$ rules, and their union contains $256 + 256 - 16 = 496$ rules.

Consider first the $|O_2| = 70$ rules of order at most two (Proposition~\ref{prop:norm-counting}).
Such a rule lies in $\mathcal{T}_{\left\{\af,\rf,\rc\right\}}$ whenever its essential arguments avoid $\ac$, and in $\mathcal{T}_{\left\{\af,\ac,\rc\right\}}$ whenever they avoid $\rf$.
The rules of order at most two lying outside both previously studied spaces are therefore exactly the rules that depend essentially on the pair $\ac\rf$, which number $N_2 = 10$.
The rules of order at most two belonging to $\mathcal{T}_{\left\{\af,\rf,\rc\right\}}$ are those whose essential arguments form one of the $1 + 3 + 3$ subsets of $\left\{\af,\rf,\rc\right\}$ with at most two elements; they number $N_0 + 3N_1 + 3N_2 = 38$.
Of these, the $16$ rules of type $\af\rc$ are shared with $\mathcal{T}_{\left\{\af,\ac,\rc\right\}}$, leaving $38 - 16 = 22$ exclusive ones; by symmetry, the same counts hold for $\mathcal{T}_{\left\{\af,\ac,\rc\right\}}$.

Consider next the rules of essential order three.
Those depending essentially on all of $\af$, $\rf$, $\rc$ number $N_3 = 218$ and complete the count $38 + 218 = 256$ of $\mathcal{T}_{\left\{\af,\rf,\rc\right\}}$; symmetrically for $\mathcal{T}_{\left\{\af,\ac,\rc\right\}}$.
The two remaining triples define the type classes $\mathcal{T}_{\left\{\af,\ac,\rf\right\}}$ and $\mathcal{T}_{\left\{\ac,\rf,\rc\right\}}$, which contain the pair $\ac\rf$ and were not previously explored.
Their $2 N_3 = 436$ essentially third-order rules, together with the $10$ rules that depend essentially on the pair $\ac\rf$, account for all $|O_3| - 496 = 446$ rules of order at most three lying outside the previously studied spaces.
The remaining $N_4 = \numprint{64594}$ rules of the universe $O_4$ are of essential order four.

All of these counts were additionally verified by exhaustive enumeration of the $2^{16}$ assessment rules.
In \ref{MT-fig:space}, the area of every region is proportional, within each panel, to the number of assessment rules it contains.

\subsection{Resident reputation dynamics and equilibrium}\label{subsec:resident-dynamics}
Throughout, we adopt a public-assessment model of reputation\citep{si-ohtsuki:JTB:2004a,si-ohtsuki:JTB:2006a}: a single observer witnesses each interaction and assigns each of the two players a new reputation, and these assessments are then shared with the whole population (for instance, through gossip or a central authority), so that every individual holds the same view of any given player.
This shared, population-wide agreement is what allows the population's state to be summarized by a single variable -- the fraction of individuals in good standing -- whose dynamics we now derive.

Consider a resident population implementing moral system $(p,d)$: all players behave according to strategy $p$ and observers judge according to assessment rule $d$.
Within this population, take a focal player with reputation $\rf \in \left\{\B,\G\right\}$ playing a game against a co-player with reputation $\rc \in \left\{\B,\G\right\}$; the resident strategy then prescribes that the focal player play $\prfrc$ and the co-player play $\prcrf$.
Given our labeling convention \eqref{C=G=1,D=B=0}, the probability that the focal player is assigned a good reputation can be written as
\begin{align} \label{deltaij}
 \delta_{\rf\rc} & = \chi + (1-2\chi) \left\{ (1-\epsilon)^2 \mathsf{d}_{\rf\rc} \right. \nonumber \\
 & \left. + \epsilon (1-\epsilon) \left[ d(\rf,\rc,\D,\prcrf) + d(\rf,\rc,\prfrc,\D) \right] \right. \nonumber \\
 & \left. + \epsilon^2 d(\rf,\rc,\D,\D) \right\},
\end{align}
where $\chi \in (0,1/2)$ is the probability that the observer makes an assignment error\citep{si-ohtsuki:JTB:2004a}, and $\epsilon \in [0,1/2)$ is the probability that a player makes an execution error\citep{si-ohtsuki:JTB:2004a}.

Let $g_{p,d}$ denote the proportion of individuals with a good reputation in the resident population.
Following the same assumptions and arguments as in previous studies\citep{si-ohtsuki:JTB:2004a,si-Nakamura2014}, $g_{p,d}$ changes according to the continuous-time dynamics
\begin{align} \label{gdot}
 \dot{g}_{p,d} & = \delta_{\G\G} g_{p,d}^2 + \left(\delta_{\B\G}+\delta_{\G\B}\right) g_{p,d}(1-g_{p,d}) + \delta_{\B\B} (1-g_{p,d})^2 - g_{p,d} .
\end{align}
Setting the right-hand side to zero and solving for $g_{p,d}$ yields the (unique) resident reputation equilibrium (ref.~\citenum{si-Nakamura2014}, Eq.~3)
\begin{equation} \label{g*}
 g_{p,d}^* =
\left\{
	\begin{array}{ll}
		\frac{-b-\sqrt{b^2-4 a c}}{2 a}  & \mbox{if } a \neq 0 \\
		-\frac{c}{b} & \mbox{if } a = 0
	\end{array}
\right.,
\end{equation}
where
\begin{subequations} \label{abc}
 \begin{align}
  a & = \delta_{\B\B} + \delta_{\G\G} - \delta_{\B\G} - \delta_{\G\B}, \\
  b & = \delta_{\B\G} + \delta_{\G\B} - 2 \delta_{\B\B} - 1, \\
  c & = \delta_{\B\B}.
 \end{align}
\end{subequations}
This equilibrium is asymptotically stable\citep{si-ohtsuki:JTB:2004a,si-Nakamura2014}.

\subsection{Cooperation index}
\label{subsec:cooperation-index}

We define the cooperation index $\kappa_{p,d}$ of moral system $(p,d)$ as the probability that an individual in a population implementing $(p,d)$ cooperates at equilibrium.
This probability depends on the execution and assignment error rates $\epsilon$ and $\chi$, and is given by
\begin{align} \label{coop-index}
 \kappa_{p,d} & = (1-\epsilon)\left[g^2 \pGG + g(1-g) \left(\pGB + \pBG \right) + (1-g)^2 \pBB \right]_{g=g^*_{p,d}} .
\end{align}

\subsection{Mutant reputation dynamics and equilibrium}\label{subsec:mutant-dynamics}

Suppose the resident population has settled at the reputation equilibrium $g_{p,d}^*$ \eqref{g*}, and that a vanishingly small fraction of individuals (`mutants') now switch to a strategy $\tilde{p} \neq p$.
Since mutants are rare, they interact only with residents.
A focal mutant with reputation $\rf$ interacting with a resident co-player with reputation $\rc$ is assigned a good reputation with probability
\begin{align} \label{tildedeltaij}
 \tilde{\delta}_{\rf\rc} & = \chi + (1-2\chi) \left\{ (1-\epsilon)^2 \tilde{\mathsf{d}}_{\rf\rc} + \epsilon (1-\epsilon) \left[ d(\rf,\rc,\D,\prcrf) + d(\rf,\rc,\prfrcm,\D) \right] \right. \nonumber \\
 & \left. + \epsilon^2 d(\rf,\rc,\D,\D) \right\} .
\end{align}
The proportion $h_{\tilde{p},p,d}$ of mutants with a good reputation then changes according to the continuous-time dynamics
\begin{align} \label{hdot}
 \dot{h}_{\tilde{p},p,d} & = h_{\tilde{p},p,d} g^*_{p,d} \tilde{\delta}_{\G\G} + h_{\tilde{p},p,d} (1-g^*_{p,d}) \tilde{\delta}_{\G\B} \nonumber \\
 & + (1-h_{\tilde{p},p,d}) g^*_{p,d} \tilde{\delta}_{\B\G} + (1-h_{\tilde{p},p,d})(1-g^*_{p,d}) \tilde{\delta}_{\B\B} - h_{\tilde{p},p,d} .
\end{align}
Solving for the (unique) mutant reputation equilibrium (cf. ref.~\citenum{si-Nakamura2014}, Eq.~8) yields
\begin{align} \label{h*}
 h^*_{\tilde{p},p,d} & = \frac{\tilde{\delta}_{\B\B} + \left(\tilde{\delta}_{\B\G} - \tilde{\delta}_{\B\B}\right) g^*_{p,d}}{1+\tilde{\delta}_{\B\B}-\tilde{\delta}_{\G\B}+\left(\tilde{\delta}_{\B\G}+\tilde{\delta}_{\G\B}-\tilde{\delta}_{\B\B}-\tilde{\delta}_{\G\G}\right)g^*_{p,d}},
\end{align}
which is asymptotically stable\citep{si-ohtsuki:JTB:2004a,si-Nakamura2014}.

\subsection{Equilibrium payoffs and evolutionary stability}\label{subsec:payoffs-stability}

Consider a polymorphic population composed of residents playing strategy $p$ and vanishingly rare mutants playing strategy $\tilde{p}$, settled at the reputation equilibrium $(g^*_{p,d},h^*_{\tilde{p},p,d})$;
that is, the proportion of good residents is $g^*_{p,d}$ and the proportion of good mutants is $h^*_{\tilde{p},p,d}$.

Since mutants are rare, the expected payoff to a resident depends only on their interactions with other residents.
Taking into account execution errors occurring with probability $\epsilon$, the expected payoff of a resident with reputation $\rf$ playing against a resident with reputation $\rc$ is given by
\begin{align} \label{omegaij}
 \omega_{\rf\rc} & = (1-\epsilon)^2 v(\prfrc,\prcrf) + \epsilon (1-\epsilon) \left[v(\prfrc,\D) + v(\D,\prcrf)\right] + \epsilon^2 \vDD .
\end{align}
Using this expression, the expected payoff to a resident at the resident reputation equilibrium $g^*_{p,d}$ can then be written as
\begin{align} \label{w}
 w_{p,d} & = \left[(1-g)^2 \omega_{\B\B} + g(1-g)\left(\omega_{\B\G} + \omega_{\G\B} \right)+ g^2 \omega_{\G\G}\right]_{g=g^*_{p,d}} .
\end{align}

Again, since mutants are rare, the expected payoff to a mutant depends only on their interactions with residents.
The expected payoff of a mutant with reputation $\rf$ playing against a resident with reputation $\rc$ is thus given by
\begin{align} \label{tildeomega}
 \tilde{\omega}_{\rf\rc} & = (1-\epsilon)^2 v(\prfrcm,\prcrf) + \epsilon (1-\epsilon) \left[v(\prfrcm,\D) + v(\D,\prcrf)\right] + \epsilon^2 \vDD .
\end{align}
Using this expression, the expected payoff to a mutant at the reputation equilibrium $(g^*_{p,d},h^*_{\tilde{p},p,d})$ can be written as
\begin{align} \label{tildew}
 \tilde{w}_{\tilde{p},p,d} & = \left[ (1-h)(1-g) \tilde{\omega}_{\B\B} + (1-h)g \tilde{\omega}_{\B\G} + h(1-g) \tilde{\omega}_{\G\B} + h g \tilde{\omega}_{\G\G} \right]_{g=g^*_{p,d},h=h^*_{\tilde{p},p,d}} .
\end{align}

The evolutionary stability of strategy $p$ under assessment rule $d$ can be determined once the resident payoff and the payoffs of all possible mutant strategies have been calculated from \eqref{w} and \eqref{tildew}.
Under our solution concept (strict Nash equilibrium), $p$ is stable under $d$ if and only if
\begin{align} \label{evol-stab-cond}
w_{p,d} > \tilde w_{\tilde p,p,d}
\quad
\text{for all }
\tilde p \in \mathcal P \setminus \{p\}
\end{align}
holds, where $\mathcal{P}$ is the set of all strategies we consider (see also Eq.~(\ref{MT-stability-condition}) in the main text).
For brevity, we henceforth say that $(p,d)$ is stable to mean that $p$ is stable under $d$.

\subsection{Stability indices}
\label{subsec:stability-indices}

Let $\mathcal{G}_\Gamma$ denote the normalized two-dimensional payoff region corresponding to game class $\Gamma \in \{\mathrm{PD},\mathrm{SG},\mathrm{SH}\}$ (see Eqs.~\eqref{normalized-PD}--\eqref{normalized-SH}), each point $\bm{\pi} \in \mathcal{G}_\Gamma$ representing a normalized payoff configuration.
For fixed error rates $(\epsilon,\chi)$ and a moral system $(p,d)$, define
\[
 \mathcal{S}_{p,d}^\Gamma(\epsilon,\chi) := \left\{\, \bm{\pi} \in \mathcal{G}_\Gamma \; : \; p \text{ is stable under } d \text{ at payoff } \bm{\pi} \,\right\},
\]
where stability at a given payoff configuration is understood in the sense of condition~\eqref{evol-stab-cond}.

The stability index of $p$ under $d$ in class $\Gamma$ is
\[
 \sigma_{p,d}^\Gamma(\epsilon,\chi) := \frac{\lambda_2\!\left(\mathcal{S}_{p,d}^\Gamma(\epsilon,\chi)\right)}{\lambda_2\!\left(\mathcal{G}_\Gamma\right)},
\]
where $\lambda_2$ denotes the two-dimensional Lebesgue measure on the payoff parameter space.
Thus $\sigma_{p,d}^\Gamma(\epsilon,\chi) \in [0,1]$ is the normalized area of the payoff region within class $\Gamma$ for which $p$ is stable under $d$ -- equivalently, the probability that $p$ is stable under $d$ when a payoff configuration is drawn uniformly at random from~$\mathcal{G}_\Gamma$.
For instance, $\sigmaSG_{p,d}=1$ indicates that $p$ is stable under $d$ for every snowdrift game in the admissible parameter region, whereas $\sigmaPD_{p,d}=1/2$ indicates stability on exactly one half of the normalized prisoner's dilemma payoff space.

To facilitate comparison of moral systems across the three classes of cooperative dilemmas, we also report the average stability index
\begin{equation} \label{avg-sigma}
\sigmabar_{p,d} =  \frac{1}{3} \sigmaPD_{p,d} + \frac{1}{3} \sigmaSG_{p,d} + \frac{1}{3} \sigmaSH_{p,d} .
\end{equation}
This quantity assigns equal weight to each of the three dilemma classes, regardless of the differing areas of their payoff regions.

Below, we report $\sigma^{\Gamma}_{p,d}$ for each $\Gamma \in \left\{\mathrm{PD},\mathrm{SG},\mathrm{SH}\right\}$ and the average stability index~\eqref{avg-sigma} for certain moral systems $(p,d)$ in the absence of execution errors ($\epsilon=0$) and in the limit of vanishing assignment errors ($\chi \to 0$).
Writing $A_{\Gamma,p,d} := \lambda_2\!\left(\mathcal{S}_{p,d}^\Gamma\right)$ and $A_\Gamma := \lambda_2\!\left(\mathcal{G}_\Gamma\right)$ for the corresponding areas, these stability indices can be expressed as
\begin{equation} \label{sigma-gamma}
 \lim_{\chi \to 0} \left(\sigma^\Gamma_{p,d}\right)_{\epsilon=0} = \frac{\lim_{\chi \to 0} \left(A_{\Gamma,p,d}\right)_{\epsilon=0}}{A_{\Gamma}}.
\end{equation}

In the cases we study, the stability regions $\mathcal{S}_{p,d}^\Gamma$ are polygons whose vertices can be determined analytically; their areas are then computed with the shoelace formula:
\begin{equation} \label{shoelace}
 A = \frac{1}{2} \sum_{i=0}^{n-1} \left(x_{i}y_{i+1}-x_{i+1}y_i\right),
\end{equation}
where $(x_0,y_0), (x_1,y_1), \ldots, (x_n,y_n)$ are the vertices listed in counter-clockwise order, with $x_n=x_0$ and $y_n=y_0$.

This polygonal structure of $\mathcal{S}_{p,d}^\Gamma$, on which the analytical stability indices reported below rely, holds in the absence of execution errors and in the limit of vanishing assignment errors ($\epsilon=0$, $\chi \to 0$).
At the finite error rates used for our numerical results ($\epsilon=\chi=0.02$ in the main text, Fig.~\ref{MT-fig:scatter}; and the error-rate grid of Section~\ref{sec:supp-numerical}), the stability regions $\mathcal{S}_{p,d}^\Gamma$ are generally no longer polygons.
There, we estimate each stability index by Monte Carlo integration, as detailed in Section~\ref{subsec:numerical-pipeline}.

\subsubsection{Areas of the payoff regions}

We compute the total area $A_\Gamma = \lambda_2(\mathcal{G}_\Gamma)$ for each game class.
In each case, $\mathcal{G}_\Gamma$ is a convex polygon determined by the normalization constraints (Eqs.~\eqref{normalized-PD}--\eqref{normalized-SH}).

\paragraph{Prisoner's dilemmas.}

The two free parameters are $R$ and $P$, with constraints given by~\eqref{normalized-PD}.
The region $\mathcal{G}_{\mathrm{PD}}$ (\ref{MT-fig:payoff_space}a) is a quadrilateral whose vertices, in $(R,P)$ coordinates, are
\begin{equation*}
  \left(\frac{1}{2},0\right), \left(1,0\right), \left(1,1\right), \left(\frac{1}{2},\frac{1}{2}\right).
\end{equation*}
Applying the shoelace formula~\eqref{shoelace} gives
\begin{equation} \label{A-PD}
  A_{\mathrm{PD}} = \frac{3}{8} .
\end{equation}

\paragraph{Snowdrift games.}

The two free parameters are $R$ and $S$, with constraints given by~\eqref{normalized-SG}.
The region $\mathcal{G}_{\mathrm{SG}}$ (\ref{MT-fig:payoff_space}b) is a triangle whose vertices, in $(R,S)$ coordinates, are
\begin{equation*}
  \left(\frac{1}{2},0\right), \left(1,0\right), \left(1,1\right).
\end{equation*}
Applying the shoelace formula~\eqref{shoelace} gives
\begin{equation} \label{A-SG}
  A_{\mathrm{SG}} = \frac{1}{4} .
\end{equation}

\paragraph{Stag hunts.}

The two free parameters are $T$ and $P$, with constraints given by~\eqref{normalized-SH}.
The region $\mathcal{G}_{\mathrm{SH}}$ (\ref{MT-fig:payoff_space}c) is a triangle whose vertices, in $(T,P)$ coordinates, are
\begin{equation*}
 \left(0,0\right), \left(1,0\right), \left(1,1\right).
\end{equation*}
Applying the shoelace formula~\eqref{shoelace} gives
\begin{equation} \label{A-SH}
  A_{\mathrm{SH}} = \frac{1}{2} .
\end{equation}

\subsubsection{Stability indices for two recurring binding conditions}

Only two non-trivial stability regions arise in the stability analyses that follow (Sections~\ref{subsec:stab-anal-disc} and~\ref{subsec:stab-anal-leading-eight}); we compute their stability indices here.

\paragraph{Snowdrift games with $R+P>2S$.}

Consider a snowdrift game~\eqref{SG} and a moral system $(p,d)$ such that $p$ is stable under $d$ for $\epsilon=0$ and $\chi \to 0$ if and only if $R+P>2S$ holds.
Under the normalization~\eqref{normalized-SG}, this condition becomes $R>2S$, so the stability region is
\[
 \mathcal{S}_{p,d}^{\mathrm{SG}} = \left\{\, \bm{\pi} \in \mathcal{G}_{\mathrm{SG}} \; : \; R > 2S \,\right\},
\]
a quadrilateral (Fig.~\ref{fig:lemma-regions}a) whose vertices, in $(R,S)$ coordinates, are
\begin{equation*}
  \left(\frac{1}{2},0\right), \left(1,0\right), \left(1,\frac{1}{2}\right), \left(\frac{2}{3},\frac{1}{3}\right).
\end{equation*}
Applying the shoelace formula~\eqref{shoelace} gives
\begin{equation} \label{A-SG-R+P>2S}
 \lim_{\chi \to 0} \left( A_{\mathrm{SG},p,d} \right)_{\epsilon=0} = \frac{1}{6}.
\end{equation}
Substituting~\eqref{A-SG-R+P>2S} and~\eqref{A-SG} into~\eqref{sigma-gamma} yields the following lemma.

\begin{lemma}[Stability index across snowdrift games with binding condition $R+P>2S$] \label{lem:sigma-SG-R+P>2S}
Let $\epsilon=0$ and $\chi \to 0$.
If $p$ is stable under $d$ across snowdrift games if and only if $R+P>2S$, then $\lim_{\chi \to 0} \left( \sigmaSG_{p,d} \right)_{\epsilon=0} = 2/3$.
\end{lemma}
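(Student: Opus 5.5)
The plan is a direct computation: substitute the hypothesis into the definition of the stability index and evaluate the resulting area ratio using the machinery already set up.

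\textbf{Step 1: translate the hypothesis into a planar region.} By assumption, for $\epsilon=0$ and $\chi\to0$ the stability region within snowdrift games is exactly the set of normalized payoffs satisfying $R+P>2S$. Under the min-max normalization~\eqref{normalized-SG} of a snowdrift game ($T>R>S>P$), the largest payoff is $T=1$ and the smallest is $P=0$. The condition therefore becomes $R>2S$, and
\[
\lim_{\chi\to0}\left(\mathcal{S}^{\mathrm{SG}}_{p,d}\right)_{\epsilon=0}=\left\{(R,S)\in\mathcal{G}_{\mathrm{SG}}: R>2S\right\}.
\]
Normalization is harmless here because the affine map preserves the inequality $R+P>2S$.

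\textbf{Step 2: identify the polygon and compute its area.} The region $\mathcal{G}_{\mathrm{SG}}$ is the triangle with vertices $(1/2,0)$, $(1,0)$, $(1,1)$, bounded by $S=0$, $R=1$ and $S=2R-1$. Its constraint $S<R$ is implied by $S<2R-1$ when $R<1$. The line $S=R/2$ meets these boundaries as follows:
\begin{itemize}
\item it meets $R=1$ at $(1,1/2)$;
\item it meets $S=2R-1$ at $R/2=2R-1$, that is, at $(2/3,1/3)$;
\item it meets $S=0$ only at $R=0$, which lies outside the triangle.
\end{itemize}
The side $R>2S$ of this line is the one containing the edge $S=0$. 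The stability region is therefore the quadrilateral with vertices $(1/2,0)$, $(1,0)$, $(1,1/2)$, $(2/3,1/3)$, listed counter-clockwise. The shoelace formula~\eqref{shoelace} then gives area $1/6$, as recorded in~\eqref{A-SG-R+P>2S}. As a cross-check, this equals the total area $1/4$ minus the area $1/12$ of the complementary triangle $(2/3,1/3)$, $(1,1/2)$, $(1,1)$.

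\textbf{Step 3: take the ratio.} Dividing by $A_{\mathrm{SG}}=1/4$ from~\eqref{A-SG} via~\eqref{sigma-gamma} gives
\[
\lim_{\chi\to0}\left(\sigma^{\mathrm{SG}}_{p,d}\right)_{\epsilon=0}=\frac{1/6}{1/4}=\frac{2}{3}.
\]

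The only delicate point is Step 2: one must check which edges of the triangle the line $S=R/2$ cuts, so that the vertex list and orientation are correct. Boundary lines have measure zero, so the distinction between strict and weak inequalities does not affect the result.
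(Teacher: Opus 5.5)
Your proposal is correct and follows the paper's argument: normalize the condition to $R>2S$, identify the quadrilateral with vertices $(1/2,0)$, $(1,0)$, $(1,1/2)$, $(2/3,1/3)$, get area $1/6$ from the shoelace formula, and divide by $A_{\mathrm{SG}}=1/4$. Your complementary-triangle cross-check, the remark that positive affine maps preserve $R+P>2S$, and the note that boundaries have measure zero are useful additions that the paper leaves implicit.
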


\paragraph{Prisoner's dilemmas with $2R>T+P$.}

Consider a prisoner's dilemma~\eqref{PD} and a moral system $(p,d)$ such that $p$ is stable under $d$ for $\epsilon=0$ and $\chi \to 0$ if and only if $2R>T+P$ holds.
Under the normalization~\eqref{normalized-PD}, this condition becomes $P<2R-1$, so the stability region is
\[
 \mathcal{S}_{p,d}^{\mathrm{PD}} = \left\{\, \bm{\pi} \in \mathcal{G}_{\mathrm{PD}} \; : \; P < 2R - 1 \,\right\},
\]
a triangle (Fig.~\ref{fig:lemma-regions}b) whose vertices, in $(R,P)$ coordinates, are
\begin{equation*}
  \left(\frac{1}{2},0\right), \left(1,0\right), \left(1,1\right).
\end{equation*}
Applying the shoelace formula~\eqref{shoelace} gives
\begin{equation} \label{A-PD-2R>T+P}
 \lim_{\chi \to 0} \left( A_{\mathrm{PD},p,d} \right)_{\epsilon=0} = \frac{1}{4} .
\end{equation}
Substituting~\eqref{A-PD-2R>T+P} and~\eqref{A-PD} into~\eqref{sigma-gamma} yields the following lemma.

\begin{lemma}[Stability index across prisoner's dilemmas when $2R>T+P$ binds] \label{lem:sigma-PD-2R>T+P}
Let $\epsilon=0$ and $\chi \to 0$.
If $p$ is stable under $d$ across prisoner's dilemmas if and only if $2R>T+P$, then $\lim_{\chi \to 0} \left( \sigmaPD_{p,d} \right)_{\epsilon=0} = 2/3$.
\end{lemma}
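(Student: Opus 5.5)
The lemma follows by substituting the two areas already computed into the ratio~\eqref{sigma-gamma}. The statement assumes that, for $\epsilon=0$ and $\chi\to0$, the stability region within the prisoner's dilemmas is exactly the set of payoff configurations satisfying $2R>T+P$. So the plan is to translate this condition into the normalized coordinates, identify the resulting polygon, compute its area, and divide by $A_{\mathrm{PD}}$.

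\emph{Normalization.} Under the min-max normalization of a prisoner's dilemma~\eqref{PD}, we have $T=1$ and $S=0$, with free parameters $(R,P)$ ranging over $\mathcal{G}_{\mathrm{PD}}$, defined by~\eqref{normalized-PD}. The condition $2R>T+P$ becomes $P<2R-1$. Because the condition is invariant under positive affine transformations of the payoffs, no generality is lost.

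\emph{Geometry of the region.} The line $P=2R-1$ passes through $(1/2,0)$ and $(1,1)$. These are two vertices of the quadrilateral $\mathcal{G}_{\mathrm{PD}}$, whose vertices are $(1/2,0),(1,0),(1,1),(1/2,1/2)$, so the line is a diagonal of it.
\begin{itemize}
\item The vertex $(1,0)$ satisfies $0<1$, so it lies on the stable side.
\item The vertex $(1/2,1/2)$ gives $1/2<0$, which is false, so it lies on the other side.
\end{itemize}
The constraints $P>0$ and $R<1$ remain active, while $P<R$ is implied by $P<2R-1$ whenever $R<1$. Hence $\mathcal{S}^{\mathrm{PD}}_{p,d}$ is the open triangle with vertices $(1/2,0),(1,0),(1,1)$. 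Boundary lines have Lebesgue measure zero, so strictness of the inequalities is irrelevant.

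\emph{Area and ratio.} The shoelace formula~\eqref{shoelace}, or simply half base times height with base $1/2$ and height $1$, gives the area $1/4$, as in~\eqref{A-PD-2R>T+P}. Dividing by $A_{\mathrm{PD}}=3/8$ from~\eqref{A-PD} gives $(1/4)/(3/8)=2/3$.

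The only point needing care is confirming that the triangle lies entirely inside $\mathcal{G}_{\mathrm{PD}}$ and that no other constraint of~\eqref{normalized-PD} cuts it. In particular, $2R>1$ follows automatically from $P<2R-1$ and $P>0$. The rest is arithmetic.
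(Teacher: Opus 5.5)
Your proposal is correct and follows the paper's own argument: normalize to $T=1$, $S=0$, rewrite the binding condition as $P<2R-1$, identify the stability region as the triangle with vertices $(1/2,0)$, $(1,0)$, $(1,1)$ of area $1/4$, and divide by $A_{\mathrm{PD}}=3/8$ to obtain $2/3$. Your extra checks are accurate and somewhat more explicit than the paper: the line is a diagonal of $\mathcal{G}_{\mathrm{PD}}$, the constraints $P<R$ and $2R>1$ are implied, and the boundary has measure zero.
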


\begin{figure}[tp]
    \centering
    \includegraphics{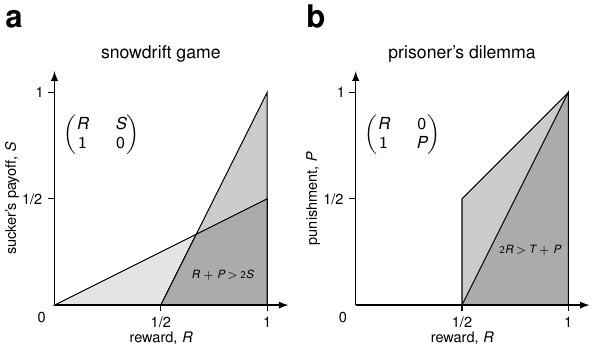}
    \caption{Illustration of Lemmas~\ref{lem:sigma-SG-R+P>2S} and~\ref{lem:sigma-PD-2R>T+P}. \textbf{a}, The region where $R+P>2S$ is superimposed on the region of snowdrift games, in normalized payoff space (Lemma~\ref{lem:sigma-SG-R+P>2S}). \textbf{b}, The region where $2R>T+P$ is superimposed on the region of prisoner's dilemmas in normalized payoff space (Lemma~\ref{lem:sigma-PD-2R>T+P}).\label{fig:lemma-regions}}
\end{figure}

\subsubsection{Weighted average stability indices}
\label{subsubsec:weighted-sigma}

More generally, we can define a weighted average stability index
\begin{equation*}
 \sigmabarw_{p,d} = \mathsf{w}_\mathrm{PD} \sigmaPD_{p,d} + \mathsf{w}_\mathrm{SG} \sigmaSG_{p,d} + \mathsf{w}_\mathrm{SH} \sigmaSH_{p,d}
\end{equation*}
for weights $\mathsf{w}_\mathrm{PD},\mathsf{w}_\mathrm{SG},\mathsf{w}_\mathrm{SH} > 0$ with $\mathsf{w}_\mathrm{PD}+\mathsf{w}_\mathrm{SG}+\mathsf{w}_\mathrm{SH}=1$.
The average stability index~\eqref{avg-sigma} is the particular case of equal weights, $\mathsf{w}_\mathrm{PD}=\mathsf{w}_\mathrm{SG}=\mathsf{w}_\mathrm{SH}=1/3$.
An alternative weights each class in proportion to the area of its payoff region,
\begin{equation*}
 \mathsf{w}_\Gamma = \frac{A_{\mathrm{\Gamma}}}{\sum_{\Gamma}A_{\mathrm{\Gamma}}} \ \text{ for } \Gamma \in \left\{\mathrm{PD},\mathrm{SG},\mathrm{SH}\right\},
\end{equation*}
which, using the areas $A_\mathrm{PD}=3/8$, $A_\mathrm{SG}=1/4$, $A_\mathrm{SH}=1/2$ computed above, gives
\begin{equation*}
 \mathsf{w}_{\mathrm{PD}} = \frac{1}{3}, \quad \mathsf{w}_{\mathrm{SG}} = \frac{2}{9}, \quad \text{and} \quad \mathsf{w}_{\mathrm{SH}} = \frac{4}{9}.
\end{equation*}
The prisoner's dilemma thus receives the same weight, $1/3$, under both the equal-weight and area-weighted schemes; the two differ only in how they split the remaining weight between snowdrift games and stag hunts.
We adopt this area-weighted index as a robustness check on our main results (Figs.~\ref{fig:scatter_zoom_area} and~\ref{fig:scatter_panel_area}).

\section{Definitions and preliminary results}

Here we introduce terminology for the subsets of strategies and moral systems that recur in our analysis.
These definitions provide the conceptual scaffolding for the theoretical results of subsequent sections.

In Section~\ref{subsec:def-conditional-discriminating}, we distinguish strategies according to whether they respond to reputation information.
Unconditional strategies ($\ALLC$ and $\ALLD$) prescribe the same action regardless of the two players' reputations, whereas conditional strategies allow behavior to depend on this information.
Among conditional strategies, we single out the discriminating strategies, which cooperate with co-players regarded as good and defect against those regarded as bad (provided the focal player is in good standing), and the contrite strategies, which cooperate whenever the focal player is in bad standing.
We extend these notions to moral systems $(p,d)$, defining conditional, discriminating, and contrite systems accordingly.

In Section~\ref{subsec:def-self-approving-righteous}, we characterize moral systems by their equilibrium properties at the population level.
A moral system is self-approving if a resident population comes to regard itself as good, and self-cooperative if it sustains mutual cooperation, both in the absence of execution errors and the limit of vanishing assignment errors.
Combining these, a system is righteous if it is both self-approving and self-cooperative.
Finally, a system is universally stable if it resists invasion by any alternative strategy across all three classes of cooperative dilemma.

In Section~\ref{subsec:def-consistent}, we introduce consistent systems, in which the assessment rule endorses as good exactly those who follow the action rule and penalizes those who deviate from it.
Consistent discriminating systems are central to our analysis; we describe their four subsets, one for each discriminating strategy.

In Section~\ref{subsec:leading8}, we define the leading-eight moral systems of Ohtsuki and Iwasa\citep{si-ohtsuki:JTB:2004a}, which serve as reference points throughout our analysis.
Finally, in Section~\ref{subsec:mirror-systems}, we formalize the mirror operation -- the swapping of the two reputation labels -- and prove the preliminary results used in later sections: a moral system and its mirror are behaviorally equivalent, whereas label-dependent properties such as self-approval hold for at most one member of each mirror pair.

\subsection{Conditional, discriminating, and contrite strategies and systems}\label{subsec:def-conditional-discriminating}

We begin by defining unconditional and conditional strategies.

\begin{definition}[Unconditional strategies]
A strategy is said to be \emph{unconditional} if $\pBB=\pBG=\pGB=\pGG$ holds.
\end{definition}

There are exactly two unconditional strategies in our strategy space: the `unconditional defector' ($\ALLD$), with $\pBB = \pBG = \pGB = \pGG = \D$, and the `unconditional cooperator' ($\ALLC$), with $\pBB = \pBG = \pGB = \pGG = \C$.

\begin{definition}[Conditional strategies] \label{def:cond}
A strategy is said to be \emph{conditional} if it is not unconditional.
\end{definition}

Among conditional strategies, a particularly important subset consists of those that, when the focal player is in good standing ($r_1=\G$), prescribe cooperation toward co-players regarded as good ($r_2=\G$) and defection against those regarded as bad ($r_2=\B$).
We define these strategies formally as follows.

\begin{definition}[Discriminating strategies] \label{def:disc}
A strategy $p$ is said to be \emph{discriminating} if both
\begin{equation*}
 \pGB = \D \ \text{ and } \ \pGG = \C
\end{equation*}
hold.
\end{definition}

As a string, discriminating strategies take the form $p = \wildcard\wildcard\D\C$.
Depending on what they prescribe when the focal player is in bad standing, we distinguish four discriminating strategies:
\begin{enumerate}
 \item `(Standard) discriminator' ($\DISC$, $p=\D\C\D\C$), which, when in bad standing, discriminates by the co-player's reputation just as in good standing.
 \item `Contrite discriminator' ($\CDISC$, $p=\C\C\D\C$), which, when in bad standing, cooperates regardless of the co-player's reputation.
 \item `Obfuscated discriminator' ($\ODISC$, $p=\C\D\D\C$), which, when in bad standing, reverses the discriminating logic, cooperating with bad co-players and defecting against good ones.
 \item `Petulant discriminator' ($\PDISC$, $p=\D\D\D\C$), which, when in bad standing, defects regardless of the co-player's reputation.
\end{enumerate}

A further property central to our analysis concerns strategies that prescribe cooperation whenever the focal player is in bad standing ($r_1 = \B$), a readiness to make amends that we call contrition.

\begin{definition}[Contrite strategies] \label{def:contrite}
A strategy $p$ is said to be \emph{contrite} if both
\begin{equation*}
 \pBB = \C \ \text{ and } \ \pBG = \C
\end{equation*}
hold.
\end{definition}

As a string, contrite strategies take the form $p = \C\C\wildcard\wildcard$.
Among the four discriminating strategies, $\CDISC$ ($p=\CCDC$) is the only one that is also contrite; it is thus the unique contrite discriminator.

All $2^{4}=16$ strategies, along with their conditional, discriminating, and contrite status, are shown in Table~\ref{tab:strategies}.

\begin{table}
\centering
\begin{tabular}{c|c|c|c|c|c}
ID & Bitstring & Name & Conditional? & Discriminating? & Contrite? \\
\hline
0  & $\DDDD$ & $\ALLD$ & \xmark & \xmark & \xmark \\
1  & $\CDDD$ &       & \cmark & \xmark & \xmark \\
2  & $\DCDD$ &       & \cmark & \xmark & \xmark \\
3  & $\CCDD$ &       & \cmark & \xmark & \cmark \\
4  & $\DDCD$ &       & \cmark & \xmark & \xmark \\
5  & $\CDCD$ &       & \cmark & \xmark & \xmark \\
6  & $\DCCD$ &       & \cmark & \xmark & \xmark \\
7  & $\CCCD$ &       & \cmark & \xmark & \cmark \\
8  & $\DDDC$ & $\PDISC$ & \cmark & \cmark & \xmark \\
9  & $\CDDC$ & $\ODISC$ & \cmark & \cmark & \xmark \\
10 & $\DCDC$ & $\DISC$ & \cmark & \cmark & \xmark \\
11 & $\CCDC$ & $\CDISC$ & \cmark & \cmark & \cmark \\
12 & $\DDCC$ &       & \cmark & \xmark & \xmark \\
13 & $\CDCC$ &       & \cmark & \xmark & \xmark \\
14 & $\DCCC$ &       & \cmark & \xmark & \xmark \\
15 & $\CCCC$ & $\ALLC$ & \xmark & \xmark & \cmark \\
\end{tabular}
\caption{The 16 strategies (action rules). Each strategy is a Boolean function written as a bitstring $p = \pBB\pBG\pGB\pGG$ in the order $\B\B$, $\B\G$, $\G\B$, $\G\G$; its ID is this bitstring read as a little-endian binary number ($\C=\G=1$, $\D=\B=0$, with $\B\B$ the least significant bit). The last three columns indicate whether the strategy is conditional, discriminating, and contrite (\cmark) or not (\xmark).}
\label{tab:strategies}
\end{table}

Our general goal is to identify moral systems $(p,d)$ that sustain high levels of cooperation (as measured by the cooperation index) and resist invasion by every alternative strategy $\tilde{p} \neq p$.
Neither unconditional strategy yields such a system, whatever the assessment rule $d$: $\ALLD$ ($p=\DDDD$) sustains no cooperation, since individuals following it never cooperate, while $\ALLC$ ($p=\CCCC$) is invaded by $\ALLD$ in prisoner's dilemmas and snowdrift games\citep{si-Pena2023}.
For these reasons, we focus on moral systems implementing a conditional strategy.

\begin{definition}[Conditional systems] \label{def:conditional}
A moral system $(p,d)$ is said to be \emph{conditional} if its action rule $p$ is conditional.
\end{definition}

We denote the set of conditional systems by $\Cset$.
Explicitly,
\begin{equation*}
 \Cset = \left\{(p,d) \left| p \notin \left\{\ALLC,\ALLD \right\} \right. \right\} .
\end{equation*}

Since conditional systems implementing a discriminating strategy play a central role in our analysis, we single them out with the following definition.

\begin{definition}[Discriminating systems] \label{def:discriminating}
A moral system $(p,d)$ is said to be \emph{discriminating} if its action rule $p$ is discriminating.
\end{definition}

We denote the set of discriminating systems by $\Dset$.
Explicitly,
\begin{equation*}
 \Dset = \left\{(p,d) \left| p \in \left\{ \DISC,\CDISC,\ODISC,\PDISC \right\} \right. \right\} .
\end{equation*}

Analogously, we single out systems whose action rule is contrite.

\begin{definition}[Contrite systems] \label{def:contrite-system}
A moral system $(p,d)$ is said to be \emph{contrite} if its action rule $p$ is contrite.
\end{definition}

Since $\CDISC$ is the unique contrite discriminating strategy, a discriminating system is contrite if and only if its action rule is $\CDISC$.

\subsection{Self-approving, self-cooperative, righteous, and universally stable systems}
\label{subsec:def-self-approving-righteous}

We begin by defining properties of moral systems related to their reputation equilibria.

\begin{definition}[Self-approving systems] \label{def:self-approving}
A conditional system $(p,d)$ is said to be \emph{self-approving} if
its resident reputation equilibrium in the absence of execution errors approaches one in the limit of vanishing assignment errors,
that is, if
\begin{equation} \label{self-approving-cond}
 \lim_{\chi \to 0} \left(g^*_{p,d}\right)_{\epsilon=0} = 1
\end{equation}
holds.
If $(p,d)$ is self-approving, we also say that $p$ is self-approving under $d$.
\end{definition}

We denote the set of self-approving systems by $\Aset$.
Thus,
\begin{equation*}
 \Aset = \left\{(p,d) \left| (p,d) \in \Cset \text{ and } \lim_{\chi \to 0} \left(g^*_{p,d}\right)_{\epsilon=0} = 1 \right. \right\} .
\end{equation*}

We also define the converse property, in which a resident population comes to regard itself as bad at equilibrium.

\begin{definition}[Self-disapproving systems] \label{def:self-disapproving}
A conditional system $(p,d)$ is said to be \emph{self-disapproving} if
its resident reputation equilibrium in the absence of execution errors approaches zero in the limit of vanishing assignment errors,
that is, if
\begin{equation} \label{self-disapproving-cond}
 \lim_{\chi \to 0} \left(g^*_{p,d}\right)_{\epsilon=0} = 0
\end{equation}
holds.
If $(p,d)$ is self-disapproving, we also say that $p$ is self-disapproving under $d$.
\end{definition}

We denote the set of self-disapproving systems by $\Zset$.
Thus,
\begin{equation*}
 \Zset = \left\{(p,d) \left| (p,d) \in \Cset \text{ and } \lim_{\chi \to 0} \left(g^*_{p,d}\right)_{\epsilon=0} = 0 \right. \right\} .
\end{equation*}

We now define an analogous property concerning cooperation levels rather than reputation equilibria, using the cooperation index (Section~\ref{subsec:cooperation-index}), which measures how frequently individuals cooperate.

\begin{definition}[Self-cooperative systems] \label{def:self-cooperative}
A conditional system $(p,d)$ is said to be \emph{self-cooperative}
if its cooperation index in the absence of execution errors, denoted by $\left(\kappa_{p,d}\right)_{\epsilon=0}$, approaches one in the limit of vanishing assignment errors,
that is, if
\begin{equation*}
 \lim_{\chi \to 0} \left(\kappa_{p,d}\right)_{\epsilon=0} = 1
\end{equation*}
holds.
If $(p,d)$ is self-cooperative, we also say that $p$ is self-cooperative under $d$.
\end{definition}

We denote the set of self-cooperative systems by $\Kset$.
Thus,
\begin{equation*}
 \Kset = \left\{(p,d) \left| (p,d) \in \Cset \text{ and } \lim_{\chi \to 0} \left(\kappa_{p,d}\right)_{\epsilon=0} = 1 \right. \right\} .
\end{equation*}

Combining self-approval and self-cooperation yields the notion of a righteous system.

\begin{definition}[Righteous systems] \label{def:righteous}
A conditional system $(p,d)$ is said to be \emph{righteous} if it is both self-approving and self-cooperative.
If $(p,d)$ is righteous, we also say that $p$ is righteous under $d$.
\end{definition}

We denote the set of righteous systems by $\Rset$.
Thus,
\begin{equation} \label{righteous-set}
 \Rset = \Aset \cap \Kset .
\end{equation}

A population implementing a righteous system $(p,d)$ is sensitive to reputation information (since $p$ is conditional),
mostly regards itself as good (since $(p,d)$ is self-approving), and mostly cooperates (since $(p,d)$ is self-cooperative).

The definitions above characterize a moral system's equilibrium behavior in a resident population. We now turn to its robustness against invasion by alternative strategies.

\begin{definition}[Universally stable systems] \label{def:universally-stable}
A moral system $(p,d)$ is said to be \emph{universally stable} if its stability indices in the absence of execution errors, denoted by $\left(\sigmaPD_{p,d}\right)_{\epsilon=0}$, $\left(\sigmaSG_{p,d}\right)_{\epsilon=0}$, and $\left(\sigmaSH_{p,d}\right)_{\epsilon=0}$, all approach one in the limit of vanishing assignment errors,
that is, if
\begin{equation*}
 \lim_{\chi \to 0} \left(\sigmaPD_{p,d}\right)_{\epsilon=0} = \lim_{\chi \to 0} \left(\sigmaSG_{p,d}\right)_{\epsilon=0} = \lim_{\chi \to 0} \left(\sigmaSH_{p,d}\right)_{\epsilon=0} = 1
\end{equation*}
holds.
If a moral system $(p,d)$ is universally stable, we also say that $p$ is universally stable under $d$.
\end{definition}

By Definition \ref{def:universally-stable}, if $(p,d)$ is universally stable, then $p$ is stable under $d$ across all cooperative dilemmas, in the sense of resisting invasion by any alternative strategy $\tilde{p} \neq p$ for any payoff matrix we consider, in the absence of execution errors ($\epsilon=0$) and in the limit of vanishing assignment errors ($\chi \to 0$).
If a system $(p,d)$ is universally stable, then its weighted average stability index satisfies
\[
 \lim_{\chi \to 0} \sigmabarw_{p,d} = 1
\]
for any vector of weights $(\mathsf{w}_\mathrm{PD},\mathsf{w}_\mathrm{SG},\mathsf{w}_\mathrm{SH})$.

\subsection{Consistent systems and consistent discriminating systems}\label{subsec:def-consistent}

\subsubsection{Consistent systems}

Consistency captures the idea that a moral system's action rule and assessment rule do not contradict each other:
following the prescribed behavior is rewarded with a good reputation, and deviating from it is punished with a bad one.

\begin{definition}[Consistent systems] \label{def:consistent}
A conditional system $(p,d)$ is said to be \emph{consistent}
if both
\begin{subequations} \label{si-consistent-cond}
\begin{align}
\dBB & = \dBG = \dGB = \dGG = \G, \text{ and } \label{si-consistent-cond-G} \\
\notdBB & = \notdBG = \notdGB = \notdGG = \B \label{si-consistent-cond-B}
\end{align}
\end{subequations}
hold.
If $(p,d)$ is consistent, we also say that $p$ is consistent with $d$, or that $d$ is consistent with $p$.
\end{definition}

We denote the set of consistent systems by $\Qset$.
Thus,
\begin{equation*}
 \Qset = \left\{(p,d) \left| (p,d) \in \Cset, \drfrc = \G, \text{and } \notdrfrc = \B \text{ for all } \rf, \rc \in \left\{\B,\G\right\} \right. \right\} .
\end{equation*}

Recall our notational conventions for $\drfrc$ \eqref{dr1r2-shorthand} and $\notdrfrc$ \eqref{notdr1r2-shorthand}.
Conditions \eqref{si-consistent-cond-G} and \eqref{si-consistent-cond-B} together say that,
whenever the co-player plays according to action rule $p$,
assessment rule $d$ regards the focal player as good if they follow $p$'s prescription and as bad if they deviate from it.
The moral system $(p,d)$ is then consistent in the sense that there is no contradiction between the behavior prescribed by the action rule and the moral judgment expressed by the assessment rule.

For each strategy $p$,
the two conditions in Eq. \eqref{si-consistent-cond} specify the values of eight out of the sixteen bits of an assessment rule that is consistent with $p$.
Once these values are set, the remaining eight bits can be set to any binary value, and $d$ will still be consistent with $p$, according to Definition \ref{def:consistent}.
It follows that each strategy $p$ gives rise to $2^8=256$ consistent systems $(p,d)$.

\begin{remark}[Equivalence of consistent systems at $\epsilon=0$] \label{rem:consistent-equivalence}
The eight unconstrained bits of a consistent assessment rule are precisely those of the form $d(r_1, r_2, a_1, \neg p(r_2,r_1))$ for $r_1, r_2 \in \{\B,\G\}$ and $a_1 \in \{\D,\C\}$ -- that is, the bits specifying how to assess a focal player when the co-player's action deviates from the resident prescription.
In the reputation transition probability~\eqref{deltaij}, these bits appear only in the terms proportional to $\epsilon(1-\epsilon)$ and~$\epsilon^2$, which vanish when $\epsilon = 0$.
The same holds for the mutant reputation transition~\eqref{tildedeltaij}: at $\epsilon=0$, the surviving term $d(r_1, r_2, \tilde{p}(r_1,r_2), p(r_2,r_1))$ involves a constrained bit regardless of whether the mutant's action agrees with the resident's, since conditions~\eqref{si-consistent-cond-G} and~\eqref{si-consistent-cond-B} together fix $d(r_1, r_2, a_1, p(r_2,r_1))$ for both values of~$a_1$.
Consequently, all $256$ consistent systems sharing the same action rule~$p$ have identical resident equilibria~$g^*_{p,d}$, mutant equilibria~$h^*_{\tilde{p},p,d}$, payoffs, and hence stability and cooperation indices, whenever $\epsilon = 0$.
\end{remark}

\subsubsection{Consistent discriminating systems}

Of special interest are consistent systems whose action rule is also discriminating.

\begin{definition}[Consistent discriminating systems]
\label{def:consistent-disc}
A conditional system $(p,d)$ is said to be \emph{consistent discriminating} if it is both discriminating and consistent.
\end{definition}

Thus,
\begin{equation*}
\QDset = \Qset \cap \Dset .
\end{equation*}
Depending on the action rule, $\QDset$ can be partitioned into four subsets:
\begin{enumerate}
 \item `Standard consistent discriminating systems' ($p = \DCDC$, subset $\QDisc$):
 \begin{align} \label{QDisc-set}
 \QDisc & = \left\{(p,d) \left| p = \DCDC, \dBBCD = \dBGDD = \dGBCC = \dGGDC = \B, \right. \right. \nonumber \\
 & \left. \text{ and } \dBBDD = \dBGCD = \dGBDC = \dGGCC = \G  \right\} .
 \end{align}
 \item `Contrite consistent discriminating systems' ($p = \CCDC$, subset $\QCDisc$):
 \begin{align} \label{QCDisc-set}
 \QCDisc & = \left\{(p,d) \left| p = \CCDC, \dBBDC = \dBGDD = \dGBCC = \dGGDC = \B, \right. \right. \nonumber \\
 & \left. \text{ and } \dBBCC = \dBGCD = \dGBDC = \dGGCC = \G \right\} .
 \end{align}
 \item `Obfuscated consistent discriminating systems' ($p = \CDDC$, subset $\QODisc$):
 \begin{align} \label{QODisc-set}
 \QODisc & = \left\{(p,d) \left| p = \CDDC, \dBBDC = \dBGCD = \dGBCD = \dGGDC = \B, \right. \right. \nonumber \\
 & \left. \text{ and } \dBBCC = \dBGDD = \dGBDD = \dGGCC = \G \right\} .
 \end{align}
 \item `Petulant consistent discriminating systems' ($p = \DDDC$, subset $\QPDisc$):
 \begin{align} \label{QPDisc-set}
 \QPDisc & = \left\{(p,d) \left| p = \DDDC, \dBBCD = \dBGCD = \dGBCD = \dGGDC = \B, \right. \right. \nonumber \\
 & \left. \text{ and } \dBBDD = \dBGDD = \dGBDD = \dGGCC = \G \right\} .
 \end{align}
\end{enumerate}

The action and assessment rules of each of these subsets of consistent discriminating systems are shown in Fig. \ref{fig:consistent-discriminating}.

\begin{figure}[tp]
    \centering
        \includegraphics{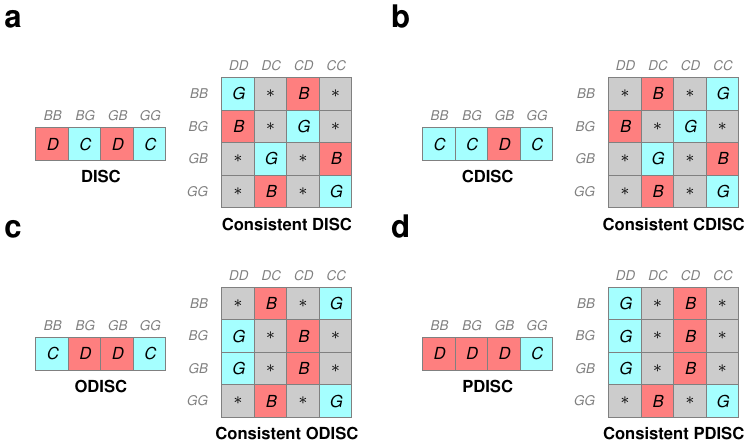}
    \caption{Action and assessment rules of the four subsets of consistent discriminating systems.}
    \label{fig:consistent-discriminating}
\end{figure}

\subsection{The leading eight}
\label{subsec:leading8}

The leading eight, first identified by Ohtsuki and Iwasa\citep{si-ohtsuki:JTB:2004a}, are a set of eight moral systems.
Each uses either $\DISC$ or $\CDISC$ as its action rule, together with a third-order assessment rule of type $\af \rf \rc$ -- one that ignores the co-player's action $\ac$ -- satisfying
\begin{subequations} \label{leading-eight-assessment}
\begin{align}
 \dBGD & = \dGGD = \B, \label{dBGD=dGGD=B} \\
 \dBGC & = \dGGC = \G, \label{dBGC=dGGC=G} \\
 \dGBD & = \G . \label{dGBD=G}
\end{align}
\end{subequations}
Conditions \eqref{dBGD=dGGD=B} and \eqref{dBGC=dGGC=G} state that,
irrespective of the focal player's reputation,
cooperation with good co-players is judged as good and defection against them as bad.
Condition \eqref{dGBD=G} states that defection against bad co-players by good individuals is judged as good,
encapsulating the concept of `justified defection'\citep{si-ohtsuki:JTB:2004a}.
The leading-eight moral systems are named and described in Fig. \ref{fig:leading-eight}.

\begin{figure}[tp]
    \centering
            \includegraphics{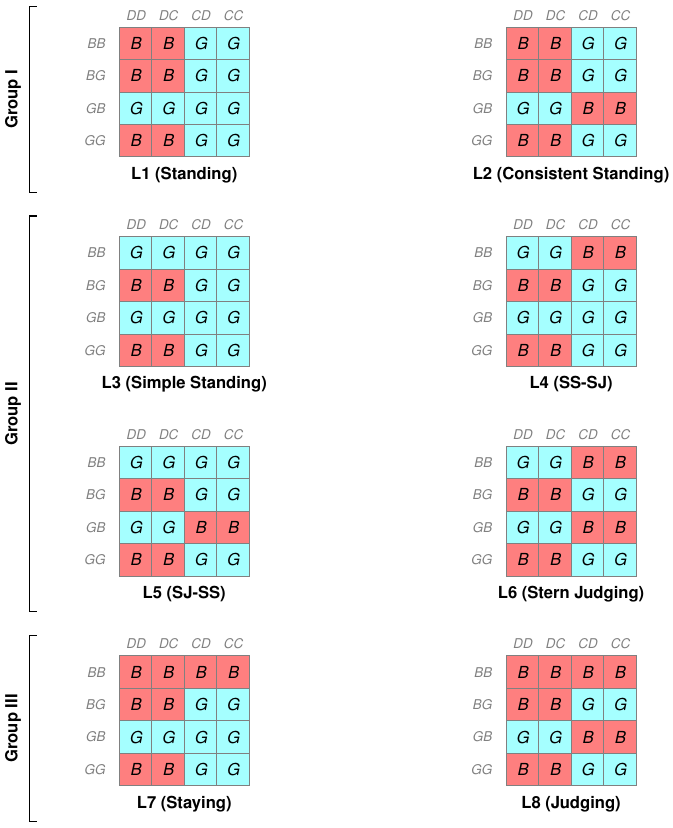}
    \caption{Social norms of the leading-eight moral systems, arranged by group. Group I systems (L1 and L2) use $\CDISC$ as their action rule; Group II (L3--L6) and Group III (L7 and L8) systems use $\DISC$.}
    \label{fig:leading-eight}
\end{figure}

The leading-eight moral systems differ in two respects:
(i) whether they use $\DISC$ or $\CDISC$ as their action rule, and
(ii) their assignment of the bits $\dBBD$, $\dBBC$, and $\dGBC$,
which govern how interactions involving bad co-players are judged.
In line with the original analysis by Ohtsuki and Iwasa\citep{si-ohtsuki:JTB:2004a}, we distinguish the following three groups.

\paragraph{Group I.}

Group I systems use $p=\CCDC$ (i.e.\ $\CDISC$), with $\dBBD=\B$ and $\dBBC=\G$.
We denote this set by $\LI$:
\begin{align} \label{LI-set}
 \LI & = \left\{(p,d) \left| p = \CCDC, \dBBD = \dBGD = \dGGD = \B, \right. \right. \nonumber \\
 & \left. \dBBC = \dBGC = \dGBD = \dGGC = \G, \text{ and } \dGBCD = \dGBCC \right\} .
\end{align}
The two members are L1 ($\CDISC$, Standing), with $\dGBC = \G$, and L2 ($\CDISC$, Consistent Standing), with $\dGBC = \B$
(see Fig.~\ref{fig:leading-eight}):
\begin{align}
 \text{L1} & = \left\{(p,d) \left| p = \CCDC, \dBBD = \dBGD = \dGGD = \B, \right. \right. \nonumber \\
 & \left. \text{ and } \dBBC = \dBGC = \dGBD = \dGBC = \dGGC = \G \right\}, \\
 \text{L2} & = \left\{(p,d) \left| p = \CCDC, \dBBD = \dBGD = \dGBC = \dGGD = \B, \right. \right. \nonumber \\
 & \left. \text{ and } \dBBC = \dBGC = \dGBD = \dGGC = \G \right\} . \label{L2}
\end{align}
Note that L2 belongs to $\QCDisc$ (cf.\ Eqs.~\eqref{L2} and~\eqref{QCDisc-set}).

\paragraph{Group II.}

Group II systems use $p=\DCDC$ (i.e.\ $\DISC$), with $\dBBD=\G$.
We denote this set by $\LII$:
\begin{align} \label{LII-set}
 \LII & = \left\{(p,d) \left| p = \DCDC, \dBGD = \dGGD = \B, \right. \right. \nonumber \\
 & \dBBD = \dBGC = \dGBD = \dGGC = \G, \nonumber \\
 & \left. \dBBCD = \dBBCC, \text{ and } \dGBCD = \dGBCC \right\} .
\end{align}
The four members are
L3 ($\DISC$, Simple Standing), with $\dBBC=\dGBC=\G$;
L4 ($\DISC$, SS-SJ), with $\dBBC=\B$ and $\dGBC=\G$;
L5 ($\DISC$, SJ-SS), with $\dBBC=\G$ and $\dGBC=\B$; and
L6 ($\DISC$, Stern Judging), with $\dBBC=\dGBC=\B$
(see Fig.~\ref{fig:leading-eight}):
\begin{align}
 \text{L3} & = \left\{(p,d) \left| p = \DCDC, \dBGD = \dGGD = \B, \right. \right. \nonumber \\
 & \left. \text{ and } \dBBD = \dBBC = \dBGC = \dGBD = \dGBC = \dGGC = \G \right\}, \\
 \text{L4} & = \left\{(p,d) \left| p = \DCDC, \dBBC = \dBGD = \dGGD = \B, \right. \right. \nonumber \\
 & \left. \text{ and } \dBBD = \dBGC = \dGBD = \dGBC = \dGGC = \G \right\}, \\
 \text{L5} & = \left\{(p,d) \left| p = \DCDC, \dBGD = \dGBC = \dGGD = \B, \right. \right. \nonumber \\
 & \left. \text{ and } \dBBC = \dBBD = \dBGC = \dGBD = \dGGC = \G \right\}, \\
 \text{L6} & = \left\{(p,d) \left| p = \DCDC, \dBBC = \dBGD = \dGBC = \dGGD = \B, \right. \right. \nonumber \\
 & \left. \text{ and } \dBBD = \dBGC = \dGBD = \dGGC = \G \right\} \label{L6} .
\end{align}
Note that L6 belongs to $\QDisc$ (cf.\ Eqs.~\eqref{L6} and~\eqref{QDisc-set}).

\paragraph{Group III.}

Group III systems use $p=\DCDC$ (i.e.\ $\DISC$), with $\dBBD=\dBBC=\B$.
We denote this set by $\LIII$:
\begin{align} \label{LIII-set}
 \LIII & = \left\{(p,d) \left| p = \DCDC, \dBBD = \dBBC = \dBGD = \dGGD = \B, \right. \right. \nonumber \\
 & \left. \dBGC = \dGBD = \dGGC = \G, \text{ and } \dGBCD = \dGBCC \right\} .
\end{align}
The two members are L7 ($\DISC$, Staying), with $\dGBC = \G$, and L8 ($\DISC$, Judging), with $\dGBC = \B$
(see Fig.~\ref{fig:leading-eight}):
\begin{align}
 \text{L7} & = \left\{(p,d) \left| p = \DCDC, \dBBD = \dBBC = \dBGD = \dGGD = \B, \right. \right. \nonumber \\
 & \left. \text{ and } \dBGC = \dGBD = \dGBC = \dGGC = \G \right\}, \\
 \text{L8} & = \left\{(p,d) \left| p = \DCDC, \dBBD = \dBBC = \dBGD = \dGBC = \dGGD = \B, \right. \right. \nonumber \\
 & \left. \text{ and } \dBGC = \dGBD = \dGGC = \G \right\} .
\end{align}

We denote by $\Lset$ the set of all leading-eight moral systems, so that
\begin{equation} \label{leading-eight}
 \Lset = \LI \cup \LII \cup \LIII .
\end{equation}

\subsection{Mirror systems}
\label{subsec:mirror-systems}

The reputation labels $\G$ (`good') and $\B$ (`bad') are arbitrary bookkeeping devices: unlike actions, which determine payoffs through the payoff function $v$, reputations have no intrinsic consequences.
Swapping the two labels throughout a moral system therefore yields a system that is behaviorally indistinguishable from the original -- an observation made by Ohtsuki and Iwasa\citep{si-ohtsuki:JTB:2004a} for the third-order assessment rules they studied, and which they termed `mirror symmetry' (their Appendix~A).
This subsection formalizes this relabeling operation and verifies that the quantities of interest (reputation equilibria, cooperation index, stability indices) are invariant under it; Section~\ref{subsec:mirror-classes} builds on these results to count moral systems up to relabeling and to specify the representatives evaluated in our numerical explorations.

\begin{definition}[Mirror systems] \label{def:mirror-system}
The \emph{mirror} of a moral system $(p,d)$ is the moral system $(\hat{p},\hat{d})$ given by
\begin{subequations} \label{mirror-def}
\begin{align}
 \hat{p}(\rf,\rc) & = p(\neg \rf, \neg \rc), \label{mirror-def-p}\\
 \hat{d}(\rf,\rc,\af,\ac) & = \neg\, d(\neg \rf, \neg \rc, \af, \ac), \label{mirror-def-d}
\end{align}
\end{subequations}
for all $\rf,\rc \in \left\{\B,\G\right\}$ and $\af,\ac \in \left\{\D,\C\right\}$, where $\neg$ denotes the swap $\B \leftrightarrow \G$ (equivalently, with our labeling convention \eqref{C=G=1,D=B=0}, $\neg x = 1 - x$).
A moral system that is equal to its mirror is said to be a \emph{self-mirror}.
\end{definition}

The mirror relabels reputations wherever they occur -- in the reputation arguments of $p$ and $d$, and in the reputation prescribed by $d$ -- while leaving actions untouched.
As a bitstring (Table~\ref{tab:strategies}), $\hat{p}$ is simply $p$ read from right to left, so that, for instance, the mirror of $\DISC$ ($p=\DCDC$) is the strategy $\CDCD$, which cooperates with co-players in bad standing.
Applying the mirror twice recovers the original system, so the mirror map is an involution on the space of moral systems: it partitions this space into `mirror classes' of size one (self-mirrors) or two (`mirror pairs').

The next lemma shows that the mirror system reproduces the reputation dynamics of the original system with the roles of $\G$ and $\B$ interchanged.

\begin{lemma}[Reputation dynamics of mirror systems] \label{lem:mirror-reputation}
Let $(\hat{p},\hat{d})$ be the mirror of $(p,d)$, and let $\hat{\tilde{p}}$ be the mirror of a mutant strategy $\tilde{p}$.
Denote by $\hat{\delta}_{\rf\rc}$ the probability~\eqref{deltaij} that a resident of a population implementing $(\hat{p},\hat{d})$ is assigned a good reputation, and by $\hat{\tilde{\delta}}_{\rf\rc}$ the corresponding probability~\eqref{tildedeltaij} for a mutant adopting $\hat{\tilde{p}}$ in this population.
Then, for all error rates $\chi \in (0,1/2)$ and $\epsilon \in [0,1/2)$, and for all $\rf,\rc \in \left\{\B,\G\right\}$,
\begin{equation} \label{mirror-delta}
 \hat{\delta}_{\rf\rc} = 1 - \delta_{\neg\rf\,\neg\rc}
 \qquad \text{and} \qquad
 \hat{\tilde{\delta}}_{\rf\rc} = 1 - \tilde{\delta}_{\neg\rf\,\neg\rc} .
\end{equation}
Consequently, the resident and mutant reputation equilibria satisfy
\begin{equation} \label{mirror-g-h}
 g^*_{\hat{p},\hat{d}} = 1 - g^*_{p,d}
 \qquad \text{and} \qquad
 h^*_{\hat{\tilde{p}},\hat{p},\hat{d}} = 1 - h^*_{\tilde{p},p,d} .
\end{equation}
\end{lemma}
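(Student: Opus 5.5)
The proof has two parts. First, a direct substitution into the transition probabilities \eqref{deltaij} and \eqref{tildedeltaij} establishes \eqref{mirror-delta}. Second, the equilibrium relations \eqref{mirror-g-h} follow from a change of variables in the dynamics \eqref{gdot} and \eqref{hdot}, together with uniqueness of their equilibria.

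For the residents, write $\hat\delta_{\rf\rc}$ using $\hat d$ and $\hat p$. By \eqref{mirror-def-p}, the prescribed actions are $\hat p(\rf,\rc)=p(\neg\rf,\neg\rc)$ and $\hat p(\rc,\rf)=p(\neg\rc,\neg\rf)$. By \eqref{mirror-def-d}, every occurrence of $\hat d(\rf,\rc,\af,\ac)$ equals $1-d(\neg\rf,\neg\rc,\af,\ac)$, with the same action arguments. Hence each of the four $d$-terms inside the braces of \eqref{deltaij} for the mirror system is $1$ minus the corresponding term for the original system evaluated at $(\neg\rf,\neg\rc)$. The weights $(1-\epsilon)^2$, $\epsilon(1-\epsilon)$ (which appears twice) and $\epsilon^2$ sum to one. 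The braces therefore equal $1-\{\cdot\}_{\neg\rf\neg\rc}$. Then $\chi+(1-2\chi)(1-X)=1-[\chi+(1-2\chi)X]$ gives $\hat\delta_{\rf\rc}=1-\delta_{\neg\rf\neg\rc}$. The mutant identity follows by the same computation, using $\hat{\tilde p}(\rf,\rc)=\tilde p(\neg\rf,\neg\rc)$ and noting that the resident co-player's action is $\hat p(\rc,\rf)$.

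For the equilibria, substitute $\hat g=1-g$ into the right-hand side of \eqref{gdot} written with $\hat\delta$. Using $\hat\delta_{\G\G}=1-\delta_{\B\B}$, $\hat\delta_{\B\B}=1-\delta_{\G\G}$ and $\hat\delta_{\B\G}+\hat\delta_{\G\B}=2-\delta_{\G\B}-\delta_{\B\G}$, the right-hand side becomes
\[
(1-g)^2+2g(1-g)+g^2-\bigl[\delta_{\B\B}(1-g)^2+(\delta_{\B\G}+\delta_{\G\B})g(1-g)+\delta_{\G\G}g^2\bigr]-(1-g),
\]
which equals $-\dot g_{p,d}$ evaluated at $g$. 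Thus $g$ is a zero of the original dynamics if and only if $1-g$ is a zero of the mirror dynamics. The equilibrium in $[0,1]$ is unique and asymptotically stable, as stated after \eqref{g*}, so $g^*_{\hat p,\hat d}=1-g^*_{p,d}$. For the mutants, the right-hand side of \eqref{hdot} is affine in $h$. Substituting $\hat g^*=1-g^*$, $\hat h=1-h$ and the mutant identities in \eqref{mirror-delta} gives, by the same bookkeeping, the negative of the original right-hand side at $(h,g^*)$. Uniqueness of the root of this affine function yields $h^*_{\hat{\tilde p},\hat p,\hat d}=1-h^*_{\tilde p,p,d}$.

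The only delicate step is the bookkeeping in the mutant equation, where the four terms get permuted: the $\tilde\delta_{\G\G}$ term pairs with $\tilde\delta_{\B\B}$, and the $\tilde\delta_{\G\B}$ term pairs with $\tilde\delta_{\B\G}$. One also needs to confirm that the roots lie in $[0,1]$ and that uniqueness is used there, rather than leaning on the explicit formula \eqref{g*} and its choice of sign on the square root. Arguing through the dynamics avoids that sign issue entirely.
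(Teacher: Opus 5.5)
Your proof is correct and follows essentially the same route as the paper. You substitute the mirror definitions into the transition probabilities, using that the execution-error weights sum to one. You then establish the identity $F_{\hat p,\hat d}(1-g) = -F_{p,d}(g)$ and its affine analogue for the mutant dynamics, and conclude by uniqueness of the equilibria. The only difference is cosmetic: you argue through zeros of the right-hand side and uniqueness in $[0,1]$, whereas the paper phrases the same identity as a conjugacy of trajectories $g(t)\mapsto 1-g(t)$.
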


\begin{proof}
Substituting \eqref{mirror-def} into \eqref{deltaij}, each assessment value appearing in $\hat{\delta}_{\rf\rc}$ equals one minus the corresponding value appearing in $\delta_{\neg\rf\,\neg\rc}$.
For instance, in the absence of execution errors, $\hat{d}(\rf,\rc,\hat{p}(\rf,\rc),\hat{p}(\rc,\rf)) = 1 - d(\neg\rf,\neg\rc,p(\neg\rf,\neg\rc),p(\neg\rc,\neg\rf))$; the same identity holds for the three remaining assessment values, in which execution errors replace one or both actions with $\D$, since the mirror leaves actions -- and hence these erroneous defections -- unchanged.
As the weights $(1-\epsilon)^2$, $\epsilon(1-\epsilon)$, $\epsilon(1-\epsilon)$, and $\epsilon^2$ sum to one, the term in braces in \eqref{deltaij} for the mirror system equals one minus its counterpart for the original system evaluated at $(\neg\rf,\neg\rc)$, whence
$\hat{\delta}_{\rf\rc} = \chi + (1-2\chi)\left(1 - \frac{\delta_{\neg\rf\,\neg\rc}-\chi}{1-2\chi}\right) = 1 - \delta_{\neg\rf\,\neg\rc}$.
The same substitution in \eqref{tildedeltaij} yields the mutant counterpart.

For the resident equilibrium, write the right-hand side of the reputation dynamics \eqref{gdot} as $F_{p,d}(g)$; expanding, $F_{p,d}(g) = a g^2 + b g + c$, with coefficients given in Eq.~\eqref{abc}.
Using \eqref{mirror-delta},
\begin{align*}
 F_{\hat{p},\hat{d}}(1-g) & = \left(1-\delta_{\B\B}\right)(1-g)^2 + \left(2 - \delta_{\G\B} - \delta_{\B\G}\right) g (1-g) + \left(1-\delta_{\G\G}\right) g^2 - (1-g) \\
 & = 1 - \left[ \delta_{\B\B} (1-g)^2 + \left(\delta_{\B\G}+\delta_{\G\B}\right) g(1-g) + \delta_{\G\G} g^2 \right] - 1 + g
 = - F_{p,d}(g) .
\end{align*}
Hence, if $g(t)$ solves the reputation dynamics $\dot{g} = F_{p,d}(g)$ of $(p,d)$, then $\hat{g}(t) := 1 - g(t)$ satisfies $\dot{\hat{g}} = -\dot{g} = -F_{p,d}(g) = F_{\hat{p},\hat{d}}(\hat{g})$, that is, it solves the reputation dynamics of $(\hat{p},\hat{d})$; the converse follows by applying the same argument to the mirror system, since the mirror map is an involution.
The unique asymptotically stable equilibria \eqref{g*} of the two systems thus satisfy $g^*_{\hat{p},\hat{d}} = 1 - g^*_{p,d}$.
An identical computation applied to the mutant dynamics \eqref{hdot}, with coefficients $\hat{\tilde{\delta}}_{\rf\rc}$ and resident equilibrium $1-g^*_{p,d}$, shows that $h(t)$ solves the mutant dynamics of $\tilde{p}$ in $(p,d)$ if and only if $1-h(t)$ solves those of $\hat{\tilde{p}}$ in $(\hat{p},\hat{d})$, whence the second identity in \eqref{mirror-g-h}.
\end{proof}

With Lemma~\ref{lem:mirror-reputation} in place, the behavioral equivalence of a moral system and its mirror follows readily.

\begin{proposition}[Mirror invariance of payoffs, cooperation, and stability] \label{prop:mirror-invariance}
Let $(\hat{p},\hat{d})$ be the mirror of $(p,d)$.
Then, for all error rates $\chi \in (0,1/2)$ and $\epsilon \in [0,1/2)$ and all payoff configurations:
\begin{enumerate}
 \item[(i)] $w_{\hat{p},\hat{d}} = w_{p,d}$, and $\tilde{w}_{\hat{\tilde{p}},\hat{p},\hat{d}} = \tilde{w}_{\tilde{p},p,d}$ for every mutant strategy $\tilde{p}$;
 \item[(ii)] $\kappa_{\hat{p},\hat{d}} = \kappa_{p,d}$;
 \item[(iii)] $(\hat{p},\hat{d})$ is stable if and only if $(p,d)$ is stable; consequently, $\sigma^\Gamma_{\hat{p},\hat{d}} = \sigma^\Gamma_{p,d}$ for each game class $\Gamma \in \left\{\mathrm{PD},\mathrm{SG},\mathrm{SH}\right\}$, and the average stability indices likewise coincide.
\end{enumerate}
\end{proposition}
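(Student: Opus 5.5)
The plan is to derive all three claims from Lemma~\ref{lem:mirror-reputation} together with one elementary observation: the mirror relabels reputations but leaves actions untouched. The first step is to compare the payoff building blocks. From \eqref{mirror-def-p}, $\hat{p}(\rf,\rc) = p(\neg\rf,\neg\rc)$, and likewise $\hat{\tilde{p}}(\rf,\rc) = \tilde{p}(\neg\rf,\neg\rc)$. Substituting into \eqref{omegaij} and \eqref{tildeomega}, whose only inputs are the prescribed actions, the payoff function $v$ and $\epsilon$, gives
\[
\hat{\omega}_{\rf\rc} = \omega_{\neg\rf\,\neg\rc}
\qquad\text{and}\qquad
\hat{\tilde{\omega}}_{\rf\rc} = \tilde{\omega}_{\neg\rf\,\neg\rc}
\]
for all $\rf,\rc$, for every error rate and every payoff configuration.

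For (i), insert these identities into \eqref{w} and \eqref{tildew}, and use \eqref{mirror-g-h} to set $\hat{g} = 1-g^*_{p,d}$ and $\hat{h} = 1-h^*_{\tilde{p},p,d}$. In \eqref{w}, the weights of $\BB$ and $\GG$ swap, giving $(1-\hat g)^2 = g^2$ and $\hat g^2 = (1-g)^2$. They therefore multiply $\omega_{\GG}$ and $\omega_{\BB}$ exactly as in the original. The middle weight $\hat g(1-\hat g) = g(1-g)$ multiplies $\omega_{\GB}+\omega_{\BG}$, which is symmetric under the swap. Hence $w_{\hat p,\hat d} = w_{p,d}$. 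In \eqref{tildew}, the four weights $(1-\hat h)(1-\hat g)$, $(1-\hat h)\hat g$, $\hat h(1-\hat g)$ and $\hat h\hat g$ become $hg$, $h(1-g)$, $(1-h)g$ and $(1-h)(1-g)$. These multiply $\tilde\omega_{\GG}$, $\tilde\omega_{\GB}$, $\tilde\omega_{\BG}$ and $\tilde\omega_{\BB}$ respectively, which reproduces the original sum term by term.

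For (ii), apply the same argument to \eqref{coop-index}: $\hat{\mathsf{p}}_{\rf\rc} = \mathsf{p}_{\neg\rf\,\neg\rc}$ and $\hat g = 1-g^*$, so the bracket is unchanged.

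For (iii), the key point is that the mirror map on strategies, $\tilde p \mapsto \hat{\tilde p}$, is an involution and hence a bijection of $\mathcal{P}$. It also satisfies $\hat{\tilde p} \neq \hat p$ if and only if $\tilde p \neq p$. Condition \eqref{evol-stab-cond} for $(\hat p,\hat d)$ ranges over all $\hat{\tilde p}\neq\hat p$, which is the same as ranging over all $\tilde p\neq p$. By (i), each inequality $w_{\hat p,\hat d} > \tilde w_{\hat{\tilde p},\hat p,\hat d}$ is identical to $w_{p,d} > \tilde w_{\tilde p,p,d}$. So stability holds at exactly the same payoff points, which makes the sets $\mathcal{S}^\Gamma$ coincide, and so do their areas, the indices $\sigma^\Gamma$, and any weighted average of them.

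I do not expect a real obstacle here. The only step needing care is the bookkeeping of the cross terms in $\tilde w$, where the mutant and resident reputations swap simultaneously. The four weights must be matched to the correct $\tilde\omega_{\neg\rf\,\neg\rc}$, which has to be done slowly and not merely asserted by symmetry. One further detail should be noted: the mirror of a conditional strategy is conditional. This follows because $\hat p$ is $p$ reversed, and it ensures that the classes used elsewhere (for instance $\Cset$) are preserved, although the proposition itself does not require it.
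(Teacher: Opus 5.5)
Your proposal is correct and follows the paper's proof step for step. You derive $\hat{\omega}_{\rf\rc}=\omega_{\neg\rf\,\neg\rc}$ and its mutant analogue from the fact that the mirror leaves actions untouched, match the terms of \eqref{w}, \eqref{tildew}, and \eqref{coop-index} profile by profile using Lemma~\ref{lem:mirror-reputation}, and obtain (iii) because the mirror is a bijection on strategies that preserves $\tilde{p}\neq p$; your explicit matching of the four weights in $\tilde{w}$ just spells out what the paper states briefly.
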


\begin{proof}
The expected payoffs \eqref{omegaij} and \eqref{tildeomega} involve reputations only through the actions prescribed by the strategies, so \eqref{mirror-def-p} gives $\hat{\omega}_{\rf\rc} = \omega_{\neg\rf\,\neg\rc}$ and $\hat{\tilde{\omega}}_{\rf\rc} = \tilde{\omega}_{\neg\rf\,\neg\rc}$.
Substituting these identities and \eqref{mirror-g-h} into \eqref{w}, \eqref{tildew}, and \eqref{coop-index}, each term with reputation profile $(\rf,\rc)$ for the mirror system equals the term with profile $(\neg\rf,\neg\rc)$ for the original system (for instance, in \eqref{w}, $(1-g)^2 \omega_{\B\B}$ evaluated at $g = 1-g^*_{p,d}$ for the mirror system equals $\left(g^*_{p,d}\right)^2 \omega_{\G\G}$ for the original one), which proves (i) and (ii).
For (iii), the mirror map is a bijection of the strategy space onto itself with $\tilde{p} \neq p$ if and only if $\hat{\tilde{p}} \neq \hat{p}$, so by (i) the stability condition \eqref{evol-stab-cond} holds for $(p,d)$ if and only if it holds for $(\hat{p},\hat{d})$.
Since this equivalence holds at every payoff configuration, the stability regions $\mathcal{S}^\Gamma_{p,d}$ and $\mathcal{S}^\Gamma_{\hat{p},\hat{d}}$ coincide, and so do the stability indices $\sigma^\Gamma$ and their (weighted) averages.
\end{proof}

Whereas quantities defined from behavior alone are mirror-invariant, properties whose definitions refer to the reputation labels are not.
The following corollary records the consequences for the classes of moral systems defined above.

\begin{corollary}[Mirrors and classes of moral systems] \label{cor:mirror-classes}
Let $(\hat{p},\hat{d})$ be the mirror of a conditional system $(p,d)$.
Then:
\begin{enumerate}
 \item[(i)] $(p,d)$ is self-approving if and only if $(\hat{p},\hat{d})$ is self-disapproving; in particular, the mirror map restricts to a bijection $\Aset \to \Zset$, so $|\Aset| = |\Zset|$.
 \item[(ii)] $(p,d)$ is self-cooperative (resp.\ universally stable) if and only if $(\hat{p},\hat{d})$ is self-cooperative (resp.\ universally stable).
 \item[(iii)] Every self-mirror satisfies $g^*_{p,d} = 1/2$ for all error rates $\chi \in (0,1/2)$ and $\epsilon \in [0,1/2)$; in particular, no self-mirror is self-approving or self-disapproving, and no righteous system is a self-mirror.
\end{enumerate}
\end{corollary}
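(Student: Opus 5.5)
The plan is to derive all three parts from Lemma~\ref{lem:mirror-reputation} and Proposition~\ref{prop:mirror-invariance}. For (i), I take $\chi\to0$ with $\epsilon=0$ in the identity $g^*_{\hat{p},\hat{d}} = 1-g^*_{p,d}$ from \eqref{mirror-g-h}. This gives $\lim_{\chi\to0}(g^*_{\hat{p},\hat{d}})_{\epsilon=0} = 1$ if and only if $\lim_{\chi\to0}(g^*_{p,d})_{\epsilon=0}=0$, with either limit existing exactly when the other does.

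Before applying this, I need to check that the mirror of a conditional system is conditional. This holds because $\hat{p}$ is $p$ with its arguments permuted by $(\rf,\rc)\mapsto(\neg\rf,\neg\rc)$, so $\hat p$ is constant if and only if $p$ is. The mirror map is an involution that preserves $\Cset$. Therefore it sends $\Zset$ into $\Aset$ as well as $\Aset$ into $\Zset$, so it restricts to a bijection $\Aset\to\Zset$ and $|\Aset|=|\Zset|$.

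For (ii), Proposition~\ref{prop:mirror-invariance}(ii) gives $\kappa_{\hat p,\hat d}=\kappa_{p,d}$ for every $\chi$ and every $\epsilon$. Setting $\epsilon=0$ and letting $\chi\to0$ shows the two limits coincide, so self-cooperation transfers in both directions. Proposition~\ref{prop:mirror-invariance}(iii) likewise gives $\sigma^\Gamma_{\hat p,\hat d}=\sigma^\Gamma_{p,d}$ for every class $\Gamma$, every $\chi$ and every $\epsilon$. Taking limits then shows that universal stability is also mirror-invariant.

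For (iii), suppose $(p,d)=(\hat p,\hat d)$. Then \eqref{mirror-g-h} reads $g^*_{p,d}=1-g^*_{p,d}$, so $g^*_{p,d}=1/2$ for all admissible error rates. The limit as $\chi\to0$ is therefore $1/2$, which is neither $0$ nor $1$. Hence a self-mirror is neither self-approving nor self-disapproving. Since every righteous system is self-approving by \eqref{righteous-set}, no righteous system is a self-mirror.

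The only step needing care is the uniqueness of the equilibrium used in (iii). The identity $g^*_{\hat p,\hat d}=1-g^*_{p,d}$ relies on $g^*$ being the unique asymptotically stable root in $[0,1]$. The proof of Lemma~\ref{lem:mirror-reputation} establishes exactly this, because $g\mapsto 1-g$ conjugates the two flows. So the self-mirror case poses no genuine obstacle, and I would simply cite that lemma.
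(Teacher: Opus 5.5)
Your proposal is correct and follows the paper's proof essentially step for step. Parts (i) and (ii) come from letting $\chi\to0$ at $\epsilon=0$ in Lemma~\ref{lem:mirror-reputation} and Proposition~\ref{prop:mirror-invariance}, with the same check that mirroring preserves conditionality, and (iii) comes from $g^*_{p,d}=1-g^*_{p,d}$ for a self-mirror. One small correction to your closing remark: the proof of Lemma~\ref{lem:mirror-reputation} does not establish uniqueness of the equilibrium but assumes it from Eq.~\eqref{g*}, and the conjugacy $g\mapsto 1-g$ then transfers it to the mirror system, which is all your argument needs.
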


\begin{proof}
Claims (i) and (ii) follow from Lemma~\ref{lem:mirror-reputation} and Proposition~\ref{prop:mirror-invariance} by taking the limits in Definitions \ref{def:self-approving}, \ref{def:self-disapproving}, \ref{def:self-cooperative}, and \ref{def:universally-stable} (the mirror of a conditional system is conditional, since $\hat{p}$ is unconditional if and only if $p$ is).
For (iii), a self-mirror satisfies $g^*_{p,d} = g^*_{\hat{p},\hat{d}} = 1 - g^*_{p,d}$ by \eqref{mirror-g-h}, hence $g^*_{p,d} = 1/2$; a righteous system is self-approving by Definition~\ref{def:righteous}, so it cannot be a self-mirror.
\end{proof}

\section{Analytical results}

This section derives the analytical results that underpin the main text's central claims.
We proceed in two stages.
The first stage characterizes the structural properties of moral systems.
Sections~\ref{subsec:self-approving}--\ref{subsec:consistent-systems} identify necessary and sufficient conditions for a system to be self-approving, self-cooperative, righteous, consistent, or consistent discriminating, and establish the inclusion relations among these sets.
Section~\ref{subsec:leading-eight-righteous} then verifies that the leading-eight moral systems fall into the appropriate sets in this hierarchy.
The second stage turns to evolutionary stability.
Section~\ref{subsec:which-righteous-are-stable} identifies which righteous systems are universally stable, establishing that universal stability is equivalent to being a contrite consistent discriminating system.
Sections~\ref{subsec:stab-anal-disc} and~\ref{subsec:stab-anal-leading-eight} then provide detailed stability analyses for all consistent discriminating systems (grouped by action rule) and for each of the leading-eight moral systems individually.
Throughout, we focus on self-approving systems; self-disapproving systems are behaviorally equivalent under the mirror operation of Section~\ref{subsec:mirror-systems} and require no separate treatment.

\subsection{Characterization of self-approving systems}\label{subsec:self-approving}

Here, we characterize the sets of self-approving and self-disapproving moral systems.
We begin with the following result, which provides simple necessary and sufficient conditions for a moral system to be self-approving (or self-disapproving).

\begin{proposition}[Characterization of self-approving and self-disapproving systems] \label{prop:self-approving}
A conditional system $(p,d)$ is \emph{self-approving} if and only if
\begin{enumerate}
    \item[(i)] $\dGG = \G$, and
    \item[(ii)] %
    $\dBB+\dBG+\dGB \geq 2$, i.e., at least two of $\left\{\dBB,\dBG,\dGB\right\}$ are $\G$.
\end{enumerate}
Similarly, a conditional system $(p,d)$ is \emph{self-disapproving} if and only if
\begin{enumerate}
    \item[(i)] $\dBB = \B$, and
    \item[(ii)] %
    $\dBG+\dGB+\dGG \leq 1$, i.e., at least two of $\left\{\dBG,\dGB,\dGG\right\}$ are $\B$.
\end{enumerate}
\end{proposition}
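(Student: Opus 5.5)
At $\epsilon=0$, Eq.~\eqref{deltaij} reduces to $\delta_{\rf\rc} = \chi + (1-2\chi)\drfrc$. Each $\delta_{\rf\rc}$ is therefore either $\chi$ or $1-\chi$, according as $\drfrc=\B$ or $\G$. The plan is to study the fixed point of $F(g)=ag^2+bg+c$ in Eq.~\eqref{gdot} directly as $\chi\to0$. The self-disapproving half will then follow from the self-approving half via the mirror map: by Corollary~\ref{cor:mirror-classes}(i) and Lemma~\ref{lem:mirror-reputation}, $\hat{\mathsf{d}}_{\rf\rc} = \neg\mathsf{d}_{\neg\rf\neg\rc}$. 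Translating conditions (i)--(ii) for the mirror gives exactly $\dBB=\B$ together with at least two of $\dBG,\dGB,\dGG$ equal to $\B$.

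For the self-approving case, I would compute the limiting function $F_0(g)=\lim_{\chi\to0}F(g)$. This is the polynomial obtained with $\delta_{\rf\rc}=\drfrc\in\{0,1\}$. The unique stable equilibrium $g^*$ lies in $(0,1)$ for $\chi>0$ and depends continuously on $\chi$; its limit is a root of $F_0$ in $[0,1]$. Since $F$ is the expected new reputation minus $g$, I can use $F(1)=\delta_{\G\G}-1$ and $F(0)=\delta_{\B\B}$.

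For sufficiency, suppose (i) and (ii) hold. Then $\delta_{\G\G}=1-\chi$, and at least two of the other three $\delta$'s equal $1-\chi$. I would check the cases directly. In each, $F(g)\ge$ some expression that is positive on $[0,1-\eta]$ for every fixed $\eta>0$ once $\chi$ is small, so $g^*\to1$. Concretely, with $\dGG=\G$:
\begin{itemize}
\item If $\dBG=\dGB=\G$, then $F_0(g)=g^2+2g(1-g)+\dBB(1-g)^2-g=(1-g)\bigl(g+\dBB(1-g)\bigr)$, which is positive on $(0,1)$.
\item If $\dBB=\G$ and exactly one of $\dBG,\dGB$ is $\G$, then $F_0(g)=g^2+g(1-g)+(1-g)^2-g=(1-g)^2$, again positive on $[0,1)$.
\end{itemize}
In the second case $g=1$ is a double root of $F_0$ and $g=0$ is not a root, so the perturbed root still tends to $1$; the only point needing care is the case-(a) root $g=0$ when $\dBB=\B$. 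There $F(0)=\chi>0$ and $F_0'(0)=1>0$, so $0$ is repelling, and since $g^*$ is the unique root in $(0,1)$ it must be the one near $1$. The cleanest route is to argue through the sign of $F$ on $[\eta,1-\eta]$ plus the uniqueness of the root of $F$ in $(0,1)$.

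For necessity, if $\dGG=\B$ then $F(1)=\chi-1$ is bounded away from $0$, so by continuity $F<0$ near $1$ uniformly and $g^*\not\to1$. If $\dGG=\G$ but at most one of $\dBB,\dBG,\dGB$ is $\G$, the analogous expansion gives $F_0(g)\le g^2+g(1-g)-g=0$, with either strict negativity on $(0,1)$ or $g=0$ a relevant attracting root. In the subcase $\dBB=\G$, $\dBG=\dGB=\B$, one gets $F_0=g^2+(1-g)^2-g=(1-g)(1-2g)$, whose root $1/2$ is the stable one, so the limit is $1/2$, not $1$.

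The main obstacle is making the passage ``root of $F_0$'' $\to$ ``limit of $g^*$'' rigorous in the degenerate cases where $F_0$ has a double root or vanishes identically. An example is $\dBB=\dBG=\dGB=\B$, $\dGG=\G$, where $F_0=g^2-g$. There I would expand $F$ to first order in $\chi$ and use the explicit formula~\eqref{g*} to identify the limit directly.
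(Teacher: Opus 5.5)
Your proposal is correct, but it argues differently from the paper. The paper sets $\epsilon=0$, writes the coefficients $a(\chi),b(\chi),c(\chi)$, and evaluates $\lim_{\chi\to0}g^*$ from the closed-form root \eqref{g*} in each of the sixteen patterns $(\dBB,\dBG,\dGB,\dGG)$, using L'H\^opital's rule where the linear formula gives $0/0$. It then reads both characterizations off Table~\ref{tab:limg}.

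You instead use three ingredients:
\begin{itemize}
\item $F_\chi(0)>0>F_\chi(1)$ and $F_\chi$ has a unique root in $(0,1)$, so the sign of $F_\chi$ at one test point locates $g^*$;
\item hence only the signs of the limiting polynomial $F_0$ matter;
\item the self-disapproving half follows from the mirror involution, which the paper does not use at this point.
\end{itemize}
Your appeal to Lemma~\ref{lem:mirror-reputation} and Corollary~\ref{cor:mirror-classes} is not circular, since neither depends on this proposition.

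Your route is shorter and shows where the conditions come from. The paper's brute-force route gives more: it establishes that the limit exists, and its exact value, in all sixteen cases. Lemma~\ref{lem:kappa-q} and Proposition~\ref{prop:self-cooperative} later take this for granted. Your argument, by contrast, only shows for $\dGG=\B$ that $g^*$ stays bounded away from $1$.

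Your ``main obstacle'' disappears with the exact identity $F_\chi(g)=(1-2\chi)F_0(g)+\chi(1-2g)$, which holds because the four reputation-pair weights sum to one. For fixed $x\in(0,1)$ and small $\chi$, $F_0(x)>0$ forces $g^*>x$ and $F_0(x)<0$ forces $g^*<x$. Moreover, for $\dGG=\G$ one has $F_0(g)=(1-g)\bigl[(\dBG+\dGB-1)g+\dBB(1-g)\bigr]$, which settles every case at once.

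There are two slips in your necessity paragraph:
\begin{itemize}
\item The bound $F_0\le 0$ holds only when $\dBB=\B$.
\item The case you flag as degenerate, $(\B,\B,\B,\G)$, is not degenerate: $F_0=-g(1-g)<0$ on $(0,1)$, so $g^*\to0$ directly.
\end{itemize}
The genuinely degenerate patterns are $(\B,\G,\B,\G)$ and $(\B,\B,\G,\G)$, where $F_0\equiv 0$. There neither of your stated alternatives (strict negativity, or an attracting root at $0$) applies. The identity gives $F_\chi=\chi(1-2g)$, hence $g^*=1/2$ for every $\chi$.
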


\begin{proof}

To identify the set of self-approving systems for which condition \eqref{self-approving-cond} in Definition \ref{def:self-approving} holds,
and the set of self-disapproving systems for which condition \eqref{self-disapproving-cond} in Definition \ref{def:self-disapproving} holds,
we set $\epsilon=0$ in Eq. \eqref{deltaij}, so that Eq. \eqref{abc} can be written as
 \begin{align*}
  a(\chi) & = (1-2 \chi) \left[ \dBB + \dGG - \dBG - \dGB \right], \\
  b(\chi) & = (1-2\chi) \left[\dBG + \dGB - 2 \dBB \right] - 1, \\
  c(\chi) & = \chi + (1-2\chi)\dBB,
 \end{align*}
where we have expressed $a$, $b$, and $c$ as functions of $\chi$, and where we have used the shorthand notation \eqref{dr1r2-shorthand}.
There are $2^4=16$ different possible values for the binary string $(\dBB,\dBG,\dGB,\dGG)$, and so 16 different cases to analyze.
We begin by noting that, if
\begin{equation} \label{dBB+dGG=dBG+dGB}
 \dBB + \dGG = \dBG + \dGB
\end{equation}
holds, then $a(\chi)=0$ holds for all $\chi$.
In this case, %
\begin{equation} \label{limg_linear}
 \lim_{\chi \to 0} \left(g^*_{p,d}\right)_{\epsilon=0} = - \lim_{\chi \to 0}\frac{c(\chi)}{b(\chi)} = -\frac{c(0)}{b(0)} = \frac{\dBB}{1+\dBB-\dGG}.
\end{equation}
Depending on the values of $\dBB$ and $\dGG$, we have four different cases.
\begin{enumerate}
 \item If $(\dBB,\dGG)=(\B,\B)$, then \eqref{limg_linear} simplifies to $0$.
 \item If $(\dBB,\dGG)=(\B,\G)$, then
\eqref{limg_linear} results in the indeterminate form $0/0$.
Applying L'H\^opital's rule, we evaluate the limit as
\begin{align*}
 \lim_{\chi \to 0} \left(g^*_{p,d}\right)_{\epsilon=0} & = - \lim_{\chi \to 0}\frac{c'(\chi)}{b'(\chi)} \nonumber \\
 & = -\frac{1-2 \dBB}{-2 \left(\dBG+\dGB-2 \dBB\right)} = \frac{1-2 \dBB}{2 \left(\dGG-\dBB\right)} = \frac{1}{2} .
\end{align*}
\item If $(\dBB,\dGG)=(\G,\B)$, then \eqref{limg_linear} simplifies to $1/2$.
\item If $(\dBB,\dGG)=(\G,\G)$, then \eqref{limg_linear} simplifies to $1$.
\end{enumerate}
Suppose now that \eqref{dBB+dGG=dBG+dGB} does not hold, that is, that
\begin{equation} \label{dBB+dGGneqdBG+dGB}
 \dBB + \dGG \neq \dBG + \dGB
\end{equation}
holds.
In this case, $a(\chi) \neq 0$ holds for all $\chi \in (0,1/2)$, and
\begin{equation} \label{limg_quadratic}
 \lim_{\chi \to 0} \left(g^*_{p,d}\right)_{\epsilon=0} = \lim_{\chi \to 0} \frac{-b(\chi)-\sqrt{b(\chi)^2-4 a(\chi) c(\chi)}}{2 a(\chi)} .
\end{equation}
Here, we have two main cases.
\begin{enumerate}
 \item If $\dBB = \B$, then $c(0)=0$, and Eq. \eqref{limg_quadratic} reduces to
\begin{equation} \label{limg_quadratic_1}
 \lim_{\chi \to 0} \left(g^*_{p,d}\right)_{\epsilon=0} = \lim_{\chi \to 0} \frac{-b(\chi)-\sqrt{b(\chi)^2}}{2 a(\chi)} .
\end{equation}
There are five different subcases for which both $\dBB=\B$ and \eqref{dBB+dGGneqdBG+dGB} hold.
\begin{enumerate}
 \item If $(\dBB,\dBG,\dGB,\dGG)=(\B,\B,\B,\G)$, then $a(0)=1$, $b(0)=-1$, and \eqref{limg_quadratic_1} simplifies to $0$.
 \item If $(\dBB,\dBG,\dGB,\dGG)=(\B,\B,\G,\B)$, then $a(0)=-1$, $b(0)=0$, and \eqref{limg_quadratic_1} simplifies to $0$.
 \item If $(\dBB,\dBG,\dGB,\dGG)=(\B,\G,\B,\B)$, then $a(0)=-1$, $b(0)=0$, and \eqref{limg_quadratic_1} simplifies to $0$.
 \item If $(\dBB,\dBG,\dGB,\dGG)=(\B,\G,\G,\B)$, then $a(0)=-2$, $b(0)=1$, and \eqref{limg_quadratic_1} simplifies to $1/2$.
 \item If $(\dBB,\dBG,\dGB,\dGG)=(\B,\G,\G,\G)$, then $a(0)=-1$, $b(0)=1$, and \eqref{limg_quadratic_1} simplifies to $1$.
\end{enumerate}
\item If $\dBB=\G$, then $c(0)=1$, and \eqref{limg_quadratic} reduces to
\begin{equation} \label{limg_quadratic_2}
 \lim_{\chi \to 0} \left(g^*_{p,d}\right)_{\epsilon=0} = \lim_{\chi \to 0} \frac{-b(\chi)-\sqrt{b(\chi)^2-4a(\chi)}}{2 a(\chi)} .
\end{equation}
There are five different subcases for which both $\dBB=\G$ and \eqref{dBB+dGGneqdBG+dGB} hold.
\begin{enumerate}
 \item If $(\dBB,\dBG,\dGB,\dGG)=(\G,\B,\B,\B)$, then $a(0)=1$, $b(0)=-3$, and \eqref{limg_quadratic_2} simplifies to $\frac{3-\sqrt{5}}{2} \approx 0.382$.
 \item If $(\dBB,\dBG,\dGB,\dGG)=(\G,\B,\B,\G)$, then $a(0)=2$, $b(0)=-3$, and \eqref{limg_quadratic_2} simplifies to $\frac{1}{2}$.
 \item If $(\dBB,\dBG,\dGB,\dGG)=(\G,\B,\G,\G)$, then $a(0)=1$, $b(0)=-2$, and \eqref{limg_quadratic_2} simplifies to $1$.
 \item If $(\dBB,\dBG,\dGB,\dGG)=(\G,\G,\B,\G)$, then $a(0)=1$, $b(0)=-2$, and \eqref{limg_quadratic_2} simplifies to $1$.
 \item If $(\dBB,\dBG,\dGB,\dGG)=(\G,\G,\G,\B)$, then $a(0)=b(0)=-1$, and \eqref{limg_quadratic_2} simplifies to $\frac{\sqrt{5}-1}{2} \approx 0.618$.
\end{enumerate}
\end{enumerate}

Table \ref{tab:limg} shows a summary of the case distinctions and the results.
By Definitions~\ref{def:self-approving} and~\ref{def:self-disapproving}, the self-approving systems are those for which this limit equals one and the self-disapproving systems those for which it equals zero.
An inspection of the table thus shows that the self-approving cases are precisely those with $\dGG=\G$ and at least two of $\left\{\dBB,\dBG,\dGB\right\}$ equal to $\G$, whereas the self-disapproving cases are exactly those with $\dBB=\B$ and at least two of $\left\{\dBG,\dGB,\dGG\right\}$ equal to $\B$.
\end{proof}

\begin{table}
\caption{Limiting resident reputation equilibrium $\lim_{\chi \to 0} \left(g^*_{p,d}\right)_{\epsilon=0}$ for each of the 16 possible assessment-bit combinations $(\dBB,\dBG,\dGB,\dGG)$, in the absence of execution errors ($\epsilon=0$) and in the limit of vanishing assignment error ($\chi\to0$). Self-approving systems are those for which this limit equals one; self-disapproving systems those for which it equals zero.}
\begin{center}
\begin{tabular}{c c c c | c}
  $\dBB$ & $\dBG$ & $\dGB$ & $\dGG$ & $\lim_{\chi \to 0} \left(g^*_{p,d}\right)_{\epsilon=0}$ \\
  \hline
  $\B$ & $\B$ & $\B$ & $\B$ & $0$\\
  $\B$ & $\B$ & $\B$ & $\G$ & $0$ \\
  $\B$ & $\B$ & $\G$ & $\B$ & $0$  \\
  $\B$ & $\B$ & $\G$ & $\G$ & $1/2$ \\
  $\B$ & $\G$ & $\B$ & $\B$ & $0$ \\
  $\B$ & $\G$ & $\B$ & $\G$ & $1/2$ \\
  $\B$ & $\G$ & $\G$ & $\B$ & $1/2$ \\
  $\B$ & $\G$ & $\G$ & $\G$ & $1$ \\
  $\G$ & $\B$ & $\B$ & $\B$ & $\frac{3-\sqrt{5}}{2}$\\
  $\G$ & $\B$ & $\B$ & $\G$ & $1/2$ \\
  $\G$ & $\B$ & $\G$ & $\B$ & $1/2$ \\
  $\G$ & $\B$ & $\G$ & $\G$ & $1$ \\
  $\G$ & $\G$ & $\B$ & $\B$ & $1/2$ \\
  $\G$ & $\G$ & $\B$ & $\G$ & $1$ \\
  $\G$ & $\G$ & $\G$ & $\B$ & $\frac{\sqrt{5}-1}{2}$ \\
  $\G$ & $\G$ & $\G$ & $\G$ & $1$ \\
\end{tabular}
\end{center}
\label{tab:limg}
\end{table}

Proposition~\ref{prop:self-approving} lets us distinguish three subsets of self-approving systems, which differ in their resident reputation equilibrium.
The following lemma makes this partition explicit.

\begin{lemma}[Partition of self-approving systems] \label{lem:partition-SA}
The set of self-approving systems $\Aset$ can be partitioned into the following three subsets, each characterized by a distinct resident reputation equilibrium in the absence of execution errors, $\left(g^*_{p,d}\right)_{\epsilon=0}$, as a function of the assignment error $\chi$:
\begin{enumerate}
 \item The set $\AsetGXXG = \AsetGBGG \cup \AsetGGBG$ of self-approving systems for which
 \begin{equation} \label{cond-SA-1**1}
 \dBB = \dGG = \G \ \text{ and } \ \dBG \neq \dGB
 \end{equation}
 holds, characterized by
\begin{equation} \label{g-SA-1**1}
 \left(g^*_{p,d}\right)_{\epsilon=0} = \frac{1-\chi-\sqrt{\chi-\chi^2}}{1-2\chi} . %
\end{equation}
Here, $\AsetGBGG$ denotes the subset with $(\dBG,\dGB) = (\B,\G)$
and $\AsetGGBG$ the subset with $(\dBG,\dGB) = (\G,\B)$.
\item The set $\AsetBGGG$ of self-approving systems for which
\begin{equation} \label{cond-SA-0111}
\dBB = \B \ \text{ and } \ \dBG = \dGB = \dGG = \G
\end{equation}
holds, characterized by
\begin{equation} \label{g-SA-0111}
 \left(g^*_{p,d}\right)_{\epsilon=0} = \frac{1-4\chi+\sqrt{1-4\chi+8\chi^2}}{2-4\chi} . %
\end{equation}
\item The set $\AsetGGGG$ of self-approving systems for which
\begin{equation} \label{cond-SA-1111}
\dBB = \dBG = \dGB = \dGG = \G
\end{equation}
holds,
characterized by
\begin{equation} \label{g-SA-1111}
 \left(g^*_{p,d}\right)_{\epsilon=0} = 1-\chi .
\end{equation}
\end{enumerate}
\end{lemma}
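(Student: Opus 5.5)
By Proposition~\ref{prop:self-approving}, a conditional system is self-approving if and only if $\dGG=\G$ and at least two of $\dBB,\dBG,\dGB$ equal $\G$. This leaves exactly four admissible bit patterns $(\dBB,\dBG,\dGB,\dGG)$: $(\G,\B,\G,\G)$, $(\G,\G,\B,\G)$, $(\B,\G,\G,\G)$ and $(\G,\G,\G,\G)$. These are also exactly the four rows of Table~\ref{tab:limg} whose limit equals one. Grouping them as in the statement gives a partition, since the three conditions~\eqref{cond-SA-1**1}, \eqref{cond-SA-0111} and~\eqref{cond-SA-1111} are mutually exclusive and jointly exhaustive over these patterns. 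It then remains to compute $(g^*_{p,d})_{\epsilon=0}$ explicitly in each case, using the expressions for $a(\chi)$, $b(\chi)$, $c(\chi)$ derived in the proof of Proposition~\ref{prop:self-approving}. At $\epsilon=0$ these coefficients depend on $(p,d)$ only through the four bits, so each subset indeed has a single equilibrium formula.

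\emph{Case $\AsetGXXG$.} Here $\dBB=\dGG=\G$ and $\dBG+\dGB=1$, so
\[
a=(1-2\chi)(2-1)=1-2\chi, \qquad b=(1-2\chi)(1-2)-1=-2+2\chi, \qquad c=1-\chi .
\]
Then
\[
b^2-4ac=4(1-\chi)^2-4(1-2\chi)(1-\chi)=4\chi(1-\chi),
\]
and Eq.~\eqref{g*} gives
\[
g^*=\frac{2(1-\chi)-2\sqrt{\chi-\chi^2}}{2(1-2\chi)},
\]
which is~\eqref{g-SA-1**1}.

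\emph{Case $\AsetBGGG$.} Here
\[
a=(1-2\chi)(0+1-2)=-(1-2\chi), \qquad b=2(1-2\chi)-1=1-4\chi, \qquad c=\chi .
\]
Then
\[
b^2-4ac=(1-4\chi)^2+4\chi(1-2\chi)=1-4\chi+8\chi^2 .
\]
Substituting into $(-b-\sqrt{\cdot})/(2a)$ and multiplying numerator and denominator by $-1$ yields~\eqref{g-SA-0111}.

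\emph{Case $\AsetGGGG$.} Here $a=0$, $b=(1-2\chi)(2-2)-1=-1$ and $c=1-\chi$. The linear branch of~\eqref{g*} then gives $g^*=-c/b=1-\chi$, which is~\eqref{g-SA-1111}.

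The only point needing care is the choice of root in~\eqref{g*}. In the first two cases I would confirm that the root selected by~\eqref{g*} is the one lying in $[0,1]$ and tending to $1$ as $\chi\to 0$, consistent with Table~\ref{tab:limg}. Concretely, in the $\AsetBGGG$ case the sign flip from $a<0$ turns $-b-\sqrt{\cdot}$ over a negative denominator into the stated plus-root over $2-4\chi$, which tends to $1$. In the $\AsetGXXG$ case the minus-root over $1-2\chi$ tends to $1$, whereas the other root exceeds one. Beyond this check the computation is routine.
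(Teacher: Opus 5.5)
Your proposal is correct and follows the paper's own proof: you read off the four admissible bit patterns from Proposition~\ref{prop:self-approving}, group them into the three classes, and substitute the bits into $a(\chi)$, $b(\chi)$, $c(\chi)$ and Eq.~\eqref{g*} without taking the limit. Your explicit coefficients, discriminants and root simplifications are all accurate. The extra root-selection check is a harmless sanity check, since Eq.~\eqref{g*} already fixes the root.
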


\begin{proof}
The partition follows from Proposition~\ref{prop:self-approving}: the systems satisfying its conditions are precisely those with $(\dBB,\dBG,\dGB,\dGG)$ equal to $(\G,\B,\G,\G)$, $(\G,\G,\B,\G)$, $(\B,\G,\G,\G)$, or $(\G,\G,\G,\G)$, which group into the three subsets above.
The explicit equilibria are obtained by substituting the corresponding assessment bits into Eqs.~\eqref{g*} and~\eqref{abc} at $\epsilon=0$ (equivalently, into the $a(\chi)$, $b(\chi)$, and $c(\chi)$ of the proof of Proposition~\ref{prop:self-approving}) and simplifying, without taking the $\chi\to0$ limit.
\end{proof}

\subsection{Characterization of self-cooperative and righteous systems}\label{subsec:self-cooperative-righteous}

The cooperation index in the absence of execution errors, $(\kappa_{p,d})_{\epsilon=0}$, and its limit when the assignment error tends to zero, $\lim_{\chi \to 0}(\kappa_{p,d})_{\epsilon=0}$, are important quantities in our analysis.
The following lemma expresses these two quantities in terms of a simple quadratic function.

\begin{lemma}[Cooperation index as a function of the reputation equilibrium] \label{lem:kappa-q}
Let $q_p: [0,1] \to [0,1]$ be defined by
\begin{align} \label{q}
 q_p(x) = \pBB (1-x)^2 + \left(\pBG + \pGB\right) x(1-x) + \pGG x^2.
\end{align}
Then the cooperation index $(\kappa_{p,d})_{\epsilon=0}$ of a moral system $(p,d)$ and its limit $\lim_{\chi \to 0}(\kappa_{p,d})_{\epsilon=0}$ can be written, respectively, as
\begin{equation} \label{kappa-q}
\left(\kappa_{p,d}\right)_{\epsilon=0} = q_p\left((g^*_{p,d})_{\epsilon=0}\right),
\end{equation}
and
\begin{equation} \label{lim-kappa-q}
\lim_{\chi \to 0} \left(\kappa_{p,d}\right)_{\epsilon=0} = q_p\left(\lim_{\chi \to 0} \left(g^*_{p,d}\right)_{\epsilon=0}\right).
\end{equation}
\end{lemma}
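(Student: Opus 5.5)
The proof will be a direct substitution followed by a continuity argument.

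First, we set $\epsilon=0$ in the definition of the cooperation index, Eq.~\eqref{coop-index}. The prefactor $(1-\epsilon)$ becomes one. What remains is the bracket
\[
g^2 \pGG + g(1-g)\left(\pGB+\pBG\right) + (1-g)^2 \pBB
\]
evaluated at $g=(g^*_{p,d})_{\epsilon=0}$.

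Next, we compare this term by term with the definition~\eqref{q} of $q_p$. Addition is commutative, so $\pBG+\pGB=\pGB+\pBG$, and the bracket is exactly $q_p(g)$. This gives Eq.~\eqref{kappa-q}. We should also check that $q_p$ maps $[0,1]$ into $[0,1]$. For $x\in[0,1]$, $q_p(x)$ is a combination of bits in $\{0,1\}$ with the nonnegative weights $(1-x)^2$, $2x(1-x)$ and $x^2$, which sum to one. Each $\pBG$ and $\pGB$ carries weight $x(1-x)$, so the combined coefficient on these two bits is at most $2x(1-x)$. Hence $0\le q_p(x)\le 1$.

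For Eq.~\eqref{lim-kappa-q}, we take $\chi\to0$ on both sides of Eq.~\eqref{kappa-q}. Since $q_p$ is a polynomial, it is continuous, so the limit can be moved inside $q_p$. This requires that the limit $\lim_{\chi\to0}(g^*_{p,d})_{\epsilon=0}$ exists. The proof of Proposition~\ref{prop:self-approving} establishes this: it computes the limit for all sixteen combinations of the assessment bits (Table~\ref{tab:limg}), and the limit is finite in every case.

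The only point needing any care is this existence of the limit, and Table~\ref{tab:limg} already supplies it. Everything else is routine.
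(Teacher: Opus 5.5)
Your proof is correct and follows the same route as the paper. Setting $\epsilon=0$ in the cooperation index gives Eq.~\eqref{kappa-q} directly, and moving the limit inside the polynomial $q_p$ gives Eq.~\eqref{lim-kappa-q}. You are in fact slightly more careful than the paper on two points: you check that $q_p$ maps $[0,1]$ into $[0,1]$, and you note that the paper's appeal to the algebraic limit theorem presupposes that $\lim_{\chi\to0}(g^*_{p,d})_{\epsilon=0}$ exists, which you correctly take from the case analysis behind Table~\ref{tab:limg}.
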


\begin{proof}
Evaluating the cooperation index~\eqref{coop-index} at $\epsilon=0$ yields~\eqref{kappa-q}.
Taking the limit of~\eqref{kappa-q} as $\chi$ tends to zero and applying the algebraic limit theorem yields~\eqref{lim-kappa-q}.
\end{proof}

Table~\ref{tab:q} lists the cooperation-index function $q_p(x)$~\eqref{q} for each action rule $p$.

\begin{table}
\caption{The cooperation-index function $q_p(x)$~\eqref{q} for each of the 16 action rules $p$, where $x\in[0,1]$ is the fraction of individuals with a good reputation. Actions and reputations follow the encoding $\C=\G=1$, $\D=\B=0$~\eqref{C=G=1,D=B=0}.}
\begin{center}
\begin{tabular}{c c c c | c}
  $\pBB$ & $\pBG$ & $\pGB$ & $\pGG$ & $q_p(x)$ \\
  \hline
  $\D$ & $\D$ & $\D$ & $\D$ & $0$ \\
  $\C$ & $\D$ & $\D$ & $\D$ & $1-2x+x^2$ \\
  $\D$ & $\C$ & $\D$ & $\D$ & $x-x^2$ \\
  $\C$ & $\C$ & $\D$ & $\D$ & $1-x$ \\
  $\D$ & $\D$ & $\C$ & $\D$ & $x-x^2$ \\
  $\C$ & $\D$ & $\C$ & $\D$ & $1-x$ \\
  $\D$ & $\C$ & $\C$ & $\D$ & $2x-2x^2$\\
  $\C$ & $\C$ & $\C$ & $\D$ & $1-x^2$ \\
  $\D$ & $\D$ & $\D$ & $\C$ & $x^2$ \\
  $\C$ & $\D$ & $\D$ & $\C$ & $1-2x+2x^2$ \\
  $\D$ & $\C$ & $\D$ & $\C$ & $x$ \\
  $\C$ & $\C$ & $\D$ & $\C$ & $1-x+x^2$ \\
  $\D$ & $\D$ & $\C$ & $\C$ & $x$ \\
  $\C$ & $\D$ & $\C$ & $\C$ & $1-x+x^2$ \\
  $\D$ & $\C$ & $\C$ & $\C$ & $2x-x^2$\\
  $\C$ & $\C$ & $\C$ & $\C$ & $1$ \\
\end{tabular}
\end{center}
\label{tab:q}
\end{table}

The following proposition provides simple necessary and sufficient conditions for a moral system to be self-cooperative.
Its proof uses Lemma \ref{lem:kappa-q}.

\begin{proposition}[Necessary and sufficient conditions for a conditional system to be self-cooperative]
\label{prop:self-cooperative}
A conditional system is self-cooperative if and only if either
\begin{enumerate}
 \item[(i)] it is self-approving and $\pGG = \C$ holds, or
 \item[(ii)] it is self-disapproving and $\pBB = \C$ holds.
\end{enumerate}
\end{proposition}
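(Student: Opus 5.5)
The plan is to combine Lemma~\ref{lem:kappa-q} with the exhaustive list of limiting reputation equilibria in Table~\ref{tab:limg}. Write $x^* := \lim_{\chi\to0}(g^*_{p,d})_{\epsilon=0}$. By Eq.~\eqref{lim-kappa-q}, $(p,d)$ is self-cooperative if and only if $q_p(x^*)=1$. From the proof of Proposition~\ref{prop:self-approving}, $x^*$ takes one of the values $0$, $1/2$, $1$, $(3-\sqrt5)/2$, or $(\sqrt5-1)/2$. In every case $x^*$ exists and lies in $[0,1]$. The question therefore reduces to deciding, for each conditional $p$ and each possible value of $x^*$, whether $q_p(x^*)=1$.

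The key observation is that $q_p(x)$ in Eq.~\eqref{q} is a convex combination of the bits $\pBB,\pBG,\pGB,\pGG\in\{0,1\}$. The weights are $(1-x)^2$, $x(1-x)$, $x(1-x)$, and $x^2$, and they sum to one. Hence $q_p(x)=1$ holds exactly when every bit carrying a strictly positive weight equals $\C$.

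I would then split into three cases.
\begin{enumerate}
\item If $x^*=1$, only $\pGG$ has positive weight, so $q_p(1)=\pGG$. Since $x^*=1$ is exactly self-approval, this case gives (i).
\item If $x^*=0$, only $\pBB$ has positive weight, so $q_p(0)=\pBB$. Since $x^*=0$ is exactly self-disapproval, this case gives (ii).
\item If $x^*\in(0,1)$, which covers the values $1/2$ and $(3\pm\sqrt5)/2$ up to the obvious form, all four weights are strictly positive. Then $q_p(x^*)=1$ forces $p=\CCCC=\ALLC$. That strategy is unconditional and so is excluded, and no conditional system in this case is self-cooperative.
\end{enumerate}

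For the converse, conditions (i) and (ii) give $x^*=1$ with $\pGG=\C$, or $x^*=0$ with $\pBB=\C$. Either way $q_p(x^*)=1$ by the computation above, so the system is self-cooperative. As a cross-check, the same conclusions can be read directly from Table~\ref{tab:q}.

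I expect no step to be a real obstacle. The only points needing care are that the limit $x^*$ exists in every case, which Table~\ref{tab:limg} guarantees so that Eq.~\eqref{lim-kappa-q} applies, and that the strict interiority of $x^*$ in the third case makes all four weights positive.
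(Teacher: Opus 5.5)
Your proof is correct and follows essentially the paper's own route: you reduce to $q_p(\gamma)=1$ via Lemma~\ref{lem:kappa-q}, observe that the endpoints $\gamma=1$ and $\gamma=0$ isolate $\pGG$ and $\pBB$, and use the strict positivity of all four weights for $\gamma\in(0,1)$ to force $p=\ALLC$, which conditionality excludes. One small slip: the interior limits are $(3-\sqrt{5})/2$ and $(\sqrt{5}-1)/2$, not $(3\pm\sqrt{5})/2$, but your argument only needs $\gamma\in(0,1)$, so nothing depends on it.
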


\begin{proof}
By Lemma \ref{lem:kappa-q}, we can write $\lim_{\chi \to 0}\left(\kappa_{p,d}\right)_{\epsilon=0} = q_p(\gamma)$, where
\begin{equation} \label{gamma}
\gamma = \lim_{\chi \to 0} \left(g^*_{p,d}\right)_{\epsilon=0}
\end{equation}
and where $q_p$ is given by \eqref{q}.
Substituting all possible choices of $(\pBB,\pBG,\pGB,\pGG)$ into $q_p(\gamma)$ yields the 16 cases listed in Table~\ref{tab:q}.

Since $p$ is conditional, there are three subcases to consider.
\begin{enumerate}
\item Suppose that $\pBB = \pGG = \D$ holds.
Then, $q_p(\gamma) = \left(\pBG+\pGB\right) \gamma(1-\gamma)$ with $1 \leq \pBG+\pGB \leq 2$ (since $p$ is conditional, $\pBG$ and $\pGB$ cannot both be $\D$).
Since $\gamma(1-\gamma) \leq 1/4$ for $\gamma \in [0,1]$, it follows that $\max_{0 \leq \gamma \leq 1} q_p(\gamma) \leq 1/2$.
Hence, $p$ can never be self-cooperative under $d$.
\item Suppose that $\pGG = \C$ holds.
Then, $q_p(1) = \pGG = 1$.
Therefore, if $(p,d)$ is self-approving (so that $\gamma=1$ by Definition~\ref{def:self-approving} and Eq.~\eqref{gamma}), then it is self-cooperative.
This proves the sufficient condition of part (i).
\item Suppose that $\pBB = \C$ holds.
Then, $q_p(0) = \pBB = 1$.
Hence, if $(p,d)$ is self-disapproving (so that $\gamma=0$ by Definition~\ref{def:self-disapproving} and Eq.~\eqref{gamma}), then it is self-cooperative.
This proves the sufficient condition of part (ii).
\end{enumerate}
Finally, we prove the necessity of conditions (i)--(ii).
Suppose $(p,d)$ is self-cooperative, so that $q_p(\gamma)=1$.
If $0<\gamma<1$, all four weights $(1-\gamma)^2$, $\gamma(1-\gamma)$, $\gamma(1-\gamma)$, and $\gamma^2$ in $q_p(\gamma)$ are strictly positive, so $q_p(\gamma)=1$ would require $\pBB=\pGG=\C$ and $\pBG=\pGB=\C$, that is, $p=\ALLC$; since $p$ is conditional, this is impossible, and hence $q_p(\gamma)<1$, a contradiction.
Therefore $\gamma\in\{0,1\}$, so $(p,d)$ is self-approving or self-disapproving.
If $\gamma=1$, then $\pGG=q_p(1)=1$, so $\pGG=\C$ and (i) holds; if $\gamma=0$, then $\pBB=q_p(0)=1$, so $\pBB=\C$ and (ii) holds.
This concludes the proof.
\end{proof}

\begin{corollary}[Characterization of righteous systems] \label{cor:righteous}
Let $(p,d)$ be self-approving.
Then $(p,d)$ is self-cooperative (and hence also righteous) if and only if $\pGG = \C$.
\end{corollary}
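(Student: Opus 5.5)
The plan is to derive the corollary directly from Proposition~\ref{prop:self-cooperative}, using the fact that the self-approving and self-disapproving sets are disjoint. Fix a self-approving system $(p,d)$. By Definition~\ref{def:self-approving}, $(p,d)$ is conditional and
\[
\gamma:=\lim_{\chi\to0}\left(g^*_{p,d}\right)_{\epsilon=0}=1 .
\]
In particular $\gamma\neq0$, so $(p,d)$ is not self-disapproving, since Definition~\ref{def:self-disapproving} would require $\gamma=0$.

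For sufficiency, suppose $\pGG=\C$. Then alternative (i) of Proposition~\ref{prop:self-cooperative} holds, so $(p,d)$ is self-cooperative. This can also be checked directly: Lemma~\ref{lem:kappa-q} gives
\[
\lim_{\chi\to0}(\kappa_{p,d})_{\epsilon=0}=q_p(1)=\pGG=1 .
\]

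For necessity, suppose $(p,d)$ is self-cooperative. Proposition~\ref{prop:self-cooperative} then says that either (i) or (ii) holds. Alternative (ii) requires $(p,d)$ to be self-disapproving, which has just been excluded. Hence (i) holds, and it yields $\pGG=\C$. More directly, Lemma~\ref{lem:kappa-q} gives $1=q_p(\gamma)=q_p(1)=\pGG$.

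Finally, a self-approving system that is also self-cooperative lies in $\Aset\cap\Kset=\Rset$ by \eqref{righteous-set}, so it is righteous. The argument has no real obstacle. The only point needing care is the disjointness of $\Aset$ and $\Zset$, which holds because a single limit cannot equal both $1$ and $0$.
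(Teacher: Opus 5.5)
Your proposal is correct and follows essentially the same route as the paper: being self-approving rules out self-disapproval, so alternative (ii) of Proposition~\ref{prop:self-cooperative} is excluded, self-cooperation reduces to alternative (i), i.e.\ $\pGG=\C$, and righteousness then follows from the definition. Your supplementary direct check via Lemma~\ref{lem:kappa-q}, $\lim_{\chi\to0}(\kappa_{p,d})_{\epsilon=0}=q_p(1)=\pGG$, is also valid and gives both directions in one line.
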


\begin{proof}
Being self-approving, $(p,d)$ is not self-disapproving, so condition~(ii) of Proposition~\ref{prop:self-cooperative} cannot hold.
Hence $(p,d)$ is self-cooperative if and only if condition~(i) holds, that is, if and only if $\pGG=\C$.
Righteousness then follows from Definition~\ref{def:righteous}, since $(p,d)$ is self-approving by assumption.
\end{proof}

Proposition~\ref{prop:self-cooperative} shows that the self-cooperative systems form two disjoint classes.
The first is the set of self-approving systems satisfying $\pGG = \C$; by Corollary~\ref{cor:righteous}, these coincide with the righteous systems of Definition~\ref{def:righteous}.
The second is the set of self-disapproving systems satisfying $\pBB = \C$.

Together with Proposition~\ref{prop:self-approving}, Corollary~\ref{cor:righteous} lets us simplify the characterization of the set $\Rset$ of righteous systems given in Eq.~\eqref{righteous-set} to
\begin{equation} \label{righteous-set-simple}
 \Rset = \left\{ (p,d) \left| (p,d) \in \Cset,\ \pGG = \C,\ \dGG = \G, \text{ and } \dBB+\dBG+\dGB \geq 2 \right. \right\} .
\end{equation}

This characterization also lets us count the righteous systems, a figure reported in Fig.~\ref{fig:venn_analytic_result}.

\begin{proposition}[Cardinality of $\Rset$] \label{prop:cardinality-R}
The set of righteous systems contains $|\Rset| = \numprint{114688}$ moral systems.
\end{proposition}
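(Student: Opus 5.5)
The plan is to count directly from the characterization~\eqref{righteous-set-simple}. That equation writes $\Rset$ as the set of pairs $(p,d)$ satisfying three conditions:
\begin{itemize}
\item $p$ is conditional with $\pGG=\C$;
\item $\dGG=\G$;
\item at least two of $\dBB,\dBG,\dGB$ equal $\G$.
\end{itemize}
The conditions on $d$ depend on $p$ only through which bits of $d$ they refer to. So we first fix the action rule $p$ and count the admissible assessment rules $d$, then sum over the admissible $p$.

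First step: count the strategies. The strategies with $\pGG=\C$ are those of the form $\wildcard\wildcard\wildcard\C$, of which there are $2^3=8$. Conditionality (Definition~\ref{def:cond}) excludes only $\ALLC$ among these, since $\ALLD$ has $\pGG=\D$. This leaves $7$ action rules.

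Second step: fix one such $p$ and count the rules $d$. By the shorthand~\eqref{dr1r2-shorthand}, $\mathsf{d}_{\rf\rc}=d(\rf,\rc,\prfrc,\prcrf)$ is a single entry of the $16$-bit table $d$, namely the entry in row $(\rf,\rc)$ and column $(\prfrc,\prcrf)$. The four quantities $\dBB,\dBG,\dGB,\dGG$ lie in four different rows, since their reputation pairs $(\rf,\rc)$ differ. They are therefore four distinct and independent bits of $d$, whatever $p$ is.

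The constraints fix these bits as follows. The bit $\dGG$ is forced to $\G$ (one choice). The triple $(\dBB,\dBG,\dGB)$ must contain at least two $\G$'s, which gives $\binom{3}{2}+\binom{3}{3}=4$ of the $8$ patterns. The remaining $12$ bits of $d$ are unconstrained, giving $2^{12}$ choices. Hence each admissible $p$ is paired with $4\cdot 2^{12}=2^{14}=\numprint{16384}$ assessment rules.

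Multiplying gives $|\Rset|=7\cdot 2^{14}=\numprint{114688}$. The only point needing care is the independence claim in the second step: the four constrained bits never coincide, because they sit in distinct reputation rows. Once that is verified, the count is the product above. As a sanity check, one can confirm the total by exhaustive enumeration over the $2^{20}$ systems.
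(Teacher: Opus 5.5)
Your proposal is correct and follows the paper's own argument: from the characterization~\eqref{righteous-set-simple} you get $7$ conditional action rules with $p(G,G)=C$, and for each one the constraints fix or restrict four distinct bits of $d$ (one forced to $G$, and $4$ admissible patterns for the triple), leaving $12$ bits free, which gives $7\cdot 4\cdot 2^{12}=114688$. You also justify why the four constrained bits are distinct, namely that they lie in different reputation rows $(r_1,r_2)$; the paper only asserts this.
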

\begin{proof}
An assessment rule $d$ is a function of $2^4=16$ argument combinations, hence a string of $16$ bits; a condition that leaves $k$ of these bits unconstrained is satisfied by $2^k$ assessment rules. We count the admissible systems by multiplying, over action rules and constrained bits, the number of choices each leaves open.
By Eq.~\eqref{righteous-set-simple} a righteous system is conditional and has $\pGG=\C$. Fixing $\pGG=\C$ leaves $2^3=8$ strategies; one of them, the unconditional cooperator $\ALLC$ ($p=\CCCC$), is excluded because a righteous system is conditional (Definition~\ref{def:righteous}), leaving $7$ admissible action rules. Each moreover satisfies $\dGG=\G$ together with $\dBB+\dBG+\dGB\ge2$, which constrain four distinct bits of $d$: $\dGG$ is fixed to $\G$, while of the eight assignments of $(\dBB,\dBG,\dGB)$ only the four with at least two $\G$'s are admissible. The remaining $12$ bits are free, so
\begin{equation*}
 |\Rset| = 7\cdot 4\cdot 2^{12} = \numprint{114688}.
\end{equation*}
\end{proof}

\subsection{Righteousness and reputational dominance of consistent systems}\label{subsec:consistent-systems}

Every consistent system is self-approving, as the following proposition records.

\begin{proposition}[Consistent systems are self-approving]
\label{prop:cons->SA}
If $(p,d)$ is consistent, then it lies in $\AsetGGGG$ and is, in particular, self-approving. That is, $\Qset \subset \AsetGGGG \subset \Aset$.
\end{proposition}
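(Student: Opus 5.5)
This follows directly from the definitions, so the plan is short. By Definition~\ref{def:consistent}, a consistent system is conditional and satisfies condition~\eqref{si-consistent-cond-G}, namely $\dBB=\dBG=\dGB=\dGG=\G$.

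The first step is to compare this with Lemma~\ref{lem:partition-SA}. There, the subset $\AsetGGGG$ is defined as the set of self-approving systems satisfying condition~\eqref{cond-SA-1111}, which is exactly this assignment of the four bits. Before placing $(p,d)$ in $\AsetGGGG$, I must check that it is self-approving.

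The second step is that check, via Proposition~\ref{prop:self-approving}. Since $\dGG=\G$, condition (i) holds. Since $\dBB+\dBG+\dGB=3\ge 2$, condition (ii) holds. The system is conditional by definition, so Proposition~\ref{prop:self-approving} applies and gives $(p,d)\in\Aset$. Combined with~\eqref{cond-SA-1111}, this gives $(p,d)\in\AsetGGGG$.

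As an optional sanity check, Eq.~\eqref{g-SA-1111} gives $g^*=1-\chi\to1$. This can also be verified directly: with all four $\delta_{\rf\rc}=1-\chi$ at $\epsilon=0$, the dynamics~\eqref{gdot} reduce to $\dot g=1-\chi-g$.

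The inclusion $\AsetGGGG\subset\Aset$ holds by construction of the partition in Lemma~\ref{lem:partition-SA}. The statement writes the inclusions with $\subset$, so I read it as non-strict inclusion. If strictness were intended, it would be easy to show with a witness. For the first inclusion, take any conditional $p$ and a norm with all $\drfrc=\G$ but some $\notdrfrc=\G$. For the second, take a system in $\AsetBGGG$.

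There is no real obstacle here. The only point needing care is that the shorthand $\drfrc$ in Proposition~\ref{prop:self-approving} is the same as in~\eqref{si-consistent-cond-G}, which follows from its definition~\eqref{dr1r2-shorthand}.
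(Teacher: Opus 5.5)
Your proposal is correct and follows the paper's proof essentially step for step: condition~(G) of consistency sets all four resident assessment bits to $G$, the characterization of self-approving systems then gives self-approval, and item~3 of the partition lemma places the system in the all-$G$ subset. Your direct check that $g^*=1-\chi$ and your witnesses for strict inclusion are both correct, but the statement does not require them.
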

\begin{proof}
By condition~\eqref{si-consistent-cond-G} of Definition~\ref{def:consistent}, a consistent system satisfies $\dGG=\G$ and $\dBB=\dBG=\dGB=\G$, so that $\dBB+\dBG+\dGB=3\geq2$.
By Proposition~\ref{prop:self-approving}, it is therefore self-approving.
Since, in addition, all four assessment bits equal $\G$, it lies in the subset $\AsetGGGG$ (item~3 of Lemma~\ref{lem:partition-SA}); hence $\Qset\subset\AsetGGGG\subset\Aset$.
\end{proof}

By Definition~\ref{def:consistent-disc}, a consistent discriminating system $(p,d)$ is one whose action rule $p$ is discriminating and consistent with $d$.

\begin{proposition}[Consistent discriminating systems are righteous]
\label{prop:consistentdisc->righteous}
If $(p,d)$ is consistent discriminating, then it is righteous. That is, $\QDset \subset \Rset$.
\end{proposition}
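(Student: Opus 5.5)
The plan is to combine Proposition~\ref{prop:cons->SA} with Corollary~\ref{cor:righteous}, which reduces the claim to a one-bit check on the action rule.

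First, let $(p,d)\in\QDset=\Qset\cap\Dset$. Since $(p,d)$ is consistent, Proposition~\ref{prop:cons->SA} gives $(p,d)\in\AsetGGGG\subset\Aset$, so it is self-approving. Here we should also confirm that $(p,d)$ is conditional, as the definitions of self-approving and righteous systems require. This follows because a discriminating strategy has $\pGB=\D$ and $\pGG=\C$, so it is neither $\ALLC$ nor $\ALLD$; in any case, consistency is defined only for conditional systems.

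Second, since $(p,d)$ is self-approving, Corollary~\ref{cor:righteous} says it is self-cooperative, and hence righteous, if and only if $\pGG=\C$. This holds by Definition~\ref{def:disc}, because every discriminating strategy has the form $\wildcard\wildcard\D\C$. Therefore $(p,d)\in\Rset$.

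As a cross-check, one can verify membership directly in the simplified characterization~\eqref{righteous-set-simple}. The system is conditional and has $\pGG=\C$. Consistency gives $\dGG=\G$ and $\dBB=\dBG=\dGB=\G$, so $\dBB+\dBG+\dGB=3\ge 2$.

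There is no real obstacle. The only care needed is the conditionality point and keeping the shorthand $\drfrc=d(\rf,\rc,\prfrc,\prcrf)$ straight, so that condition~\eqref{si-consistent-cond-G} is exactly the statement that all four bits $\mathsf{d}_{\rf\rc}$ equal $\G$. The inclusion is strict, for example because $\Rset$ contains non-discriminating strategies such as $\CCCD$-free choices with $\pGG=\C$. Strictness is not needed for the statement, so I would note it only in passing.
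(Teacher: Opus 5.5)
Your proof is correct and is essentially the paper's: consistency gives self-approval through Proposition~\ref{prop:cons->SA}, and the discriminating bit $\pGG=\C$ then gives self-cooperation. You invoke Corollary~\ref{cor:righteous} where the paper invokes Proposition~\ref{prop:self-cooperative}(i) directly, which is the same step; only your closing aside on strictness is garbled, since $\CCCD$ has $\pGG=\D$, although a valid witness such as $\DDCC$ paired with the all-$\G$ norm does exist and strictness is not part of the statement.
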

\begin{proof}
Since $p$ is discriminating, it is conditional and satisfies $\pGG=\C$ (Definitions~\ref{def:cond} and~\ref{def:disc}).
Being consistent, $(p,d)$ is self-approving (Proposition~\ref{prop:cons->SA}); together with $\pGG=\C$, Proposition~\ref{prop:self-cooperative}(i) then makes it self-cooperative.
A system that is both self-approving and self-cooperative is righteous (Definition~\ref{def:righteous}), so $\QDset\subset\Rset$.
\end{proof}

The remainder of this subsection characterizes consistency in reputational terms.
The intuition is that a consistent assessment rule labels as good exactly those who follow the resident prescription, so no alternative action rule should be able to match the resident's reputation under~$d$.
The next two propositions make this claim precise at $\epsilon=0$: together, they show that a system is consistent if and only if its resident action rule attains a strictly higher reputation than every mutant action rule.

\begin{proposition}[Reputational dominance of consistent systems] \label{prop:consistent-rep-dominance}
Let $\epsilon=0$ and $\chi \in (0,1/2)$.
If $(p,d) \in \Qset$, then for every mutant action rule $\tilde{p} \neq p$,
\[
 h^*_{\tilde{p},p,d} < g^*_{p,d} .
\]
\end{proposition}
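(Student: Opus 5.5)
The plan is to compute both equilibria explicitly at $\epsilon=0$ and compare them directly. The comparison reduces to an elementary inequality between convex combinations of $\chi$ and $1-\chi$.

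\emph{Resident side.} At $\epsilon=0$, Eq.~\eqref{deltaij} reduces to $\delta_{\rf\rc}=\chi+(1-2\chi)\drfrc$. Consistency, condition~\eqref{si-consistent-cond-G}, forces $\drfrc=\G$ for all $\rf,\rc$, so every $\delta_{\rf\rc}=1-\chi$. By Proposition~\ref{prop:cons->SA} and item~3 of Lemma~\ref{lem:partition-SA}, $g^*_{p,d}=1-\chi$. Since $\chi\in(0,1/2)$, this lies strictly inside $(1/2,1)$, and I will use both $g^*>0$ and $1-g^*>0$.

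\emph{Mutant transition probabilities.} At $\epsilon=0$, Eq.~\eqref{tildedeltaij} gives $\tilde\delta_{\rf\rc}=\chi+(1-2\chi)\drfrcm$. As noted in Remark~\ref{rem:consistent-equivalence}, conditions~\eqref{si-consistent-cond-G}--\eqref{si-consistent-cond-B} fix $d(\rf,\rc,a_1,\prcrf)$ for both values of $a_1$. Hence $\drfrcm=\G$ if $\prfrcm=\prfrc$ and $\drfrcm=\B$ otherwise. So $\tilde\delta_{\rf\rc}\in\{\chi,1-\chi\}$, with value $1-\chi$ exactly where the mutant agrees with the resident.

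\emph{Mutant equilibrium.} Set
\[
\alpha=g^*\tilde\delta_{\G\G}+(1-g^*)\tilde\delta_{\G\B},\qquad \beta=g^*\tilde\delta_{\B\G}+(1-g^*)\tilde\delta_{\B\B}.
\]
The dynamics~\eqref{hdot} become $\dot h=\alpha h+\beta(1-h)-h$, so Eq.~\eqref{h*} reads $h^*=\beta/(1-\alpha+\beta)$. The denominator is positive because $\alpha\le 1-\chi<1$. Both $\alpha$ and $\beta$ are convex combinations of elements of $\{\chi,1-\chi\}$, so $\chi\le\alpha,\beta\le1-\chi$.

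\emph{Comparison.} The target $h^*<1-\chi$ is equivalent, after clearing the positive denominator, to $\beta<(1-\chi)(1-\alpha+\beta)$, that is,
\[
\chi\beta<(1-\chi)(1-\alpha).
\]
The bounds above give $\chi\beta\le\chi(1-\chi)$ and $(1-\chi)(1-\alpha)\ge(1-\chi)\chi$, so the weak inequality always holds. Equality would require $\alpha=\beta=1-\chi$. Because the weights $g^*$ and $1-g^*$ are strictly positive, this forces all four $\tilde\delta_{\rf\rc}=1-\chi$, hence $\prfrcm=\prfrc$ for every $(\rf,\rc)$, i.e.\ $\tilde p=p$. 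That contradicts $\tilde p\neq p$, so the inequality is strict.

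The only delicate step is this strictness argument: it depends on $g^*$ lying strictly between $0$ and $1$, which is why the result needs $\chi>0$ rather than only the limit $\chi\to0$. Everything else is substitution.
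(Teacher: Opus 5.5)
Your proof is correct and follows the paper's route. You establish $g^*_{p,d}=1-\chi$, show $\tilde\delta_{r_1r_2}\in\{\chi,1-\chi\}$ according to whether $\tilde p$ agrees with $p$ at $(r_1,r_2)$, write $h^*$ as a ratio of two convex combinations (your $\alpha,\beta$ are the paper's $\beta,\alpha$), and reduce the claim to positivity of $(1-\chi)(1-\alpha)-\chi\beta$. The only difference is the final step: the paper expands this quantity exactly as $(1-2\chi)\sum_{r_1,r_2}\rho_{r_1r_2}(1-b_{r_1r_2})$, which quantifies the gap, whereas you bound both sides by $\chi(1-\chi)$ and show that equality would force $\tilde p=p$, which is equally rigorous and a little shorter.
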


\begin{proof}
By Definition~\ref{def:consistent}, $\dBB=\dBG=\dGB=\dGG=\G$.
For $\epsilon=0$, Eq.~\eqref{deltaij} reduces to $\delta_{\rf\rc} = \chi + (1-2\chi)\,\drfrc = 1-\chi$ for all $(\rf,\rc) \in \{\B,\G\}^2$.
Substituting into Eq.~\eqref{g*} yields $g^*_{p,d} = 1-\chi$, in agreement with Lemma~\ref{lem:partition-SA}, item~3.

Fix a mutant $\tilde{p} \neq p$ and, for each $(\rf,\rc) \in \{\B,\G\}^2$, let $b_{\rf\rc}=1$ if $\tilde{p}(\rf,\rc)=p(\rf,\rc)$ and $b_{\rf\rc}=0$ otherwise.
Because $(p,d)$ is consistent, whenever $\tilde{p}(\rf,\rc) \neq p(\rf,\rc)$ we have $\drfrcm = \notdrfrc = \B$ (by condition~\eqref{si-consistent-cond-B}); and whenever $\tilde{p}(\rf,\rc) = p(\rf,\rc)$ we have $\drfrcm = \drfrc = \G$ (by condition~\eqref{si-consistent-cond-G}).
Hence $\drfrcm = \G$ if and only if $b_{\rf\rc}=1$, and, for $\epsilon=0$, Eq.~\eqref{tildedeltaij} reduces to
\begin{equation} \label{tilde-delta-consistent}
 \tilde\delta_{\rf\rc} = \chi + (1-2\chi)\, b_{\rf\rc}.
\end{equation}

The mutant reputation equilibrium $h^*_{\tilde{p},p,d}$, obtained by setting $\dot{h}_{\tilde{p},p,d}=0$ in Eq.~\eqref{hdot}, satisfies the fixed-point equation
\begin{equation} \label{h-fixed-point}
 h^*_{\tilde{p},p,d} = \alpha + h^*_{\tilde{p},p,d}(\beta-\alpha),
\end{equation}
with
\begin{subequations} \label{alpha-beta}
\begin{align}
 \alpha & := g^*_{p,d}\,\tilde\delta_{\B\G} + (1-g^*_{p,d})\,\tilde\delta_{\B\B},\label{alpha-def} \\
 \beta  & := g^*_{p,d}\,\tilde\delta_{\G\G} + (1-g^*_{p,d})\,\tilde\delta_{\G\B}.\label{beta-def}
\end{align}
\end{subequations}
Since $\tilde\delta_{\rf\rc} \in \{\chi,1-\chi\}$ for every $(\rf,\rc)\in\{\B,\G\}^2$, and since $\alpha$ and $\beta$ are convex combinations of these values (the weights $g^*_{p,d}$ and $1-g^*_{p,d}$ being nonnegative and summing to one), both $\alpha$ and $1-\beta$ lie in $[\chi,1-\chi]$, so $1-\beta+\alpha \geq 2\chi > 0$.
Solving Eq.~\eqref{h-fixed-point} for $h^*_{\tilde{p},p,d}$ (legitimate since $1-\beta+\alpha>0$) gives
\[
 h^*_{\tilde{p},p,d} = \frac{\alpha}{1-\beta+\alpha},
\]
in agreement with Eq.~\eqref{h*},
and a direct computation gives
\[
 g^*_{p,d} - h^*_{\tilde{p},p,d} = \frac{(1-\chi)(1-\beta) - \chi\,\alpha}{1-\beta+\alpha} .
\]

To expand the numerator, let $\rho_{\rf\rc}$ denote the equilibrium probability that a focal player has reputation $\rf$ and the co-player has reputation $\rc$; these are the reputation-pair weights $(g^*_{p,d})^2$, $g^*_{p,d}(1-g^*_{p,d})$, and $(1-g^*_{p,d})^2$ that already appear in Eqs.~\eqref{gdot} and~\eqref{w}.
With $g^*_{p,d}=1-\chi$,
\[
 \rho_{\G\G}=(1-\chi)^2,\quad \rho_{\G\B}=\rho_{\B\G}=\chi(1-\chi),\quad \rho_{\B\B}=\chi^2,
\]
which satisfy
\begin{equation} \label{rho-identity}
 \sum_{\rf,\rc\in\{\B,\G\}} \rho_{\rf\rc}=\big[(1-\chi)+\chi\big]^2=1.
\end{equation}
Within $\alpha$ and $\beta$ the $\tilde\delta_{\rf\rc}$ are weighted by $g^*_{p,d}$ and $1-g^*_{p,d}$, and the outer coefficients $1-\chi=g^*_{p,d}$ and $\chi=1-g^*_{p,d}$ match these; hence in $(1-\chi)\beta+\chi\,\alpha$ the coefficient multiplying each $\tilde\delta_{\rf\rc}$ is exactly the product $\rho_{\rf\rc}$.
Collecting the four terms by reputation pair gives
\begin{equation} \label{grouping-identity}
 (1-\chi)\beta+\chi\,\alpha=\sum_{\rf,\rc\in\{\B,\G\}} \rho_{\rf\rc}\,\tilde\delta_{\rf\rc}.
\end{equation}
Substituting $\tilde\delta_{\rf\rc}=\chi+(1-2\chi)b_{\rf\rc}$ from Eq.~\eqref{tilde-delta-consistent}, the numerator becomes
\begin{align} \label{D-expansion}
 (1-\chi)(1-\beta) - \chi\,\alpha
 &= (1-\chi) - \big[(1-\chi)\beta+\chi\,\alpha\big] \nonumber \\
 &= (1-\chi) - \sum_{\rf,\rc\in\{\B,\G\}} \rho_{\rf\rc}\,\big[\chi+(1-2\chi)b_{\rf\rc}\big] \nonumber \\
 &= (1-\chi) - \chi\sum_{\rf,\rc\in\{\B,\G\}}\rho_{\rf\rc} - (1-2\chi)\sum_{\rf,\rc\in\{\B,\G\}}\rho_{\rf\rc}\,b_{\rf\rc} \nonumber \\
 &= (1-\chi) - \chi - (1-2\chi)\sum_{\rf,\rc\in\{\B,\G\}}\rho_{\rf\rc}\,b_{\rf\rc} \nonumber \\
 &= (1-2\chi)\bigg(\sum_{\rf,\rc\in\{\B,\G\}}\rho_{\rf\rc} - \sum_{\rf,\rc\in\{\B,\G\}}\rho_{\rf\rc}\,b_{\rf\rc}\bigg) \nonumber \\
 &= (1-2\chi)\sum_{\rf,\rc\in\{\B,\G\}}\rho_{\rf\rc}\,(1-b_{\rf\rc}) \nonumber \\
 &= (1-2\chi)\Big\{ (1-\chi)^2 (1-b_{\G\G}) + \chi(1-\chi)(1-b_{\G\B}) \nonumber \\
 & \quad + \chi(1-\chi)(1-b_{\B\G}) + \chi^2 (1-b_{\B\B}) \Big\} ,
\end{align}
where the second equality uses Eq.~\eqref{grouping-identity}; the fourth and fifth equalities both invoke the identity~\eqref{rho-identity} --- the fourth to replace $\chi\sum_{\rf,\rc}\rho_{\rf\rc}$ by $\chi$, the fifth to reinstate $1=\sum_{\rf,\rc}\rho_{\rf\rc}$ (after combining $1-\chi-\chi=1-2\chi$).
Each $(1-b_{\rf\rc}) \in \{0,1\}$, the weights $\rho_{\rf\rc}$ are strictly positive for $\chi \in (0,1/2)$, and at least one $(1-b_{\rf\rc})=1$ since $\tilde{p}\neq p$.
Hence the right-hand side of Eq.~\eqref{D-expansion} is strictly positive, and $g^*_{p,d} - h^*_{\tilde{p},p,d} > 0$.
\end{proof}

\begin{proposition}[Reputational non-dominance of inconsistent systems] \label{prop:inconsistent-rep-nondominance}
Let $\epsilon=0$ and $\chi \in (0,1/2)$.
If $(p,d) \notin \Qset$, then there exists $\tilde{p} \neq p$ such that
\[
 h^*_{\tilde{p},p,d} \geq g^*_{p,d} .
\]
\end{proposition}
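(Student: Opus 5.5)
The plan is to construct an explicit ``reputation-maximizing'' mutant and compare the two reputation equilibria through the same fixed-point representation used in the proof of Proposition~\ref{prop:consistent-rep-dominance}. Throughout we set $\epsilon=0$, so that Eqs.~\eqref{deltaij} and~\eqref{tildedeltaij} reduce to $\delta_{\rf\rc}=\chi+(1-2\chi)\drfrc$ and $\tilde\delta_{\rf\rc}=\chi+(1-2\chi)\drfrcm$. At each reputation pair $(\rf,\rc)$, the bit $\drfrcm$ equals $\drfrc$ if $\tilde p(\rf,\rc)=p(\rf,\rc)$, and equals $\notdrfrc$ otherwise.

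\textbf{Step 1 (choice of mutant).} Define $\tilde p$ pointwise by $\tilde p(\rf,\rc)=\neg p(\rf,\rc)$ whenever $\notdrfrc\geq\drfrc$, and $\tilde p(\rf,\rc)=p(\rf,\rc)$ otherwise. Ties are deliberately broken in favour of deviating. With this choice, $\drfrcm=\max\{\drfrc,\notdrfrc\}\geq\drfrc$ for every pair, and hence $\tilde\delta_{\rf\rc}\geq\delta_{\rf\rc}$ for every pair.

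It remains to check that $\tilde p\neq p$. This fails only if $\notdrfrc<\drfrc$ at all four pairs, that is, only if $\drfrc=\G$ and $\notdrfrc=\B$ for all $(\rf,\rc)$. That is exactly condition~\eqref{si-consistent-cond}. If $p$ is conditional, this contradicts $(p,d)\notin\Qset$. If $p$ is unconditional, the same construction still applies, because nothing below uses conditionality. This tie-breaking step is the only delicate point: choosing the maximizing action with ties resolved toward $p$ could return $\tilde p=p$.

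\textbf{Step 2 (resident equilibrium as a fixed point).} Write $g=g^*_{p,d}$ and set $\beta:=g\,\delta_{\G\G}+(1-g)\,\delta_{\G\B}$ and $\alpha:=g\,\delta_{\B\G}+(1-g)\,\delta_{\B\B}$. Splitting the cross term $g(1-g)(\delta_{\B\G}+\delta_{\G\B})$ of Eq.~\eqref{gdot} between the two pieces shows that the stationarity condition $\dot g=0$ is equivalent to $g=g\beta+(1-g)\alpha$. Each $\delta_{\rf\rc}$ lies in $[\chi,1-\chi]$, so $1-\beta+\alpha\geq 2\chi>0$, and therefore $g=\alpha/(1-\beta+\alpha)$.

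Likewise, Eq.~\eqref{hdot} gives $h^*_{\tilde p,p,d}=\tilde\alpha/(1-\tilde\beta+\tilde\alpha)$. Here $\tilde\alpha$ and $\tilde\beta$ are defined exactly as $\alpha$ and $\beta$, with $\delta$ replaced by $\tilde\delta$ but with the same resident weights $g$ and $1-g$. This is the form already derived in Eqs.~\eqref{h-fixed-point}--\eqref{alpha-beta}.

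\textbf{Step 3 (monotonicity).} Consider the function $F(x,y)=x/(1-y+x)$ on the region $x\geq 0$, $y\leq 1$, $1-y+x>0$. It is nondecreasing in both arguments, since $\partial_x F=(1-y)/(1-y+x)^2\geq0$ and $\partial_y F=x/(1-y+x)^2\geq0$.

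Step 1 gives $\tilde\delta\geq\delta$ componentwise, and the weights $g$ and $1-g$ are nonnegative, so $\tilde\alpha\geq\alpha$ and $\tilde\beta\geq\beta$. The segment between $(\alpha,\beta)$ and $(\tilde\alpha,\tilde\beta)$ stays in the domain, because all $\delta$ and $\tilde\delta$ values lie in $[\chi,1-\chi]$. Integrating along it yields $h^*_{\tilde p,p,d}=F(\tilde\alpha,\tilde\beta)\geq F(\alpha,\beta)=g^*_{p,d}$, which is the claim.
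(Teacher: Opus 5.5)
Your argument is correct and takes a different route from the paper. The paper splits into two cases. If some pair has $\mathsf{d}_{r_1r_2}=\bar{\mathsf{d}}_{r_1r_2}$, it flips $p$ at that single pair, so $\tilde\delta_{r_1r_2}=\delta_{r_1r_2}$ at every pair; then $h^*_{\tilde p,p,d}=g^*_{p,d}$ because $g^*_{p,d}$ solves the mutant stationarity equation and that equilibrium is unique. Otherwise every pair has $\mathsf{d}_{r_1r_2}\neq\bar{\mathsf{d}}_{r_1r_2}$. Flipping $p$ exactly on the nonempty set where $(\mathsf{d}_{r_1r_2},\bar{\mathsf{d}}_{r_1r_2})=(B,G)$ makes every mutant bit $G$, so $h^*_{\tilde p,p,d}=1-\chi$, while $g^*_{p,d}\le 1-\chi$ as a convex combination of values in $\{\chi,1-\chi\}$.

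You instead use a single pointwise reputation-maximizing mutant; breaking ties toward deviation is exactly what lets inconsistency force $\tilde p\neq p$. You then prove a comparison lemma: at fixed resident weights, $h^*=F(\tilde\alpha,\tilde\beta)$ is nondecreasing in every $\tilde\delta_{r_1r_2}$, and $g^*_{p,d}$ is the same map evaluated at $\tilde\delta=\delta$, i.e.\ the resident viewed as its own mutant. This removes the case split and gives a reusable principle: any mutant whose assessment bits pointwise dominate the resident's is at least as well regarded. The cost is a small monotonicity computation. The paper's route is calculus-free and gives sharper information in each case: exact equality in the first, the explicit value $1-\chi$ in the second.

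One sentence of yours should be dropped. Suppose $p$ is $\mathrm{ALLC}$ or $\mathrm{ALLD}$ and $d$ satisfies \eqref{si-consistent-cond}. Then your rule never deviates and returns $\tilde p=p$. Moreover no mutant can work: the proof of Proposition~\ref{prop:consistent-rep-dominance} never uses conditionality and gives $h^*_{\tilde p,p,d}<g^*_{p,d}$ for every $\tilde p\neq p$. So the statement is only true for conditional systems, $(p,d)\in\mathcal{C}\setminus\mathcal{Q}$. The paper's proof reads it this way tacitly, when it asserts that non-membership in $\mathcal{Q}$ forces some pair to differ from $(G,B)$. Simply restrict to conditional $p$ at the outset.
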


\begin{proof}
Because $(p,d) \notin \Qset$, Definition~\ref{def:consistent} guarantees that at least one pair $(\drfrc,\notdrfrc)$ differs from $(\G,\B)$.
Throughout, $(\rf,\rc)$ ranges over the four reputation pairs in $\{\B,\G\}^2$.
We distinguish two mutually exclusive and exhaustive cases.

\begin{enumerate}
 \item Suppose some pair $(\rf^\ast,\rc^\ast)$ has $\mathsf{d}_{\rf^\ast\rc^\ast} = \bar{\mathsf{d}}_{\rf^\ast\rc^\ast}$.
 Let $\tilde{p}$ be the single-bit flip of $p$ at $(\rf^\ast,\rc^\ast)$: $\tilde{p}(\rf^\ast,\rc^\ast) = \neg p(\rf^\ast,\rc^\ast)$ and $\tilde{p} = p$ elsewhere, so that $\tilde{p} \neq p$.
 We claim that $\drfrcm = \drfrc$ at every pair.
 At $(\rf^\ast,\rc^\ast)$, the flip gives $\tilde{\mathsf{d}}_{\rf^\ast\rc^\ast} = \bar{\mathsf{d}}_{\rf^\ast\rc^\ast}$ (by the shorthands~\eqref{tildedr1r2-shorthand} and~\eqref{notdr1r2-shorthand}), which equals $\mathsf{d}_{\rf^\ast\rc^\ast}$ by the case hypothesis; at every other pair, $\tilde{p} = p$, so $\drfrcm = \drfrc$ trivially.
 At $\epsilon=0$, Eqs.~\eqref{deltaij} and~\eqref{tildedeltaij} reduce to $\delta_{\rf\rc} = \chi + (1-2\chi)\,\drfrc$ and $\tilde\delta_{\rf\rc} = \chi + (1-2\chi)\,\drfrcm$, so the equalities $\drfrcm = \drfrc$ yield $\tilde\delta_{\rf\rc} = \delta_{\rf\rc}$ at every pair.
 Substituting these equalities into the mutant reputation dynamics~\eqref{hdot} and evaluating its right-hand side at $h_{\tilde{p},p,d} = g^*_{p,d}$ gives
 \[
  \delta_{\G\G}\,(g^*_{p,d})^2 + (\delta_{\B\G}+\delta_{\G\B})\,g^*_{p,d}(1-g^*_{p,d}) + \delta_{\B\B}\,(1-g^*_{p,d})^2 - g^*_{p,d}.
 \]
 This is exactly the right-hand side of the resident dynamics~\eqref{gdot} at $g_{p,d}=g^*_{p,d}$, which vanishes by the definition of $g^*_{p,d}$.
 Thus $h_{\tilde{p},p,d} = g^*_{p,d}$ is an equilibrium of~\eqref{hdot}, and by the uniqueness of the reputation equilibria (Eqs.~\eqref{g*} and~\eqref{h*}), $h^*_{\tilde{p},p,d} = g^*_{p,d}$.

 \item Suppose instead that $\drfrc \neq \notdrfrc$ at every pair, so that each $(\drfrc,\notdrfrc)$ is either $(\G,\B)$ or $(\B,\G)$.
 Were all four pairs $(\G,\B)$, Definition~\ref{def:consistent} would make $(p,d)$ consistent, contradicting the hypothesis; hence the set
 \[
  S := \{(\rf,\rc) : (\drfrc,\notdrfrc) = (\B,\G) \}
 \]
 is non-empty.
 Let $\tilde{p}$ flip $p$ on $S$ and agree with $p$ elsewhere, so that $\tilde{p} \neq p$.
 Then $\drfrcm = \G$ at every pair: for $(\rf,\rc) \in S$, $\drfrcm = \notdrfrc = \G$; for $(\rf,\rc) \notin S$, $\drfrcm = \drfrc = \G$, since $(\drfrc,\notdrfrc) = (\G,\B)$.
 At $\epsilon=0$, Eq.~\eqref{tildedeltaij} therefore reduces to $\tilde\delta_{\rf\rc} = \chi + (1-2\chi)\,\drfrcm = 1-\chi$ at every pair.
 Substituting into Eq.~\eqref{h*}, we find that the coefficients of $g^*_{p,d}$ in the numerator, $\tilde\delta_{\B\G}-\tilde\delta_{\B\B}$, and in the denominator, $\tilde\delta_{\B\G}+\tilde\delta_{\G\B}-\tilde\delta_{\B\B}-\tilde\delta_{\G\G}$, both vanish; the dependence on $g^*_{p,d}$ drops out, leaving $h^*_{\tilde{p},p,d} = 1-\chi$.
 It remains to compare this with $g^*_{p,d}$.
 Setting $\dot{g}_{p,d}=0$ in~\eqref{gdot} expresses $g^*_{p,d}$ as a convex combination of the $\delta_{\rf\rc}$.
 At $\epsilon=0$, each $\delta_{\rf\rc} = \chi + (1-2\chi)\,\drfrc$ equals $\chi$ (if $\drfrc = \B$) or $1-\chi$ (if $\drfrc = \G$), and $\chi < 1-\chi$ for $\chi \in (0,1/2)$, so every $\delta_{\rf\rc} \leq 1-\chi$; hence $g^*_{p,d} \leq 1-\chi$.
 Therefore $h^*_{\tilde{p},p,d} = 1-\chi \geq g^*_{p,d}$.
\end{enumerate}
\end{proof}

Combining Propositions~\ref{prop:consistent-rep-dominance} and~\ref{prop:inconsistent-rep-nondominance} yields the reputational characterization of consistency anticipated at the start of this discussion.

\begin{corollary}[Reputational characterization of consistency] \label{cor:rep-consistency}
Let $\epsilon=0$ and $\chi \in (0,1/2)$.
A system $(p,d)$ is consistent if and only if $h^*_{\tilde{p},p,d} < g^*_{p,d}$ for every mutant action rule $\tilde{p} \neq p$.
\end{corollary}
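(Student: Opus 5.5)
The corollary is the conjunction of the two preceding propositions, so the plan is simply to assemble them, one for each direction. Throughout, fix $\epsilon=0$ and $\chi\in(0,1/2)$ as in the statement.

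For the forward direction, assume $(p,d)\in\Qset$. Proposition~\ref{prop:consistent-rep-dominance} then applies verbatim. It yields $h^*_{\tilde{p},p,d}<g^*_{p,d}$ for every $\tilde{p}\neq p$, which is exactly the right-hand side of the equivalence.

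For the reverse direction, argue by contraposition. Suppose $(p,d)\notin\Qset$. Proposition~\ref{prop:inconsistent-rep-nondominance} supplies some $\tilde{p}\neq p$ with $h^*_{\tilde{p},p,d}\geq g^*_{p,d}$. The strict inequality therefore fails for at least one mutant, so the reputational condition does not hold. Equivalently, if the strict inequality holds for all mutants, then $(p,d)$ must be consistent.

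There is one bookkeeping point to check, and it is the only place a genuine obstacle could arise. Definition~\ref{def:consistent} defines consistency only for conditional systems, whereas the corollary is stated for an arbitrary system $(p,d)$. I would make sure that ``$(p,d)\notin\Qset$'' in Proposition~\ref{prop:inconsistent-rep-nondominance} covers the unconditional case as well. I expect it does: that proof uses only the failure of the bit conditions~\eqref{si-consistent-cond}. If an unconditional $p$ happened to satisfy those bit conditions, I would read the corollary as restricted to conditional systems, matching the definition. Apart from this, both directions are immediate, and no new computation is required.
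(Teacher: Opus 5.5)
Your proposal is correct and is exactly the paper's argument: Proposition~\ref{prop:consistent-rep-dominance} gives the forward direction, and the contrapositive of Proposition~\ref{prop:inconsistent-rep-nondominance} gives the reverse. Your bookkeeping worry is real and the paper passes over it, but your expectation that the unconditional case is covered is wrong: the unconditional defector with $d(r_1,r_2,D,D)=G$ and $d(r_1,r_2,C,D)=B$ for all $r_1,r_2$ satisfies the bit conditions, so the computation of Proposition~\ref{prop:consistent-rep-dominance} still gives $h^*_{\tilde{p},p,d}<g^*_{p,d}$ for all $\tilde{p}\neq p$ even though $(p,d)\notin\mathcal{Q}$; your fallback (restricting to conditional systems, or taking consistency to mean just the bit conditions, as in the Methods) is therefore genuinely needed.
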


\begin{proof}
Proposition~\ref{prop:consistent-rep-dominance} gives the forward direction.
The contrapositive of the reverse direction is Proposition~\ref{prop:inconsistent-rep-nondominance}: if $(p,d) \notin \Qset$, then some $\tilde{p} \neq p$ satisfies $h^*_{\tilde{p},p,d} \geq g^*_{p,d}$.
\end{proof}

\subsection{Righteousness of the leading eight}\label{subsec:leading-eight-righteous}

The leading eight~\eqref{leading-eight} have action rule $p=\wildcard\C\D\C$.
Together with the shorthand~\eqref{dr1r2-shorthand} and the leading-eight assessment rule~\eqref{leading-eight-assessment}, this implies
\begin{equation} \label{dBG=dGB=dGG=G}
\dBG=\dGB=\dGG=\G .
\end{equation}
These values already place the leading eight among the righteous systems.

\begin{proposition}[Leading-eight systems are righteous]
\label{prop:leading-eight-righteous}
Every leading-eight moral system is righteous; that is, $\Lset \subset \Rset$.
Moreover, those in groups~I and~II belong to $\AsetGGGG$, while those in group~III belong to $\AsetBGGG$.
\end{proposition}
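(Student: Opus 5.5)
The plan is to apply the simplified characterization~\eqref{righteous-set-simple} of $\Rset$ and then sort the leading eight by the value of $\dBB$. Membership in $\Rset$ needs four facts: the system is conditional, $\pGG=\C$, $\dGG=\G$, and $\dBB+\dBG+\dGB\geq 2$.

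The first two facts come directly from the action rules. Every leading-eight system uses $\DISC$ or $\CDISC$. Both are discriminating, hence conditional, and both have $\pGG=\C$ (Definition~\ref{def:disc}). Equation~\eqref{dBG=dGB=dGG=G} already gives $\dGG=\G$ and $\dBG=\dGB=\G$, so $\dBB+\dBG+\dGB\geq 2$ holds whatever $\dBB$ is. Still, I would briefly verify~\eqref{dBG=dGB=dGG=G} from~\eqref{leading-eight-assessment}, since it carries the argument. With $p=\wildcard\C\D\C$:
\begin{itemize}
\item $\dBG=d(\B,\G,\pBG,\pGB)=\mathsf{d}_{\B\G\C\D}=\G$ by~\eqref{dBGC=dGGC=G};
\item $\dGB=\mathsf{d}_{\G\B\D\ast}=\G$ by~\eqref{dGBD=G};
\item $\dGG=\mathsf{d}_{\G\G\C\C}=\G$ by~\eqref{dBGC=dGGC=G}.
\end{itemize}
These entries do not depend on the co-player's action, because the norms are of type $\af\rf\rc$. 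This establishes $\Lset\subset\Rset$.

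For the finer classification, I compute $\dBB=d(\B,\B,\pBB,\pBB)$ group by group.
\begin{itemize}
\item \textbf{Group~I} ($\CDISC$, $\pBB=\C$): $\dBB=\dBBC=\G$. Together with~\eqref{dBG=dGB=dGG=G}, all four bits are $\G$, so the system lies in $\AsetGGGG$ by~\eqref{cond-SA-1111}.
\item \textbf{Group~II} ($\DISC$, $\pBB=\D$): $\dBB=\dBBD=\G$, so again the system lies in $\AsetGGGG$.
\item \textbf{Group~III} ($\DISC$): $\dBB=\dBBD=\B$ while the other three bits are $\G$, which is exactly~\eqref{cond-SA-0111}. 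So the system lies in $\AsetBGGG$.
\end{itemize}

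No real obstacle is expected. The only point needing care is matching each shorthand $\drfrc$ to the correct four-argument entry of $d$, which uses the co-player's prescribed action $\prcrf$.
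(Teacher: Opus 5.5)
Your proof is correct and follows essentially the same route as the paper. You read off $\dGG=\G$ and $\dBG=\dGB=\G$ from Eq.~\eqref{dBG=dGB=dGG=G}, combine them with $\pGG=\C$ through the characterization~\eqref{righteous-set-simple} (which simply packages the paper's appeal to Proposition~\ref{prop:self-approving} and Corollary~\ref{cor:righteous}), and then place each group by the value of $\dBB$. Your explicit verification of Eq.~\eqref{dBG=dGB=dGG=G} from the leading-eight assessment conditions is a small addition that the paper leaves implicit.
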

\begin{proof}
By Eq.~\eqref{dBG=dGB=dGG=G}, $\dGG=\G$ and at least two of $\{\dBB,\dBG,\dGB\}$ equal $\G$, so the leading eight are self-approving (Proposition~\ref{prop:self-approving}).
Since $\pGG=\C$, they are also self-cooperative (Corollary~\ref{cor:righteous}), and hence righteous (Definition~\ref{def:righteous}); thus $\Lset\subset\Rset$.
For the finer placement, Eqs.~\eqref{LI-set}--\eqref{LIII-set} give $\dBB=\G$ for groups~I and~II and $\dBB=\B$ for group~III.
Combined with Eq.~\eqref{dBG=dGB=dGG=G}, the former therefore have all four assessment bits equal to $\G$ and lie in $\AsetGGGG$, while the latter lie in $\AsetBGGG$.
\end{proof}

\subsection{Universal stability within the set of righteous systems}
\label{subsec:which-righteous-are-stable}

Many moral systems of interest are righteous, notably the consistent discriminating systems
(Proposition~\ref{prop:consistentdisc->righteous}) and the leading-eight systems
(Proposition~\ref{prop:leading-eight-righteous}). It is therefore natural to ask which righteous
systems are universally stable (Definition~\ref{def:universally-stable}).
We show that a righteous system is universally stable if and only if it is a contrite consistent
discriminating system; that is, the universally stable systems in the righteous set $\Rset$
(Eq.~\eqref{righteous-set-simple}) are exactly $\QCDisc$ (Eq.~\eqref{QCDisc-set}).

We establish necessity by eliminating systems one class of invader at a time, producing a strictly
decreasing chain $\Rset \supset \Rset' \supset \Rset'' \supset \Rset''' = \QCDisc$: we discard in turn
the systems invaded by $\ALLD$ (leaving $\Rset'$; Section~\ref{subsubsec:invasion-ALLD}), by $\ALLC$
(leaving $\Rset''$; Section~\ref{subsubsec:invasion-ALLC}), and by some discriminating strategy
(leaving $\Rset''' = \QCDisc$; Section~\ref{subsubsec:invasion-disc}). Sufficiency then
follows by verifying that every system in $\QCDisc$ resists invasion by all alternative strategies
(Section~\ref{subsubsec:invasion-other-mutants}).
Fig.~\ref{fig:venn_analytic_result} maps this chain and locates the consistent
discriminating systems and the leading eight within it.

\begin{figure}[tp]
    \centering
    \includegraphics[width=\textwidth]{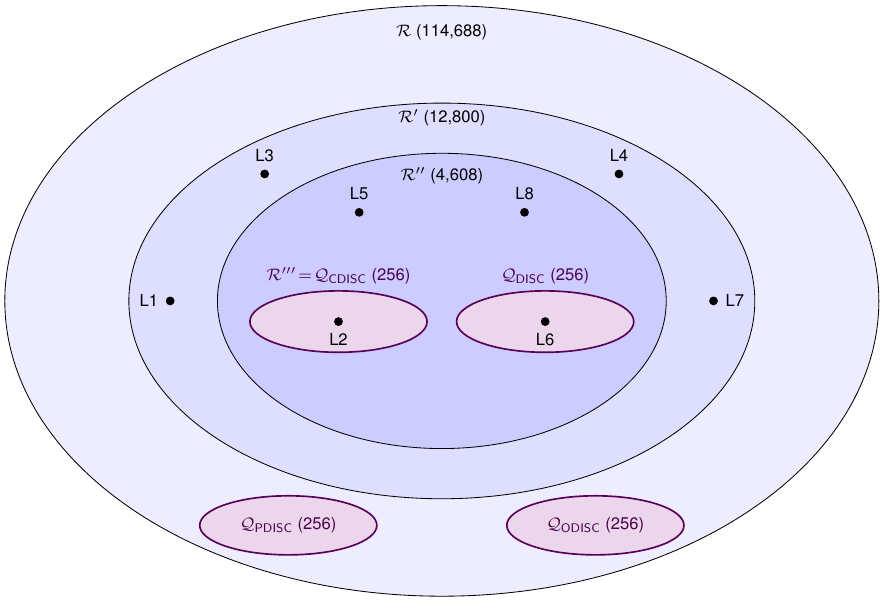}
    \caption{The nested righteous sets produced by successively removing the systems invaded by
    $\ALLD$, by $\ALLC$, and by a discriminating strategy: $\Rset \supset \Rset' \supset \Rset''
    \supset \Rset''' = \QCDisc$. This leaves $\QCDisc$ as the only candidate for universal stability; a final check that
    it resists every remaining invader shows that the contrite consistent discriminating systems
    $\QCDisc$ are the only universally stable righteous systems
    (Section~\ref{subsec:which-righteous-are-stable}).
    The diagram is schematic (not to scale), but sets of equal cardinality are drawn at equal size,
    and the parenthetical figures give the number of moral systems in each set. The four consistent
    discriminating families appear as equal purple ellipses; the leading eight, whose assessment rules
    are third-order, are single systems shown as dots, with $\mathrm{L2}$ ($\CDISC$, Consistent
    Standing) in $\QCDisc$, and $\mathrm{L6}$ ($\DISC$, Stern Judging) in $\QDisc$.}
    \label{fig:venn_analytic_result}
\end{figure}

\subsubsection{Preliminaries}
\label{subsubsec:preliminaries-stability}

\paragraph{Reputation equilibria.}

Given a resident system $(p,d) \in \Rset$ and a mutant strategy $\tilde{p} \neq p$, we write
\begin{equation} \label{gchi-def}
    g(\chi) := (g^*_{p,d})_{\epsilon=0}
\end{equation}
and
\begin{equation} \label{hchi-def}
    h(\chi) := \left(h^*_{\tilde{p},p,d}\right)_{\epsilon=0}
\end{equation}
for the resident and mutant reputation equilibria at $\epsilon=0$, as functions of $\chi$.

Since a righteous resident $(p,d)\in\Rset$ is self-approving (Definition~\ref{def:righteous}), it belongs to exactly one of the three classes $\AsetGXXG$, $\AsetBGGG$, and $\AsetGGGG$ that partition the self-approving systems (Lemma~\ref{lem:partition-SA}); accordingly, $g(\chi)$ is given (Eqs.~\eqref{g-SA-1**1}--\eqref{g-SA-1111}) by
\begin{equation} \label{gchi-explicit}
 g(\chi) =
    \begin{cases}
      \frac{1-\chi-\sqrt{\chi-\chi^2}}{1-2\chi} & \text{if } (p,d) \in \AsetGXXG \\
      \frac{1-4\chi+\sqrt{1-4\chi+8\chi^2}}{2-4\chi} & \text{if } (p,d) \in \AsetBGGG \\
      1-\chi & \text{if } (p,d) \in \AsetGGGG
   \end{cases} .
\end{equation}

Substituting $\epsilon=0$ into the mutant equilibrium~\eqref{h*} --- and thus into the assessment probabilities $\tilde{\delta}_{\rf\rc}$~\eqref{tildedeltaij} on which it depends --- and simplifying gives
\begin{equation*}
    h(\chi) = \frac{\chi + (1-2 \chi) \dBBm + (1-2 \chi)(\dBGm - \dBBm) g(\chi)}{1+(1-2\chi)(\dBBm-\dGBm) + (1-2\chi)(\dBGm+\dGBm-\dBBm-\dGGm)g(\chi)},
\end{equation*}
which depends on the triple $(p,d,\tilde{p})$ only through $g(\chi)$~\eqref{gchi-explicit} and the four mutant assessment bits $\dBBm$, $\dBGm$, $\dGBm$, and $\dGGm$~\eqref{tildedr1r2-shorthand}.

\paragraph{Expected payoffs.}

We similarly write
\begin{equation} \label{wchi-def}
 w(\chi) := \left(w_{p,d}\right)_{\epsilon=0}
\end{equation}
and
\begin{equation} \label{tildewchi-def}
  \tilde{w}(\chi) := \left(\tilde{w}_{\tilde{p},p,d}\right)_{\epsilon=0}
\end{equation}
for the expected payoffs to a resident and a mutant, evaluated at the reputation equilibrium and at $\epsilon=0$, as functions of $\chi$.
Substituting $\epsilon=0$ into Eq.~\eqref{w}, we obtain
\begin{align}
 w(\chi) & = (1-g(\chi))^2 \wBB \nonumber \\
 & + g(\chi)(1-g(\chi))\left(\wBG + \wGB \right) \nonumber \\
 & + g(\chi)^2 \wGG, \label{wchi-explicit}
\end{align}
where, by Eq.~\eqref{omegaij} at $\epsilon=0$,
\begin{align*}
 \left(\omega_{\rf\rc}\right)_{\epsilon=0} & = v(\mathsf{p}_{\rf \rc},\mathsf{p}_{\rc\rf}).
\end{align*}
Likewise, substituting $\epsilon=0$ into Eq.~\eqref{tildew}, we obtain
\begin{align}
 \tilde{w}(\chi) & = (1-h(\chi))(1-g(\chi)) \wBBm + (1-h(\chi))g(\chi) \wBGm \nonumber \\
 & + h(\chi)(1-g(\chi)) \wGBm + h(\chi) g(\chi) \wGGm, \label{tildewchi-explicit}
\end{align}
where, by Eq.~\eqref{tildeomega} at $\epsilon=0$,
\begin{align*}
 \left(\tilde{\omega}_{\rf\rc}\right)_{\epsilon=0} & = v(\tilde{\mathsf{p}}_{\rf\rc},\mathsf{p}_{\rc\rf}) .
\end{align*}
Thus, $w(\chi)$ is determined by $g(\chi)$ and the resident payoffs $\wBB$, $\wBG$, $\wGB$, $\wGG$, and $\tilde{w}(\chi)$ additionally by $h(\chi)$ and the mutant payoffs $\wBBm$, $\wBGm$, $\wGBm$, $\wGGm$; each of these payoffs equals one of $R$, $S$, $T$, $P$, according to the actions the focal player and co-player take.

\paragraph{Stability criterion.}

We assess whether $p$ resists invasion by $\tilde{p}$ under $d$
by expanding $w(\chi)$ and $\tilde{w}(\chi)$ as $\chi \to 0$ and checking whether
the leading term of
\begin{equation} \label{Deltaw-def}
 \Delta w(\chi) := w(\chi) - \tilde{w}(\chi)
\end{equation}
is positive, so that condition~\eqref{evol-stab-cond} holds at $\epsilon=0$ for sufficiently small $\chi$.
Universal stability of $p$ under $d$ (Definition~\ref{def:universally-stable}) then requires this to hold for all $\tilde{p} \neq p$ across all cooperative dilemmas we study.
Since $g(\chi)$ takes a different form on each of $\AsetGXXG$, $\AsetBGGG$, and $\AsetGGGG$ (Eq.~\eqref{gchi-explicit}), we treat these classes separately.

\paragraph{Resident system.}

By Eq.~\eqref{righteous-set-simple}, a resident system $(p,d) \in \Rset$ satisfies
\begin{equation} \label{pGG=C}
 \pGG=\C
\end{equation}
and
\begin{subequations}
 \begin{align}
  \dGG &= \G, \label{dGG=G} \\
  \dBB+\dBG+\dGB &\geq 2 .
 \end{align}
\end{subequations}
Recalling the shorthand~\eqref{dr1r2-shorthand}, $\dGG = d(\G,\G,\pGG,\pGG)$, so Eqs.~\eqref{pGG=C} and~\eqref{dGG=G} together imply $\dGGCC = \G$.

\subsubsection{Invasion attempts by \texorpdfstring{$\ALLD$}{ALLD}}
\label{subsubsec:invasion-ALLD}

For the mutant $\tilde{p}=\ALLD$,
\begin{equation} \label{pm-ALLD}
  \pBBm = \pBGm = \pGBm = \pGGm = \D.
\end{equation}
We seek the subset $\Rset' \subseteq \Rset$ of righteous systems that resist such invasion across cooperative dilemmas.
This subset is characterized by the following two propositions, according to whether $\dGB=\G$ or $\dGB=\B$; both require $\pGB=\D$ and differ only in condition~(ii). We prove them below by a single argument.

\begin{proposition}[Stability against $\ALLD$ for righteous systems with $\dGB=\G$]
\label{prop:stability-against-ALLD-dGB=G}
Let $\epsilon=0$ and $\chi \to 0$.
A righteous system $(p,d)$ with $\dGB=\G$ resists invasion attempts by $\ALLD$ across cooperative dilemmas if and only if
\begin{enumerate}
 \item[(i)] $\pGB = \D$, and
 \item[(ii)] $\dBGDD = \dGGDC = \B$.
\end{enumerate}
\end{proposition}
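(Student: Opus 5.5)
The plan is to compute the limits $w(0):=\lim_{\chi\to0}w(\chi)$ and $\tilde w(0):=\lim_{\chi\to0}\tilde w(\chi)$ for $\tilde p=\ALLD$ directly from Eqs.~\eqref{wchi-explicit} and~\eqref{tildewchi-explicit}. When these limits differ, they decide stability for small $\chi$, and only in degenerate cases will I expand in $\chi$. Since $(p,d)$ is righteous, $g(\chi)\to1$ and $\pGG=\C$, so $w(0)=R$. For $\ALLD$ the mutant assessment bits are $\dGGm=\dGGDC$, $\dBGm=d(\B,\G,\D,\pGB)$, $\dGBm=d(\G,\B,\D,\pBG)$ and $\dBBm=d(\B,\B,\D,\pBB)$. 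Because $g\to1$, the terms weighted by $1-g$ vanish, and
\[
\lim_{\chi\to0}\tilde w(\chi)=h_0\,T+(1-h_0)\,v(\D,\pGB),\qquad h_0:=\lim_{\chi\to0}h(\chi).
\]

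\emph{Necessity of (i).} If $\pGB=\C$, a resident in good standing cooperates with the mutant whatever the mutant's reputation, so $\tilde w(0)=T$. Since $T>R$ in every prisoner's dilemma and snowdrift game, $\ALLD$ invades there. From now on I assume $\pGB=\D$; then $\dBGm=\dBGDD$ and $\tilde w(0)=h_0T+(1-h_0)P$.

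\emph{Sufficiency.} If $\dBGDD=\dGGDC=\B$, I will read off from the expression for $h(\chi)$ that $h_0=0$. Concretely, the numerator tends to $\dBBm+(\dBGm-\dBBm)=\dBGm=0$ and the denominator tends to $1+\dBGm-\dGGm=1$. Hence $\tilde w(0)=P<R$ in every cooperative dilemma, by~\eqref{R>P}, and the strict gap makes lower-order terms in $\chi$ irrelevant.

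\emph{Necessity of (ii).} I will show that in each remaining case $h_0>0$. For any $h_0\in(0,1]$ some normalized prisoner's dilemma is invaded: take $T=1$, $S=0$ and $P$ close to $R$, so that $h_0+(1-h_0)P>R$. The cases are as follows.
\begin{itemize}
\item If $\dGGm=\G$ and $\dBGm=\G$, then $h_0=1$ and $\tilde w(0)=T$.
\item If $\dGGm=\B$ and $\dBGm=\G$, then $h_0=1/2$.
\item If $\dGGm=\G$ and $\dBGm=\B$, the formula is of the form $0/0$; this is the degenerate case.
\end{itemize}
Here I will use the hypothesis $\dGB=\G$. By the shorthand~\eqref{dr1r2-shorthand}, $\dGB=d(\G,\B,\pGB,\pBG)=d(\G,\B,\D,\pBG)=\dGBm$, so $\dGBm=\G$. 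A good mutant therefore loses its standing only through assignment errors, at rate of order $\chi$: the $(1-g)$ terms appear with coefficient $1-\dGBm=0$. A bad mutant regains good standing at rate at least of order $\chi$. Expanding numerator and denominator of $h(\chi)$ to first order, using $1-g(\chi)=O(\chi)$ on $\AsetGGGG$ or $O(\sqrt\chi)$ on $\AsetGBGG$ (Eq.~\eqref{gchi-explicit}), should give $h_0\ge1/3>0$.

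The main obstacle is this degenerate case: carrying out the $\chi$-expansion uniformly over the possible classes of $g(\chi)$ and over the value of $\dBBm$, while showing that the outflow from good standing is never of larger order than the inflow. That this works depends exactly on the identity $\dGBm=\dGB=\G$, which is why the hypothesis $\dGB=\G$ separates this proposition from its companion.
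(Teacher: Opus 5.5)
Your proposal is correct and has the same skeleton as the paper's proof. In both, $w\to R$, and the limit of $\tilde w$ is the convex combination $(1-h_0)\,v(\D,\pGB)+h_0T$. The case $\pGB=\C$ gives $T$ and is invaded in prisoner's dilemmas and snowdrift games. Under $\pGB=\D$, resistance across dilemmas is equivalent to $h_0=0$.

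The difference is in how $h_0$ is evaluated. The paper reads it off an exhaustive table (Table~\ref{tab:limtildew-righteous}) covering all sixteen mutant-bit patterns and both groups of resident classes, and reuses that table for the companion case $\dGB=\B$. You instead set $g=1$ in the closed form of $h(\chi)$. This settles every pattern except $(\dBGm,\dGGm)=(\B,\G)$, and so isolates the single place where the hypothesis $\dGB=\G$ is needed. Your route is leaner and makes the role of the hypothesis transparent. It does not, however, yield the exact values the $\dGB=\B$ companion needs: there, $h_0$ can vanish in that same degenerate case.

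The step you flag as the main obstacle needs no class-by-class expansion. Setting $\dot h=0$ gives $h=\alpha/(\alpha+1-\beta)$, with $\alpha=g\tilde\delta_{\B\G}+(1-g)\tilde\delta_{\B\B}$ and $\beta=g\tilde\delta_{\G\G}+(1-g)\tilde\delta_{\G\B}$. Since $\dGBm=\dGGm=\G$, at $\epsilon=0$ both $\tilde\delta_{\G\G}$ and $\tilde\delta_{\G\B}$ equal $1-\chi$, so $1-\beta=\chi$ exactly. Also $\alpha\ge\chi$ trivially. Hence $h(\chi)\ge 1/2$ for every $\chi$ and every resident class; this is your inflow/outflow heuristic made exact.

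The true limits are:
\begin{itemize}
\item $h_0=1/2$ when $\dBBm=\B$;
\item $h_0=2/3$ when $\dBBm=\G$ on $\AsetBGGG\cup\AsetGGGG$, where $1-g(\chi)$ is of order $\chi$;
\item $h_0=1$ when $\dBBm=\G$ on $\AsetGBGG$, where $1-g(\chi)$ is of order $\sqrt\chi$.
\end{itemize}
Your bound of $1/3$ is therefore safe but not sharp. The direct bound also covers $\AsetBGGG$, which your list of resident classes omitted.
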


\begin{proposition}[Stability against $\ALLD$ for righteous systems with $\dGB=\B$]
\label{prop:stability-against-ALLD-dGB=B}
Let $\epsilon=0$ and $\chi \to 0$.
A righteous system $(p,d)$ with $\dGB=\B$ resists invasion attempts by $\ALLD$ across cooperative dilemmas if and only if
\begin{enumerate}
 \item[(i)] $\pGB = \D$, and
 \item[(ii)] $\dBGDD = \B$ and either $\dGGDC = \B$ or $d(\B,\B,\D,\pBB) = \B$.
\end{enumerate}
\end{proposition}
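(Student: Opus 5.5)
The plan is to prove both propositions together. I will compute the limiting mutant reputation $h_0:=\lim_{\chi\to0}h(\chi)$ for $\tilde p=\ALLD$, then compare the leading terms of $w(\chi)$ and $\tilde w(\chi)$ from Eqs.~\eqref{wchi-explicit} and~\eqref{tildewchi-explicit}.

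\textbf{Setup.} Since $(p,d)\in\Rset$, $g(\chi)\to1$ and $\pGG=\C$. Hence $w(\chi)\to R$. The mutant assessment bits are
\[
\dBBm=d(\B,\B,\D,\pBB),\quad \dBGm=d(\B,\G,\D,\pGB),\quad \dGBm=d(\G,\B,\D,\pBG),\quad \dGGm=\dGGDC .
\]
The mutant payoffs against good residents are $\wGGm=v(\D,\C)=T$ and $\wBGm=v(\D,\pGB)$. To leading order, therefore,
\[
\tilde w\to h_0\,T+(1-h_0)\,v(\D,\pGB).
\]

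\textbf{Necessity of (i).} If $\pGB=\C$, then $\tilde w\to T$ whatever $h_0$ is. Since $T>R$ in prisoner's dilemmas and snowdrift games, $\ALLD$ invades, so (i) is necessary. Assume $\pGB=\D$ from now on, so that $\tilde w\to h_0T+(1-h_0)P$.

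\textbf{Forcing $h_0=0$.} I will show that stability across all dilemmas requires $h_0=0$. If $h_0>0$, one can pick a prisoner's dilemma with $T$ close to $1$, $P$ close to $R$ and $R$ slightly above $1/2$ (admissible by~\eqref{normalized-PD}) with $h_0T+(1-h_0)P>R$. Conversely, if $h_0=0$, the leading term of $\Delta w$ is $R-P>0$ in every dilemma. (When $h_0=0$ and $\Delta w$ has a nonzero limit, no finer expansion of $\Delta w$ is needed.)

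\textbf{Computing $h_0$.} Setting $g=1$ in the formula for $h(\chi)$ gives the naive limit
\[
h_0=\frac{\dBGm}{1+\dBGm-\dGGm}.
\]
If $\dBGm=\G$, then $h_0\ge 1/2$, so $\dBGDD=\B$ is necessary. If moreover $\dGGm=\dGGDC=\B$, then $h_0=0$, which gives sufficiency in both propositions.

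\textbf{Main obstacle: the $0/0$ case.} The remaining case $\dBGm=\B$, $\dGGm=\G$ is indeterminate. Here I must expand the numerator and denominator to first nonvanishing order, using the explicit $g(\chi)$ of Eq.~\eqref{gchi-explicit}. The key observation is that $\dGBm=d(\G,\B,\D,\pBG)$ coincides with the resident bit $\dGB$, because $\pGB=\D$.

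If $\dGB=\G$, a good mutant loses standing only through assignment errors. A bad mutant, in turn, regains standing at a rate at least of order $\chi$, either through errors or through bad co-players. So $h$ stays bounded away from zero, and $\ALLD$ invades; this forces $\dGGDC=\B$ in Proposition~\ref{prop:stability-against-ALLD-dGB=G}.

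If $\dGB=\B$, self-approval forces $(p,d)\in\AsetGGBG$. There $1-g=\Theta(\sqrt\chi)$, so good mutants lose standing at rate $\Theta(\sqrt\chi)$ through encounters with bad residents. Bad mutants regain standing at rate $\Theta(\chi)$ if $\dBBm=\B$, and then $h\to0$ and the system is stable. If instead $\dBBm=\G$, they regain standing at rate $\Theta(\sqrt\chi)$, so $h_0$ is a positive constant and $\ALLD$ invades.

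I expect the delicate part to be verifying these orders rigorously, i.e.\ expanding numerator and denominator of $h(\chi)$ in powers of $\sqrt\chi$ in each class of Lemma~\ref{lem:partition-SA}. For the case $\dGB=\G$, that means checking the classes $\AsetGBGG$, $\AsetBGGG$ and $\AsetGGGG$ separately and confirming that $h_0>0$ in each. Combining the cases yields exactly conditions (i)--(ii) of both propositions.
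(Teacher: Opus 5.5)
Your proposal is correct and follows the paper's argument. You force $\pGB=\D$, reduce resistance to $h_0=0$, and read $h_0$ off the mutant bits, using self-approval to place the resident in $\AsetGGBG$ (where $1-g=\Theta(\sqrt{\chi})$) and $\pGB=\D$ to get $\dGBm=\dGB=\B$. The only difference is that the paper tabulates $h_0$ for all sixteen bit patterns (Table~\ref{tab:limtildew-righteous}), whereas you settle the non-degenerate cases with the closed-form limit $\dBGm/(1+\dBGm-\dGGm)$ and expand only in the $0/0$ case $\dBGm=\B$, $\dGGm=\G$; your gain/loss-rate reasoning there is right and becomes rigorous by writing $h=\alpha/(1-\beta+\alpha)$ as in Eq.~\eqref{h-fixed-point}, which gives $h_0=0$ if $\dBBm=\B$ and $h_0=1/2$ if $\dBBm=\G$.
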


\paragraph{Proof of Propositions~\ref{prop:stability-against-ALLD-dGB=G} and~\ref{prop:stability-against-ALLD-dGB=B}.}
Since any righteous system is self-approving (Definitions~\ref{def:self-approving} and~\ref{def:righteous}), $\lim_{\chi \to 0} g(\chi)=1$, and hence
\begin{equation} \label{limchi0w}
\lim_{\chi \to 0} w(\chi) = \wGG = v(\pGG,\pGG) = \vCC = R .
\end{equation}
Since $w(\chi) \to R$ as $\chi \to 0$, the leading term of $\Delta w(\chi)$ has the sign of $R - \lim_{\chi\to0}\tilde{w}(\chi)$: the resident resists $\ALLD$ when $\lim_{\chi\to0}\tilde{w}(\chi) < R$ and is invaded when $\lim_{\chi\to0}\tilde{w}(\chi) > R$.

As $\chi\to0$, $g(\chi)\to1$, so the terms of $\tilde{w}(\chi)$~\eqref{tildewchi-explicit} carrying a factor $1-g(\chi)$ vanish, leaving
\begin{equation*}
 \lim_{\chi\to0}\tilde{w}(\chi) = (1-h_0)\,\wBGm + h_0\,\wGGm ,
\end{equation*}
where
\begin{equation*}
 h_0 := \lim_{\chi\to0} h(\chi) .
\end{equation*}
Each limit is thus a convex combination of $\wBGm$ and $\wGGm$, with a weight $h_0$ that depends on the mutant bits $(\dBBm,\dBGm,\dGBm,\dGGm)$ and, through $g(\chi)$, on the resident class.
Since $\tilde{p}=\ALLD$ and $\pGG=\C$ by Eq.~\eqref{pGG=C}, the two payoffs are
\begin{subequations} \label{tildewij-ALLD}
\begin{align}
 \wBGm & = v(\pBGm,\pGB) = v(\D,\pGB) = \begin{cases}
      \vDD = P & \text{if } \pGB = \D \\
      \vDC = T & \text{if } \pGB = \C
   \end{cases},\\
 \wGGm & = v(\pGGm,\pGG) = \vDC = T .
\end{align}
\end{subequations}
Table~\ref{tab:limtildew-righteous} reports $\lim_{\chi\to0}\tilde{w}(\chi)$ for every combination of the mutant assessment bits, separately for the two mutually exclusive resident classes $(p,d)\in\AsetGXXG$ and $(p,d)\in\AsetBGGG\cup\AsetGGGG$.

\begin{table}
\caption{$\lim_{\chi\to 0}\tilde{w}(\chi)$ for invasion attempts by $\ALLD$ against righteous resident systems $(p,d)\in\Rset$. Rows enumerate all combinations of the mutant assessment bits $(\dBBm,\dBGm,\dGBm,\dGGm)$. The last two columns give $\lim_{\chi\to 0}\tilde{w}(\chi)$ separately for the two mutually exclusive resident classes $(p,d)\in\AsetGXXG$ and $(p,d)\in\AsetBGGG \cup \AsetGGGG$, in terms of $\wBGm$ and $\wGGm$ defined in Eqs.~\eqref{tildewij-ALLD}.}
\begin{center}
\begin{tabular}{c c c c | c | c}
 $\dBBm$ & $\dBGm$ & $\dGBm$ & $\dGGm$ & $(p,d) \in \AsetGXXG$ & $(p,d) \in \AsetBGGG \cup \AsetGGGG$ \\
  \hline
 $\B$ & $\B$ & $\B$ & $\B$ & $\wBGm$ & $\wBGm$ \\
 $\B$ & $\B$ & $\B$ & $\G$ & $\wBGm$ & $\frac{2}{3}\wBGm + \frac{1}{3}\wGGm$ \\
 $\B$ & $\B$ & $\G$ & $\B$ & $\wBGm$ & $\wBGm$ \\
 $\B$ & $\B$ & $\G$ & $\G$ & $\frac{1}{2} \wBGm + \frac{1}{2} \wGGm$ & $\frac{1}{2} \wBGm + \frac{1}{2} \wGGm$ \\
 $\B$ & $\G$ & $\B$ & $\B$ & $\frac{1}{2} \wBGm + \frac{1}{2} \wGGm$ & $\frac{1}{2} \wBGm + \frac{1}{2} \wGGm$ \\
 $\B$ & $\G$ & $\B$ & $\G$ & $\wGGm$ & $\wGGm$ \\
 $\B$ & $\G$ & $\G$ & $\B$ & $\frac{1}{2} \wBGm + \frac{1}{2} \wGGm$ & $\frac{1}{2} \wBGm + \frac{1}{2} \wGGm$ \\
 $\B$ & $\G$ & $\G$ & $\G$ & $\wGGm$ & $\wGGm$ \\
 $\G$ & $\B$ & $\B$ & $\B$ & $\wBGm$ & $\wBGm$ \\
 $\G$ & $\B$ & $\B$ & $\G$ & $\frac{1}{2} \wBGm + \frac{1}{2} \wGGm$ & $\frac{1}{2} \wBGm + \frac{1}{2} \wGGm$ \\
 $\G$ & $\B$ & $\G$ & $\B$ & $\wBGm$ & $\wBGm$ \\
 $\G$ & $\B$ & $\G$ & $\G$ & $\wGGm$ & $\frac{1}{3}\wBGm + \frac{2}{3} \wGGm$ \\
 $\G$ & $\G$ & $\B$ & $\B$ & $\frac{1}{2} \wBGm + \frac{1}{2} \wGGm$ & $\frac{1}{2} \wBGm + \frac{1}{2} \wGGm$ \\
 $\G$ & $\G$ & $\B$ & $\G$ & $\wGGm$ & $\wGGm$ \\
 $\G$ & $\G$ & $\G$ & $\B$ & $\frac{1}{2} \wBGm + \frac{1}{2} \wGGm$ & $\frac{1}{2} \wBGm + \frac{1}{2} \wGGm$ \\
 $\G$ & $\G$ & $\G$ & $\G$ & $\wGGm$ & $\wGGm$ \\
\end{tabular}
\end{center}
\label{tab:limtildew-righteous}
\end{table}

If $\pGB=\C$, then $\wBGm=\wGGm=T$ by Eq.~\eqref{tildewij-ALLD}; since $\lim_{\chi\to0}\tilde{w}(\chi)$ is a convex combination of these two payoffs, it equals $T$ regardless of the mutant bits and the resident class (Table~\ref{tab:limtildew-righteous}).
As $T>R$ for prisoner's dilemmas and snowdrift games, $\ALLD$ invades there, and a universally stable righteous system must therefore satisfy
\begin{equation} \label{pGB=D}
 \pGB=\D .
\end{equation}

Under condition~\eqref{pGB=D}, Eq.~\eqref{tildewij-ALLD} gives $\wBGm=P$ and $\wGGm=T$; the limit $\lim_{\chi\to0}\tilde{w}(\chi)=(1-h_0)P+h_0 T$ is then a convex combination of $P$ and $T$.
Since $R>P$~\eqref{R>P} always holds and $T>R$ holds for prisoner's dilemmas and snowdrift games (so $P<R<T$ there), Table~\ref{tab:limtildew-righteous} leaves three possibilities:
\begin{enumerate}
 \item $\lim_{\chi\to0}\tilde{w}(\chi)=T$: since $T>R$, $\ALLD$ invades for prisoner's dilemmas and snowdrift games.
 \item $\lim_{\chi\to0}\tilde{w}(\chi)$ lies strictly between $P$ and $T$: it then exceeds $R$ for some prisoner's dilemmas and snowdrift games, so $\ALLD$ can invade.
 \item $\lim_{\chi\to0}\tilde{w}(\chi)=P$: since $R>P$, the resident resists invasion by $\ALLD$ across all cooperative dilemmas.
\end{enumerate}

We conclude that a necessary and sufficient condition for $(p,d)\in\Rset$ to resist invasion attempts by $\ALLD$ across cooperative dilemmas is
\begin{equation} \label{limchi0tildew=wBGm}
 \lim_{\chi\to0}\tilde{w}(\chi)=\wBGm=P.
\end{equation}
Equivalently, $h_0=0$: the invading $\ALLD$ mutants acquire a bad reputation in the limit, so residents defect against them and they earn only $\wBGm=P<R$.
We now translate this condition into constraints on the bits of the resident action rule $p$ and assessment rule $d$.

We distinguish two mutually exclusive cases: (i) $(p,d)\in\AsetBGGG \cup \AsetGGGG$, and (ii) $(p,d)\in\AsetGXXG$.
In the latter, we also use the identity
\begin{equation} \label{dGBm=dGB}
 \dGBm = d(\G,\B,\pGBm,\pBG)=d(\G,\B,\pGB,\pBG)= \dGB,
\end{equation}
which follows from Eqs.~\eqref{pGB=D} and~\eqref{pm-ALLD}.

\begin{enumerate}
 \item Suppose that $(p,d)\in\AsetBGGG \cup \AsetGGGG$.
 Inspection of Table~\ref{tab:limtildew-righteous} shows that condition~\eqref{limchi0tildew=wBGm} holds if and only if
 \begin{equation} \label{dBGm=B}
    \dBGm = d(\B,\G,\pBGm,\pGB) = \dBGDD = \B
 \end{equation}
 and
 \begin{equation} \label{dGGm=B}
    \dGGm = d(\G,\G,\pGGm,\pGG) = \dGGDC = \B
 \end{equation}
 both hold.

 \item Suppose that $(p,d)\in\AsetGXXG$, so that $\dBB=\dGG=\G$ and $\dBG\neq\dGB$ (Eq.~\eqref{cond-SA-1**1}). By the identity~\eqref{dGBm=dGB}, $\dGBm=\dGB$, giving two subcases.

 \begin{enumerate}
  \item If $(\dBG,\dGB)=(\B,\G)$, then $\dGBm=\dGB=\G$, and Table~\ref{tab:limtildew-righteous} shows that condition~\eqref{limchi0tildew=wBGm} holds if and only if both~\eqref{dBGm=B} and~\eqref{dGGm=B} hold --- the same condition as in case~(i).

  \item If $(\dBG,\dGB)=(\G,\B)$, then $\dGBm=\dGB=\B$, and Table~\ref{tab:limtildew-righteous} shows that condition~\eqref{limchi0tildew=wBGm} holds if and only if~\eqref{dBGm=B} holds, together with either~\eqref{dGGm=B} or
  \begin{equation*}
   \dBBm = d(\B,\B,\pBBm,\pBB)=d(\B,\B,\D,\pBB)=\B.
  \end{equation*}
 \end{enumerate}
\end{enumerate}
Together with the necessary condition $\pGB=\D$~\eqref{pGB=D}, these cases give condition~(ii) of Proposition~\ref{prop:stability-against-ALLD-dGB=G} when $\dGB=\G$ (case~(i) and subcase~(ii-a)) and of Proposition~\ref{prop:stability-against-ALLD-dGB=B} when $\dGB=\B$ (subcase~(ii-b)).
\qed

These propositions have a further consequence for righteous systems with $\dBG=\G$, which we use repeatedly in the analyses that follow.

\begin{corollary}[Stability against $\ALLD$ forces $\pBG=\C$ when $\dBG=\G$]\label{cor:pBG=C}
Let $\epsilon=0$ and $\chi \to 0$.
If a righteous system $(p,d)$ with $\dBG=\G$ resists invasion attempts by $\ALLD$ across cooperative dilemmas, then $\pBG=\C$.
\end{corollary}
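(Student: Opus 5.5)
We argue by contradiction: suppose $(p,d)$ is righteous, has $\dBG=\G$, resists $\ALLD$ across cooperative dilemmas, and $\pBG=\D$. The first step invokes condition~(i) common to Propositions~\ref{prop:stability-against-ALLD-dGB=G} and~\ref{prop:stability-against-ALLD-dGB=B}, which gives $\pGB=\D$. Condition~(ii) of both propositions requires $\dBGDD=\B$.

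The key observation is that, under $\pBG=\D$ and $\pGB=\D$, the resident bit $\dBG$ coincides with $\dBGDD$. Recall the shorthand~\eqref{dr1r2-shorthand}:
\[
\dBG = d(\B,\G,\pBG,\pGB) = d(\B,\G,\D,\D) = \dBGDD .
\]
Hence $\dBG=\dBGDD=\B$, which contradicts the hypothesis $\dBG=\G$. Therefore $\pBG=\C$.

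No case split on $\dGB$ is needed, because the requirement $\dBGDD=\B$ appears in condition~(ii) of both propositions. The only step requiring care is to check that the two propositions together cover every righteous system. They do, since one treats $\dGB=\G$ and the other $\dGB=\B$. Beyond that there is no real obstacle: the argument is a one-line reading of the shorthand once $\pGB=\D$ is in hand.
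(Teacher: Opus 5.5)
Your proof is correct and is essentially the paper's own argument. Both proofs assume the contrary and use the fact that resistance to ALLD forces the resident to defect against bad co-players when in good standing, so that the resident's assessment at profile $(B,G)$ reduces to $d(B,G,D,D)$. Both ALLD-stability propositions require this entry to be bad, which contradicts the hypothesis that it is good.
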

\begin{proof}
Suppose instead that $\pBG=\D$. Resistance to $\ALLD$ forces $\pGB=\D$~\eqref{pGB=D}, so that $\dBG=d(\B,\G,\pBG,\pGB)=d(\B,\G,\D,\D)=\dBGDD$; with $\dBG=\G$ this gives $\dBGDD=\G$, contradicting the condition $\dBGDD=\B$ that resistance requires (Propositions~\ref{prop:stability-against-ALLD-dGB=G} and~\ref{prop:stability-against-ALLD-dGB=B}).
\end{proof}

We denote by $\Rset'$ the subset of righteous systems that resist invasion by $\ALLD$ across cooperative dilemmas, as characterized by Propositions~\ref{prop:stability-against-ALLD-dGB=G} and~\ref{prop:stability-against-ALLD-dGB=B}.
Combining these two propositions with Eq.~\eqref{righteous-set-simple}:
\begin{align} \label{R'}
 \Rset' &=
  \left\{ (p,d) \in \Rset \,\middle|\, \pGB=\D,\ \dGB=\G,\ \dBGDD=\B,\ \dGGDC=\B \right\} \nonumber \\
  &\quad\cup
  \left\{ (p,d) \in \Rset \,\middle|\, \pGB=\D,\ \dGB=\B,\ \dBGDD=\B, \right. \nonumber \\
  &\hspace{6.5em} \left. \bigl[\dGGDC=\B \text{ or } d(\B,\B,\D,\pBB)=\B\bigr] \right\} .
\end{align}

Since systems in $\Rset'$ satisfy both $\pGG=\C$ and $\pGB=\D$, every $(p,d)\in\Rset'$ has a discriminating action rule.
This characterization lets us immediately locate two families of systems of special interest within the subset $\Rset'$.
First, we can verify that all leading-eight systems satisfy the conditions defining $\Rset'$.
Second, among the four subsets of consistent discriminating systems, two belong to $\Rset'$ whereas the other two are excluded already at this stage.

\begin{proposition}[Leading-eight systems belong to $\Rset'$]
\label{prop:L-subset-R'}
$\Lset \subset \Rset'$.
\end{proposition}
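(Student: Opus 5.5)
The plan is a direct verification against the characterization~\eqref{R'} of $\Rset'$. By Proposition~\ref{prop:leading-eight-righteous}, every leading-eight system is already in $\Rset$. So it remains to check three things: the action-rule condition $\pGB=\D$; the value of $\dGB$, which selects the relevant branch of~\eqref{R'}; and the assessment-bit conditions of that branch.

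First, all leading-eight action rules have the form $\wildcard\C\D\C$, so $\pGB=\D$ holds. Next, Eq.~\eqref{dBG=dGB=dGG=G} gives $\dGB=\G$ for every leading-eight system. Hence only the first branch of~\eqref{R'} is relevant, and we never need the disjunction involving $d(\B,\B,\D,\pBB)$. That branch requires $\dBGDD=\B$ and $\dGGDC=\B$.

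Both bits follow from condition~\eqref{dBGD=dGGD=B} of the leading-eight assessment rule, namely $\dBGD=\dGGD=\B$. This condition holds irrespective of the co-player's action $\ac$, because leading-eight norms are of type $\af\rf\rc$ and ignore $\ac$. Specializing to $\ac=\D$ gives $\dBGDD=\B$, and specializing to $\ac=\C$ gives $\dGGDC=\B$.

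Combining these observations places every member of $\LI\cup\LII\cup\LIII=\Lset$ in the first set of the union~\eqref{R'}, so $\Lset\subset\Rset'$. The only delicate point is notational. We must make sure that the four-index bits $\dBGDD$ and $\dGGDC$ are correctly read off the three-index wildcard bits $\dBGD$ and $\dGGD$. That reading is justified because the leading-eight norms do not depend on $\ac$. Beyond this, no real obstacle is expected.
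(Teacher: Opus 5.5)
Your proof is correct and follows the paper's argument essentially step for step. You obtain membership in $\Rset$ from Proposition~\ref{prop:leading-eight-righteous} and $\pGB=\D$ from the discriminating action rule, then read $\dBGDD=\dGGDC=\B$ and $\dGB=\G$ off the third-order conditions (the norm ignores $\ac$), which places every system in the $\dGB=\G$ branch of Eq.~\eqref{R'}; the only cosmetic difference is that you cite Eq.~\eqref{dBG=dGB=dGG=G} for $\dGB=\G$, whereas the paper recomputes it directly as $d(\G,\B,\D,\pBG)=\dGBD=\G$.
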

\begin{proof}
All leading-eight action rules are discriminating, so $\pGB=\D$, and their assessment rules are third-order with $\dBGD=\B$, $\dGGD=\B$, and $\dGBD=\G$ (Eqs.~\eqref{LI-set}--\eqref{LIII-set}).
Since the norm ignores the co-player's action, $\dBGDD=\B$, $\dGGDC=\B$, and $\dGB=d(\G,\B,\D,\pBG)=\dGBD=\G$.
Together with $\Lset\subset\Rset$ (Proposition~\ref{prop:leading-eight-righteous}), these satisfy the $\dGB=\G$ branch of Eq.~\eqref{R'}.
\end{proof}

\begin{proposition}[Consistent discriminating systems and $\Rset'$]
\label{prop:QD-R'-membership}
$\QCDisc \cup \QDisc \subset \Rset'$, while $\QPDisc \cap \Rset' = \QODisc \cap \Rset' = \emptyset$.
\end{proposition}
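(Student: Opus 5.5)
The plan is to check each of the four subsets $\QDisc$, $\QCDisc$, $\QODisc$, $\QPDisc$ directly against the characterization~\eqref{R'} of $\Rset'$. Membership in $\Rset$ is already known for all of them by Proposition~\ref{prop:consistentdisc->righteous}, and every discriminating rule has $\pGB=\D$. Moreover, consistent systems lie in $\AsetGGGG$ (Proposition~\ref{prop:cons->SA}), so $\dGB=\G$ in every case. Hence only the first branch of Eq.~\eqref{R'} is relevant, and the task reduces to reading off the two bits $\dBGDD$ and $\dGGDC$ from the explicit descriptions~\eqref{QDisc-set}--\eqref{QPDisc-set}. Equivalently, one can use the consistency conditions~\eqref{si-consistent-cond}: $\dGGDC=d(\G,\G,\neg\pGG,\pGG)=\notdGG=\B$ for every discriminating consistent system. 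So the only discriminating bit is $\dBGDD$.

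For $\dBGDD$, I recall that the co-player in state $(\G,\B)$ plays $\pGB=\D$. So $\dBGDD=d(\B,\G,\D,\pGB)$ equals $\dBG=\G$ if $\pBG=\D$, and equals $\notdBG=\B$ if $\pBG=\C$.

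\begin{itemize}
\item For $\DISC$ ($\DCDC$) and $\CDISC$ ($\CCDC$), $\pBG=\C$. Defection is therefore a deviation and $\dBGDD=\B$, matching $\dBGDD=\B$ listed in~\eqref{QDisc-set} and~\eqref{QCDisc-set}. Both systems satisfy every condition of the first branch of~\eqref{R'}, giving $\QCDisc\cup\QDisc\subset\Rset'$.
\item For $\ODISC$ ($\CDDC$) and $\PDISC$ ($\DDDC$), $\pBG=\D$. Defection is then prescribed, so $\dBGDD=\G$, as listed in~\eqref{QODisc-set} and~\eqref{QPDisc-set}. This violates $\dBGDD=\B$, which is required in both branches of~\eqref{R'}, so no such system lies in $\Rset'$.
\end{itemize}

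An alternative way to phrase the exclusion is through Corollary~\ref{cor:pBG=C}: these systems have $\dBG=\G$ but $\pBG=\D$, so they cannot resist $\ALLD$.

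There is no real obstacle here. The only care needed is to index the bits correctly, namely that in $d(\B,\G,\cdot,\cdot)$ the co-player's action is $\pGB$, not $\pBG$. I would double-check this against Fig.~\ref{fig:consistent-discriminating}.
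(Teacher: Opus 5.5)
Your proposal is correct and follows essentially the same route as the paper. Both arguments verify the $\dGB=\G$ branch of Eq.~\eqref{R'} for $\QDisc$ and $\QCDisc$ via $\dBGDD=\dGGDC=\B$, and both exclude $\QODisc$ and $\QPDisc$ because their consistent norms force $\dBGDD=\G$. The only difference is cosmetic: you derive these bits, and $\dGB=\G$ via Proposition~\ref{prop:cons->SA}, from the consistency conditions, whereas the paper reads them off the explicit set descriptions.
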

\begin{proof}
For $\QCDisc$ and $\QDisc$: both are discriminating with $\pBG=\C$, so that $\dGB=d(\G,\B,\D,\C)=\dGBDC$; Eqs.~\eqref{QCDisc-set} and~\eqref{QDisc-set} give $\dBGDD=\B$, $\dGGDC=\B$, and $\dGBDC=\G$, whence $\dGB=\G$ and the $\dGB=\G$ branch of Eq.~\eqref{R'} is satisfied.
For $\QPDisc$ and $\QODisc$: Eqs.~\eqref{QPDisc-set} and~\eqref{QODisc-set} include $\dBGDD=\G$, which violates the necessary condition $\dBGDD=\B$ for $\Rset'$ membership.
\end{proof}

The characterization~\eqref{R'} also determines the cardinality of $\Rset'$.

\begin{proposition}[Cardinality of $\Rset'$] \label{prop:cardinality-R'}
The set $\Rset'$ contains $|\Rset'| = \numprint{12800}$ moral systems.
\end{proposition}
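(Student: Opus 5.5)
The plan is a direct count, in the style of the proof of Proposition~\ref{prop:cardinality-R}. Every $(p,d)\in\Rset'$ has $\pGB=\D$ and $\pGG=\C$. So $p$ is one of the four discriminating strategies, which differ only in $(\pBB,\pBG)$. I will split the count according to $p$ and to the two branches of Eq.~\eqref{R'}; the branches are disjoint because they disagree on $\dGB$. For each case I translate the shorthand bits $\dBB,\dBG,\dGB,\dGG$ into literal bits of $d$, which is possible because $p$ is fixed:
\[
\dBB=d(\B,\B,\pBB,\pBB),\quad \dBG=d(\B,\G,\pBG,\D),\quad \dGB=d(\G,\B,\D,\pBG),\quad \dGG=\dGGCC .
\]
I then count how many of the 16 bits are pinned and multiply by $2^{\text{free}}$. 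The main subtlety, and the step I expect to need most care, is checking whether a constraint from Eq.~\eqref{righteous-set-simple} lands on the same literal bit as one of the $\Rset'$ constraints. When $\pBG=\D$, for example, $\dBG$ coincides with $\dBGDD$. Whenever this happens it either forces other bits or makes the case empty.

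\emph{Branch $\dGB=\G$.} The pinned bits are $\dGGCC=\G$, $\dGGDC=\B$, $\dBGDD=\B$, and the literal bit behind $\dGB$, set to $\G$.
\begin{itemize}
\item If $\pBG=\D$ (strategies $\PDISC$, $\DISC$), then $\dBG=\dBGDD=\B$, so righteousness forces $\dBB=\G$. That is 5 pinned bits and $2^{11}$ norms per strategy, giving $4096$.
\item If $\pBG=\C$ (strategies $\ODISC$, $\CDISC$), then $\dBG=\dBGCD$ is a new bit. Righteousness requires $\dBB+\dBG\ge1$, which leaves 3 admissible assignments of two further bits. That is 4 pinned bits, giving $3\cdot2^{10}$ per strategy and $6144$ in total.
\end{itemize}
This branch contributes $10240$.

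\emph{Branch $\dGB=\B$.} Righteousness now forces $\dBB=\dBG=\G$. Since $\dBGDD=\B$, this requires $\pBG=\C$, so only $\CDISC$ and $\ODISC$ survive. The pinned bits are $\dGGCC=\G$, $\dGBDC=\B$, $\dBGCD=\G$, $\dBGDD=\B$ and $\dBB=\G$.
\begin{itemize}
\item For $\pBB=\D$, the disjunction $\dGGDC=\B$ or $d(\B,\B,\D,\D)=\B$ reduces to $\dGGDC=\B$, because $d(\B,\B,\D,\D)=\dBB=\G$. That is 6 pinned bits and $2^{10}=1024$ norms.
\item For $\pBB=\C$, the bits $\dGGDC$ and $\dBBDC$ are distinct from everything else. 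Exactly 3 of their 4 joint values satisfy the disjunction, and 9 bits remain free, giving $3\cdot2^9=1536$.
\end{itemize}
This branch contributes $2560$.

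Adding the two branches gives $10240+2560=\numprint{12800}$. As a sanity check I would confirm the total by exhaustive enumeration over $2^{20}$ systems, as was done in Section~\ref{subsec:norm-counting}.
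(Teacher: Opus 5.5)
Your count is correct and follows the paper's own route: you split by $(\pBB,\pBG)$ and by the two branches of Eq.~\eqref{R'}, track where constrained bits coincide, and get $2048,2048,3072,3072$ for the $\dGB=\G$ branch and $1024,1536$ for the $\dGB=\B$ branch, which is exactly Table~\ref{tab:cardinality-R'}. One slip to fix is that your strategy names are swapped: $\pBG=\D$ corresponds to $\PDISC$ and $\ODISC$, not $\DISC$, and $\pBG=\C$ to $\DISC$ and $\CDISC$, so the survivors of the $\dGB=\B$ branch are $\CDISC$ and $\DISC$; the bit-level counts are unaffected.
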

\begin{proof}
We count the admissible systems as in the proof of Proposition~\ref{prop:cardinality-R}, recalling that a condition leaving $k$ bits of $d$ unconstrained is satisfied by $2^k$ assessment rules.
By Eq.~\eqref{R'} every system in $\Rset'$ has $\pGG=\C$ and $\pGB=\D$, leaving four action rules indexed by $(\pBB,\pBG)\in\{\C,\D\}^2$; Eq.~\eqref{R'} is moreover the disjoint union of a branch with $\dGB=\G$ and one with $\dGB=\B$. For each action rule and branch, the defining conditions fix or restrict a few bits of $d$ and leave the rest free. For instance, in the branch $\dGB=\G$ with $\pBG=\C$, the bits $\dGB=d(\G,\B,\D,\C)$, $\dBG=d(\B,\G,\C,\D)$ and $\dBGDD=d(\B,\G,\D,\D)$ are distinct; the conditions $\dGG=\G$, $\dGGDC=\B$, $\dGB=\G$, $\dBGDD=\B$ fix four bits, and $\dBB+\dBG+\dGB\ge2$ (with $\dGB=\G$) admits three of the four assignments of $(\dBB,\dBG)$, giving $3\cdot 2^{10}=\numprint{3072}$. Coincidences among the constrained bits occur when $\pBG=\D$ (then $\dBG=\dBGDD$) or $\pBB=\D$ (then $\dBB=d(\B,\B,\D,\pBB)$) and lower the counts; in particular, the $\dGB=\B$ branch is empty when $\pBG=\D$, since then $\dBG=\dBGDD=\B$ and $\dGB=\B$ would violate $\dBB+\dBG+\dGB\ge2$. Table~\ref{tab:cardinality-R'} collects all cases; they sum to $|\Rset'|=\numprint{12800}$.
\end{proof}

\begin{table}
\caption{Number of assessment rules $d$ yielding a system in $\Rset'$, for each of the four action rules $(\pBB,\pBG)$ compatible with $\pGG=\C$, $\pGB=\D$, split by the two branches ($\dGB=\G$ and $\dGB=\B$) of Eq.~\eqref{R'}.}
\begin{center}
\begin{tabular}{c c | c c | c}
 $\pBB$ & $\pBG$ & $\dGB=\G$ & $\dGB=\B$ & total \\
 \hline
 $\C$ & $\C$ & \numprint{3072} & \numprint{1536} & \numprint{4608} \\
 $\C$ & $\D$ & \numprint{2048} & $0$ & \numprint{2048} \\
 $\D$ & $\C$ & \numprint{3072} & \numprint{1024} & \numprint{4096} \\
 $\D$ & $\D$ & \numprint{2048} & $0$ & \numprint{2048} \\
 \hline
 \multicolumn{4}{r|}{$|\Rset'|=$} & \numprint{12800} \\
\end{tabular}
\end{center}
\label{tab:cardinality-R'}
\end{table}

\subsubsection{Invasion attempts by \texorpdfstring{$\ALLC$}{ALLC}}
\label{subsubsec:invasion-ALLC}

For the mutant $\tilde{p}=\ALLC$,
\begin{equation}
  \pBBm = \pBGm = \pGBm = \pGGm = \C. \label{pm-ALLC}
\end{equation}
We seek which systems in $\Rset'$ also resist such invasion across cooperative dilemmas.
These are characterized by the following three propositions: the first shows that resistance requires $\dGB=\G$; the other two then characterize the systems with $\dGB=\G$, according to whether $\dBG=\B$ or $\dBG=\G$. We prove all three together below.

\begin{proposition}[Stability against $\ALLC$ requires $\dGB=\G$] \label{prop:stability-against-ALLC-dGB=G}
Let $\epsilon=0$ and $\chi \to 0$.
If a righteous system $(p,d)$ that resists invasion attempts by $\ALLD$ also resists invasion attempts by $\ALLC$ across cooperative dilemmas, then $\dGB = \G$.
\end{proposition}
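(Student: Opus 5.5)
The plan is to argue by contraposition. Take $(p,d)\in\Rset'$ with $\dGB=\B$ and exhibit a nonempty open set of cooperative dilemmas in which $\ALLC$ earns strictly more than the resident as $\chi\to0$.

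\textbf{Step 1: pin down the structure of such residents.} By the righteousness condition $\dBB+\dBG+\dGB\ge2$ in Eq.~\eqref{righteous-set-simple}, $\dGB=\B$ forces $\dBB=\dBG=\G$. Hence $(p,d)\in\AsetGGBG\subset\AsetGXXG$, and $g(\chi)$ is given by the first line of Eq.~\eqref{gchi-explicit}. Expanding that formula gives $1-g(\chi)=\sqrt{\chi}+O(\chi)$. From Eq.~\eqref{R'} we also have $\pGB=\D$, and $\pGG=\C$ by Eq.~\eqref{pGG=C}. As noted in the proof of Proposition~\ref{prop:cardinality-R'}, the $\dGB=\B$ branch is empty when $\pBG=\D$, so $\pBG=\C$, that is, $p=\wildcard\C\D\C$.

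\textbf{Step 2: compute the mutant's reputation bits and payoffs.} For $\tilde p=\ALLC$ we get:
\begin{itemize}
\item $\dGGm=d(\G,\G,\C,\C)=\dGG=\G$;
\item $\dBGm=d(\B,\G,\C,\pGB)=d(\B,\G,\C,\D)=\dBG=\G$, using $\pBG=\C$;
\item $\dGBm=d(\G,\B,\C,\C)$ and $\dBBm=d(\B,\B,\C,\pBB)$ are left free.
\end{itemize}
The mutant payoffs are $\wGGm=R$, $\wBGm=v(\C,\D)=S$ and $\wGBm=v(\C,\pBG)=R$. The resident payoffs are $\wGG=R$, $\wBG=v(\C,\D)=S$ and $\wGB=v(\D,\C)=T$.

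With $\dBGm=\dGGm=\G$, the $\AsetGXXG$ column of the analogue of Table~\ref{tab:limtildew-righteous} gives $h_0=1$. Both $w$ and $\tilde w$ therefore tend to $R$, and the comparison has to be made at the next order, in $\sqrt{\chi}$.

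\textbf{Step 3: expand to order $\sqrt{\chi}$, the main step.} For the resident, Eq.~\eqref{wchi-explicit} gives
\[
w(\chi)=R-(1-g)\,(2R-S-T)+O(\chi)=R-\sqrt{\chi}\,(2R-S-T)+O(\chi).
\]
For the mutant, I would expand $h(\chi)$ from its explicit formula with $g=1-\sqrt{\chi}+O(\chi)$. The intuition is that a good mutant loses standing only when it meets a bad resident and $\dGBm=\B$, while a bad mutant is restored at rate close to one because $\dBGm=\G$. This gives
\[
1-h(\chi)=(1-\dGBm)\sqrt{\chi}+O(\chi).
\]
Substituting into Eq.~\eqref{tildewchi-explicit} then yields
\[
\tilde w(\chi)=R-(1-h)(R-S)+O(\chi)=R-(1-\dGBm)\sqrt{\chi}\,(R-S)+O(\chi).
\]
Taking the difference,
\[
\Delta w(\chi)=-\sqrt{\chi}\,\bigl[(2R-S-T)-(1-\dGBm)(R-S)\bigr]+O(\chi).
\]
The delicate part is checking these expansions carefully from the closed forms, in particular that no extra $\sqrt{\chi}$ contributions arise from the denominator of $h$ or from the $\dBBm$ terms, which are weighted by $(1-h)(1-g)=O(\chi)$.

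\textbf{Step 4: conclude.} If $\dGBm=\G$, the leading coefficient of $\Delta w$ is $-(2R-S-T)<0$ by Eq.~\eqref{2R>T+S}, so $\ALLC$ invades in every cooperative dilemma. If $\dGBm=\B$, the leading coefficient is $-(R-T)$, which is negative in all stag hunts, where $R>T$. In either case the system is invaded on a set of positive area, so it is not stable across cooperative dilemmas. Contrapositively, resistance to both $\ALLD$ and $\ALLC$ forces $\dGB=\G$.
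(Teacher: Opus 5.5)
Your proof is correct and is essentially the paper's argument run contrapositively. The paper expands $\tilde w(\chi)$ over all of $\AsetGXXG$ (Table~\ref{tab:expansion-tildew-ALLC-AGXXG}) and shows that resistance needs $\dBGm=\B$; it then uses the $\pBG$ case split with the $\ALLD$ condition $\dBGDD=\B$ to force $\dBG=\B$, i.e.\ $\dGB=\G$. You instead start in $\AsetGGBG$ and use the same facts to get $\pBG=\C$ and hence $\dBGm=\G$; your leading coefficients $S+T-2R$ (for $\dGBm=\G$) and $T-R$ (for $\dGBm=\B$) are exactly the paper's $\pBG=\C$ entries in its cases (c) and (b).
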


Proposition \ref{prop:stability-against-ALLC-dGB=G} implies that $(p,d) \in \Rset \cap \AsetGGBG$ cannot be universally stable.

\begin{proposition}[Stability against $\ALLC$ for righteous systems with $\dGB=\G$ and $\dBG=\B$] \label{prop:stability-against-ALLC-dBG=B}
Let $\epsilon=0$ and $\chi \to 0$.
A righteous system $(p,d)$ with $\dGB=\G$ and $\dBG=\B$ that resists invasion attempts by $\ALLD$ also resists invasion attempts by $\ALLC$ across cooperative dilemmas if and only if
\begin{enumerate}
 \item[(i)] $\dBGCD = \B$, and
 \item[(ii)] either $d(\G,\B,\C,\pBG) = \B$ or $d(\B,\B,\C,\pBB) = \B$.
\end{enumerate}
\end{proposition}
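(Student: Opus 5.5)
The plan is to fix the resident structure from the hypotheses, reduce the comparison of $w(\chi)$ and $\tilde w(\chi)$ to the limiting mutant reputation and then to a first-order expansion, and finish with a case analysis on the mutant assessment bits.

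\textbf{Resident structure and the relevant bits.} Since $(p,d)\in\Rset'$, we have $\pGG=\C$ and $\pGB=\D$, together with $\dBGDD=\dGGDC=\B$. Because $\dBG=\B$, $\dGB=\G$ and $\dBB+\dBG+\dGB\geq 2$, we also get $\dBB=\G$, so $(p,d)\in\AsetGBGG$. By Eq.~\eqref{g-SA-1**1}, writing $\varepsilon(\chi):=1-g(\chi)$, a short expansion gives $\varepsilon(\chi)=\sqrt{\chi}+O(\chi)$, so that $\chi=O(\varepsilon^2)$. For $\tilde p=\ALLC$ the mutant bits are:
\begin{itemize}
\item $\dGGm=\dGGCC=\G$;
\item $\dBGm=\dBGCD$;
\item $\dGBm=d(\G,\B,\C,\pBG)$;
\item $\dBBm=d(\B,\B,\C,\pBB)$.
\end{itemize}
The corresponding payoffs are $\wGGm=R$, $\wBGm=v(\C,\D)=S$, $\wGBm=v(\C,\pBG)$ and $\wBBm=v(\C,\pBB)$. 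The resident has $\wGG=R$, $\wGB=v(\D,\pBG)$, $\wBG=v(\pBG,\D)$ and $\wBB=v(\pBB,\pBB)$.

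\textbf{Step 1: the limiting mutant reputation.} Both payoffs tend to $R$ only if $h_0=1$. I use the fixed-point form $1-h=(1-\beta)/(1-\beta+\alpha)$ from Eqs.~\eqref{h-fixed-point}--\eqref{alpha-beta}. Here $\beta\approx g+\varepsilon\dGBm$ up to $O(\chi)$, so $1-\beta\approx\varepsilon(1-\dGBm)+O(\chi)$. Also $\alpha\approx \dBGm+\varepsilon(\dBBm-\dBGm)+O(\chi)$. This gives three regimes.
\begin{enumerate}
\item[(a)] If $\dBGm=\B$ and $\dGBm=\B$, then $1-h\to 1/(1+\dBBm)>0$.
\item[(b)] If $\dBGm=\B$, $\dGBm=\G$ and $\dBBm=\B$, then $1-\beta$ and $\alpha$ are both of order $\chi$, and $1-h$ tends to a positive constant.
\item[(c)] In the remaining cases $h\to1$, with $1-h=c\,\varepsilon+o(\varepsilon)$:
\begin{itemize}
\item if $\dBGm=\G$, then $c=1-\dGBm$;
\item if $\dBGm=\B$ and $\dGBm=\dBBm=\G$, then $1-h\approx\chi/\varepsilon$, so $c=1$ after using $\chi\sim\varepsilon^2$.
\end{itemize}
\end{enumerate}

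In regimes (a) and (b), $\lim_{\chi\to0}\tilde w=(1-h_0)S+h_0R<R=\lim w$, so the resident resists $\ALLC$ in every dilemma. These regimes are exactly the condition ``$\dBGCD=\B$ and [$\dGBm=\B$ or $\dBBm=\B$]'', which gives sufficiency.

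\textbf{Step 2: necessity in regime (c).} Here both payoffs equal $R+O(\varepsilon)$, so I expand to first order in $\varepsilon$, using Eqs.~\eqref{wchi-explicit} and~\eqref{tildewchi-explicit}:
\begin{align*}
w &= R+\varepsilon\,(\wBG+\wGB-2R)+O(\varepsilon^2),\\
\tilde w &= R+\varepsilon\,\bigl(\wGBm-R\bigr)+c\,\varepsilon\,(S-R)+O(\varepsilon^2).
\end{align*}
The leading coefficient of $\Delta w$ is therefore
\[
\wBG+\wGB-R-\wGBm-c(S-R).
\]
I then evaluate this coefficient separately for $\pBG=\C$ and $\pBG=\D$, and for each admissible value of $c$.

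\begin{itemize}
\item For $\pBG=\C$ the coefficient is $T+S-2R$ when $c=1$, and $T-R$ when $c=0$ (the latter arising with $\dGBm=\G$). Since $2R>T+S$ always holds, $c=1$ gives invasion in every dilemma. For $c=0$ the coefficient is negative in stag hunts, so $\ALLC$ invades there as well.
\item For $\pBG=\D$ the coefficient is $2P-R-S-c(S-R)$. When $c=1$ this is $2(P-S)$, which is positive, so the resident is fine at first order and I must go one order further. When $c=0$ it is $2P-R-S$, which is negative for some stag hunts.
\end{itemize}
The aim is to show that every sub-case in regime (c) fails on a set of positive measure in some dilemma class, most likely the stag hunts.

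\textbf{Main obstacle.} The delicate part is regime (c): obtaining the constant $c$ exactly, in particular the $\chi/\varepsilon$ term when $\dBGm=\B$, which requires the precise relation between $\chi$ and $\varepsilon$ from Eq.~\eqref{g-SA-1**1}. It also requires handling any sub-case whose first-order coefficient is sign-definite in the resident's favour, such as $\pBG=\D$ with $c=1$. In that case I would first check whether that configuration is already excluded by the $\Rset'$ constraints (Corollary~\ref{cor:pBG=C} does not apply, since $\dBG=\B$). If it is not excluded, I would carry the expansion to second order, in the same way Table~\ref{tab:limtildew-righteous} was built.
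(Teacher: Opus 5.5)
Your reduction is the same as the paper's. You place the resident in $\AsetGBGG$, split on the limiting mutant reputation, and expand in $\varepsilon=1-g=\sqrt{\chi}+O(\chi)$. Your regimes (a)--(b) are exactly the three rows of the paper's table for $\AsetGXXG$ with leading term $S$ or $(S+R)/2$. Your constant $c$ and the coefficient $\wBG+\wGB-R-\wGBm-c(S-R)$ reproduce the paper's two $\sqrt{\chi}$-order cases, with $c=1$ and $c=0$. Sufficiency is therefore correct. The necessity half, however, is left unfinished, and the obstacle you flag comes from a sign error.

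For $\pBG=\D$ and $c=1$, the coefficient $2(P-S)$ is \emph{not} positive across cooperative dilemmas. Every snowdrift game has $S>P$, so the coefficient is strictly negative there and $\ALLC$ invades. This settles the case at first order, exactly as in the paper. Neither of your proposed remedies would work:
\begin{itemize}
\item The configuration is not excluded by $\Rset'$. Take $\pBG=\D$, $\dBGCD=\G$, $d(\G,\B,\C,\D)=\B$, with the remaining bits as $\Rset'$ requires.
\item Going to second order is pointless. Wherever $P>S$, the first-order term is nonzero and already fixes the sign.
\end{itemize}
Had $2(P-S)$ really been positive in every dilemma, the resident would resist $\ALLC$ in a configuration violating (i)--(ii), which would refute the ``only if'' direction.

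Separately, your $\pBG=\C$ coefficients are swapped: $c=1$ gives $T-R$, and $c=0$ gives $T+S-2R$. Moreover, $\pBG=\C$ forces $\dBGm=\dBGCD=\dBG=\B$, so only $c=1$ occurs there. In that case $\ALLC$ invades in stag hunts, not ``in every dilemma''. The conclusion survives, but not for the reason you give.
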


\begin{proposition}[Stability against $\ALLC$ for righteous systems with $\dGB=\G$ and $\dBG=\G$] \label{prop:stability-against-ALLC-dBG=G}
Let $\epsilon=0$ and $\chi \to 0$.
A righteous system $(p,d)$ with $\dGB=\G$ and $\dBG=\G$ that resists invasion attempts by $\ALLD$ also resists invasion attempts by $\ALLC$ across cooperative dilemmas if and only if $\dGBCC = \B$.
\end{proposition}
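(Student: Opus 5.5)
The plan is to compute the leading-order term of $\Delta w(\chi)=w(\chi)-\tilde w(\chi)$ for $\tilde p=\ALLC$ directly. Because $h_0=1$ in every relevant case, the zeroth-order terms cancel and the sign is decided at order $\chi$.

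\emph{Structural facts.} Since $(p,d)\in\Rset'$, we have $\pGG=\C$ and $\pGB=\D$. Since $\dBG=\G$, Corollary~\ref{cor:pBG=C} gives $\pBG=\C$; only $\pBB$ is left free. Together with $\dGB=\dBG=\dGG=\G$, Lemma~\ref{lem:partition-SA} places $(p,d)$ in $\AsetGGGG$ (if $\dBB=\G$) or in $\AsetBGGG$ (if $\dBB=\B$). In both cases Eq.~\eqref{gchi-explicit} gives $1-g(\chi)=\chi+O(\chi^2)$. For $\AsetBGGG$ this requires a short expansion of the square root, which I expect to give $g=1-\chi+O(\chi^2)$.

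\emph{Payoffs.} The resident payoffs are $\wGG=R$, $\wBG=v(\C,\D)=S$ and $\wGB=v(\D,\C)=T$, so Eq.~\eqref{wchi-explicit} yields
\[
w(\chi)=R-(1-g)(2R-T-S)+O(\chi^2).
\]
For $\ALLC$ the mutant payoffs are $\wGGm=R$, $\wGBm=v(\C,\pBG)=R$ and $\wBGm=v(\C,\pGB)=S$, with $\wBBm$ bounded. Since $\wBBm$ is multiplied by $(1-h)(1-g)=O(\chi^2)$, Eq.~\eqref{tildewchi-explicit} gives
\[
\tilde w(\chi)=R-(1-h)(R-S)+O(\chi^2).
\]

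\emph{Mutant assessment bits and reputation.} The mutant bits are $\dGGm=\dGGCC=\G$, $\dBGm=d(\B,\G,\C,\D)=\dBG=\G$ and $\dGBm=\dGBCC$, while $\dBBm$ is arbitrary but enters only at order $\chi^2$. I would use the fixed-point form $1-h=(1-\beta)/(1-\beta+\alpha)$ from the proof of Proposition~\ref{prop:consistent-rep-dominance}, with $\alpha\to1$ and $\beta$ given by Eq.~\eqref{beta-def}. This gives $1-\beta=\chi+(1-g)(1-\tilde\delta_{\G\B})$.
\begin{itemize}
\item If $\dGBCC=\G$, then $1-\beta\approx\chi$, so $1-h=\chi+O(\chi^2)$. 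Hence $\Delta w=\chi\,(T+S-2R+R-S)+O(\chi^2)=\chi(T-R)+O(\chi^2)$. This is negative in every stag hunt, so $\ALLC$ invades there.
\item If $\dGBCC=\B$, then $1-\beta\approx2\chi$, so $1-h=2\chi+O(\chi^2)$. Hence $\Delta w=\chi\,(T+S-2R+2R-2S)+O(\chi^2)=\chi(T-S)+O(\chi^2)$, which is positive across all cooperative dilemmas by Eq.~\eqref{T>S}.
\end{itemize}
This establishes both directions of the equivalence.

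\emph{Main obstacle.} The delicate step is making the first-order expansions uniform and rigorous: verifying that $1-g\sim\chi$ in the $\AsetBGGG$ class, and that every $\dBBm$-dependent contribution to $h$ and to $\tilde w$ is genuinely $O(\chi^2)$. I would handle both by writing $h$ explicitly via Eq.~\eqref{h*} and Taylor-expanding in $\chi$.
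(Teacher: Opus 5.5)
Your proposal is correct and follows essentially the paper's route: after forcing $\pBG=\C$ via $\ALLD$-resistance, both arguments expand $w(\chi)=R+(S+T-2R)\chi+O(\chi^2)$ and $\tilde{w}(\chi)=R-(1-h)(R-S)+O(\chi^2)$, where $1-h\approx\chi$ or $2\chi$ according to whether $\dGBm=\dGBCC$ is $\G$ or $\B$, and obtain leading terms $(T-R)\chi$ and $(T-S)\chi$. Your fixed-point derivation of $1-h$ simply makes explicit the expansions the paper tabulates. The only blemish is that exactly $1-\beta=g\chi+(1-g)(1-\tilde{\delta}_{\G\B})$; this differs from your expression only at $O(\chi^2)$, so the conclusion is unaffected.
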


\paragraph{Proof of Propositions~\ref{prop:stability-against-ALLC-dGB=G}, \ref{prop:stability-against-ALLC-dBG=B}, and~\ref{prop:stability-against-ALLC-dBG=G}.}
Since the resident is discriminating, the resident payoffs are
\begin{subequations}
 \begin{align}
  \wBG & = v(\pBG,\D) = \begin{cases}
      \vDD = P & \text{if } \pBG = \D \\
      \vDC = S & \text{if } \pBG = \C
   \end{cases}, \label{wBG-disc} \\
  \wGB & = v(\D,\pBG) = \begin{cases}
      \vDD = P & \text{if } \pBG = \D \\
      \vDC = T & \text{if } \pBG = \C
   \end{cases}, \\
    \wGG & = v(\pGG,\pGG) = \vCC = R,
 \end{align}
\end{subequations}
while the $\ALLC$ mutant always cooperates, so the mutant payoffs are
\begin{align*}
 \wBGm & = v(\pBGm,\pGB) = \vCD = S,\\
 \wGBm & = v(\pGBm,\pBG) = v(\C,\pBG) = \begin{cases}
      \vCD = S & \text{if } \pBG = \D \\
      \vCC = R & \text{if } \pBG = \C
   \end{cases},\\
 \wGGm & = v(\pGGm,\pGG) = \vCC = R.
\end{align*}

In both cases $g(\chi)\to1$ as $\chi\to0$, so $w(\chi)\to\wGG=R$; the resident resists $\ALLC$ exactly when the leading term of $\Delta w(\chi)=w(\chi)-\tilde{w}(\chi)$ is positive, which we obtain by expanding $w(\chi)$ and $\tilde{w}(\chi)$ as $\chi\to0$.
Because $g(\chi)$ takes different forms on $\AsetGXXG$ and on $\AsetBGGG\cup\AsetGGGG$ (Eq.~\eqref{gchi-explicit}), we treat cases (i) and (ii) separately.
Since $\pGGm=\pGG$ (Eqs.~\eqref{pGG=C} and~\eqref{pm-ALLC}) and $\dGG=\G$~\eqref{dGG=G}, we have
\begin{equation} \label{dGGm=G-ALLC}
 \dGGm = d(\G,\G,\pGGm,\pGG) = d(\G,\G,\pGG,\pGG) = \dGG = \G
\end{equation}
in both cases.

\begin{enumerate}
 \item Suppose that $(p,d) \in \AsetGXXG$.
The first-order Puiseux series expansion of $w(\chi)$~\eqref{wchi-explicit} is then
\begin{equation} \label{wchi-expansion-ALLC-AGXXG}
 w(\chi) = R + \left[ \wBG + \wGB - 2 R \right] \sqrt{\chi} + \BigO(\chi).
\end{equation}
Table~\ref{tab:expansion-tildew-ALLC-AGXXG} reports the corresponding expansion of $\tilde{w}(\chi)$ for each combination of the bits $\dBBm$, $\dBGm$, and $\dGBm$, using $\dGGm=\G$~\eqref{dGGm=G-ALLC}, $\wGGm=\vCC=R$, and $\wBGm=\vCD=S$.
The entries take three forms: $\tilde{w}(\chi)$ may have a leading term below $R$ (the rows equal to $S$ or $(S+R)/2$), so that the resident resists; or it may match the leading term $R$ of $w(\chi)$, so that the $\sqrt{\chi}$-coefficients decide (the two remaining forms).

\begin{table}
\caption{Puiseux-series expansion of $\tilde{w}(\chi)$ for invasion attempts by $\ALLC$ against resident systems in $(p,d)\in\AsetGXXG$. Rows enumerate all possible combinations of the mutant assessment bits $(\dBBm,\dBGm,\dGBm,\dGGm)$ consistent with $\dGGm=\G$ in Eq.~\eqref{dGGm=G-ALLC}. Entries are written using $\wGGm=\vCC=R$ and $\wBGm=\vCD=S$.}
\begin{center}
\begin{tabular}{c c c c | c}
 $\dBBm$ & $\dBGm$ & $\dGBm$ & $\dGGm$ & $\tilde{w}(\chi)$ \\
  \hline
 $\B$ & $\B$ & $\B$ & $\G$ & $S + \BigO(\sqrt{\chi})$ \\
 $\B$ & $\B$ & $\G$ & $\G$ & $(S+R)/2 + \BigO(\sqrt{\chi})$ \\
 $\B$ & $\G$ & $\B$ & $\G$ & $R + \left[S + \wGBm - 2 R \right]\sqrt{\chi} + \BigO(\chi)$ \\
 $\B$ & $\G$ & $\G$ & $\G$ & $R + \left[\wGBm - R \right] \sqrt{\chi} + \BigO(\chi)$ \\
 $\G$ & $\B$ & $\B$ & $\G$ & $(S+R)/2 + \BigO(\sqrt{\chi})$ \\
 $\G$ & $\B$ & $\G$ & $\G$ & $R + \left[S + \wGBm - 2 R \right]\sqrt{\chi} + \BigO(\chi)$ \\
 $\G$ & $\G$ & $\B$ & $\G$ & $R + \left[S + \wGBm - 2 R \right]\sqrt{\chi} + \BigO(\chi)$ \\
 $\G$ & $\G$ & $\G$ & $\G$ & $R + \left[\wGBm - R \right] \sqrt{\chi} + \BigO(\chi)$ \\
\end{tabular}
\end{center}
\label{tab:expansion-tildew-ALLC-AGXXG}
\end{table}

\begin{enumerate}
 \item The expansion of $\tilde{w}(\chi)$ is either $S + \BigO(\sqrt{\chi})$ or $(S+R)/2 + \BigO(\sqrt{\chi})$.
 In both cases, the leading term of $\tilde{w}(\chi)$ is strictly smaller than $R$, the leading term of $w(\chi)$ in Eq.~\eqref{wchi-expansion-ALLC-AGXXG}, because $R>S$ holds across cooperative dilemmas~\eqref{R>S}.
 Thus, the resident system $(p,d)$ resists invasion attempts by $\ALLC$.
 \item The expansion of $\tilde{w}(\chi)$ is given by
 $R + \left[S + \wGBm - 2 R \right]\sqrt{\chi} + \BigO(\chi)$.
 The leading term of $\Delta w(\chi)$ is
 \begin{equation} \label{wBG+wGB-wBGm-wGBm}
  \wBG + \wGB - S - \wGBm .
 \end{equation}
 If $\pBG=\D$, then $\wBG=\wGB=P$ and $\wGBm=S$, so Eq.~\eqref{wBG+wGB-wBGm-wGBm} reduces to $2(P-S)$, which is negative for snowdrift games.
 If $\pBG=\C$, then $\wBG=S$, $\wGB=T$, and $\wGBm=R$, so Eq.~\eqref{wBG+wGB-wBGm-wGBm} reduces to $T-R$, which is negative for stag hunts.
 In either case, the resident system is not universally stable.
 \item The expansion of $\tilde{w}(\chi)$ is given by $R + \left[\wGBm - R \right] \sqrt{\chi} + \BigO(\chi)$.
 The leading term of $\Delta w(\chi)$ is
 \begin{equation} \label{wBG+wGB-R-wGBm}
  \wBG + \wGB - R - \wGBm .
 \end{equation}
 If $\pBG=\D$, then $\wBG=\wGB=P$ and $\wGBm=S$, so Eq.~\eqref{wBG+wGB-R-wGBm} reduces to $2P-R-S$, which fails to be positive unless $P>(S+R)/2$; this condition is violated for all snowdrift games and for some prisoner's dilemmas and stag hunts.
 If $\pBG=\C$, then $\wBG=S$, $\wGB=T$, and $\wGBm=R$, so Eq.~\eqref{wBG+wGB-R-wGBm} reduces to $S+T-2R<0$ by condition~\eqref{2R>T+S}, so $\ALLC$ always invades.
 In either case, the resident system is not universally stable.
\end{enumerate}

Within the class $(p,d) \in \AsetGXXG$, the only case in which the resident system resists invasion is case~1 above.
Inspecting Table~\ref{tab:expansion-tildew-ALLC-AGXXG}, this holds if and only if
\begin{equation} \label{dBGm=B-ALLC}
 \dBGm = d(\B,\G,\pBGm,\pGB) = \dBGCD = \B
\end{equation}
and either $\dGBm=d(\G,\B,\C,\pBG)=\B$ or $\dBBm=d(\B,\B,\C,\pBB)=\B$ also holds.
Furthermore, condition~\eqref{dBGm=B-ALLC} forces $\dBG=\B$: if instead $\dBG=\G$, then either $\pBG=\C$ (giving $\dBGm=\dBG=\G$, contradicting condition~\eqref{dBGm=B-ALLC}) or $\pBG=\D$ (giving $\dBGDD=\G$, contradicting the $\ALLD$-stability condition $\dBGDD=\B$ from Propositions~\ref{prop:stability-against-ALLD-dGB=G} and~\ref{prop:stability-against-ALLD-dGB=B}).
Hence, within $\AsetGXXG$, resistance against $\ALLC$ requires $(p,d) \in \AsetGBGG$, i.e., $\dGB=\G$.

\item Suppose that $(p,d) \in \AsetBGGG \cup \AsetGGGG$.
Then $\dBG=\G$ holds.
By Corollary~\ref{cor:pBG=C}, this forces $\pBG=\C$, and therefore $\wBG=\wBGm=\vCD=S$, $\wGB=\vDC=T$, and $\wGBm=\vCC=R$.
Using also Eq.~\eqref{limchi0w}, the first-order Taylor series expansion of $w(\chi)$ in Eq.~\eqref{wchi-explicit} reduces to
\begin{equation*}
 w(\chi) = R + \left( S + T - 2 R \right) \chi + \BigO(\chi^2).
\end{equation*}

Table~\ref{tab:expansion-tildew-ALLC-A*GGG} reports the corresponding expansion of $\tilde{w}(\chi)$ for each combination of $\dBBm$ and $\dGBm$, using $\dGGm=\G$~\eqref{dGGm=G-ALLC} and $\dBGm=\dBG=\G$ (since $\pBG=\C$).

\begin{table}
\caption{Taylor-series expansion of $\tilde{w}(\chi)$ for invasion attempts by $\ALLC$ against resident systems in $(p,d)\in\AsetBGGG \cup \AsetGGGG$. Rows enumerate all possible combinations of the mutant assessment bits $(\dBBm,\dBGm,\dGBm,\dGGm)$ consistent with $\dBGm=\dGGm=\G$.}
\begin{center}
\begin{tabular}{c c c c | c}
 $\dBBm$ & $\dBGm$ & $\dGBm$ & $\dGGm$ & $\tilde{w}(\chi)$ \\
  \hline
 $\B$ & $\G$ & $\B$ & $\G$ & $R + 2 \left(S - R \right)\chi + \BigO(\chi^2)$ \\
 $\B$ & $\G$ & $\G$ & $\G$ & $R + (S-R) \chi + \BigO(\chi^2)$ \\
 $\G$ & $\G$ & $\B$ & $\G$ & $R + 2 \left(S - R \right)\chi + \BigO(\chi^2)$ \\
 $\G$ & $\G$ & $\G$ & $\G$ & $R + (S-R) \chi + \BigO(\chi^2)$ \\
\end{tabular}
\end{center}
\label{tab:expansion-tildew-ALLC-A*GGG}
\end{table}

Both entries of Table~\ref{tab:expansion-tildew-ALLC-A*GGG} match the leading term $R$ of $w(\chi)$, so the $\chi$-coefficients decide; there are two cases.

\begin{enumerate}
 \item The expansion of $\tilde{w}(\chi)$ is given by $R + (S-R) \chi + \BigO(\chi^2)$. In this case, the leading term of $\Delta w(\chi)$ is $T-R$, which is negative for stag hunts. Thus, the resident system can be invaded by $\ALLC$ for that class of cooperative dilemmas, and hence is not universally stable.
 \item The expansion of $\tilde{w}(\chi)$ is given by
 $R + 2(S-R)\chi + \BigO(\chi^2)$.
 In this case, the leading term of $\Delta w(\chi)$ is given by $T-S$, which is positive across cooperative dilemmas since $T>S$ always holds.
 Hence, the resident system $(p,d)$ resists invasion attempts by $\ALLC$.
\end{enumerate}

Within the class $(p,d) \in \AsetBGGG \cup \AsetGGGG$, the resident system resists invasion only in case~2 above.
By inspection of Table~\ref{tab:expansion-tildew-ALLC-A*GGG}, this holds if and only if
\begin{equation*}
 \dGBm = d(\G,\B,\pGBm,\pBG) = \dGBCC = \B.
\end{equation*}
\end{enumerate}

Together, the two cases establish the three propositions: case~(i) rules out $\AsetGGBG$ --- so resistance requires $\dGB=\G$ (Proposition~\ref{prop:stability-against-ALLC-dGB=G}) --- and characterizes the $\dBG=\B$ systems of $\AsetGBGG$ (Proposition~\ref{prop:stability-against-ALLC-dBG=B}), while case~(ii) characterizes the systems with $\dGB=\dBG=\G$ via $\dGBCC=\B$ (Proposition~\ref{prop:stability-against-ALLC-dBG=G}).
\qed

We denote by $\Rset''$ the subset of $\Rset'$ that resists invasion by $\ALLC$ across cooperative dilemmas, as characterized by Propositions~\ref{prop:stability-against-ALLC-dGB=G},~\ref{prop:stability-against-ALLC-dBG=B} and~\ref{prop:stability-against-ALLC-dBG=G}.
Proposition~\ref{prop:stability-against-ALLC-dGB=G} forces $\dGB=\G$ for all systems in $\Rset''$, and the remaining conditions split on $\dBG$:
\begin{align} \label{R''}
 \Rset'' &=
  \left\{ (p,d) \in \Rset' \,\middle|\, \dGB=\G,\ \dBG=\B,\ \dBGCD=\B, \right. \nonumber \\
  &\hspace{6.5em} \left. \bigl[d(\G,\B,\C,\pBG)=\B \text{ or } d(\B,\B,\C,\pBB)=\B\bigr] \right\} \nonumber \\
  &\quad\cup
  \left\{ (p,d) \in \Rset' \,\middle|\, \dGB=\G,\ \dBG=\G,\ \dGBCC=\B \right\} .
\end{align}

As for $\Rset'$, we locate the leading-eight and consistent discriminating systems within $\Rset''$.
Resistance to $\ALLC$ now removes half of the leading eight, whereas the consistent discriminating families split exactly as they did for $\Rset'$.

\begin{proposition}[Leading-eight systems and $\Rset''$]
\label{prop:L-R''-membership}
A leading-eight system belongs to $\Rset''$ if and only if $\dGBC=\B$.
The systems satisfying this condition are $\{\text{L2}, \text{L5}, \text{L6}, \text{L8}\}$; the remaining four systems $\{\text{L1}, \text{L3}, \text{L4}, \text{L7}\}$ belong to $\Rset'\setminus\Rset''$.
\end{proposition}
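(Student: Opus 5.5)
Since $\Lset \subset \Rset'$ (Proposition~\ref{prop:L-subset-R'}), membership in $\Rset''$ reduces to checking the extra conditions in Eq.~\eqref{R''}. The plan is to first compute the shorthand bits common to every leading-eight system, then decide which branch of Eq.~\eqref{R''} applies, and finally read off the condition on $\dGBC$.

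From Eq.~\eqref{dBG=dGB=dGG=G}, every leading-eight system has $\dGB=\G$ and $\dBG=\G$. The first branch of Eq.~\eqref{R''} requires $\dBG=\B$, so it is excluded, and only the second branch remains. Membership in $\Rset''$ is therefore equivalent to $\dGBCC=\B$.

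Because leading-eight norms are third-order of type $\af\rf\rc$, they ignore the co-player's action. Hence $\dGBCC=\dGBCD=\dGBC$, consistent with the constraint $\dGBCD=\dGBCC$ in Eqs.~\eqref{LI-set}, \eqref{LII-set} and~\eqref{LIII-set}. This gives the stated criterion: a leading-eight system lies in $\Rset''$ if and only if $\dGBC=\B$.

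It remains to read off the $\dGBC$ values from the explicit definitions in Section~\ref{subsec:leading8}:
\begin{itemize}
\item L2, L5, L6 and L8 have $\dGBC=\B$, so they belong to $\Rset''$.
\item L1, L3, L4 and L7 have $\dGBC=\G$, so they fail the condition; since they lie in $\Rset'$, they belong to $\Rset'\setminus\Rset''$.
\end{itemize}
No step is a real obstacle. The only care needed is to note that Proposition~\ref{prop:stability-against-ALLC-dBG=G} applies because $\dBG=\G$, and to check that the co-player-action independence legitimately identifies $\dGBCC$ with $\dGBC$.
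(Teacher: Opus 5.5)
Your proposal is correct and follows essentially the same route as the paper's proof. Both use $\Lset\subset\Rset'$ and $\dBG=\dGB=\G$ to place every leading-eight system in the second branch of Eq.~\eqref{R''}, reduce $\dGBCC=\B$ to $\dGBC=\B$ because the norm ignores the co-player's action, and then read the value of $\dGBC$ off the eight norms.
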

\begin{proof}
By Proposition~\ref{prop:L-subset-R'}, all leading-eight systems belong to $\Rset'$ with $\dGB=\G$.
Since all have $\dBGC=\G$ (Eqs.~\eqref{LI-set}--\eqref{LIII-set}), it follows that $\dBG=\dBGCD=\G$, so every leading-eight system falls into the second branch of Eq.~\eqref{R''}, which requires $\dGBCC=\B$.
Since the norm ignores the co-player's action, $\dGBCC=\dGBC$, so the condition $\dGBCC=\B$ is just $\dGBC=\B$.
Inspecting the definitions: L2, L5, L6, and L8 have $\dGBC=\B$; L1, L3, L4, and L7 have $\dGBC=\G$.
\end{proof}

\begin{proposition}[Consistent discriminating systems and $\Rset''$]
\label{prop:QD-R''-membership}
$\QCDisc \cup \QDisc \subset \Rset''$, while $\QPDisc \cap \Rset'' = \QODisc \cap \Rset'' = \emptyset$.
\end{proposition}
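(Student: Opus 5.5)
The second half of the statement is immediate. Proposition~\ref{prop:QD-R'-membership} gives $\QPDisc \cap \Rset' = \QODisc \cap \Rset' = \emptyset$. Since $\Rset'' \subseteq \Rset'$ by construction (Eq.~\eqref{R''}), it follows that $\QPDisc \cap \Rset'' = \QODisc \cap \Rset'' = \emptyset$.

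For the first half, Proposition~\ref{prop:QD-R'-membership} already places $\QCDisc \cup \QDisc$ in $\Rset'$. Its proof also shows that these systems satisfy $\dGB=\G$. So it only remains to decide which branch of Eq.~\eqref{R''} applies and to check that branch's extra bit conditions against the fixed bits in Eqs.~\eqref{QCDisc-set} and~\eqref{QDisc-set}.

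Both $\CDISC$ ($\CCDC$) and $\DISC$ ($\DCDC$) have $\pBG=\C$ and $\pGB=\D$. Hence
\[
\dBG = d(\B,\G,\pBG,\pGB) = \dBGCD .
\]
Both families fix $\dBGCD=\G$, so $\dBG=\G$ and we are in the second branch of Eq.~\eqref{R''}. That branch requires $\dGBCC=\B$.

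This bit is constrained by consistency in both families. In the shorthand of \eqref{notdr1r2-shorthand}, it is
\[
\notdGB = d(\G,\B,\neg\pGB,\pBG) = d(\G,\B,\C,\C),
\]
which consistency sets to $\B$. Indeed, both Eq.~\eqref{QCDisc-set} and Eq.~\eqref{QDisc-set} list $\dGBCC=\B$ explicitly. Hence every member of $\QCDisc\cup\QDisc$ satisfies the second branch of Eq.~\eqref{R''} and lies in $\Rset''$. Alternatively, one can appeal directly to Proposition~\ref{prop:stability-against-ALLC-dBG=G}.

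There is no real obstacle. The only care needed is in correctly identifying which shorthand bits ($\dBG$, $\dGBm$ for the $\ALLC$ mutant) coincide with which explicit entries $d(\cdot,\cdot,\cdot,\cdot)$ of the consistent norms. These identifications depend on $\pBG=\C$, and I would double-check them against Fig.~\ref{fig:consistent-discriminating}.
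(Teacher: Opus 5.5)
Your proposal is correct and follows the paper's proof essentially step for step. The emptiness claims are inherited from Proposition~\ref{prop:QD-R'-membership} via $\Rset''\subseteq\Rset'$. Membership then follows because $\pBG=\C$ gives $\dBG=\dBGCD=\G$ and $\dGB=\dGBDC=\G$, and together with the fixed bit $\dGBCC=\B$ this places both families in the second branch of Eq.~\eqref{R''}.
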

\begin{proof}
Since $\QPDisc \cap \Rset' = \QODisc \cap \Rset' = \emptyset$ (Proposition~\ref{prop:QD-R'-membership}) and $\Rset''\subset\Rset'$, the same holds for $\Rset''$.
For $\QCDisc$ and $\QDisc$: both have $\pBG=\C$, so $\dBG=d(\B,\G,\C,\D)=\dBGCD$ and $\dGB=d(\G,\B,\D,\C)=\dGBDC$; Eqs.~\eqref{QCDisc-set} and~\eqref{QDisc-set} give $\dBGCD=\G$, $\dGBDC=\G$, and $\dGBCC=\B$, whence $\dBG=\dGB=\G$ and the second branch of Eq.~\eqref{R''} is satisfied.
\end{proof}

The characterization~\eqref{R''} likewise determines the cardinality of $\Rset''$.

\begin{proposition}[Cardinality of $\Rset''$] \label{prop:cardinality-R''}
The set $\Rset''$ contains $|\Rset''| = \numprint{4608}$ moral systems.
\end{proposition}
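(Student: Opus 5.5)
We count $\Rset''$ exactly as in the proofs of Propositions~\ref{prop:cardinality-R} and~\ref{prop:cardinality-R'}. For each action rule and each branch, we list the bits of $d$ that the defining conditions fix or restrict, merge those that coincide, and multiply the number of admissible assignments of the constrained bits by $2^k$, where $k$ is the number of free bits.

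Every system in $\Rset''$ lies in $\Rset'$ with $\dGB=\G$ (Proposition~\ref{prop:stability-against-ALLC-dGB=G}). Hence only the first branch of Eq.~\eqref{R'} contributes. The common constraints are therefore $\pGG=\C$, $\pGB=\D$, $\dGGCC=\G$, $\dGGDC=\B$, $\dBGDD=\B$ and $\dGB=\G$, together with the righteousness condition $\dBB+\dBG\ge1$. We split over the four action rules $(\pBB,\pBG)$ and over the two branches of Eq.~\eqref{R''}. These branches are disjoint because they differ in $\dBG$.

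First take $\pBG=\D$. Then $\dBG=\dBGDD=\B$, so the second branch is empty. Righteousness forces $\dBB=\G$, and $\dGB=\dGBDD=\G$. The first branch adds $\dBGCD=\B$ and the disjunction $d(\G,\B,\C,\D)=\B$ or $d(\B,\B,\C,\pBB)=\B$.
\begin{itemize}
\item For $\pBB=\C$ we have $\dBB=\dBBCD=\G$, so the disjunction forces $\dGBCD=\B$. Seven bits are fixed, giving $2^9=512$ rules.
\item For $\pBB=\D$ we have $\dBB=\dBBDD=\G$ and six bits are fixed. The disjunction admits $3$ of the $4$ assignments of $(\dGBCD,\dBBCD)$, giving $3\cdot2^8=768$ rules.
\end{itemize}

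Next take $\pBG=\C$. Then $\dBG=\dBGCD$ and $\dGB=\dGBDC=\G$. In the second branch, $\dBGCD=\G$ and $\dGBCC=\B$, and $\dBB$ is unconstrained. Six bits are fixed, giving $2^{10}=1024$ rules for each choice of $\pBB$, hence $2048$ in total. In the first branch, $\dBGCD=\B$ and righteousness forces $\dBB=\G$.
\begin{itemize}
\item For $\pBB=\C$ we have $\dBB=\dBBCC=\G$, so the disjunction forces $\dGBCC=\B$. Seven bits are fixed, giving $512$ rules.
\item For $\pBB=\D$ we have $\dBB=\dBBDC=\G$ and six bits are fixed. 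The disjunction admits $3$ of the $4$ assignments of $(\dGBCC,\dBBCD)$, giving $768$ rules.
\end{itemize}

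Summing gives $512+768+2048+512+768=\numprint{4608}$. I would collect these entries in a table analogous to Table~\ref{tab:cardinality-R'}.

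The main obstacle is bookkeeping rather than any conceptual step. The risk is in correctly identifying which shorthand bits coincide: $\dBB$ with $d(\B,\B,\C,\pBB)$ when $\pBB=\C$, and $\dGB$ with $d(\G,\B,\C,\pBG)$ or $\dGBDD$. Each coincidence collapses the disjunction or forces extra bits. I would check every case against an exhaustive enumeration, as was done for the norm counts in Section~\ref{subsec:norm-counting}.
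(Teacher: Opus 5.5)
Your proof is correct and follows the paper's own route. Like the paper, you restrict to the $\dGB=\G$ branch of $\Rset'$, split by $(\pBB,\pBG)$ and by the value of $\dBG$, and count fixed versus free bits; your entries ($512$, $768$, $2\times\numprint{1024}$, $512$, $768$) reproduce Table~\ref{tab:cardinality-R''} exactly, whereas the paper works out only one case and tabulates the rest.

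One labeling slip needs fixing, since $\dBB=d(\B,\B,\pBB,\pBB)$. In the case $(\pBB,\pBG)=(\C,\D)$ you should write $\dBB=\dBBCC$, not $\dBBCD$; this coincidence with $d(\B,\B,\C,\pBB)$ is what collapses the disjunction and yields $512$ rather than $768$. In the first branch with $(\pBB,\pBG)=(\D,\C)$ you should write $\dBB=\dBBDD$, not $\dBBDC$. Neither slip affects your counts.
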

\begin{proof}
We count the admissible systems as before, recalling that a condition leaving $k$ bits of $d$ unconstrained is satisfied by $2^k$ assessment rules.
Both branches of Eq.~\eqref{R''} require $\dGB=\G$, so $\Rset''$ lies within the $\dGB=\G$ branch of $\Rset'$; on top of the conditions imposed there it adds constraints that split on $\dBG$. For instance, with $\pBG=\C$ and $\dBG=\G$: since $\pBG=\C$ we have $\dBG=d(\B,\G,\C,\D)$, so $\dBG=\G$ and $\dGBCC=\B$ fix two further bits, the majority condition $\dBB+\dBG+\dGB\ge2$ is then automatic and $\dBB$ is free; six bits are fixed and ten free, giving $2^{10}=\numprint{1024}$. The $\dBG=\G$ branch is moreover empty when $\pBG=\D$, since then $\dBG=\dBGDD$, which $\Rset'$ already forces to $\B$. Table~\ref{tab:cardinality-R''} collects all cases; they sum to $|\Rset''|=\numprint{4608}$.
\end{proof}

A direct enumeration of all $14 \times 2^{16}$ conditional systems confirms the cardinalities $|\Rset|$, $|\Rset'|$, and $|\Rset''|$ reported in Propositions~\ref{prop:cardinality-R}--\ref{prop:cardinality-R''} and in Fig.~\ref{fig:venn_analytic_result}.

\begin{table}
\caption{Number of assessment rules $d$ yielding a system in $\Rset''$, for each of the four action rules $(\pBB,\pBG)$, split by the two branches ($\dBG=\G$ and $\dBG=\B$) of Eq.~\eqref{R''}.}
\begin{center}
\begin{tabular}{c c | c c | c}
 $\pBB$ & $\pBG$ & $\dBG=\G$ & $\dBG=\B$ & total \\
 \hline
 $\C$ & $\C$ & \numprint{1024} & $512$ & \numprint{1536} \\
 $\C$ & $\D$ & $0$ & $512$ & $512$ \\
 $\D$ & $\C$ & \numprint{1024} & $768$ & \numprint{1792} \\
 $\D$ & $\D$ & $0$ & $768$ & $768$ \\
 \hline
 \multicolumn{4}{r|}{$|\Rset''|=$} & \numprint{4608} \\
\end{tabular}
\end{center}
\label{tab:cardinality-R''}
\end{table}

\subsubsection{Invasion attempts by a discriminating mutant}
\label{subsubsec:invasion-disc}

We now determine which systems in $\Rset''$ also resist invasion by a discriminating mutant $\tilde{p} \neq p$ across cooperative dilemmas. Such a mutant satisfies
\begin{equation} \label{pm-disc}
  \pGGm = \C \quad \text{and} \quad \pGBm = \D.
\end{equation}
The main results are the following two propositions, which we prove in turn below.

\begin{proposition}[Invasion by discriminating mutants for righteous systems with $\dBG=\B$ or $\dBB=\B$] \label{prop:dBG=B-dBB=B-invaded}
Let $\epsilon=0$ and $\chi \to 0$.
A righteous system $(p,d)$ with $\dBG=\B$ or $\dBB=\B$ that resists invasion attempts by $\ALLD$ and $\ALLC$ cannot resist invasion attempts by at least one alternative discriminating strategy.
\end{proposition}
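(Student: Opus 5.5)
The plan is to exhibit, in each case, an explicit discriminating mutant obtained from $p$ by flipping one bit prescribed in bad standing. Every system in $\Rset''$ lies in $\Rset'$, so $p$ is discriminating ($\pGB=\D$, $\pGG=\C$). Any mutant that agrees with $p$ on $\pGB$ and $\pGG$ is therefore discriminating as well.

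The flip is chosen so that the mutant's four assessment bits $\dBBm,\dBGm,\dGBm,\dGGm$ equal the resident's, or are at least as favourable. When they are equal, $\tilde\delta_{\rf\rc}=\delta_{\rf\rc}$ at $\epsilon=0$. Exactly as in case~1 of the proof of Proposition~\ref{prop:inconsistent-rep-nondominance}, this gives $h(\chi)=g(\chi)$. The payoff difference $\tilde w(\chi)-w(\chi)$ then comes only from the reputation pair at which the mutant acts differently, weighted by that pair's equilibrium frequency. It remains to show that this difference is positive in some class of cooperative dilemma, which is enough to deny universal stability.

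First I would sort the hypotheses using Eq.~\eqref{R''}. Every system in $\Rset''$ has $\dGB=\G$, and self-approval (Proposition~\ref{prop:self-approving}) forces $\dGG=\G$ and $\dBB+\dBG+\dGB\ge2$. This leaves exactly two cases.

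\emph{Case (a): $\dBG=\B$.} Then $\dBB=\G$, so $(p,d)\in\AsetGBGG$ and $1-g(\chi)\sim\sqrt{\chi}$ by Eq.~\eqref{g-SA-1**1}. Take $\tilde p$ equal to $p$ except $\tilde{\mathsf{p}}_{\B\G}=\neg\pBG$. Against a good resident, the resident co-player defects ($\pGB=\D$), so
\[
\dBGm=d(\B,\G,\neg\pBG,\D).
\]
Both candidate values of this bit, $\dBGDD$ and $\dBGCD$, equal $\B$: the first by the $\Rset'$ conditions, the second by the first branch of Eq.~\eqref{R''}. Hence $\dBGm=\B=\dBG$. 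The remaining mutant bits coincide with the resident's trivially, because the mutant acts as $p$ in those states. So $h=g$, and
\[
\tilde w-w=g(1-g)\bigl[v(\neg\pBG,\D)-v(\pBG,\D)\bigr].
\]
This is of order $\sqrt{\chi}$. If $\pBG=\C$, the bracket is $P-S$, which is positive in prisoner's dilemmas and stag hunts. If $\pBG=\D$, it is $S-P$, which is positive in snowdrift games. Either way the mutant invades in some class.

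\emph{Case (b): $\dBG=\G$ and $\dBB=\B$.} Then $(p,d)\in\AsetBGGG$, so $1-g=\chi+O(\chi^2)$. Corollary~\ref{cor:pBG=C} gives $\pBG=\C$. Take $\tilde p$ flipping only $\pBB$. Then $\dBGm=\dBG=\G$, $\dGBm=\dGB=\G$, $\dGGm=\G$, and $\dBBm=d(\B,\B,\neg\pBB,\pBB)$, which takes one of two values.

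If $\dBBm=\B$, all four mutant bits match the resident's. Then $h=g$, and
\[
\tilde w-w=(1-g)^2\bigl[v(\neg\pBB,\pBB)-v(\pBB,\pBB)\bigr],
\]
which is of order $\chi^2$. If $\pBB=\C$, the bracket is $T-R>0$ in prisoner's dilemmas and snowdrift games. If $\pBB=\D$, it is $S-P>0$ in snowdrift games.

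If $\dBBm=\G$, all four mutant bits equal $\G$. Then $h=1-\chi$, which exceeds $g$ by order $\chi^2$; this is the step I would double-check by expanding Eq.~\eqref{g-SA-0111}. I would expand $\tilde w-w$ to order $\chi^2$ and expect two contributions. The first is the same bracket as above, weighted by $\chi^2$. The second is the extra good-standing mass $h-g$ times the gain from being good rather than bad against a good resident, namely $R-S>0$. Both contributions are nonnegative in the classes just listed, so invasion again follows. The remaining $O(\chi^2)$ terms must not cancel these; I expect this bookkeeping to be the main obstacle.

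Combining the two cases proves Proposition~\ref{prop:dBG=B-dBB=B-invaded}.
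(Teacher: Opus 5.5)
Your proposal is correct and follows the paper's proof essentially step for step. It uses the same split into $\AsetGBGG$ and $\AsetBGGG$, the same single-bit-flip discriminating mutants (flipping $\pBG$, resp.\ $\pBB$), and the same sign conditions; the paper reads these off Puiseux/Taylor expansions where you use the exact identity $h=g$. The bookkeeping you flag for $\dBBm=\G$ is harmless, because exactly $\tilde w-w=(h-g)\bigl[(1-g)\omega_{GB}+g\,\omega_{GG}-(1-g)\omega_{BB}-g\,\omega_{BG}\bigr]+(1-h)(1-g)(\tilde\omega_{BB}-\omega_{BB})$ with $h-g=(1-2\chi)(1-g)^2$ and $1-g=\chi+O(\chi^2)$. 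Hence $\tilde w-w=\chi^2(R-S+\tilde\omega_{BB}-\omega_{BB})+O(\chi^3)$, i.e.\ $(T-S)\chi^2$ or $(R-P)\chi^2$, which is positive in every dilemma and matches the paper's entries $S-T$ and $P-R$ for $\Delta w$ in Table~\ref{tab:Deltaw-disc-ABGGG}.
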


Propositions~\ref{prop:stability-against-ALLC-dGB=G} and~\ref{prop:dBG=B-dBB=B-invaded} together imply that $(p,d) \in \Rset \cap \left(\AsetGGBG \cup \AsetGBGG \cup \AsetBGGG \right)$ cannot be universally stable; among the righteous classes, this leaves only $\AsetGGGG$, which the next proposition characterizes.

\begin{proposition}[Stability against discriminating mutants for righteous systems in $\AsetGGGG$] \label{prop:stability-against-alternative-disc}
Let $\epsilon=0$ and $\chi \to 0$.
A righteous system $(p,d)$ with $\dBB=\dBG=\dGB=\dGG=\G$ that resists invasion attempts by $\ALLD$ and $\ALLC$ also resists invasion attempts by $\tilde{p} \in \left\{\DISC,\CDISC\right\}$ with $\tilde{p} \neq p$ across cooperative dilemmas if and only if $\pBB = \C$ and $\dBBDC = \B$.
\end{proposition}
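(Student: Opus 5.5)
The plan is to first restrict the possible residents, and then compare $w(\chi)$ and $\tilde w(\chi)$ to second order in $\chi$.

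\textbf{Restricting the resident.} Since $(p,d)\in\AsetGGGG$ has $\dBG=\G$ and resists $\ALLD$, Corollary~\ref{cor:pBG=C} gives $\pBG=\C$. Together with $\pGB=\D$ and $\pGG=\C$ from $\Rset'$, this leaves $p\in\{\DISC,\CDISC\}$, so the relevant mutant is the other member of this pair. Resident and mutant then differ only at the reputation pair $\B\B$. Hence the mutant assessment bits satisfy $\dBGm=\dBG=\G$, $\dGBm=\dGB=\G$ and $\dGGm=\dGG=\G$. The only possibly different bit is $\dBBm=d(\B,\B,\pBBm,\pBB)$, which is one of the deviation bits $\notdBB$. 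Likewise $\wrfrc$-type payoffs agree, $\tilde\omega_{\rf\rc}=\omega_{\rf\rc}$, except at $\B\B$. By Lemma~\ref{lem:partition-SA}, $g(\chi)=1-\chi$.

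\textbf{Computing $h(\chi)$ and the payoff gap.} I substitute into the formula for $h(\chi)$ in Section~\ref{subsubsec:preliminaries-stability}.
\begin{itemize}
\item If $\dBBm=\G$, all $\tilde\delta_{\rf\rc}$ equal the resident $\delta_{\rf\rc}$, so $h=g=1-\chi$.
\item If $\dBBm=\B$, then $\tilde\delta_{\B\B}=\chi$ and the other three equal $1-\chi$. A short computation gives $h=1-\chi/(1-\chi+2\chi^2)$, so $1-h=\chi+\chi^2+\BigO(\chi^3)$.
\end{itemize}
Next I write $\Delta w=w-\tilde w$ using \eqref{wchi-explicit} and \eqref{tildewchi-explicit}, grouping by the reputation pair of the focal player. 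Because $h-g=\BigO(\chi^2)$, every term is $\BigO(\chi^2)$. The main obstacle is this bookkeeping: I must keep the $\chi^2$ coefficients exactly, namely the extra $\chi^2$ mass that $h$ shifts from $\G\G$ to $\B\G$, while the $\B\B$ difference enters only at order $(1-h)(1-g)\sim\chi^2$.

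\textbf{Case $p=\DISC$, mutant $\CDISC$.} Here $\dBBm=\dBBCD$ and $\wBB=P$, $\wBBm=S$.
\begin{itemize}
\item If $\dBBCD=\G$, then $\Delta w=\chi^2(P-S)+\BigO(\chi^3)$, which is negative in snowdrift games.
\item If $\dBBCD=\B$, the extra $\chi^2$ mass moves from $\wGGm=R$ to $\wBGm=S$. Collecting terms, I expect $\Delta w=\chi^2(R+P-2S)+\BigO(\chi^3)$, which matches the two-round heuristic of \ref{MT-fig:cdisc_invades} and is negative in snowdrift games with $2S>R+P$.
\end{itemize}
In either case $\DISC$ is invaded somewhere, so $\pBB=\C$ is necessary.

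\textbf{Case $p=\CDISC$, mutant $\DISC$.} Here $\dBBm=\dBBDC$ and $\wBB=R$, $\wBBm=T$.
\begin{itemize}
\item If $\dBBDC=\G$, then $\Delta w=\chi^2(R-T)+\BigO(\chi^3)$, which is negative in prisoner's dilemmas and snowdrift games. So $\dBBDC=\B$ is necessary.
\item If $\dBBDC=\B$, the same $\chi^2$ correction yields $\Delta w=\chi^2(2R-T-S)+\BigO(\chi^3)$. This is positive in every dilemma by \eqref{2R>T+S}, which proves sufficiency.
\end{itemize}
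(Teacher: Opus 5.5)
Your proposal is correct and follows essentially the same route as the paper. You reduce to $p\in\{\mathrm{DISC},\mathrm{CDISC}\}$ via Corollary~\ref{cor:pBG=C}, use $g(\chi)=1-\chi$, note that only the mutant's $BB$ assessment bit and $BB$ payoff can differ, and expand $\Delta w(\chi)$ to order $\chi^2$ in the four resulting cases; this gives exactly the paper's leading coefficients $P-S$, $R-T$, $R+P-2S$ and $2R-T-S$, and your explicit formula $1-h=\chi/(1-\chi+2\chi^2)$ just spells out the step the paper summarizes as computing $\tilde w(\chi)$ and subtracting.
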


\paragraph{Preliminary observations.}
Since both $p$ and $\tilde{p}$ are discriminating, the following equalities hold:
\begin{subequations} \label{wGGm-wGBm-discs}
\begin{align}
 \wGGm & = \wGG = R,\\
 \wGBm & = \wGB = v(\pGB,\pBG) = v(\D,\pBG).
\end{align}
\end{subequations}
Moreover, $\pGGm=\pGG$ and $\pGBm=\pGB$, so from $\dGG=\G$~\eqref{dGG=G} and $\dGB=\G$ (Proposition~\ref{prop:stability-against-ALLC-dGB=G}) we obtain
\begin{equation} \label{dGGm=G-discs}
 \dGGm = d(\G,\G,\pGGm,\pGG) = d(\G,\G,\pGG,\pGG) = \dGG = \G
\end{equation}
and
\begin{equation} \label{dGBm=G-discs}
 \dGBm = d(\G,\B,\pGBm,\pBG) = d(\G,\B,\pGB,\pBG) = \dGB = \G.
\end{equation}
These two equalities, $\dGGm=\dGBm=\G$, hold throughout the analysis below. The mutant assessment bits $\dBGm$ and $\dBBm$ depend on which class of systems is considered, and are determined case by case.

\begin{proof}[Proof of Proposition~\ref{prop:dBG=B-dBB=B-invaded}]
We show that for each of the classes $\AsetGBGG$ and $\AsetBGGG$ there exists a discriminating mutant that invades for some cooperative dilemma.

\medskip

\noindent\textbf{Case $\boldsymbol{\AsetGBGG}$ (i.e., $\boldsymbol{\dBG=\B}$).}
The Puiseux-series expansion of $w(\chi)$~\eqref{wchi-explicit} is
\begin{equation} \label{wchi-expansion-GBGG}
 w(\chi) = \wGG + \left[ \wBG+\wGB-2\wGG\right] \sqrt{\chi} + \BigO(\chi) .
\end{equation}
By Propositions~\ref{prop:stability-against-ALLD-dGB=G} and~\ref{prop:stability-against-ALLC-dBG=B}, $\dBGDD = \dBGCD = \B$ holds. Hence, regardless of $\pBGm$,
\begin{equation} \label{dBGm=B-discs-AGBGG}
 \dBGm = d(\B,\G,\pBGm,\pGB) = \B.
\end{equation}
Consider the discriminating mutant $\tilde{p}$ with $\pBGm \neq \pBG$ and $\pBBm = \pBB$.
Then $\dBBm = \dBB = \G$, and the mutant assessment bits are $(\dBBm,\dBGm,\dGBm,\dGGm)=(\G,\B,\G,\G)$, so $\tilde{w}(\chi)$ has the same expansion form as $w(\chi)$.
The leading term of $\Delta w(\chi)$ reduces to
\begin{equation} \label{wBG-wBGm-discs-AGBGG}
 \wBG - \wBGm,
\end{equation}
which equals $P-S$ if $\pBG=\D$ (negative for snowdrift games) or $S-P$ if $\pBG=\C$ (negative for prisoner's dilemmas and stag hunts), by Eq.~\eqref{wBG-disc}.
Hence $(p,d) \in \AsetGBGG$ cannot be universally stable.

\medskip

\noindent\textbf{Case $\boldsymbol{\AsetBGGG}$ (i.e., $\boldsymbol{\dBG=\G}$, $\boldsymbol{\dBB=\B}$).}
Since $\dBG=\G$, Corollary~\ref{cor:pBG=C} gives $\pBG=\C$.
Consider a discriminating mutant $\tilde{p}$ with $\pBGm = \pBG$ but $\pBBm \neq \pBB$.
Then
\begin{equation} \label{dBGm=G-discs-A*GGG}
 \dBGm = d(\B,\G,\pBG,\pGB) = \dBG = \G
\end{equation}
and
\begin{equation} \label{wBGm-discs-A*GGG}
 \wBGm = \wBG = \vCD = S.
\end{equation}
The second-order expansion of $w(\chi)$ is
\begin{align}
 w(\chi) & = \wGG + \left[ \wBG+\wGB-2\wGG\right] \chi \nonumber \\
 & + \left[\wBB - \wGG \right] \chi^2 + \BigO(\chi^3) . \label{wchi-expansion-BGGG}
\end{align}
Since $\dGGm=\dGBm=\dBGm=\G$, the mutant assessment bits are $(\dBBm,\G,\G,\G)$ with $\dBBm \in \{\G,\B\}$.
Computing $\tilde{w}(\chi)$ from \eqref{tildewchi-explicit} and subtracting, the leading term of $\Delta w(\chi)$ at order $\chi^2$ is reported in Table~\ref{tab:Deltaw-disc-ABGGG} for each subcase.

\begin{table}
\caption{Leading term of $\Delta w(\chi)$ for a discriminating mutant with $\pBBm \neq \pBB$ invading a resident system in $\AsetBGGG$. In each row, the expression is evaluated for $\pBB = \C$ (so $\pBBm = \D$) and $\pBB = \D$ (so $\pBBm = \C$).}
\begin{center}
\begin{tabular}{c | c | c c}
 $\dBBm$ & Leading term & $\pBB = \C$ & $\pBB = \D$ \\
 \hline
 $\G$ & $\wBB - R - \wBBm + S$ & $S-T$ & $P-R$ \\
 $\B$ & $\wBB - \wBBm$ & $R-T$ & $P-S$
\end{tabular}
\end{center}
\label{tab:Deltaw-disc-ABGGG}
\end{table}

When $\dBBm=\G$, the leading term evaluates to $S-T < 0$ (if $\pBB=\C$) or $P-R < 0$ (if $\pBB=\D$), both always negative. When $\dBBm=\B$, it evaluates to $R-T$ (if $\pBB=\C$), which is negative for prisoner's dilemmas and snowdrift games, or to $P-S$ (if $\pBB=\D$), which is negative for snowdrift games. In every case, the mutant invades for at least one class of cooperative dilemmas.
Hence $(p,d) \in \AsetBGGG$ cannot be universally stable.
\end{proof}

\begin{proof}[Proof of Proposition~\ref{prop:stability-against-alternative-disc}]
Suppose $(p,d) \in \AsetGGGG$, i.e., $\dBB=\dBG=\dGB=\dGG=\G$.
As in the $\AsetBGGG$ case, Corollary~\ref{cor:pBG=C} gives $\pBG = \C$, and a discriminating mutant $\tilde{p}$ with $\pBGm = \pBG$ and $\pBBm \neq \pBB$ satisfies Eqs.~\eqref{dBGm=G-discs-A*GGG} and~\eqref{wBGm-discs-A*GGG}. The second-order expansion of $w(\chi)$ is
\begin{align}
 w(\chi) & = \wGG + \left[ \wBG+\wGB-2\wGG\right] \chi \nonumber \\
 & + \left[\wBB - \wBG -\wGB + \wGG \right] \chi^2 + \BigO(\chi^3) . \label{wchi-expansion-GGGG}
\end{align}
The mutant assessment bits are again $(\dBBm,\G,\G,\G)$ with $\dBBm \in \{\G,\B\}$.
Computing $\tilde{w}(\chi)$ and subtracting, we obtain the leading terms of $\Delta w(\chi)$ reported in Table~\ref{tab:Deltaw-disc-AGGGG}.

\begin{table}
\caption{Leading term of $\Delta w(\chi)$ for a discriminating mutant with $\pBBm \neq \pBB$ invading a resident system in $\AsetGGGG$. In each row, the expression is evaluated for $\pBB = \C$ (so $\pBBm = \D$) and $\pBB = \D$ (so $\pBBm = \C$).}
\begin{center}
\begin{tabular}{c | c | c c}
 $\dBBm$ & Leading term & $\pBB = \C$ & $\pBB = \D$ \\
 \hline
 $\G$ & $\wBB - \wBBm$ & $R - T$ & $P - S$ \\
 $\B$ & $\wBB - \wBG + \wGG - \wBBm$ & $2R-T-S$ & $R+P-2S$
\end{tabular}
\end{center}
\label{tab:Deltaw-disc-AGGGG}
\end{table}

When $\dBBm=\G$, the leading term $\wBB - \wBBm$ evaluates to $R-T$ for $\pBB=\C$ (so $\pBBm=\D$), negative for prisoner's dilemmas and snowdrift games, or to $P-S$ for $\pBB=\D$ (so $\pBBm=\C$), negative for snowdrift games. In either case the mutant invades for at least one class, so $\dBBm = \B$ is necessary for universal stability.

When $\dBBm = \B$, the leading term equals $\wBB - \wBG + \wGG - \wBBm$.
\begin{itemize}
 \item If $\pBB = \D$ (i.e., $p=\DISC$, $\tilde{p}=\CDISC$), this reduces to $R+P-2S$.
 While positive for prisoner's dilemmas and stag hunts, this quantity is negative for some snowdrift games, so the resident can be invaded.
 \item If $\pBB = \C$ (i.e., $p=\CDISC$, $\tilde{p}=\DISC$), this reduces to $2R-T-S$, which is positive across all cooperative dilemmas by condition~\eqref{2R>T+S}. Hence the resident resists invasion.
\end{itemize}

We conclude that universal stability requires $(p,d) \in \AsetGGGG$, $\pBB=\C$, and also $\dBBm = d(\B,\B,\pBBm,\pBB) = \dBBDC = \B$.
\end{proof}

We denote by $\Rset'''$ the subset of $\Rset''$ that resists invasion by any discriminating strategy across cooperative dilemmas, as characterized by Propositions~\ref{prop:dBG=B-dBB=B-invaded} and~\ref{prop:stability-against-alternative-disc}.
These, together with Propositions~\ref{prop:stability-against-ALLD-dGB=G}, \ref{prop:stability-against-ALLC-dGB=G}, \ref{prop:stability-against-ALLC-dBG=G} and Corollary~\ref{cor:pBG=C}, imply that the systems in $\Rset'''$ satisfy $p = \C\C\D\C$ together with
$\dBBCC = \dBGCD = \dGBDC = \dGGCC = \G$
and
$\dBBDC = \dBGDD = \dGBCC = \dGGDC = \B$;
by Eq.~\eqref{QCDisc-set}, this set of righteous systems coincides with the contrite consistent discriminating systems, i.e.,
\begin{equation} \label{Rset'''=QCDisc}
  \Rset''' = \QCDisc.
\end{equation}
Since every discriminating action rule admits $2^8=256$ consistent assessment rules (Section~\ref{subsec:def-consistent}), it follows in particular that $|\Rset'''|=|\QCDisc|=256$, the last of the cardinalities reported in Fig.~\ref{fig:venn_analytic_result}.

As for $\Rset'$ and $\Rset''$, we locate the leading-eight and consistent discriminating systems within $\Rset'''$.
Only one leading-eight system survives, and among the consistent discriminating families only $\QCDisc$ meets $\Rset'''$ --- in fact equal to it.

\begin{proposition}[Leading-eight systems and $\Rset'''$]
\label{prop:L-R'''-membership}
Among the leading eight, only \emph{L2} belongs to $\Rset'''$; the other seven ($\text{L1}, \text{L3}, \text{L4}, \text{L5}, \text{L6}, \text{L7}, \text{L8}$) do not.
\end{proposition}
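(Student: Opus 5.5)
Since $\Rset''' = \QCDisc$ by Eq.~\eqref{Rset'''=QCDisc}, I will reduce the claim to membership in $\QCDisc$ and check the eight leading-eight systems against its defining conditions~\eqref{QCDisc-set}. The first split is by action rule. Every $\QCDisc$ system uses $p=\CCDC$, so the six systems L3--L8, which use $\DISC$ ($p=\DCDC$), are not in $\QCDisc$ and hence not in $\Rset'''$. An equivalent route, closer to the elimination chain, also works. L3, L4 and L7 are already outside $\Rset''$ by Proposition~\ref{prop:L-R''-membership}. L5, L6 and L8 lie in $\Rset''$ but have $\pBB=\D$. By Proposition~\ref{prop:stability-against-alternative-disc} (for L5, L6, which lie in $\AsetGGGG$) and Proposition~\ref{prop:dBG=B-dBB=B-invaded} (for L8, which has $\dBB=\B$ and lies in $\AsetBGGG$ by Proposition~\ref{prop:leading-eight-righteous}), each of these is invaded by $\CDISC$ in some snowdrift games. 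I will mention both routes, but I will rely on the first because it is immediate.

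That leaves the two Group I systems, L1 and L2, both with $p=\CCDC$. For L1, Proposition~\ref{prop:L-R''-membership} already places it in $\Rset'\setminus\Rset''$, because $\dGBC=\G$. Since $\Rset'''\subset\Rset''$, L1 is excluded. Equivalently, L1 violates the $\QCDisc$ condition $\dGBCC=\B$.

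For L2 the work is a bit-by-bit verification. L2's norm is third order and ignores $\ac$, so $\mathsf{d}_{r_1 r_2 a_1 a_2}=\mathsf{d}_{r_1 r_2 a_1 *}$. From Eq.~\eqref{L2} I read off the required bits:
\begin{itemize}
\item $\dBBCC=\dBBC=\G$ and $\dBBDC=\dBBD=\B$;
\item $\dBGCD=\dBGC=\G$ and $\dBGDD=\dBGD=\B$;
\item $\dGBDC=\dGBD=\G$ and $\dGBCC=\dGBC=\B$;
\item $\dGGCC=\dGGC=\G$ and $\dGGDC=\dGGD=\B$.
\end{itemize}
These are exactly the eight constraints of Eq.~\eqref{QCDisc-set}, so L2 $\in\QCDisc=\Rset'''$. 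This was already remarked after Eq.~\eqref{L2}.

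No step is a real obstacle. The only point needing care is making sure the identification $\Rset'''=\QCDisc$ is used legitimately, and that reading off the third-order bits for L2 matches the action indices in~\eqref{QCDisc-set}: the co-player's action is $\D$ when the co-player is good facing a bad focal player, and $\C$ otherwise.
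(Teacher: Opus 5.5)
Your proposal is correct and follows the paper's own proof. You reduce to membership in $\mathcal{Q}_{\mathrm{CDISC}}$ via $\mathcal{R}''' = \mathcal{Q}_{\mathrm{CDISC}}$, discard L3--L8 because they use DISC rather than CDISC, discard L1 because $d(G,B,C,\ast)=G$ where the CDISC-consistent norm requires $B$, and check L2 against the eight defining constraints. Your bit-by-bit check for L2, and the optional route through the elimination chain, simply spell out what the paper states in one line.
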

\begin{proof}
Since $\Rset'''=\QCDisc$~\eqref{Rset'''=QCDisc}, membership in $\Rset'''$ requires $p=\CCDC$.
Among the leading eight, only L1 and L2 use $p=\CCDC$; all others use $p=\DCDC$ and are thus excluded.
L1 fails because $\dGBC=\G$ (i.e., $\dGBCC=\G\neq\B$), violating the condition $\dGBCC=\B$ in Eq.~\eqref{QCDisc-set}.
L2 satisfies all conditions of $\QCDisc$ (cf.\ Eqs.~\eqref{L2} and~\eqref{QCDisc-set}).
\end{proof}

\begin{proposition}[Consistent discriminating systems and $\Rset'''$]
\label{prop:QD-R'''-membership}
$\QCDisc = \Rset'''$, while $\QDisc \cap \Rset''' = \QPDisc \cap \Rset''' = \QODisc \cap \Rset''' = \emptyset$.
\end{proposition}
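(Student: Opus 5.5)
The first claim, $\QCDisc = \Rset'''$, is exactly Eq.~\eqref{Rset'''=QCDisc}, so nothing new needs to be proved there. I will restate it and note that it was derived from the chain of necessary conditions in Propositions~\ref{prop:stability-against-ALLD-dGB=G}, \ref{prop:stability-against-ALLC-dGB=G}, \ref{prop:stability-against-ALLC-dBG=G}, \ref{prop:dBG=B-dBB=B-invaded} and~\ref{prop:stability-against-alternative-disc}, together with Corollary~\ref{cor:pBG=C}. These conditions pin down $p=\CCDC$ and the eight bits defining $\QCDisc$ in Eq.~\eqref{QCDisc-set}.

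For the empty intersections I will argue through the action rule and the nesting of the sets. First take $\QPDisc$ and $\QODisc$. Since $\Rset'''\subset\Rset''\subset\Rset'$, Proposition~\ref{prop:QD-R'-membership} already gives $\QPDisc\cap\Rset'=\QODisc\cap\Rset'=\emptyset$, and the result follows immediately.

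For $\QDisc$, I will use the characterization $\Rset'''=\QCDisc$: every system in $\Rset'''$ has action rule $p=\CCDC$. Every system in $\QDisc$ instead has $p=\DCDC$, so the two sets are disjoint. As a cross-check, the same conclusion follows directly from Proposition~\ref{prop:stability-against-alternative-disc}. A system in $\QDisc$ lies in $\Rset''$ by Proposition~\ref{prop:QD-R''-membership}, and in $\AsetGGGG$ by Proposition~\ref{prop:cons->SA}. It has $\pBB=\D$, so it fails the necessary condition $\pBB=\C$. Concretely, the CDISC mutant earns the leading term $R+P-2S$, which is negative on some snowdrift games.

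The only point needing care is that the argument for $\QDisc$ not be circular. It uses the characterization of $\Rset'''$ as the set of systems in $\Rset''$ resisting all discriminating mutants, and this characterization is established independently of the present statement. The proof is therefore a short bookkeeping argument with no real obstacle.
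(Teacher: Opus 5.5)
Your proof is correct and is essentially the paper's: the equality is simply Eq.~\eqref{Rset'''=QCDisc}, and the paper shows that each of the three families is disjoint from $\mathcal{R}'''=\QCDisc$ because its action rule differs from \texttt{CCDC}, which is exactly your argument for $\QDisc$. For $\QPDisc$ and $\QODisc$ you instead use the nesting $\mathcal{R}'''\subset\mathcal{R}''\subset\mathcal{R}'$ with Proposition~\ref{prop:QD-R'-membership}; this is equally valid, and like your cross-check for $\QDisc$ via Proposition~\ref{prop:stability-against-alternative-disc}, it has the small merit of not relying on the full equality $\mathcal{R}'''=\QCDisc$.
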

\begin{proof}
The equality $\QCDisc=\Rset'''$ is Eq.~\eqref{Rset'''=QCDisc}.
Since $\QDisc$, $\QPDisc$, and $\QODisc$ use the action rules $\DCDC$, $\DDDC$, and $\CDDC$ respectively --- all different from $\CCDC$ --- each is disjoint from $\QCDisc=\Rset'''$.
\end{proof}

\subsubsection{The universal stability theorem}
\label{subsubsec:invasion-other-mutants}

We are now in a position to prove the main result of this section.

\begin{theorem}[Universal stability within righteous systems] \label{thm:universal-stability}
Let $\epsilon=0$ and $\chi \to 0$.
A righteous system is universally stable if and only if it is a contrite consistent discriminating system.
\end{theorem}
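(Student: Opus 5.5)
The plan is to split the biconditional into its two directions. The necessity direction is essentially already assembled by the chain $\Rset \supset \Rset' \supset \Rset'' \supset \Rset''' = \QCDisc$. The sufficiency direction requires checking all remaining mutants against a $\QCDisc$ resident.

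For necessity, I take a universally stable righteous system. Universal stability means $\lim_{\chi\to0}\sigma^\Gamma=1$ for every class $\Gamma$. Each failure case identified earlier yields invasion on a set of payoffs of positive area. These cases are: the leading-term sign of $\Delta w$ being negative for all of a class, or negative on an open subregion such as $2S>R+P$ or $P<(S+R)/2$. So a universally stable system must resist $\ALLD$, $\ALLC$ and every discriminating mutant in the sense of Propositions~\ref{prop:stability-against-ALLD-dGB=G}--\ref{prop:stability-against-alternative-disc}. (I would remark that resisting ``across cooperative dilemmas'' up to a measure-zero boundary is exactly what $\sigma^\Gamma\to1$ delivers, since all binding conditions are strict linear inequalities.) Hence the system lies in $\Rset'''$, which equals $\QCDisc$ by Eq.~\eqref{Rset'''=QCDisc}.

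For sufficiency, I fix $(p,d)\in\QCDisc$ with $p=\CCDC$. By Remark~\ref{rem:consistent-equivalence} all 256 such norms give identical dynamics at $\epsilon=0$, so I may work with a single representative. Here $g(\chi)=1-\chi$ (Proposition~\ref{prop:cons->SA}). The resident payoffs are $\wGG=R$, $\wBG=S$, $\wGB=T$, $\wBB=R$. This gives $w(\chi)=R+(S+T-2R)\chi+\BigO(\chi^2)$, with the $\chi^2$ term $2R-S-T$. For any mutant, Eq.~\eqref{tilde-delta-consistent} gives $\tilde\delta_{\rf\rc}=\chi+(1-2\chi)b_{\rf\rc}$, where $b_{\rf\rc}$ indicates agreement with $p$ at $(\rf,\rc)$. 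So $h(\chi)$ depends only on the agreement pattern. I would then go through the 15 mutants, grouped by $b_{\GG}$.

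If $b_{\GG}=0$ (the mutant defects against good co-players while in good standing), then $h_0=0$. The mutant's limiting payoff is then $\wBGm=v(\pBGm,\D)\in\{S,P\}$, which is below $R$, so the resident wins at order one. If $b_{\GG}=1$, I expand $h(\chi)=1-c_1\chi-c_2\chi^2+\dots$, where $c_1$ counts disagreements at the $\G\B,\B\G$ cells and disagreement at $\B\B$ enters at order $\chi^2$. I then compare coefficients of $\tilde w$ with $w$ order by order. A mutant that deviates at $\G\B$ (cooperates with bad) or at $\B\G$ (defects while bad) loses reputation at rate $\chi$. At first order it forgoes $R$ and instead earns $S$ or $T$-type payoffs. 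The resulting first-order differences should reduce to combinations like $T-S>0$ (as in the $\ALLC$ table) or $R-P>0$ and $R-S>0$ (for defecting when bad against good residents, who cooperate contritely). A mutant deviating only at $\B\B$ is $\DISC$, already handled with margin $2R-T-S>0$ in Proposition~\ref{prop:stability-against-alternative-disc}.

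The main obstacle is the mixed mutants with $b_{\GG}=1$ and several disagreements, for example $\CCCC$-like or $\DDDC$-like deviations combined with $\B\B$ flips. For these, the first-order coefficients could in principle cancel, forcing a second-order comparison. I would handle this by tabulating, as in Tables~\ref{tab:expansion-tildew-ALLC-A*GGG} and~\ref{tab:Deltaw-disc-AGGGG}, the leading term of $\Delta w$ for each agreement pattern. I expect each leading term to be a positive combination drawn from $R-S$, $R-P$, $T-S$ and $2R-T-S$. All of these are positive across PD, SG and SH, so $\sigma^\Gamma\to1$ for every $\Gamma$.
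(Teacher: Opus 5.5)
Your necessity direction is the paper's: a universally stable righteous system must survive the successive eliminations by ALLD, ALLC and discriminating mutants, so it lies in $\mathcal{R}'''=\mathcal{Q}_{\mathrm{CDISC}}$. Your remark that each elimination yields invasion on an open set of positive area, and is therefore incompatible with $\sigma^\Gamma\to1$, is a useful tightening the paper leaves implicit. Your sufficiency plan also follows the paper's route: use $g=1-\chi$ and $\tilde\delta_{r_1r_2}=\chi+(1-2\chi)b_{r_1r_2}$, then find the leading term of $\Delta w$ for each of the 15 mutants, which is what Table~\ref{tab:stability-C-CDISC} records. However, the case analysis you sketch rests on two false claims about $h_0=\lim_{\chi\to0}h(\chi)$. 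Carried out as written, it would not produce the correct table.

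Write $h=\alpha/(1-\beta+\alpha)$, with $\alpha=(1-\chi)\tilde\delta_{BG}+\chi\tilde\delta_{BB}$ and $\beta=(1-\chi)\tilde\delta_{GG}+\chi\tilde\delta_{GB}$.

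\emph{First error.} If $b_{GG}=0$, then $h_0=b_{BG}/(1+b_{BG})$. So $h_0=1/2$, not $0$, whenever the mutant cooperates with good partners while bad ($\tilde p(B,G)=C$). Such a mutant (\texttt{DCDD}, \texttt{CCDD}, \texttt{DCCD}, \texttt{CCCD}) regains good standing at once, then loses it by defecting on a good resident, alternating payoffs $S$ and $T$. Its limiting payoff is $(S+T)/2$, so the margin is $\tfrac12(2R-T-S)$, which rests on efficiency, not on $R>S$ or $R>P$. Your bound $\tilde\omega_{BG}\in\{S,P\}$ only covers the case $\tilde p(B,G)=D$, where the payoff is $P$.

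\emph{Second error.} If $b_{GG}=1$ but $b_{BG}=0$ (\texttt{DDDC}, \texttt{CDDC}, \texttt{DDCC}, \texttt{CDCC}), your ansatz $h=1-c_1\chi-\dots$ fails. Here $\alpha$ and $1-\beta$ are both of order $\chi$, giving $h_0=(1+b_{BB})/(3+b_{BB}-b_{GB})\in\{1/3,1/2,2/3\}$. A mutant that defects on good partners while bad can recover only through an assignment error. Its reputational loss is therefore of order one, not of order $\chi$, and $\Delta w\to(1-h_0)(R-P)$ already at order one.

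Only when $b_{BG}=b_{GG}=1$ (DISC, \texttt{DCCC}, ALLC) is $h=1-O(\chi)$. Only there do your order-$\chi$ and order-$\chi^2$ comparisons apply, yielding $(T-S)\chi$ and $(2R-T-S)\chi^2$. The natural split is thus on $\tilde p(B,G)$ first, which is exactly the paper's three-way grouping:
\begin{itemize}
\item mutants of the form \texttt{*D**} give multiples of $R-P$;
\item mutants of the form \texttt{*CCC} give $(T-S)\chi$;
\item all others give a multiple of $2R-T-S$.
\end{itemize}
The margin $R-S$ never appears here. All leading terms remain positive across the three classes, so the theorem stands, but your sketch establishes it only after these corrections.
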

\begin{proof}
\emph{Necessity.} A universally stable righteous system resists $\ALLD$, $\ALLC$, and every discriminating strategy; by the successive eliminations of Sections~\ref{subsubsec:invasion-ALLD}, \ref{subsubsec:invasion-ALLC}, and~\ref{subsubsec:invasion-disc}, it therefore lies in $\Rset'''=\QCDisc$~\eqref{Rset'''=QCDisc}, i.e., it is contrite consistent discriminating.

\emph{Sufficiency.} Conversely, let $(p,d) \in \QCDisc$, so that $p=\CDISC$. We compute $\Delta w(\chi)$ for each of the 15 mutant strategies $\tilde{p} \neq p$; since all systems in $\QCDisc$ share the same reputation equilibria and payoffs at $\epsilon=0$ (Remark~\ref{rem:consistent-equivalence}), $\Delta w(\chi)$ does not depend on which consistent assessment rule accompanies $\CDISC$, and the results, collected in Table~\ref{tab:stability-C-CDISC}, hold for all of $\QCDisc$ at once.

\begin{table}
\caption{Taylor series expansion of the difference between resident and mutant expected payoffs for $p=\C\C\D\C$ (i.e., $\CDISC$) under a consistent assessment rule $d$ for different mutant strategies $\tilde{p} \neq p$. Here, no execution errors are considered ($\epsilon=0$).}
\begin{center}
\begin{tabular}{c | c | c}
 $\tilde{p}$ (bitstring) & $\tilde{p}$ (name) & $\Delta w(\chi)$ \\
 \hline
 $\DDDD$ & $\ALLD$ & $R-P + \BigO(\chi)$ \\
 $\CDDD$ & & $R-P + \BigO(\chi)$ \\
 $\DCDD$ & & $\frac{1}{2}(2R - T - S) + \BigO(\chi)$ \\
 $\CCDD$ & & $\frac{1}{2}(2R - T - S) + \BigO(\chi)$ \\
 $\DDCD$ & & $R-P + \BigO(\chi)$ \\
 $\CDCD$ & & $R-P + \BigO(\chi)$ \\
 $\DCCD$ & & $\frac{1}{2}(2R - T - S) + \BigO(\chi)$ \\
 $\CCCD$ & & $\frac{1}{2}(2R - T - S) + \BigO(\chi)$ \\
 $\DDDC$ & $\PDISC$ & $\frac{1}{2} (R-P) + \BigO(\chi)$ \\
 $\CDDC$ & $\ODISC$ & $\frac{1}{3} (R-P) + \BigO(\chi)$ \\
 $\DCDC$ & $\DISC$ & $(2 R - T - S) \chi^2 + \BigO(\chi^3)$ \\
 $\CCDC$ & $\CDISC$ & --- \\
 $\DDCC$ & & $\frac{2}{3}(R-P) + \BigO(\chi)$ \\
 $\CDCC$ & & $\frac{1}{2}(R-P) + \BigO(\chi)$ \\
 $\DCCC$ & & $(T-S) \chi + \BigO(\chi^2)$ \\
 $\CCCC$ & $\ALLC$ & $(T-S) \chi + \BigO(\chi^2)$ \\
\end{tabular}
\end{center}
\label{tab:stability-C-CDISC}
\end{table}

A direct inspection of Table~\ref{tab:stability-C-CDISC} reveals three conditions for $(p,d) \in \QCDisc$ to resist invasion by a mutant strategy $\tilde{p} \neq p$:
\begin{enumerate}
 \item If $\tilde{p}=\wildcard\D\wildcard\wildcard$, then $\CDISC$ resists invasion attempts if and only if $R>P$, which always holds across cooperative dilemmas \eqref{R>P}.
 \item If $\tilde{p}=\wildcard\C\C\C$, then $\CDISC$ resists invasion attempts if and only if $T>S$, which always holds across cooperative dilemmas \eqref{T>S}.
 \item Otherwise, $\CDISC$ resists invasion attempts if and only if $2R>T+S$, which always holds across cooperative dilemmas \eqref{2R>T+S}.
\end{enumerate}
As all three conditions hold across cooperative dilemmas, $(p,d)$ resists every mutant strategy and is therefore universally stable.
\end{proof}

\subsection{Stability analysis of consistent discriminating systems}
\label{subsec:stab-anal-disc}

Consistent discriminating systems occupy a central role in our analysis: they are righteous (Proposition~\ref{prop:consistentdisc->righteous}), and they form the dense cluster in the upper-right corner of the stability--cooperation scatter plot shown in Fig.~\ref{MT-fig:scatter}a of the main text.
Section~\ref{subsec:which-righteous-are-stable} establishes that, among righteous systems, universal stability --- resistance to invasion across the full space of cooperative dilemmas --- is exclusive to consistent discriminating systems with $\CDISC$ as action rule (Theorem~\ref{thm:universal-stability}).

We now ask how consistent discriminating systems with the three other action rules ($\DISC$, $\ODISC$, and $\PDISC$) fare.
Universal stability being out of reach for these systems, we determine instead the limiting values of their per-class stability indices $\sigmaPD$, $\sigmaSG$, and $\sigmaSH$, thereby identifying the classes of cooperative dilemma (prisoner's dilemma, snowdrift game, stag hunt) in which each system remains stable.
We proceed as in the previous section, expanding the payoff difference $\Delta w(\chi)$ around $\chi=0$ for every alternative mutant strategy; since all $256$ consistent systems sharing an action rule have identical reputation equilibria and payoffs at $\epsilon=0$ (Remark~\ref{rem:consistent-equivalence}), these expansions depend only on the action rule, and each result below holds for every consistent assessment rule at once.
For ease of comparison, we first restate the result for $\CDISC$ (Corollary~\ref{prop:stability-QCDisc}) and then treat each remaining case in turn; Table~\ref{tab:consistent-disc-summary} collects the results.

\subsubsection{Consistent discriminating systems with CDISC as action rule}
\label{subsubsec:CDISC-asymptotics-stability}

Since systems in $\QCDisc$ are righteous (Proposition~\ref{prop:consistentdisc->righteous}) and are precisely the contrite consistent discriminating systems (Definition~\ref{def:contrite-system}), Theorem~\ref{thm:universal-stability} implies that they are universally stable, and Definition~\ref{def:universally-stable} immediately gives:

\begin{corollary}[Stability of consistent discriminating systems with $\CDISC$ as action rule]
\label{prop:stability-QCDisc}
Systems in $\QCDisc$ are universally stable, that is,
\begin{equation*}
 \lim_{\chi \to 0} \left( \sigmaPD_{p,d} \right)_{\epsilon=0} = \lim_{\chi \to 0} \left( \sigmaSG_{p,d} \right)_{\epsilon=0} = \lim_{\chi \to 0} \left( \sigmaSH_{p,d} \right)_{\epsilon=0} = 1 \quad \text{for all } (p,d) \in \QCDisc .
\end{equation*}
\end{corollary}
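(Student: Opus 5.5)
This corollary follows from Theorem~\ref{thm:universal-stability} together with the definitions, so the plan is a chain of short verifications.

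First, I would place $\QCDisc$ inside the righteous set. Every system in $\QCDisc$ is consistent discriminating by Eq.~\eqref{QCDisc-set}. By Proposition~\ref{prop:consistentdisc->righteous}, $\QDset \subset \Rset$, so each $(p,d)\in\QCDisc$ is righteous.

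Second, I would confirm that these systems are contrite. Their action rule is $p=\CCDC$, which is contrite by Definition~\ref{def:contrite}, so each member of $\QCDisc$ is a contrite system by Definition~\ref{def:contrite-system}. Hence every member of $\QCDisc$ is a righteous, contrite, consistent discriminating system. The sufficiency direction of Theorem~\ref{thm:universal-stability} then makes it universally stable.

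Third, I would unpack Definition~\ref{def:universally-stable}. Universal stability means exactly that
\[
\lim_{\chi\to0}\left(\sigma^\Gamma_{p,d}\right)_{\epsilon=0}=1
\quad\text{for each } \Gamma\in\{\mathrm{PD},\mathrm{SG},\mathrm{SH}\},
\]
which is the displayed equality in the corollary.

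I expect no step to be a genuine obstacle, since all the substance lies in the sufficiency half of Theorem~\ref{thm:universal-stability} (Table~\ref{tab:stability-C-CDISC}). The one point worth making explicit concerns the limit. Each entry of that table has a leading coefficient that is strictly positive throughout $\mathcal G_\Gamma$ under \eqref{R>P}, \eqref{T>S}, and \eqref{2R>T+S}. Hence, for every fixed payoff point, $\Delta w(\chi)>0$ for all sufficiently small $\chi$. The indicator of the stability region therefore tends to one pointwise on $\mathcal G_\Gamma$, and dominated convergence gives normalized area tending to one. This is how the theorem's conclusion is read through Eq.~\eqref{sigma-gamma}.

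Finally, by Remark~\ref{rem:consistent-equivalence} the conclusion holds uniformly for all $256$ assessment rules at once.
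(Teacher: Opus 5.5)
Your proposal is correct and follows the paper's own argument. You place $\QCDisc$ in the righteous set via Proposition~\ref{prop:consistentdisc->righteous}, read off contrition from $p=\CDISC$, invoke the sufficiency half of Theorem~\ref{thm:universal-stability}, and unpack Definition~\ref{def:universally-stable}; your added remark, on pointwise convergence of the stability indicator followed by dominated convergence, usefully makes explicit the step from positive leading terms in Table~\ref{tab:stability-C-CDISC} to $\lim_{\chi\to0}\sigma^\Gamma_{p,d}=1$, which the paper leaves implicit.
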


\subsubsection{Consistent discriminating systems with DISC as action rule}
\label{subsubsec:DISC-asymptotics-stability}

It is clear from the characterizations of the sets $\QDisc$ \eqref{QDisc-set} and $\Rset''$ \eqref{R''} that $\QDisc \subset \Rset''$ holds, so that a moral system $(p,d) \in \QDisc$ resists invasion attempts by $\ALLD$ and $\ALLC$ across cooperative dilemmas.
It is equally clear, from the analysis in Section~\ref{subsubsec:invasion-disc}, that such a system can be invaded by $\CDISC$ for some games in our payoff space.
The following proposition completes the analysis by considering all alternative mutant strategies and quantifies the resulting stability indices.

\begin{proposition}[Stability of consistent discriminating systems with $\DISC$ as action rule]\label{prop:stability-QDisc}
For $\epsilon=0$ and in the limit $\chi \to 0$, a consistent discriminating system with $\DISC$ as action rule is stable across prisoner's dilemmas and stag hunts, but it can be invaded by $\CDISC$ if the cooperative dilemma is a snowdrift game for which $R+P<2S$ holds.
Thus:
\begin{align*}
 \lim_{\chi \to 0} \left( \sigmaPD_{p,d} \right)_{\epsilon=0} &= 1, &
 \lim_{\chi \to 0} \left( \sigmaSG_{p,d} \right)_{\epsilon=0} &= \tfrac{2}{3}, &
 \lim_{\chi \to 0} \left( \sigmaSH_{p,d} \right)_{\epsilon=0} &= 1 \quad \text{for all } (p,d) \in \QDisc.
\end{align*}
\end{proposition}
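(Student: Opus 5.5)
The plan mirrors the sufficiency part of Theorem~\ref{thm:universal-stability}. By Remark~\ref{rem:consistent-equivalence}, at $\epsilon=0$ every system in $\QDisc$ has the same reputation equilibria and payoffs. We may therefore fix $p=\DCDC$ and compute with $d_{\rf\rc}=\G$ and $\bar{d}_{\rf\rc}=\B$ at every reputation pair. By Proposition~\ref{prop:cons->SA} the resident lies in $\AsetGGGG$, so $g(\chi)=1-\chi$. For each of the 15 mutants $\tilde{p}$, Eq.~\eqref{tilde-delta-consistent} gives $\tilde\delta_{\rf\rc}=\chi+(1-2\chi)b_{\rf\rc}$, where $b_{\rf\rc}$ indicates agreement of $\tilde p$ with $p$. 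Then $h(\chi)=\alpha/(1-\beta+\alpha)$ is explicit.

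The resident payoffs are $\omega_{\B\B}=P$, $\omega_{\B\G}=S$, $\omega_{\G\B}=T$ and $\omega_{\G\G}=R$. Hence
\[
w(\chi)=R+(S+T-2R)\chi+(P-S-T+R)\chi^2+\BigO(\chi^3),
\]
as in Eq.~\eqref{wchi-expansion-GGGG}. I would then expand $\tilde w(\chi)$ from Eq.~\eqref{tildewchi-explicit} for each mutant and tabulate the leading term of $\Delta w(\chi)$, in the manner of Table~\ref{tab:stability-C-CDISC}.

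The mutants fall into three families.

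\emph{Mutants with $\pGGm=\D$ or $\pGBm=\C$.} These disagree with $p$ at a pair of weight $O(1)$ or $O(\chi)$. If $\pGGm\neq\C$, the mutant is judged bad at order one, so $h_0<1$ and residents defect on it. The leading term is then a positive multiple of $R-P$ or of $2R-T-S$, both positive by Eqs.~\eqref{R>P} and~\eqref{2R>T+S}. If only $\pGBm=\C$ deviates, as for $\ALLC$ and strategies of the form $\wildcard\wildcard\C\C$, the difference appears at order $\chi$. It is $T-S$ or a combination positive by Eq.~\eqref{T>S}. This agrees with Propositions~\ref{prop:stability-against-ALLC-dBG=G} and~\ref{prop:QD-R''-membership}, which already give resistance to $\ALLC$; likewise $\QDisc\subset\Rset'$ gives resistance to $\ALLD$.

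\emph{Discriminating mutants with $\pBGm\neq\pBG$.} This covers $\PDISC$ and $\ODISC$. The mutant is judged bad whenever it is bad and meets a good partner, an event of order one given that it is bad. This yields a leading term of the form $c(R-P)$ or $c(2R-T-S)$ with $c>0$.

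\emph{The single remaining mutant, $\CDISC$.} It differs only at $\B\B$, so everything happens at order $\chi^2$. By Table~\ref{tab:Deltaw-disc-AGGGG} with $\pBB=\D$ and $\dBBm=\dBBCC=\B$ (from Eq.~\eqref{QDisc-set}), the leading term is $R+P-2S$. This is positive in all prisoner's dilemmas and stag hunts, since there $P>S$ and $R>S$. It is also positive in snowdrift games with $R+P>2S$, and negative otherwise.

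Collecting the results, the stability region in prisoner's dilemmas and stag hunts is the whole payoff region, which gives indices $1$. In snowdrift games the binding condition is exactly $R+P>2S$, and Lemma~\ref{lem:sigma-SG-R+P>2S} gives $\sigmaSG=2/3$. The boundary $R+P=2S$ has measure zero, so it does not affect the index.

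The main obstacle is the bookkeeping for the mutants whose deviation sits in the rare states ($\B\G$, $\B\B$). There the leading orders of $w$ and $\tilde w$ cancel, and one must carry $h(\chi)$ to second order to confirm the sign. I would organise this by noting that a mutant deviating only at rare pairs has $h(\chi)=1-\chi-\kappa\chi^2+\dots$. The payoff gap then reduces to the weighted comparison of the deviating entries, as in the $\AsetGGGG$ tables. I would verify each such case symbolically, as in the derivation of Table~\ref{tab:stability-C-CDISC}.
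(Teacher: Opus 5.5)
Your proposal is correct and is essentially the paper's proof: a mutant-by-mutant expansion of $\Delta w(\chi)$ (the paper's Table~\ref{tab:stability-C-DISC}), in which every leading term is a positive multiple of $R-P$, $T-S$, or $2R-T-S$ except the $\CDISC$ entry $(R+P-2S)\chi^2$, followed by Lemma~\ref{lem:sigma-SG-R+P>2S}. There are a few bookkeeping slips to fix when you carry it out:
\begin{itemize}
\item The bit governing the $\CDISC$ mutant is $\dBBm=d(\B,\B,\C,\D)=\dBBCD$, which consistency fixes to $\B$; $\dBBCC$ is one of the eight free bits.
\item $\DDCC$ and $\CDCC$ also deviate at $\B\G$, so their leading terms are of order one, $\tfrac12(R-P)$ and $\tfrac23(R-P)$, not of order $\chi$.
\item More generally, deviations at $\B\G$ never produce the leading-order cancellations you describe; those occur only for $\CDISC$, $\DCCC$, and $\ALLC$.
\end{itemize}
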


\begin{proof}
We compute $\Delta w(\chi)$ for each of the 15 mutant strategies $\tilde{p} \neq p$; the results are collected in Table~\ref{tab:stability-C-DISC}.

\begin{table}
\caption{Taylor series expansion of the difference between resident and mutant expected payoffs for $p=\DCDC$ (i.e., $\DISC$) under a consistent assessment rule $d$ for different mutant strategies $\tilde{p} \neq p$. Here, no execution errors are considered ($\epsilon=0$).}
\begin{center}
\begin{tabular}{c | c | c}
 $\tilde{p}$ (bitstring) & $\tilde{p}$ (name) & $\Delta w(\chi)$ \\
 \hline
 $\DDDD$ & $\ALLD$ & $R-P + \BigO(\chi)$ \\
 $\CDDD$ & & $R-P + \BigO(\chi)$ \\
 $\DCDD$ & & $\frac{1}{2}(2R-T-S) + \BigO(\chi)$ \\
 $\CCDD$ & & $\frac{1}{2}(2R-T-S) + \BigO(\chi)$ \\
 $\DDCD$ & & $R-P + \BigO(\chi)$ \\
 $\CDCD$ & & $R-P + \BigO(\chi)$ \\
 $\DCCD$ & & $\frac{1}{2}(2R-T-S) + \BigO(\chi)$ \\
 $\CCCD$ & & $\frac{1}{2}(2R-T-S) + \BigO(\chi)$ \\
 $\DDDC$ & $\PDISC$ & $\frac{1}{3} (R-P) + \BigO(\chi)$ \\
 $\CDDC$ & $\ODISC$ & $\frac{1}{2} (R-P) + \BigO(\chi)$ \\
 $\DCDC$ & $\DISC$ & --- \\
 $\CCDC$ & $\CDISC$ & $(R + P - 2 S) \chi^2 + \BigO(\chi^3)$ \\
 $\DDCC$ & & $\frac{1}{2}(R-P) + \BigO(\chi)$ \\
 $\CDCC$ & & $\frac{2}{3}(R-P) + \BigO(\chi)$ \\
 $\DCCC$ & & $(T-S) \chi + \BigO(\chi^2)$ \\
 $\CCCC$ & $\ALLC$ & $(T-S) \chi + \BigO(\chi^2)$ \\
\end{tabular}
\end{center}
\label{tab:stability-C-DISC}
\end{table}

A direct inspection of Table~\ref{tab:stability-C-DISC} reveals four conditions for $(p,d) \in \QDisc$ to resist invasion by a mutant strategy $\tilde{p} \neq p$:
\begin{enumerate}
 \item If $\tilde{p}=\wildcard\D\wildcard\wildcard$, then $\DISC$ resists invasion attempts if and only if $R>P$, which always holds across cooperative dilemmas \eqref{R>P}.
 \item If $\tilde{p}=\wildcard\C\C\C$, then $\DISC$ resists invasion attempts if and only if $T>S$, which always holds across cooperative dilemmas \eqref{T>S}.
 \item If $\tilde{p} = \CCDC$ (i.e., $\CDISC$), then $\DISC$ resists invasion attempts if and only if $R+P>2S$, which holds for all prisoner's dilemmas~\eqref{PD} and stag hunts~\eqref{SH}, but not for all snowdrift games~\eqref{SG}.
 \item For all remaining mutant strategies, $\DISC$ resists invasion attempts if and only if $2R>T+S$, which always holds across cooperative dilemmas~\eqref{2R>T+S}.
\end{enumerate}
The conditions in items 1, 2, and 4 hold across all cooperative dilemmas, and the condition in item 3 holds across all prisoner's dilemmas and stag hunts; the stability indices for these two classes thus converge to one.
Within snowdrift games, the only binding condition is $R+P>2S$: it fails precisely for the games with $R+P<2S$, in which $\CDISC$ invades, and by Lemma~\ref{lem:sigma-SG-R+P>2S} the snowdrift games satisfying $R+P>2S$ make up a fraction $2/3$ of the class.
\end{proof}

\subsubsection{Consistent discriminating systems with ODISC as action rule}
\label{subsubsec:ODISC-asymptotics-stability}

It is clear from the characterizations of the sets $\QODisc$ \eqref{QODisc-set} and $\Rset'$ \eqref{R'} that $\QODisc \not\subset \Rset'$ holds, so that a moral system $(p,d) \in \QODisc$ can be invaded by $\ALLD$ for some cooperative dilemmas.
The following proposition completes the analysis by considering all alternative mutant strategies and quantifies the resulting stability indices.

\begin{proposition}[Stability of consistent discriminating systems with $\ODISC$ as action rule]\label{prop:stability-QODisc}
For $\epsilon=0$ and in the limit $\chi \to 0$, a consistent discriminating system with $\ODISC$ as action rule is stable across snowdrift games and stag hunts, but it can be invaded by mutants with $\pBGm=\pGGm=\D$, or by $\PDISC$, if the cooperative dilemma is a prisoner's dilemma for which $2R<T+P$ holds.
Thus:
\begin{align*}
 \lim_{\chi \to 0} \left( \sigmaPD_{p,d} \right)_{\epsilon=0} &= \tfrac{2}{3}, &
 \lim_{\chi \to 0} \left( \sigmaSG_{p,d} \right)_{\epsilon=0} &= 1, &
 \lim_{\chi \to 0} \left( \sigmaSH_{p,d} \right)_{\epsilon=0} &= 1 \quad \text{for all } (p,d) \in \QODisc.
\end{align*}
\end{proposition}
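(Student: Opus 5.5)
The plan mirrors the proofs of Theorem~\ref{thm:universal-stability} (sufficiency part) and Proposition~\ref{prop:stability-QDisc}. By Remark~\ref{rem:consistent-equivalence}, at $\epsilon=0$ every system in $\QODisc$ has the same resident and mutant reputation equilibria and payoffs, so it suffices to work with $p=\CDDC$ and the eight constrained bits of Eq.~\eqref{QODisc-set}. By Proposition~\ref{prop:cons->SA}, $g(\chi)=1-\chi$, and the resident payoffs are $\wGG=R$, $\wGB=v(\D,\pBG)=v(\D,\D)=P$, $\wBG=v(\D,\D)=P$ and $\wBB=R$. Substituting into Eq.~\eqref{wchi-explicit} gives
\[
w(\chi)=R+2(P-R)\chi+\BigO(\chi^2).
\]

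For each of the 15 mutants $\tilde p\neq p$, I will read off the mutant assessment bits. By consistency, $\drfrcm=\G$ exactly where $\tilde p$ agrees with $p$, as in Eq.~\eqref{tilde-delta-consistent}. I then compute $h(\chi)$ from Eq.~\eqref{h*} with $\tilde\delta_{\rf\rc}\in\{\chi,1-\chi\}$, and expand $\tilde w(\chi)$ from Eq.~\eqref{tildewchi-explicit}. The mutant's dominant payoff is $\wGGm$ or $\wBGm$, weighted by $h_0=\lim h(\chi)$. Since $\dGGm=\G$ iff $\pGGm=\C$ and $\dBGm=\G$ iff $\pBGm=\D$, the value of $h_0$ is determined by the pattern of agreements at $\GG$, $\GB$, $\BG$ and $\BB$, exactly as in Table~\ref{tab:limtildew-righteous}.

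I expect the mutants to fall into groups analogous to those of Table~\ref{tab:stability-C-DISC}.

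\begin{enumerate}
\item Mutants with $\pGGm=\D$ earn $T$ against good residents but are labelled bad.
\item Mutants with $\pGGm=\C$ and $\pBGm=\D$ keep good standing at the $\BG$ position.
\item Mutants with $\pGGm=\C$ and $\pBGm=\C$ lose standing there.
\end{enumerate}

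The order-one leading terms should be positive multiples of $R-P$ or of $2R-T-S$, both positive in every cooperative dilemma. The exceptional mutants are the ones singled out in the statement: those with $\pBGm=\pGGm=\D$ and $\PDISC$. A mutant with $\pBGm=\pGGm=\D$ defects against good residents, who then treat it as bad. The key point is that ODISC residents in bad standing defect against good partners, and a consistent ODISC norm judges such defection good. The mutant can therefore recover. I expect a leading term of the form $c\,(2R-T-P)$ with $c>0$, either at order one or at order $\chi$. For $\PDISC$ ($\DDDC$), which differs from $\CDDC$ only at $\BB$, I expect the leading term to appear at order $\chi^2$, say $(2R-T-P)\chi^2$, and it should be derived the same way as the $\DISC$/$\CDISC$ rows of Table~\ref{tab:Deltaw-disc-AGGGG}. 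In SG and SH we have $T<R$ or $P<S<R$ with $2R>T+S$, which I will check implies $2R>T+P$ there: in SH $R>T$ and $R>P$; in SG $P<S$, so $T+P<T+S<2R$. Hence no invasion occurs in those classes. In PD, $2R>T+P$ can fail.

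The final step assembles the results. Every condition other than $2R>T+P$ holds across all cooperative dilemmas, so $\sigmaSG$ and $\sigmaSH$ converge to one. In PD the binding condition is exactly $2R>T+P$, and Lemma~\ref{lem:sigma-PD-2R>T+P} then gives $\sigmaPD\to 2/3$.

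The main obstacle is the bookkeeping of the second-order expansion for mutants that agree with $p$ in three of the four cells. This applies especially to $\PDISC$, where the order-$\chi$ terms cancel and the sign must be obtained from the $\chi^2$ coefficient of $h(\chi)$. I will handle it by writing $h=1-\chi-\eta\chi^2+\BigO(\chi^3)$, solving Eq.~\eqref{h-fixed-point} for $\eta$, and then matching coefficients. I will verify each row against the numerical code, as the paper does for its other tables.
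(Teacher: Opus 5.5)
Your plan is the paper's proof: expand $\Delta w(\chi)$ row by row for all 15 mutants using $g=1-\chi$ and the consistency-induced mutant bits. Only the $\wildcard\D\wildcard\D$ mutants (leading term $\tfrac12(2R-T-P)$ at order one) and $\PDISC$ (leading term $(2R-T-P)\chi^2$) impose a non-trivial condition, and Lemma~\ref{lem:sigma-PD-2R>T+P} then gives $\sigma^{\mathrm{PD}}\to 2/3$. One correction: for $\ODISC$ the remaining leading terms are positive multiples of $R-S$, at order one or order $\chi$, not of $R-P$ or $2R-T-S$ as in the $\DISC$ table; since $R>S$ holds in every cooperative dilemma, your conclusion is unchanged.
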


\begin{proof}
We compute $\Delta w(\chi)$ for each of the 15 mutant strategies $\tilde{p} \neq p$; the results are collected in Table~\ref{tab:stability-C-ODISC}.

\begin{table}
\caption{Taylor series expansion of the difference between resident and mutant expected payoffs for $p=\CDDC$ (i.e., $\ODISC$) under a consistent assessment rule $d$ for different mutant strategies $\tilde{p} \neq p$. Here, no execution errors are considered ($\epsilon=0$).}
\begin{center}
\begin{tabular}{c | c | c}
 $\tilde{p}$ (bitstring) & $\tilde{p}$ (name) & $\Delta w(\chi)$ \\
 \hline
 $\DDDD$ & $\ALLD$ & $\frac{1}{2}\left(2 R - T - P \right) + \BigO(\chi)$ \\
 $\CDDD$ & & $\frac{1}{2}\left(2 R - T - P \right) + \BigO(\chi)$ \\
 $\DCDD$ & & $R-S + \BigO(\chi)$ \\
 $\CCDD$ & & $R-S + \BigO(\chi)$ \\
 $\DDCD$ & & $\frac{1}{2}\left(2 R - T - P \right) + \BigO(\chi)$ \\
 $\CDCD$ & & $\frac{1}{2}\left(2 R - T - P \right) + \BigO(\chi)$ \\
 $\DCCD$ & & $R-S + \BigO(\chi)$ \\
 $\CCCD$ & & $R-S + \BigO(\chi)$ \\
 $\DDDC$ & $\PDISC$ & $\left(2 R - T - P \right)\chi^2 + \BigO(\chi^3)$ \\
 $\CDDC$ & $\ODISC$ & --- \\
 $\DCDC$ & $\DISC$ & $\frac{1}{2}\left(R-S\right) + \BigO(\chi)$ \\
 $\CCDC$ & $\CDISC$ & $\frac{1}{3}\left(R-S\right) + \BigO(\chi)$ \\
 $\DDCC$ & & $\left(R-S\right) \chi + \BigO(\chi^2)$ \\
 $\CDCC$ & & $\left(R-S\right) \chi + \BigO(\chi^2)$ \\
 $\DCCC$ & & $\frac{2}{3}\left(R-S\right) + \BigO(\chi)$ \\
 $\CCCC$ & $\ALLC$ & $\frac{1}{2}\left(R-S\right) + \BigO(\chi)$ \\
\end{tabular}
\end{center}
\label{tab:stability-C-ODISC}
\end{table}

A direct inspection of Table~\ref{tab:stability-C-ODISC} reveals two conditions for $(p,d) \in \QODisc$ to resist invasion by a mutant strategy $\tilde{p} \neq p$:
\begin{enumerate}
 \item If $\tilde{p}=\wildcard\D\wildcard\D$ or $\tilde{p}=\DDDC$ (i.e., $\PDISC$), then $\ODISC$ resists invasion attempts if and only if
 \begin{equation} \label{2R>T+P}
  2R>T+P
 \end{equation}
 holds. This condition holds for all snowdrift games~\eqref{SG} and stag hunts~\eqref{SH}, but not for all prisoner's dilemmas~\eqref{PD}.
 \item For all remaining mutant strategies, $\ODISC$ resists invasion attempts if and only if $R>S$, which always holds across cooperative dilemmas~\eqref{R>S}.
\end{enumerate}
The condition in item 2 holds across all cooperative dilemmas, and the condition in item 1 holds across all snowdrift games and stag hunts; the stability indices for these two classes thus converge to one.
Within prisoner's dilemmas, the only binding condition is $2R>T+P$~\eqref{2R>T+P}: it fails precisely for the games with $2R<T+P$, in which the mutants of item 1 invade, and by Lemma~\ref{lem:sigma-PD-2R>T+P} the prisoner's dilemmas satisfying $2R>T+P$ make up a fraction $2/3$ of the class.
\end{proof}

\subsubsection{Consistent discriminating systems with PDISC as action rule}
\label{subsubsec:PDISC-asymptotics-stability}

It is clear from the characterizations of the sets $\QPDisc$ \eqref{QPDisc-set} and $\Rset'$ \eqref{R'} that $\QPDisc \not\subset \Rset'$ holds, so that a moral system $(p,d) \in \QPDisc$ can be invaded by $\ALLD$ for some cooperative dilemmas.
The following proposition completes the analysis by considering all alternative mutant strategies and quantifies the resulting stability indices.

\begin{proposition}[Stability of consistent discriminating systems with $\PDISC$ as action rule]\label{prop:stability-QPDisc}
For $\epsilon=0$ and in the limit $\chi \to 0$, a consistent discriminating system with $\PDISC$ as action rule is stable across snowdrift games and stag hunts, but it can be invaded by mutants with $\pBGm=\pGGm=\D$ if the cooperative dilemma is a prisoner's dilemma for which $2R<T+P$ holds.
Thus:
\begin{align*}
 \lim_{\chi \to 0} \left( \sigmaPD_{p,d} \right)_{\epsilon=0} &= \tfrac{2}{3}, &
 \lim_{\chi \to 0} \left( \sigmaSG_{p,d} \right)_{\epsilon=0} &= 1, &
 \lim_{\chi \to 0} \left( \sigmaSH_{p,d} \right)_{\epsilon=0} &= 1 \quad \text{for all } (p,d) \in \QPDisc.
\end{align*}
\end{proposition}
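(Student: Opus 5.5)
The plan follows the template of Propositions~\ref{prop:stability-QDisc} and~\ref{prop:stability-QODisc}. Since $(p,d)\in\QPDisc$ is consistent, Proposition~\ref{prop:cons->SA} gives $g(\chi)=1-\chi$. By Remark~\ref{rem:consistent-equivalence}, at $\epsilon=0$ the mutant assessment bits reduce to $\tilde{\mathsf d}_{r_1r_2}=\G$ exactly where $\tilde p(r_1,r_2)=p(r_1,r_2)$ and $\B$ otherwise. Hence $h(\chi)$ depends only on the pattern of agreement $b_{r_1r_2}$ between $\tilde p$ and $\DDDC$, as in the proof of Proposition~\ref{prop:consistent-rep-dominance}, and not on which consistent rule is chosen.

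For each of the 15 mutants I will compute $h(\chi)$ from Eq.~\eqref{h*} and expand $\Delta w(\chi)=w(\chi)-\tilde w(\chi)$ in powers of $\chi$. On the resident side, $\omega_{\G\G}=R$, $\omega_{\G\B}=\omega_{\B\G}=\omega_{\B\B}=P$, so $w(\chi)=R+2(P-R)\chi+O(\chi^2)$. On the mutant side, $\tilde\omega_{r_1r_2}=v(\tilde p(r_1,r_2),p(r_2,r_1))$, where the resident co-player cooperates only when both players are good. The results will be collected in a table analogous to Table~\ref{tab:stability-C-ODISC}.

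The mutants then fall into three groups. First, if $\tilde p_{\G\G}=\D$ and $\tilde p_{\B\G}=\D$, the mutant is bad almost surely. I expect $h_0=1/2$ here, with the leading term $\tfrac12(2R-T-P)$. Second, if $\tilde p_{\G\G}=\D$ but $\tilde p_{\B\G}=\C$, I expect the leading term to be a positive multiple of $R-S$. Third, if $\tilde p_{\G\G}=\C$, the mutant keeps a good reputation with probability bounded away from zero and earns roughly $R$ when good. In that case, whenever $\tilde p_{\B\G}\neq\D$ or $\tilde p_{\G B}\neq\D$, the leading term should be a positive multiple of $R-P$, $R-S$ or $T-S$.

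The main obstacle is the fine expansions for mutants that differ from $\DDDC$ only in the rare states. These are $\CDDC$ (ODISC), $\DDCC$ and similar rules, whose $O(1)$ and $O(\chi)$ terms cancel, so second-order terms must be tracked. For a mutant differing only at $\B\B$, such as $\CDDC$, I expect the difference to appear at order $\chi^2$ with a coefficient like $P-S$ or $R-S$. I will check that every such coefficient is one of $R-P$, $R-S$, $T-S$, $2R-T-S$, all of which are positive by Eqs.~\eqref{R>P}, \eqref{R>S}, \eqref{T>S} and~\eqref{2R>T+S}. The only exceptions should be the mutants with $\tilde p_{\B\G}=\tilde p_{\G\G}=\D$, which give the sign of $2R-T-P$.

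To conclude, I note that $2R>T+P$ holds automatically in snowdrift games (since $R>P$ and $R>T$ fails while $2R>T+S>T+P$) and in stag hunts (since $R>T$ and $R>P$). So the system is stable throughout SG and SH, and in PD it is stable exactly when $2R>T+P$. Lemma~\ref{lem:sigma-PD-2R>T+P} then gives $\sigmaPD\to2/3$.
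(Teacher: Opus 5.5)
Your plan is essentially the paper's proof: you enumerate the 15 mutants, expand $\Delta w(\chi)$ as in Table~\ref{tab:stability-C-PDISC}, obtain $\frac{1}{2}(2R-T-P)$ for the mutants $\texttt{*D*D}$ and positive multiples of $R-S$ for all others, and conclude with Lemma~\ref{lem:sigma-PD-2R>T+P}. There are two harmless slips: the $\texttt{*D*D}$ mutants are not bad almost surely but alternate between standings (which is why $h_0=1/2$, as you say), and only $\texttt{CDDC}$ needs the $\chi^2$ term (its coefficient is $R-S$, not $P-S$, which would fail in snowdrift games), whereas $\texttt{DDCC}$ and $\texttt{CDCC}$ already differ at order $\chi$ with coefficient $R-S$.
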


\begin{proof}
We compute $\Delta w(\chi)$ for each of the 15 mutant strategies $\tilde{p} \neq p$; the results are collected in Table~\ref{tab:stability-C-PDISC}.

\begin{table}
\caption{Taylor series expansion of the difference between resident and mutant expected payoffs for $p=\DDDC$ (i.e., $\PDISC$) under a consistent assessment rule $d$ for different mutant strategies $\tilde{p} \neq p$. Here, no execution errors are considered ($\epsilon=0$).}
\begin{center}
\begin{tabular}{c | c | c}
 $\tilde{p}$ (bitstring) & $\tilde{p}$ (name) & $\Delta w(\chi)$ \\
 \hline
 $\DDDD$ & $\ALLD$ & $\frac{1}{2}\left(2 R - T - P \right) + \BigO(\chi)$ \\
 $\CDDD$ & & $\frac{1}{2}\left(2 R - T - P \right) + \BigO(\chi)$ \\
 $\DCDD$ & & $R-S + \BigO(\chi)$ \\
 $\CCDD$ & & $R-S + \BigO(\chi)$ \\
 $\DDCD$ & & $\frac{1}{2}\left(2 R - T - P \right) + \BigO(\chi)$ \\
 $\CDCD$ & & $\frac{1}{2}\left(2 R - T - P \right) + \BigO(\chi)$ \\
 $\DCCD$ & & $R-S + \BigO(\chi)$ \\
 $\CCCD$ & & $R-S + \BigO(\chi)$ \\
 $\DDDC$ & $\PDISC$ & --- \\
 $\CDDC$ & $\ODISC$ & $\left(R-S\right)\chi^2 + \BigO(\chi^3)$ \\
 $\DCDC$ & $\DISC$ & $\frac{1}{3}\left(R-S\right) + \BigO(\chi)$ \\
 $\CCDC$ & $\CDISC$ & $\frac{1}{2}\left(R-S\right) + \BigO(\chi)$ \\
 $\DDCC$ & & $\left(R-S\right) \chi + \BigO(\chi^2)$ \\
 $\CDCC$ & & $\left(R-S\right) \chi + \BigO(\chi^2)$ \\
 $\DCCC$ & & $\frac{1}{2}\left(R-S\right) + \BigO(\chi)$ \\
 $\CCCC$ & $\ALLC$ & $\frac{2}{3}\left(R-S\right) + \BigO(\chi)$ \\
\end{tabular}
\end{center}
\label{tab:stability-C-PDISC}
\end{table}

A direct inspection of Table~\ref{tab:stability-C-PDISC} reveals two conditions for $(p,d) \in \QPDisc$ to resist invasion by a mutant strategy $\tilde{p} \neq p$:
\begin{enumerate}
 \item If $\tilde{p}=\wildcard\D\wildcard\D$, then $\PDISC$ resists invasion attempts if and only if $2R>T+P$~\eqref{2R>T+P} holds. This condition holds for all snowdrift games~\eqref{SG} and stag hunts~\eqref{SH}, but not for all prisoner's dilemmas~\eqref{PD}.
 \item For all remaining mutant strategies, $\PDISC$ resists invasion attempts if and only if $R>S$, which always holds across cooperative dilemmas~\eqref{R>S}.
\end{enumerate}
The condition in item 2 holds across all cooperative dilemmas, and the condition in item 1 holds across all snowdrift games and stag hunts; the stability indices for these two classes thus converge to one.
Within prisoner's dilemmas, the only binding condition is $2R>T+P$~\eqref{2R>T+P}: it fails precisely for the games with $2R<T+P$, in which the mutants of item 1 invade, and by Lemma~\ref{lem:sigma-PD-2R>T+P} the prisoner's dilemmas satisfying $2R>T+P$ make up a fraction $2/3$ of the class.
\end{proof}

\subsubsection{Summary: consistent discriminating systems}

Table~\ref{tab:consistent-disc-summary} collects the stability results for the four discriminating action rules under a consistent assessment rule (Corollary~\ref{prop:stability-QCDisc} and Propositions~\ref{prop:stability-QDisc}, \ref{prop:stability-QODisc}, and~\ref{prop:stability-QPDisc}).
The results confirm that, among consistent discriminating systems, universal stability is exclusive to those with $\CDISC$ as action rule: every other action rule fails in at least one class of cooperative dilemma.
Systems with $\PDISC$ or $\ODISC$ as action rule can be invaded in the strict subset of prisoner's dilemmas for which $2R<T+P$ holds, while systems with $\DISC$ as action rule can be invaded by $\CDISC$ in the strict subset of snowdrift games for which $R+P<2S$ holds.

\begin{table}[t]
\caption{Stability of consistent discriminating systems across cooperative dilemmas, for $\epsilon = 0$ and in the limit $\chi \to 0$. Each entry gives the limiting value of the stability index for the indicated game class: a value of $1$ indicates stability across the entire class, and a value of $2/3$ indicates stability across a strict subset. The last column gives the average stability index $\sigmabar$~\eqref{avg-sigma}. Action rules are ordered as in Fig.~\ref{MT-fig:stability_indices}a of the main text.}\label{tab:consistent-disc-summary}
\begin{center}
\begin{tabular}{c | c c c | c}
 Action rule & $\sigmaPD$ & $\sigmaSG$ & $\sigmaSH$ & $\sigmabar$ \\
 \hline
 $\PDISC$ & $2/3$ & $1$ & $1$ & $8/9$ \\
 $\ODISC$ & $2/3$ & $1$ & $1$ & $8/9$ \\
 $\DISC$  & $1$ & $2/3$ & $1$ & $8/9$ \\
 $\CDISC$ & $1$ & $1$ & $1$ & $1$ \\
\end{tabular}
\end{center}
\end{table}

\subsection{Stability analysis of the leading eight}
\label{subsec:stab-anal-leading-eight}

We now determine the limiting per-class stability indices of the leading-eight moral systems (Section~\ref{subsec:leading8}).
As we shall see, L2 is the only leading-eight system that achieves universal stability; every other member fails in at least one class of cooperative dilemma.

By Propositions~\ref{prop:L-R''-membership} and~\ref{prop:L-R'''-membership}, the leading eight are distributed across the hierarchy $\Rset \supset \Rset' \supset \Rset'' \supset \Rset'''$ of Section~\ref{subsec:which-righteous-are-stable} as follows: L2 belongs to $\Rset'''$; L5, L6, and L8 belong to $\Rset''\setminus\Rset'''$; and L1, L3, L4, and L7 belong to $\Rset'\setminus\Rset''$.
This placement organizes the analysis into three cases:
\begin{enumerate}
 \item L2 and L6. Since L2 belongs to $\QCDisc$ and L6 to $\QDisc$, their stability indices follow directly from Corollary~\ref{prop:stability-QCDisc} and Proposition~\ref{prop:stability-QDisc}.
 \item L5 and L8. These systems resist invasion by $\ALLD$ and $\ALLC$ across cooperative dilemmas, but can be invaded by some discriminating strategy for some games in our payoff space.
 \item L1, L3, L4, and L7. These systems resist invasion by $\ALLD$ across cooperative dilemmas, but can be invaded by $\ALLC$ for some games in our payoff space.
\end{enumerate}
For the six systems in the last two cases, we proceed as in the previous section, expanding the payoff difference $\Delta w(\chi)$ around $\chi=0$ for every alternative mutant strategy.
We treat the eight systems in turn; Table~\ref{tab:leading-eight-summary} collects the results.

\subsubsection{L1 (CDISC, Standing)}

L1 belongs to $\Rset'\setminus\Rset''$ (Proposition~\ref{prop:L-R''-membership}); its vulnerability stems from the bit $\dGBC=\G$ --- the only bit in which its assessment rule differs from L2's --- which exposes it to invasion by $\wildcard\C\C\C$ mutants (among them $\ALLC$) under stag-hunt payoffs.
The following proposition quantifies the resulting stability indices.

\begin{proposition}[Stability of L1]\label{prop:stability-L1}
For $\epsilon=0$ and in the limit $\chi \to 0$, L1 is stable across prisoner's dilemmas and snowdrift games, but it can always be invaded by $\tilde{p}=\wildcard\C\C\C$ if the cooperative dilemma is a stag hunt.
Thus:
\begin{equation*}
 \lim_{\chi \to 0} \left( \sigmaPD_{\mathrm{L1}} \right)_{\epsilon=0} = 1, \ \lim_{\chi \to 0} \left( \sigmaSG_{\mathrm{L1}} \right)_{\epsilon=0} = 1, \  \lim_{\chi \to 0} \left( \sigmaSH_{\mathrm{L1}} \right)_{\epsilon=0} = 0 .
\end{equation*}
\end{proposition}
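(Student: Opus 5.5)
The plan follows the template of the proofs of Propositions~\ref{prop:stability-QDisc}--\ref{prop:stability-QPDisc}. For each of the 15 mutants $\tilde{p}\neq\CCDC$, I will compute the leading term of $\Delta w(\chi)$ and collect the results in a table analogous to Table~\ref{tab:stability-C-CDISC}.

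\textbf{Resident quantities.} By Proposition~\ref{prop:leading-eight-righteous}, L1 lies in $\AsetGGGG$. Hence $g(\chi)=1-\chi$ and $w(\chi)=R+(S+T-2R)\chi+(R+P... )\chi^2+\BigO(\chi^3)$, as in Eq.~\eqref{wchi-expansion-GGGG} with $\wBB=R$, $\wBG=S$, $\wGB=T$, $\wGG=R$ for $p=\CCDC$.

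\textbf{Mutant assessment bits.} For each mutant, the four bits follow from L1's third-order rule, which ignores $a_2$: $\dBBD=\B$, $\dBBC=\G$, $\dBGD=\B$, $\dBGC=\G$, $\dGBD=\G$, $\dGBC=\G$, $\dGGD=\B$, $\dGGC=\G$. Thus $\tilde{\mathsf{d}}_{r_1r_2}=\G$ exactly when the mutant cooperates in state $r_1r_2$, except in state $\G\B$, where both actions are judged good. Substituting these bits into $h(\chi)$ and Eq.~\eqref{tildewchi-explicit} gives $\tilde{w}(\chi)$.

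\textbf{Comparison with L2.} L1 and L2 differ only in the bit $\dGBCC$. A mutant's bits coincide with those it would receive under L2 unless $\pGBm=\C$. So for every mutant with $\pGBm=\D$, $h(\chi)$ and $\tilde{w}(\chi)$ are identical to the L2 case, and the entries of Table~\ref{tab:stability-C-CDISC} carry over. Their leading coefficients are $R-P$, $2R-T-S$, or $(2R-T-S)\chi^2$ (DISC), all positive in every dilemma.

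\textbf{Mutants with $\pGBm=\C$.} These mutants are judged good in state $\G\B$ under L1, so each must be recomputed.
\begin{itemize}
\item For $\wildcard\D\C\wildcard$ and $\wildcard\wildcard\C\D$ mutants, defecting against a good resident while in bad standing (or against a good partner while good) earns bad standing. Residents then defect against them, so $\Delta w$ has a leading term of order one, of the form $c(R-P)$ or $c(2R-T-S)$ with $c>0$. This is positive everywhere.
\item For $\tilde{p}=\wildcard\C\C\C$, the mutant's bits are all $\G$, so $h(\chi)=1-\chi$, like the resident. Then $\wBGm=S$, $\wGGm=R$, and $\wGBm=v(\C,\C)=R$, while $\wGB=T$. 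The first-order coefficient of $\Delta w$ is therefore $(S+T-2R)-(S+R-2R)=T-R$. This is negative for every stag hunt and positive in prisoner's dilemmas and snowdrift games.
\end{itemize}

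\textbf{Conclusion and main obstacle.} Every other mutant has a leading term positive across all dilemmas. So the stability region is the whole class for PD and SG and empty for SH, giving indices $1,1,0$. The main obstacle is the bookkeeping of the order-one cases for $\pGBm=\C$ mutants, where $h_0$ can be fractional (Table~\ref{tab:limtildew-righteous}-type weights $1/2, 2/3$). For these I would verify case by case that $\tilde{w}$'s limit is a convex combination putting positive weight on $P$ or $S$, and hence lies below $R$ in every dilemma.
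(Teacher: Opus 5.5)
Your argument follows the paper's own proof: expand $\Delta w(\chi)$ for each of the fifteen mutants and read off the binding conditions $R>P$, $2R>T+S$ and $T>R$. Reusing Table~\ref{tab:stability-C-CDISC} for the seven mutants with $\tilde{\mathsf{p}}_{GB}=\mathtt{D}$ is a legitimate shortcut, since their bits, $h(\chi)$ and payoffs are identical under L1 and L2. The forms you claim for the eight mutants with $\tilde{\mathsf{p}}_{GB}=\mathtt{C}$ are also right; the paper finds:
\begin{itemize}
\item $R-P$ for $\ast\mathtt{DCD}$;
\item $\tfrac12(2R-T-S)$ for $\ast\mathtt{CCD}$;
\item $\tfrac12(R-P)$ for $\mathtt{DDCC}$ and $\tfrac13(R-P)$ for $\mathtt{CDCC}$;
\item $(T-R)\chi$ for $\ast\mathtt{CCC}$.
\end{itemize}

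One step, however, is false as written. For $\tilde p=\mathtt{DCCC}$ the mutant defects in state $BB$. Since L1 has $\mathsf{d}_{BBD\ast}=B$ (exactly as your own rule for the mutant bits says), its bits are $(B,G,G,G)$, not all $G$. Hence
\[
h(\chi)=\frac{1-2\chi+2\chi^2}{1-\chi+2\chi^2}=1-\chi-\chi^2+O(\chi^3)\neq 1-\chi .
\]
The conclusion survives only for two reasons: $h$ agrees with $1-\chi$ to first order, and the $BB$-state term of $\tilde w$ carries the factor $(1-h)(1-g)=O(\chi^2)$. So $\tilde w=R+(S-R)\chi+O(\chi^2)$ and $\Delta w=(T-R)\chi+O(\chi^2)$ still hold. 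You need to give this argument rather than claim $h=1-\chi$.

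Two smaller slips:
\begin{itemize}
\item The $\chi^2$ coefficient of $w$ is $2R-T-S$; no $P$ occurs among the resident payoffs of CDISC.
\item Your closing remark, that a convex combination with positive weight on $P$ or $S$ lies below $R$, does not cover $\ast\mathtt{CCD}$. Its limit $(S+T)/2$ lies below $R$ only by the efficiency condition $2R>T+S$, which your first bullet does record correctly.
\end{itemize}
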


\begin{proof}
We compute $\Delta w(\chi)$ for each of the 15 mutant strategies $\tilde{p} \neq p$; the results are collected in Table~\ref{tab:stability-L1}.

\begin{table}
\caption{Taylor series expansion of the difference between resident and mutant expected payoffs for the leading-eight moral system L1 ($\CDISC$, Standing) for different mutant strategies $\tilde{p} \neq \CDISC$. Here, no execution errors are considered ($\epsilon=0$).}\label{tab:stability-L1}
\begin{center}
\begin{tabular}{c | c | c | c}
 $\tilde{p}$ (ID) & $\tilde{p}$ (bitstring) & $\tilde{p}$ (name) & $\Delta w(\chi)$ \\
 \hline
 0 & $\DDDD$ & $\ALLD$ & $R-P + \BigO(\chi)$ \\
 1 & $\CDDD$ & & $R-P + \BigO(\chi)$ \\
 2 & $\DCDD$ & & $\frac{1}{2}(2R - T - S) + \BigO(\chi)$ \\
 3 & $\CCDD$ & & $\frac{1}{2}(2R - T - S) + \BigO(\chi)$ \\
 4 & $\DDCD$ & & $R-P + \BigO(\chi)$ \\
 5 & $\CDCD$ & & $R-P + \BigO(\chi)$ \\
 6 & $\DCCD$ & & $\frac{1}{2}(2R - T - S) + \BigO(\chi)$ \\
 7 & $\CCCD$ & & $\frac{1}{2}(2R - T - S) + \BigO(\chi)$ \\
 8 & $\DDDC$ & $\PDISC$ & $\frac{1}{2} (R-P) + \BigO(\chi)$ \\
 9 & $\CDDC$ & $\ODISC$ & $\frac{1}{3} (R-P) + \BigO(\chi)$ \\
 10 & $\DCDC$ & $\DISC$ & $(2 R - T - S) \chi^2 + \BigO(\chi^3)$ \\
 11 & $\CCDC$ & $\CDISC$ & --- \\
 12 & $\DDCC$ & & $\frac{1}{2}(R-P) + \BigO(\chi)$ \\
 13 & $\CDCC$ & & $\frac{1}{3}(R-P) + \BigO(\chi)$ \\
 14 & $\DCCC$ & & $(T-R) \chi + \BigO(\chi^2)$ \\
 15 & $\CCCC$ & $\ALLC$ & $(T-R) \chi + \BigO(\chi^2)$ \\
\end{tabular}
\end{center}
\end{table}

A direct inspection of Table~\ref{tab:stability-L1} reveals three conditions for L1 to resist invasion by a mutant strategy $\tilde{p} \neq p$:
\begin{enumerate}
 \item If $\tilde{p}=\wildcard\D\wildcard\wildcard$, then L1 resists invasion attempts if and only if $R>P$, which always holds across cooperative dilemmas~\eqref{R>P}.
 \item If $\tilde{p}=\wildcard\C\C\C$, then L1 resists invasion attempts if and only if $T>R$, which holds for all prisoner's dilemmas~\eqref{PD} and snowdrift games~\eqref{SG}, but for no stag hunt~\eqref{SH}.
 \item For all remaining mutant strategies, L1 resists invasion attempts if and only if $2R>T+S$, which always holds across cooperative dilemmas~\eqref{2R>T+S}.
\end{enumerate}
The conditions in items 1 and 3 hold across all cooperative dilemmas, and the condition in item 2 holds across all prisoner's dilemmas and snowdrift games; the stability indices for these two classes thus converge to one.
In contrast, $T>R$ fails for every stag hunt, in which the mutants of item 2 invade; the stability index for this class thus converges to zero.
\end{proof}

\subsubsection{L2 (CDISC, Consistent Standing)}

L2 belongs to $\QCDisc$ (cf.\ Eqs.~\eqref{L2} and~\eqref{QCDisc-set}), the set of contrite consistent discriminating systems.
In contrast with L1, its assessment rule judges cooperation with bad co-players as bad ($\dGBC=\B$), so a $\wildcard\C\C\C$ mutant loses its good standing whenever it cooperates with a bad co-player, and is subsequently exploited by discriminating residents.
As a result, the binding stability condition against these mutants is $T>S$ (Table~\ref{tab:stability-C-CDISC}) rather than L1's $T>R$ (Table~\ref{tab:stability-L1}); since $T>S$ holds across all cooperative dilemmas~\eqref{T>S}, L2 resists these mutants even in stag hunts, where L1 is invaded.

\begin{proposition}[Stability of L2]\label{prop:stability-L2}
L2 is universally stable, that is,
\begin{equation*}
 \lim_{\chi \to 0} \left( \sigmaPD_{\mathrm{L2}} \right)_{\epsilon=0} = 1, \ \lim_{\chi \to 0} \left( \sigmaSG_{\mathrm{L2}} \right)_{\epsilon=0} = 1, \  \lim_{\chi \to 0} \left( \sigmaSH_{\mathrm{L2}} \right)_{\epsilon=0} = 1 .
\end{equation*}
\end{proposition}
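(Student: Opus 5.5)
The plan is to obtain Proposition~\ref{prop:stability-L2} as an immediate corollary of results already established, without any new expansion of $\Delta w(\chi)$. The first step is to verify that L2 is a contrite consistent discriminating system, that is, $\mathrm{L2}\in\QCDisc$. Comparing Eq.~\eqref{L2} with Eq.~\eqref{QCDisc-set}, I will read off the eight constrained bits. L2 uses $p=\CCDC$ and has a third-order norm ignoring $\ac$. It therefore suffices to check the following:
\begin{itemize}
\item $\dBBC=\dBGC=\dGBD=\dGGC=\G$ gives $\dBBCC=\dBGCD=\dGBDC=\dGGCC=\G$;
\item $\dBBD=\dBGD=\dGBC=\dGGD=\B$ gives $\dBBDC=\dBGDD=\dGBCC=\dGGDC=\B$.
\end{itemize}
This is exactly the membership noted after Eq.~\eqref{L2} and recorded in Proposition~\ref{prop:L-R'''-membership}.

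The second step is to invoke Corollary~\ref{prop:stability-QCDisc}, which states that every system in $\QCDisc$ has all three limiting stability indices equal to one. Applied to L2, this yields the three displayed limits. Equivalently, one can cite Theorem~\ref{thm:universal-stability}, since L2 is righteous by Proposition~\ref{prop:leading-eight-righteous}.

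For self-containedness, I would add a remark on why no separate computation is needed. By Remark~\ref{rem:consistent-equivalence}, at $\epsilon=0$ all $256$ consistent norms for $\CDISC$ produce identical $g(\chi)$, $h(\chi)$ and payoffs. Table~\ref{tab:stability-C-CDISC} therefore applies verbatim to L2. Its leading coefficients are $R-P$, $T-S$, $2R-T-S$ or positive multiples of them, all positive throughout the payoff regions by Eqs.~\eqref{R>P}, \eqref{T>S} and~\eqref{2R>T+S}. The stability regions thus coincide with the full $\mathcal{G}_\Gamma$.

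The only potential obstacle is the bit-matching bookkeeping in the first step, namely translating the third-order notation $\mathsf{d}_{r_1r_2a_1*}$ into the fourth-order bits required by $\QCDisc$ when $\ac$ equals the resident's prescribed action. This is routine, so the proof will be a few lines long.
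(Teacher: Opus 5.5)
Your proposal is correct and matches the paper's proof, which is the one-line observation that $\mathrm{L2}\in\QCDisc$ followed by an appeal to Corollary~\ref{prop:stability-QCDisc}. Your bit-by-bit check of the eight constrained entries is accurate, including the translation from third-order bits by fixing the co-player's action to its prescribed value; your remarks on Theorem~\ref{thm:universal-stability} and Table~\ref{tab:stability-C-CDISC} are optional reinforcement.
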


\begin{proof}
Immediate from Corollary~\ref{prop:stability-QCDisc}, since L2 belongs to $\QCDisc$.
\end{proof}

\subsubsection{L3 (DISC, Simple Standing)}

L3 belongs to $\Rset'\setminus\Rset''$ (Proposition~\ref{prop:L-R''-membership}); its Simple Standing assessment rule has both $\dGBC=\G$ and $\dBBC=\G$.
The former exposes it, like L1, to invasion by $\wildcard\C\C\C$ mutants (among them $\ALLC$) under stag-hunt payoffs; the latter exposes it to invasion by $\CDISC$ under snowdrift payoffs.
The following proposition quantifies the resulting stability indices.

\begin{proposition}[Stability of L3] \label{prop:stability-L3}
For $\epsilon=0$ and in the limit $\chi \to 0$, L3 is stable across prisoner's dilemmas, but it can always be invaded by $\CDISC$ if the cooperative dilemma is a snowdrift game, and by $\tilde{p}=\wildcard\C\C\C$ if the cooperative dilemma is a stag hunt.
Thus:
\begin{equation*}
 \lim_{\chi \to 0} \left( \sigmaPD_{\mathrm{L3}} \right)_{\epsilon=0} = 1, \  \lim_{\chi \to 0} \left( \sigmaSG_{\mathrm{L3}} \right)_{\epsilon=0} = 0, \  \lim_{\chi \to 0} \left( \sigmaSH_{\mathrm{L3}} \right)_{\epsilon=0} = 0 .
\end{equation*}
\end{proposition}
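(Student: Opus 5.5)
The plan follows the template of the proofs of Propositions~\ref{prop:stability-QDisc} and~\ref{prop:stability-L1}. The resident is $p=\DCDC$ under Simple Standing. Its bits are $\dBBD=\dBBC=\dBGC=\dGBD=\dGBC=\dGGC=\G$ and $\dBGD=\dGGD=\B$. With $\pBG=\C$ this gives $\dBB=d(\B,\B,\D,\D)=\G$, $\dBG=d(\B,\G,\C,\D)=\G$, $\dGB=\dGG=\G$. So L3 lies in $\AsetGGGG$ with $g(\chi)=1-\chi$ (Lemma~\ref{lem:partition-SA}). For each of the 15 mutants $\tilde p\neq\DCDC$ I will:
\begin{itemize}
\item read off the four mutant bits $\dBBm,\dBGm,\dGBm,\dGGm$ from the norm, which ignores $\ac$, so each bit is determined by $(\rf,\rc,\prfrcm)$;
\item insert them into the closed form for $h(\chi)$;
\item expand $\Delta w(\chi)=w(\chi)-\tilde w(\chi)$ in powers of $\chi$, using Eq.~\eqref{wchi-expansion-GGGG} for the resident;
\item collect the leading terms in a table analogous to Table~\ref{tab:stability-C-DISC}.
\end{itemize}

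The mutants fall into groups that can be handled together.
\begin{itemize}
\item \emph{Mutants with $\pGGm=\D$} get $\dGGm=\B$ and $\dBGm=\B$ (they defect against good residents), so $h_0=0$. Their leading term is then $R$ minus a convex combination of $P$ and $S$, hence proportional to $R-P$ or to $2R-T-S$, as in the $\DISC$ rows of Table~\ref{tab:stability-C-DISC}. This is always positive, since here $\pBGm=\D$ gives mutant payoff $P$ or $T$-type terms as there.
\item \emph{Mutants of the form $\wildcard\C\C\C$ (including $\ALLC$)} keep $\dGBm=\dGBC=\G$. This is the same feature that sank L1. I expect the leading term to be $(T-R)\chi$, exactly as in Table~\ref{tab:stability-L1}: an $\ALLC$ mutant stays good while cooperating with bad residents and earns $R$ rather than the resident's $T$ from defecting on bad co-players who cooperate. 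The term is negative in every stag hunt, so $\sigmaSH=0$.
\item \emph{The mutant $\CDISC$} agrees with $\DISC$ except at $\BB$. Since $\dBBm=\dBBC=\G$, Table~\ref{tab:Deltaw-disc-AGGGG} (row $\dBBm=\G$, $\pBB=\D$) gives the leading term $(P-S)\chi^2$. This is negative for every snowdrift game and positive for PD and SH, so $\sigmaSG=0$.
\end{itemize}

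To conclude, I check that no other mutant beats L3 in any prisoner's dilemma. All other leading terms should be positive multiples of $R-P$, $2R-T-S$, or $T-S$ (and $T-R>0$ in PD). This gives $\sigmaPD=1$.

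The main obstacle is the expansion for the $\wildcard\C\C\C$ and the remaining $\wildcard\wildcard\C\C$ mutants. There the leading orders cancel, and one must carry $h(\chi)$ to first order and keep all $\chi$ terms carefully to confirm the coefficient $T-R$. I would do this by writing $h(\chi)=1-c\chi+O(\chi^2)$, computing $c$ from the mutant bits, and then substituting into Eq.~\eqref{tildewchi-explicit}.
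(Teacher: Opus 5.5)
Your plan follows the paper's proof, and your values for the decisive rows agree with Table~\ref{tab:stability-L3}. The paper also tabulates $\Delta w(\chi)$ for all 15 mutants with $g(\chi)=1-\chi$. It then reads off three facts: the $\wildcard\C\C\C$ rows, $(T-R)\chi$, fail in every stag hunt; the $\CDISC$ row, $(P-S)\chi^2$, fails in every snowdrift game; and all remaining rows are positive in every cooperative dilemma.

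The flaw is in your treatment of the mutants with $\pGGm=\D$, and that step has to be redone. The bit $\dBGm=d(\B,\G,\pBGm,\D)$ equals $\B$ only when $\pBGm=\D$. For $\DCDD$, $\CCDD$, $\DCCD$ and $\CCCD$ it equals $\dBGC=\G$, so the mutant bits are $(\G,\G,\G,\B)$. Table~\ref{tab:limtildew-righteous} then gives $h_0=1/2$, not $0$. The mutant's limiting payoff is $\tfrac12(S+T)$, and the leading term is $\tfrac12(2R-T-S)$. Its positivity rests on the efficiency condition~\eqref{2R>T+S}. Your argument with $h_0=0$ would instead give $R-S$, so it does not justify the $2R-T-S$ you then quote. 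Your remark that ``here $\pBGm=\D$'' is false for exactly these four mutants. The sign conclusion survives.

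Your anticipated main obstacle is also misplaced. For $\DDCC$ and $\CDCC$ (as for $\PDISC$ and $\ODISC$) the mutant bits are $(\G,\B,\G,\G)$. This gives $h_0=2/3$ and $\Delta w\to\tfrac13(R-P)$ at order one, so nothing cancels. For the $\wildcard\C\C\C$ mutants all four mutant bits are $\G$, so $h(\chi)=1-\chi=g(\chi)$ exactly. Then $\Delta w(\chi)=\chi(1-\chi)(T-R)+\chi^2\,(P-\wBBm)$ directly, with no first-order expansion of $h$ needed. The same observation gives $\Delta w(\chi)=(P-S)\chi^2$ exactly for $\CDISC$. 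That is cleaner than invoking Table~\ref{tab:Deltaw-disc-AGGGG}, which was derived under a hypothesis L3 does not satisfy (resistance to $\ALLC$). Its computation does carry over here, since it only uses $\pBG=\C$ and $\dBG=\dGB=\dGG=\G$.
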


\begin{proof}
We compute $\Delta w(\chi)$ for each of the 15 mutant strategies $\tilde{p} \neq p$; the results are collected in Table~\ref{tab:stability-L3}.

\begin{table}
\caption{Taylor series expansion of the difference between resident and mutant expected payoffs for the leading-eight moral system L3 ($\DISC$, Simple Standing) for different mutant strategies $\tilde{p} \neq \DISC$. Here, no execution errors are considered ($\epsilon=0$).}\label{tab:stability-L3}
\begin{center}
\begin{tabular}{c | c | c | c}
 $\tilde{p}$ (ID) & $\tilde{p}$ (bitstring) & $\tilde{p}$ (name) & $\Delta w(\chi)$ \\
 \hline
 0 & $\DDDD$ & $\ALLD$ & $R-P + \BigO(\chi)$ \\
 1 & $\CDDD$ & & $R-P + \BigO(\chi)$ \\
 2 & $\DCDD$ & & $\frac{1}{2}(2R - T - S) + \BigO(\chi)$ \\
 3 & $\CCDD$ & & $\frac{1}{2}(2R - T - S) + \BigO(\chi)$ \\
 4 & $\DDCD$ & & $R-P + \BigO(\chi)$ \\
 5 & $\CDCD$ & & $R-P + \BigO(\chi)$ \\
 6 & $\DCCD$ & & $\frac{1}{2}(2R - T - S) + \BigO(\chi)$ \\
 7 & $\CCCD$ & & $\frac{1}{2}(2R - T - S) + \BigO(\chi)$ \\
 8 & $\DDDC$ & $\PDISC$ & $\frac{1}{3} (R-P) + \BigO(\chi)$ \\
 9 & $\CDDC$ & $\ODISC$ & $\frac{1}{3} (R-P) + \BigO(\chi)$ \\
 10 & $\DCDC$ & $\DISC$ & --- \\
 11 & $\CCDC$ & $\CDISC$ & $(P - S) \chi^2 + \BigO(\chi^3)$ \\
 12 & $\DDCC$ & & $\frac{1}{3}(R-P) + \BigO(\chi)$ \\
 13 & $\CDCC$ & & $\frac{1}{3}(R-P) + \BigO(\chi)$ \\
 14 & $\DCCC$ & & $(T-R) \chi + \BigO(\chi^2)$ \\
 15 & $\CCCC$ & $\ALLC$ & $(T-R) \chi + \BigO(\chi^2)$ \\
\end{tabular}
\end{center}
\end{table}

A direct inspection of Table~\ref{tab:stability-L3} reveals four conditions for L3 to resist invasion by a mutant strategy $\tilde{p} \neq p$:
\begin{enumerate}
 \item If $\tilde{p}=\wildcard\D\wildcard\wildcard$, then L3 resists invasion attempts if and only if $R>P$, which always holds across cooperative dilemmas~\eqref{R>P}.
 \item If $\tilde{p}=\wildcard\C\C\C$, then L3 resists invasion attempts if and only if $T>R$, which holds for all prisoner's dilemmas~\eqref{PD} and snowdrift games~\eqref{SG}, but for no stag hunt~\eqref{SH}.
 \item If $\tilde{p}=\CCDC$ (i.e., $\CDISC$), then L3 resists invasion attempts if and only if $P>S$, which holds for all prisoner's dilemmas~\eqref{PD} and stag hunts~\eqref{SH}, but for no snowdrift game~\eqref{SG}.
 \item For all remaining mutant strategies, L3 resists invasion attempts if and only if $2R>T+S$, which always holds across cooperative dilemmas~\eqref{2R>T+S}.
\end{enumerate}
All four conditions hold across prisoner's dilemmas, so the corresponding stability index converges to one.
In contrast, $P>S$ fails for every snowdrift game and $T>R$ fails for every stag hunt; the stability indices for these two classes thus converge to zero.
\end{proof}

\subsubsection{L4 (DISC, SS-SJ)}

L4 belongs to $\Rset'\setminus\Rset''$ (Proposition~\ref{prop:L-R''-membership}); its assessment rule has $\dGBC=\G$ and $\dBBC=\B$.
The former exposes it, like L1 and L3, to invasion by $\wildcard\C\C\C$ mutants (among them $\ALLC$) under stag-hunt payoffs; the latter softens the vulnerability to $\CDISC$ present in L3, restricting it to the snowdrift games for which $R+P<2S$ holds.
The following proposition quantifies the resulting stability indices.

\begin{proposition}[Stability of L4] \label{prop:stability-L4}
For $\epsilon=0$ and in the limit $\chi \to 0$, L4 is stable across prisoner's dilemmas, but it can be invaded by $\CDISC$ if the cooperative dilemma is a snowdrift game for which $R+P<2S$ holds, and it can always be invaded by $\tilde{p}=\wildcard\C\C\C$ if the cooperative dilemma is a stag hunt.
Thus:
\begin{equation*}
 \lim_{\chi \to 0} \left( \sigmaPD_{\mathrm{L4}} \right)_{\epsilon=0} = 1, \  \lim_{\chi \to 0} \left( \sigmaSG_{\mathrm{L4}} \right)_{\epsilon=0} = 2/3, \
 \lim_{\chi \to 0} \left( \sigmaSH_{\mathrm{L4}} \right)_{\epsilon=0} = 0 .
\end{equation*}
\end{proposition}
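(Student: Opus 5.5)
The plan follows the template of the proofs of Propositions~\ref{prop:stability-L1} and~\ref{prop:stability-L3}. For each of the 15 mutants $\tilde{p}\neq\DISC$ I will expand $\Delta w(\chi)$ as $\chi\to0$ and collect the leading terms in a table.

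L4 lies in $\AsetGGGG$ (Proposition~\ref{prop:leading-eight-righteous}), so $g(\chi)=1-\chi$ by Eq.~\eqref{g-SA-1111}. The resident payoff is then given by the expansion~\eqref{wchi-expansion-GGGG} with $\wBB=P$, $\wBG=S$, $\wGB=T$ and $\wGG=R$. For each mutant, the four mutant assessment bits $\dBBm,\dBGm,\dGBm,\dGGm$ are read off the L4 norm, using that it ignores $\ac$. These bits fix $h(\chi)$ through the formula in Section~\ref{subsubsec:preliminaries-stability}, and Eq.~\eqref{tildewchi-explicit} then gives $\tilde{w}(\chi)$.

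L4 differs from L3 only in the bit $\dBBC=\B$. That bit enters only when the mutant cooperates in state $\B\B$, which happens for mutants with $\pBBm=\C$. For all other mutants I expect the table to coincide with Table~\ref{tab:stability-L3}.

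The mutants then fall into four groups, which I will verify directly.
\begin{enumerate}
 \item Mutants $\wildcard\D\wildcard\wildcard$ are judged bad when they defect on good co-players. They earn roughly $P$, or a convex mix of $P$ and $R$, so the leading term is a positive multiple of $R-P$.
 \item Mutants that keep good standing but sometimes defect against good co-players while bad give leading terms proportional to $2R-T-S>0$.
 \item Mutants $\wildcard\C\C\C$ carry the same bit $\dGBC=\G$ as L1 and L3. I therefore expect the leading term $(T-R)\chi$ from Table~\ref{tab:stability-L1}, so every stag hunt is invadable and $\sigmaSH=0$.
 \item The crucial mutant is $\CDISC$. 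In L4 we have $\dBBm=\dBBDC$ read via the third-order bit $\dBBC=\B$, which is the $\dBBm=\B$ row of Table~\ref{tab:Deltaw-disc-AGGGG} with $\pBB=\D$. This gives the leading term $(R+P-2S)\chi^2$, exactly as in Table~\ref{tab:stability-C-DISC}.
\end{enumerate}

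From this classification, every condition holds throughout the prisoner's dilemmas, so $\sigmaPD=1$. In snowdrift games the only binding condition is $R+P>2S$, and Lemma~\ref{lem:sigma-SG-R+P>2S} gives $\sigmaSG=2/3$. In stag hunts $T>R$ fails everywhere, so $\sigmaSH=0$.

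The main obstacle is the bookkeeping for mutants with $\pBBm=\C$ other than $\CDISC$, namely $\CDDC$, $\CDCC$ and $\CCDD$. For these, $\dBBC=\B$ changes $\dBBm$, which can shift the $O(1)$ coefficients; for example, $\frac13(R-P)$ may become $\frac12(R-P)$. It can also alter which order of $\chi$ leads when the first-order terms cancel. These coefficient changes do not affect positivity, so I will check that each such leading term is still a positive multiple of $R-P$, $2R-T-S$ or $T-S$.

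In particular, $\ALLC$ has $\pBBm=\C$. I must confirm that its $\chi$-order term remains $(T-R)\chi$, driven by $\dGBm=\G$ as in Table~\ref{tab:expansion-tildew-ALLC-A*GGG}, and is not pushed below by the new $\dBBm=\B$. Since the $\B\B$ state has weight $\chi^2$, it should not affect the order-$\chi$ term.
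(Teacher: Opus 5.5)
Your proposal is correct and is essentially the paper's proof: you expand $\Delta w(\chi)$ for all 15 mutants and sort the leading terms into the conditions $R>P$, $T>R$, $R+P>2S$ and $2R>T+S$. Your observation that L4 differs from L3 only in $\dBBC$ is a convenient shortcut rather than a different route; it means only mutants with $\tilde{p}(\B,\B)=\C$ can change their leading terms, and in fact only three do: $\ODISC$ and $\C\D\C\C$, which move from $\frac{1}{3}(R-P)$ to $\frac{1}{2}(R-P)$, and $\CDISC$. Two small slips to fix: the bit for a $\CDISC$ mutant among $\DISC$ residents is $d(\B,\B,\C,\D)$, not $d(\B,\B,\D,\C)$ (the latter equals $\G$ in L4 and would reproduce L3's $(P-S)\chi^2$), and your second group is $\wildcard\C\wildcard\D$, whose members alternate between good and bad standing by defecting on good co-players while good and earn $(T+S)/2$, rather than keeping good standing.
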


\begin{proof}
We compute $\Delta w(\chi)$ for each of the 15 mutant strategies $\tilde{p} \neq p$; the results are collected in Table~\ref{tab:stability-L4}.

\begin{table}
\caption{Taylor series expansion of the difference between resident and mutant expected payoffs for the leading-eight moral system L4 ($\DISC$, SS-SJ) for different mutant strategies $\tilde{p} \neq \DISC$. Here, no execution errors are considered ($\epsilon=0$).}\label{tab:stability-L4}
\begin{center}
\begin{tabular}{c | c | c | c}
 $\tilde{p}$ (ID) & $\tilde{p}$ (bitstring) & $\tilde{p}$ (name) & $\Delta w(\chi)$ \\
 \hline
 0 & $\DDDD$ & $\ALLD$ & $R-P + \BigO(\chi)$ \\
 1 & $\CDDD$ & & $R-P + \BigO(\chi)$ \\
 2 & $\DCDD$ & & $\frac{1}{2}(2R - T - S) + \BigO(\chi)$ \\
 3 & $\CCDD$ & & $\frac{1}{2}(2R - T - S) + \BigO(\chi)$ \\
 4 & $\DDCD$ & & $R-P + \BigO(\chi)$ \\
 5 & $\CDCD$ & & $R-P + \BigO(\chi)$ \\
 6 & $\DCCD$ & & $\frac{1}{2}(2R - T - S) + \BigO(\chi)$ \\
 7 & $\CCCD$ & & $\frac{1}{2}(2R - T - S) + \BigO(\chi)$ \\
 8 & $\DDDC$ & $\PDISC$ & $\frac{1}{3} (R-P) + \BigO(\chi)$ \\
 9 & $\CDDC$ & $\ODISC$ & $\frac{1}{2} (R-P) + \BigO(\chi)$ \\
 10 & $\DCDC$ & $\DISC$ & --- \\
 11 & $\CCDC$ & $\CDISC$ & $(R+P - 2S) \chi^2 + \BigO(\chi^3)$ \\
 12 & $\DDCC$ & & $\frac{1}{3}(R-P) + \BigO(\chi)$ \\
 13 & $\CDCC$ & & $\frac{1}{2}(R-P) + \BigO(\chi)$ \\
 14 & $\DCCC$ & & $(T-R) \chi + \BigO(\chi^2)$ \\
 15 & $\CCCC$ & $\ALLC$ & $(T-R) \chi + \BigO(\chi^2)$ \\
\end{tabular}
\end{center}
\end{table}

A direct inspection of Table~\ref{tab:stability-L4} reveals four conditions for L4 to resist invasion by a mutant strategy $\tilde{p} \neq p$:
\begin{enumerate}
 \item If $\tilde{p}=\wildcard\D\wildcard\wildcard$, then L4 resists invasion attempts if and only if $R>P$, which always holds across cooperative dilemmas~\eqref{R>P}.
 \item If $\tilde{p}=\wildcard\C\C\C$, then L4 resists invasion attempts if and only if $T>R$, which holds for all prisoner's dilemmas~\eqref{PD} and snowdrift games~\eqref{SG}, but for no stag hunt~\eqref{SH}.
 \item If $\tilde{p}=\CCDC$ (i.e., $\CDISC$), then L4 resists invasion attempts if and only if $R+P>2S$, which holds for all prisoner's dilemmas~\eqref{PD} and stag hunts~\eqref{SH}, but not for all snowdrift games~\eqref{SG}.
 \item For all remaining mutant strategies, L4 resists invasion attempts if and only if $2R>T+S$, which always holds across cooperative dilemmas~\eqref{2R>T+S}.
\end{enumerate}
All four conditions hold across prisoner's dilemmas, so the corresponding stability index converges to one.
Within snowdrift games, the only binding condition is $R+P>2S$: it fails precisely for the games with $R+P<2S$, in which $\CDISC$ invades, and by Lemma~\ref{lem:sigma-SG-R+P>2S} the snowdrift games satisfying $R+P>2S$ make up a fraction $2/3$ of the class.
Finally, $T>R$ fails for every stag hunt, in which the mutants of item 2 invade; the stability index for this class thus converges to zero.
\end{proof}

\subsubsection{L5 (DISC, SJ-SS)}

L5 belongs to $\Rset''\setminus\Rset'''$ (Proposition~\ref{prop:L-R''-membership}); its assessment rule has $\dGBC=\B$ and $\dBBC=\G$.
The former removes the vulnerability to $\wildcard\C\C\C$ mutants present in L1, L3, and L4 (the binding condition becomes $T>S$ rather than $T>R$); the latter exposes it, like L3, to invasion by $\CDISC$ under snowdrift payoffs.
The following proposition quantifies the resulting stability indices.

\begin{proposition}[Stability of L5] \label{prop:stability-L5}
For $\epsilon=0$ and in the limit $\chi \to 0$, L5 is stable across prisoner's dilemmas and stag hunts, but it can always be invaded by $\CDISC$ if the cooperative dilemma is a snowdrift game.
Thus:
\begin{equation*}
 \lim_{\chi \to 0} \left( \sigmaPD_{\mathrm{L5}} \right)_{\epsilon=0} = 1, \  \lim_{\chi \to 0} \left( \sigmaSG_{\mathrm{L5}} \right)_{\epsilon=0} = 0, \
 \lim_{\chi \to 0} \left( \sigmaSH_{\mathrm{L5}} \right)_{\epsilon=0} = 1 .
\end{equation*}
\end{proposition}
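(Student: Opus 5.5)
I will follow the same procedure as for L1, L3 and L4: compute the leading term of $\Delta w(\chi)$ for each of the 15 mutants $\tilde{p}\neq\DISC$ and collect the results in a table analogous to Table~\ref{tab:stability-L4}.

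Preliminaries. L5 lies in $\AsetGGGG$, since $\dBB=\dBBD=\G$ (Proposition~\ref{prop:leading-eight-righteous}), so $g(\chi)=1-\chi$ by Eq.~\eqref{gchi-explicit}. The resident payoff then has the expansion~\eqref{wchi-expansion-GGGG}, with $\wGG=R$, $\wBG=S$, $\wGB=T$ and $\wBB=P$. For each mutant, the four assessment bits $(\dBBm,\dBGm,\dGBm,\dGGm)$ are read off from the third-order norm, using $\dBGD=\dGGD=\B$, $\dBGC=\dGGC=\dGBD=\dBBD=\dBBC=\G$ and $\dGBC=\B$. These bits are inserted into the formula for $h(\chi)$, and $\tilde w(\chi)$ is expanded via~\eqref{tildewchi-explicit}.

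Grouping the mutants. I expect the mutants to fall into the same groups as for L4.
\begin{itemize}
\item Mutants $\wildcard\D\wildcard\wildcard$ defect on good partners when bad. They are judged bad and stay bad, so their limit payoff is a combination of $P$ and $S$, and the leading term is a positive multiple of $R-P$.
\item The remaining non-discriminating mutants that defect against good partners while good ($\wildcard\wildcard\wildcard\D$) give a positive multiple of $2R-T-S$.
\item The mutants $\wildcard\C\C\C$ now meet $\dGBC=\B$. The ALLC analysis (Proposition~\ref{prop:stability-against-ALLC-dBG=G}, case~(ii)(b)) then gives $\Delta w=(T-S)\chi+\BigO(\chi^2)$, which is always positive.
\item PDISC, ODISC, $\DDCC$ and $\CDCC$ should give positive multiples of $R-P$, as in Table~\ref{tab:stability-L4}.
\end{itemize}

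The key case is $\tilde{p}=\CDISC$. It differs from the resident only in the $\B\B$ state. Since $\dBBm=d(\B,\B,\C,\D)=\dBBC=\G$, Table~\ref{tab:Deltaw-disc-AGGGG} (row $\dBBm=\G$, column $\pBB=\D$) gives the leading term
\[
(P-S)\chi^2 .
\]
This is positive for prisoner's dilemmas and stag hunts but negative for every snowdrift game. Hence CDISC invades throughout SG, and $\sigmaSG_{\mathrm{L5}}\to0$.

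Collecting the conditions, every condition other than $P>S$ holds in all cooperative dilemmas by \eqref{R>P}, \eqref{T>S} and \eqref{2R>T+S}, while $P>S$ holds exactly in PD and SH. This gives the indices $1$, $0$ and $1$.

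Main obstacle. The main work is the bookkeeping for the non-discriminating mutants, where the coefficients ($1/2$, $1/3$, $2/3$) come from $h_0$. Only the signs matter, so I will check them against the cases of Table~\ref{tab:limtildew-righteous}. I also need to confirm that no mutant has a leading term of order $\chi$ or $\chi^2$ with a sign that changes within PD or SH. The only candidates are those with $\tilde p$ agreeing with $p$ on $\G\wildcard$ states, namely the discriminators and the $\wildcard\C\C\C$ family, and these are handled above.
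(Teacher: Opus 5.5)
Your proposal is correct and is essentially the paper's proof. The paper tabulates $\Delta w(\chi)$ for all fifteen mutants (Table~\ref{tab:stability-L5}) and obtains exactly the leading terms you predict, with $(P-S)\chi^2$ for $\mathrm{CDISC}$ the only one whose sign depends on the game class.

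Two slips do not affect the conclusion:
\begin{itemize}
\item In your first bullet, only the \texttt{*D*D} mutants stay bad. The \texttt{*D*C} mutants (PDISC, ODISC, \texttt{DDCC}, \texttt{CDCC}) are good a fraction $h_0\in\{2/3,1/2\}$ of the time and earn $(1-h_0)P+h_0R$ in the limit, a mix of $P$ and $R$ rather than $P$ and $S$. Hence $\Delta w\to(1-h_0)(R-P)>0$, as you concluded.
\item The mutants with higher-order leading terms are those agreeing with the resident on the \texttt{BG} and \texttt{GG} states, not on the \texttt{G*} states. These are the \texttt{*C*C} strategies $\mathrm{CDISC}$, \texttt{DCCC} and $\mathrm{ALLC}$, which you do handle.
\end{itemize}
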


\begin{proof}
We compute $\Delta w(\chi)$ for each of the 15 mutant strategies $\tilde{p} \neq p$; the results are collected in Table~\ref{tab:stability-L5}.

\begin{table}
\caption{Taylor series expansion of the difference between resident and mutant expected payoffs for the leading-eight moral system L5 ($\DISC$, SJ-SS) for different mutant strategies $\tilde{p} \neq \DISC$. Here, no execution errors are considered ($\epsilon=0$).}\label{tab:stability-L5}
\begin{center}
\begin{tabular}{c | c | c | c}
 $\tilde{p}$ (ID) & $\tilde{p}$ (bitstring) & $\tilde{p}$ (name) & $\Delta w(\chi)$ \\
 \hline
 0 & $\DDDD$ & $\ALLD$ & $R-P + \BigO(\chi)$ \\
 1 & $\CDDD$ & & $R-P + \BigO(\chi)$ \\
 2 & $\DCDD$ & & $\frac{1}{2}(2R - T - S) + \BigO(\chi)$ \\
 3 & $\CCDD$ & & $\frac{1}{2}(2R - T - S) + \BigO(\chi)$ \\
 4 & $\DDCD$ & & $R-P + \BigO(\chi)$ \\
 5 & $\CDCD$ & & $R-P + \BigO(\chi)$ \\
 6 & $\DCCD$ & & $\frac{1}{2}(2R - T - S) + \BigO(\chi)$ \\
 7 & $\CCCD$ & & $\frac{1}{2}(2R - T - S) + \BigO(\chi)$ \\
 8 & $\DDDC$ & $\PDISC$ & $\frac{1}{3} (R-P) + \BigO(\chi)$ \\
 9 & $\CDDC$ & $\ODISC$ & $\frac{1}{3} (R-P) + \BigO(\chi)$ \\
 10 & $\DCDC$ & $\DISC$ & --- \\
 11 & $\CCDC$ & $\CDISC$ & $(P-S) \chi^2 + \BigO(\chi^3)$ \\
 12 & $\DDCC$ & & $\frac{1}{2}(R-P) + \BigO(\chi)$ \\
 13 & $\CDCC$ & & $\frac{1}{2}(R-P) + \BigO(\chi)$ \\
 14 & $\DCCC$ & & $(T-S) \chi + \BigO(\chi^2)$ \\
 15 & $\CCCC$ & $\ALLC$ & $(T-S) \chi + \BigO(\chi^2)$ \\
\end{tabular}
\end{center}
\end{table}

A direct inspection of Table~\ref{tab:stability-L5} reveals four conditions for L5 to resist invasion by a mutant strategy $\tilde{p} \neq p$:
\begin{enumerate}
 \item If $\tilde{p}=\wildcard\D\wildcard\wildcard$, then L5 resists invasion attempts if and only if $R>P$, which always holds across cooperative dilemmas~\eqref{R>P}.
 \item If $\tilde{p}=\wildcard\C\C\C$, then L5 resists invasion attempts if and only if $T>S$, which always holds across cooperative dilemmas~\eqref{T>S}.
 \item If $\tilde{p}=\CCDC$ (i.e., $\CDISC$), then L5 resists invasion attempts if and only if $P>S$, which holds for all prisoner's dilemmas~\eqref{PD} and stag hunts~\eqref{SH}, but for no snowdrift game~\eqref{SG}.
 \item For all remaining mutant strategies, L5 resists invasion attempts if and only if $2R>T+S$, which always holds across cooperative dilemmas~\eqref{2R>T+S}.
\end{enumerate}
All four conditions hold across prisoner's dilemmas and stag hunts, so the corresponding stability indices converge to one.
In contrast, $P>S$ fails for every snowdrift game, in which $\CDISC$ invades; the stability index for this class thus converges to zero.
\end{proof}

\subsubsection{L6 (DISC, Stern Judging)}

L6 belongs to $\QDisc$ (cf.\ Eqs.~\eqref{L6} and~\eqref{QDisc-set}), the set of consistent discriminating systems with $\DISC$ as action rule.
Its Stern Judging assessment rule has $\dGBC=\B$ and $\dBBC=\B$: like L5, it is protected against $\wildcard\C\C\C$ mutants; like L4, it can be invaded by $\CDISC$ only in the snowdrift games for which $R+P<2S$ holds.

\begin{proposition}[Stability of L6]\label{prop:stability-L6}
For $\epsilon=0$ and in the limit $\chi \to 0$, L6 is stable across prisoner's dilemmas and stag hunts, but it can be invaded by $\CDISC$ if the cooperative dilemma is a snowdrift game for which $R+P<2S$ holds.
Thus:
\begin{equation*}
 \lim_{\chi \to 0} \left( \sigmaPD_{\mathrm{L6}} \right)_{\epsilon=0} = 1, \ \lim_{\chi \to 0} \left( \sigmaSG_{\mathrm{L6}} \right)_{\epsilon=0} = 2/3, \  \lim_{\chi \to 0} \left( \sigmaSH_{\mathrm{L6}} \right)_{\epsilon=0} = 1 .
\end{equation*}
\end{proposition}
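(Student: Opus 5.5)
The plan is to reduce the statement to results already established rather than recompute any expansions. L6 uses $p=\DCDC$ together with the Stern Judging norm. Comparing Eq.~\eqref{L6} with Eq.~\eqref{QDisc-set}, I will check bit by bit that L6 belongs to $\QDisc$. The norm is third-order, so $\mathsf{d}_{\rf\rc\af\ac}$ does not depend on $\ac$. The entries required by $\QDisc$ then read off directly:
\[
\dBBCD=\dBBC=\B,\quad \dBGDD=\dBGD=\B,\quad \dGBCC=\dGBC=\B,\quad \dGGDC=\dGGD=\B,
\]
and
\[
\dBBDD=\dBBD=\G,\quad \dBGCD=\dBGC=\G,\quad \dGBDC=\dGBD=\G,\quad \dGGCC=\dGGC=\G.
\]
All eight constraints of Eq.~\eqref{QDisc-set} are thus met, and L6 is a consistent discriminating system with $\DISC$ as its action rule.

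With membership in hand, Proposition~\ref{prop:stability-QDisc} applies to every element of $\QDisc$ and gives the result. For $\epsilon=0$ and $\chi\to0$, the resident resists every mutant across prisoner's dilemmas and stag hunts. In snowdrift games the binding condition is $R+P>2S$, and $\CDISC$ invades exactly when $R+P<2S$. Lemma~\ref{lem:sigma-SG-R+P>2S} turns this condition into $\sigmaSG=2/3$, while $\sigmaPD=\sigmaSH=1$. This matches the claimed indices. By Remark~\ref{rem:consistent-equivalence}, the choice among the 256 consistent norms for $\DISC$ is immaterial at $\epsilon=0$, so no separate expansion for Stern Judging is needed.

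The only real step is the bit-matching verification, and there the risk is an off-by-one confusion between the three-argument notation $\mathsf{d}_{\rf\rc\af\ast}$ of the leading-eight definitions and the four-argument notation of $\QDisc$. To guard against it, I will also recheck consistency directly through Definition~\ref{def:consistent}, using $\pBG=\C$ and $\pBB=\D$. The resident-prescribed entries are $d(\B,\B,\D,\D)$, $d(\B,\G,\C,\D)$, $d(\G,\B,\D,\C)$ and $d(\G,\G,\C,\C)$, and all of these must equal $\G$. Their single-action flips must equal $\B$.
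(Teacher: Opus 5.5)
Your proposal is correct and takes the same approach as the paper, whose proof simply notes that L6 belongs to $\QDisc$ (compare Eqs.~\eqref{L6} and~\eqref{QDisc-set}) and invokes Proposition~\ref{prop:stability-QDisc}. Your bit-by-bit check of the eight constrained entries is accurate and makes explicit the membership that the paper asserts without detail.
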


\begin{proof}
Immediate from Proposition~\ref{prop:stability-QDisc}, since L6 belongs to $\QDisc$.
\end{proof}

\subsubsection{L7 (DISC, Staying)}

L7 belongs to $\Rset'\setminus\Rset''$ (Proposition~\ref{prop:L-R''-membership}); its Staying assessment rule has $\dGBC=\G$ and $\dBBD=\dBBC=\B$.
The former exposes it, like L1, L3, and L4, to invasion by $\wildcard\C\C\C$ mutants (among them $\ALLC$) under stag-hunt payoffs.
Because of the latter, a $\CDISC$ mutant --- which deviates from $\DISC$ only by cooperating in interactions between bad individuals --- incurs no reputational penalty for the deviation, so that L7 shares L3's vulnerability to $\CDISC$ under snowdrift payoffs despite having $\dBBC=\B$ like L4 and L6.
The following proposition quantifies the resulting stability indices.

\begin{proposition}[Stability of L7] \label{prop:stability-L7}
For $\epsilon=0$ and in the limit $\chi \to 0$, L7 is stable across prisoner's dilemmas, but it can always be invaded by $\CDISC$ if the cooperative dilemma is a snowdrift game, and by $\tilde{p}=\wildcard\C\C\C$ if the cooperative dilemma is a stag hunt.
Thus:
\begin{equation*}
 \lim_{\chi \to 0} \left( \sigmaPD_{\mathrm{L7}} \right)_{\epsilon=0} = 1, \  \lim_{\chi \to 0} \left( \sigmaSG_{\mathrm{L7}} \right)_{\epsilon=0} = 0, \
 \lim_{\chi \to 0} \left( \sigmaSH_{\mathrm{L7}} \right)_{\epsilon=0} = 0 .
\end{equation*}
\end{proposition}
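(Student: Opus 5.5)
We follow the template of the proofs of Propositions~\ref{prop:stability-L3} and~\ref{prop:stability-L4}. For each of the 15 mutants $\tilde{p}\neq\DISC$ we compute $\Delta w(\chi)$ and collect the results in a table analogous to Table~\ref{tab:stability-L3}. From the table we then read off the binding conditions in each game class.

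We start from the resident side. L7 lies in $\AsetBGGG$ (Proposition~\ref{prop:leading-eight-righteous}), so $g(\chi)$ is given by the second line of Eq.~\eqref{gchi-explicit}. The resident payoffs are $\wBB=P$, $\wBG=S$, $\wGB=T$, $\wGG=R$. The second-order expansion of $w(\chi)$ is then Eq.~\eqref{wchi-expansion-BGGG}, namely $w(\chi)=R+(S+T-2R)\chi+(P-R)\chi^2+\BigO(\chi^3)$.

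On the mutant side, the norm ignores $\ac$. Each mutant's assessment bits $(\dBBm,\dBGm,\dGBm,\dGGm)$ are therefore obtained directly from the L7 entries: $\dBBD=\dBBC=\B$, $\dBGD=\dGGD=\B$, $\dBGC=\dGBD=\dGBC=\dGGC=\G$. We substitute these bits into the formula for $h(\chi)$ and into Eq.~\eqref{tildewchi-explicit}. Mutants of the form $\wildcard\D\wildcard\wildcard$ get $\dBGm=\B$. Table~\ref{tab:limtildew-righteous}-type reasoning then gives them a leading term of the form $c(R-P)$ with $c>0$, or $\tfrac12(2R-T-S)$ for mutants that defect only against good partners while in good standing. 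Both are positive in every cooperative dilemma.

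For mutants $\wildcard\C\C\C$, L7 has $\dGBC=\G$. The $\AsetBGGG\cup\AsetGGGG$ branch of the proof of Propositions~\ref{prop:stability-against-ALLC-dGB=G}--\ref{prop:stability-against-ALLC-dBG=G} (Table~\ref{tab:expansion-tildew-ALLC-A*GGG}, row with $\dGBm=\G$) gives leading term $(T-R)\chi$. This is negative in every stag hunt, which yields $\sigmaSH=0$, and positive in PD and SG.

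The step we expect to be the main obstacle is the $\CDISC$ mutant, which sits at second order. Because $\dBBD=\dBBC=\B$, we have $\dBBm=\dBB=\B$. The mutant assessment bits then coincide with the resident's, $(\B,\G,\G,\G)$, and hence $h(\chi)=g(\chi)$ exactly. So $\Delta w(\chi)$ reduces to $(1-g)^2(\wBB-\wBBm)=(P-S)\chi^2+\BigO(\chi^3)$, using $\wBBm=v(\C,\D)=S$. This is the first row's $\dBBm=\B$, $\pBB=\D$ entry of Table~\ref{tab:Deltaw-disc-ABGGG}, and we cross-check against it. The term $P-S$ is negative for every snowdrift game and positive for PD and SH, giving $\sigmaSG=0$. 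Care is needed to confirm that $1-g(\chi)=\chi+\BigO(\chi^2)$ for the $\AsetBGGG$ equilibrium, so that the coefficient is exactly $1$.

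For the remaining discriminating and other mutants ($\PDISC$, $\ODISC$, $\DDCC$, $\CDCC$), we verify that the leading terms are positive multiples of $R-P$. Collecting the results, all conditions hold on PD, so $\sigmaPD=1$.
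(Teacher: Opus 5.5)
Your proposal is correct and follows essentially the paper's route: you tabulate the leading term of $\Delta w(\chi)$ for all 15 mutants, and your values agree with Table~\ref{tab:stability-L7}; you then read off $P>S$ against $\CDISC$ and $T>R$ against $\wildcard\C\C\C$ as the binding conditions. Two small bookkeeping slips, neither affecting the conclusion: the mutants with leading term $\tfrac12(2R-T-S)$ are the $\wildcard\C\wildcard\D$ ones (IDs 2, 3, 6, 7), which have $\dBGm=\dBGC=\G$ and $\dGGm=\dGGD=\B$, hence $h_0=1/2$, so they are not a subset of the $\wildcard\D\wildcard\wildcard$ mutants; and your $\CDISC$ cross-check is against the $\dBBm=\B$ row of Table~\ref{tab:Deltaw-disc-ABGGG}, which is the second row, not the first.
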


\begin{proof}
We compute $\Delta w(\chi)$ for each of the 15 mutant strategies $\tilde{p} \neq p$; the results are collected in Table~\ref{tab:stability-L7}.

\begin{table}
\caption{Taylor series expansion of the difference between resident and mutant expected payoffs for the leading-eight moral system L7 ($\DISC$, Staying) for different mutant strategies $\tilde{p} \neq \DISC$. Here, no execution errors are considered ($\epsilon=0$).}\label{tab:stability-L7}
\begin{center}
\begin{tabular}{c | c | c | c}
 $\tilde{p}$ (ID) & $\tilde{p}$ (bitstring) & $\tilde{p}$ (name) & $\Delta w(\chi)$ \\
 \hline
 0 & $\DDDD$ & $\ALLD$ & $R-P + \BigO(\chi)$ \\
 1 & $\CDDD$ & & $R-P + \BigO(\chi)$ \\
 2 & $\DCDD$ & & $\frac{1}{2}(2R - T - S) + \BigO(\chi)$ \\
 3 & $\CCDD$ & & $\frac{1}{2}(2R - T - S) + \BigO(\chi)$ \\
 4 & $\DDCD$ & & $R-P + \BigO(\chi)$ \\
 5 & $\CDCD$ & & $R-P + \BigO(\chi)$ \\
 6 & $\DCCD$ & & $\frac{1}{2}(2R - T - S) + \BigO(\chi)$ \\
 7 & $\CCCD$ & & $\frac{1}{2}(2R - T - S) + \BigO(\chi)$ \\
 8 & $\DDDC$ & $\PDISC$ & $\frac{1}{2} (R-P) + \BigO(\chi)$ \\
 9 & $\CDDC$ & $\ODISC$ & $\frac{1}{2} (R-P) + \BigO(\chi)$ \\
 10 & $\DCDC$ & $\DISC$ & --- \\
 11 & $\CCDC$ & $\CDISC$ & $(P-S) \chi^2 + \BigO(\chi^3)$ \\
 12 & $\DDCC$ & & $\frac{1}{2}(R-P) + \BigO(\chi)$ \\
 13 & $\CDCC$ & & $\frac{1}{2}(R-P) + \BigO(\chi)$ \\
 14 & $\DCCC$ & & $(T-R) \chi + \BigO(\chi^2)$ \\
 15 & $\CCCC$ & $\ALLC$ & $(T-R) \chi + \BigO(\chi^2)$ \\
\end{tabular}
\end{center}
\end{table}

A direct inspection of Table~\ref{tab:stability-L7} reveals four conditions for L7 to resist invasion by a mutant strategy $\tilde{p} \neq p$:
\begin{enumerate}
 \item If $\tilde{p}=\wildcard\D\wildcard\wildcard$, then L7 resists invasion attempts if and only if $R>P$, which always holds across cooperative dilemmas~\eqref{R>P}.
 \item If $\tilde{p}=\wildcard\C\C\C$, then L7 resists invasion attempts if and only if $T>R$, which holds for all prisoner's dilemmas~\eqref{PD} and snowdrift games~\eqref{SG}, but for no stag hunt~\eqref{SH}.
 \item If $\tilde{p}=\CCDC$ (i.e., $\CDISC$), then L7 resists invasion attempts if and only if $P>S$, which holds for all prisoner's dilemmas~\eqref{PD} and stag hunts~\eqref{SH}, but for no snowdrift game~\eqref{SG}.
 \item For all remaining mutant strategies, L7 resists invasion attempts if and only if $2R>T+S$, which always holds across cooperative dilemmas~\eqref{2R>T+S}.
\end{enumerate}
All four conditions hold across prisoner's dilemmas, so the corresponding stability index converges to one.
In contrast, $P>S$ fails for every snowdrift game and $T>R$ fails for every stag hunt; the stability indices for these two classes thus converge to zero.
\end{proof}

\subsubsection{L8 (DISC, Judging)}

L8 belongs to $\Rset''\setminus\Rset'''$ (Proposition~\ref{prop:L-R''-membership}); its Judging assessment rule has $\dGBC=\B$ and $\dBBD=\dBBC=\B$.
Like L5, it is protected against $\wildcard\C\C\C$ mutants by its first bit; like L7, its two remaining bits leave a $\CDISC$ mutant's deviation without reputational penalty, so that it can be invaded by $\CDISC$ under snowdrift payoffs.
The following proposition quantifies the resulting stability indices.

\begin{proposition}[Stability of L8] \label{prop:stability-L8}
For $\epsilon=0$ and in the limit $\chi \to 0$, L8 is stable across prisoner's dilemmas and stag hunts, but it can always be invaded by $\CDISC$ if the cooperative dilemma is a snowdrift game.
Thus:
\begin{equation*}
 \lim_{\chi \to 0} \left( \sigmaPD_{\mathrm{L8}} \right)_{\epsilon=0} = 1, \  \lim_{\chi \to 0} \left( \sigmaSG_{\mathrm{L8}} \right)_{\epsilon=0} = 0, \
 \lim_{\chi \to 0} \left( \sigmaSH_{\mathrm{L8}} \right)_{\epsilon=0} = 1 .
\end{equation*}
\end{proposition}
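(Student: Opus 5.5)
The plan follows the proofs for L1, L3, L4, L5 and L7. I will compute $\Delta w(\chi)$ for each of the 15 mutants $\tilde{p}\neq\DISC$ and collect the results in a table, then read off the conditions class by class.

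L8 lies in $\AsetBGGG$: $\dBB=\B$, $\dBG=\dGB=\dGG=\G$ (Proposition~\ref{prop:leading-eight-righteous}). So $g(\chi)$ is the second branch of Eq.~\eqref{gchi-explicit}, namely $g(\chi)=1-\chi-\chi^2+\BigO(\chi^3)$, and the resident payoff follows from expansion~\eqref{wchi-expansion-BGGG}. For each mutant I will compute the four bits $\dBBm,\dBGm,\dGBm,\dGGm$ from the L8 norm and substitute them into the formula for $h(\chi)$. I then expand $\tilde w(\chi)$ from Eq.~\eqref{tildewchi-explicit} to the first non-vanishing order of $\Delta w$.

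Since L8 belongs to $\Rset''$ (Proposition~\ref{prop:L-R''-membership}), several entries are already fixed.
\begin{itemize}
\item $\ALLD$ is resisted, since $h_0=0$ and $\Delta w\to R-P$.
\item $\ALLC$ is resisted with leading term $(T-S)\chi$, because $\dGBCC=\B$ gives the second case of Table~\ref{tab:expansion-tildew-ALLC-A*GGG}.
\item $\CDISC$ falls under the case $\pBB=\D$, $\dBBm=\dBBCD=\B$ of Table~\ref{tab:Deltaw-disc-ABGGG}, so $\Delta w=(P-S)\chi^2+\BigO(\chi^3)$.
\end{itemize}
For the remaining mutants I expect three families, as in the other leading-eight tables.
\begin{itemize}
\item Mutants $\wildcard\D\wildcard\wildcard$ defect on good residents and are judged bad via $\dGGDC=\dBGDD=\B$. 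They earn about $P$, giving a leading term proportional to $R-P$.
\item Mutants $\wildcard\C\C\C$ other than $\ALLC$, such as $\DCCC$, cooperate with bad partners and are judged bad by $\dGBCC=\B$. This gives $(T-S)\chi$.
\item The remaining mutants $\wildcard\C\D\wildcard$ with $\pGGm=\D$ (ids 2, 3, 6, 7) give a leading term proportional to $2R-T-S$.
\end{itemize}

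The conditions then read as follows. $R>P$, $T>S$ and $2R>T+S$ hold across all cooperative dilemmas, while $P>S$ holds for prisoner's dilemmas and stag hunts but fails for every snowdrift game. So $\CDISC$ invades all snowdrift games, and the stability indices are $1,0,1$.

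The main obstacle is getting the $\CDISC$ and $\wildcard\C\C\C$ entries right, because their leading terms sit at order $\chi$ or $\chi^2$ and need the $\AsetBGGG$ expansion of $g$ to second order. For $\CDISC$ I will take the result from Table~\ref{tab:Deltaw-disc-ABGGG}. For $\DCCC$ I will check directly that it differs from $\ALLC$ only at $\pBBm$, which enters $\tilde w$ at order $\chi^2$, so its expansion matches $\ALLC$'s to first order.
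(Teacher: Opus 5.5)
Your plan is correct and is essentially the paper's proof: you tabulate $\Delta w(\chi)$ for all 15 mutants (Table~\ref{tab:stability-L8}), note that the only condition not implied by $R>P$, $T>S$ and $2R>T+S$ is $P>S$ for $\CDISC$ (your $(P-S)\chi^2$ entry), and conclude that this condition fails in every snowdrift game while holding in every prisoner's dilemma and stag hunt. Two small slips do not affect the argument: the family with ids 2, 3, 6, 7 is $\wildcard\C\wildcard\D$, not $\wildcard\C\D\wildcard$; and the $\wildcard\D\wildcard\C$ mutants (e.g.\ $\PDISC$, $\D\D\C\C$) earn a mixture of $P$ and $R$ rather than about $P$, so their entries are $\tfrac12(R-P)$ or $\tfrac23(R-P)$, which are still positive.
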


\begin{proof}
We compute $\Delta w(\chi)$ for each of the 15 mutant strategies $\tilde{p} \neq p$; the results are collected in Table~\ref{tab:stability-L8}.

\begin{table}
\caption{Taylor series expansion of the difference between resident and mutant expected payoffs for the leading-eight moral system L8 ($\DISC$, Judging) for different mutant strategies $\tilde{p} \neq \DISC$. Here, no execution errors are considered ($\epsilon=0$).}\label{tab:stability-L8}
\begin{center}
\begin{tabular}{c | c | c | c}
 $\tilde{p}$ (ID) & $\tilde{p}$ (bitstring) & $\tilde{p}$ (name) & $\Delta w(\chi)$ \\
 \hline
 0 & $\DDDD$ & $\ALLD$ & $R-P + \BigO(\chi)$ \\
 1 & $\CDDD$ & & $R-P + \BigO(\chi)$ \\
 2 & $\DCDD$ & & $\frac{1}{2}(2R - T - S) + \BigO(\chi)$ \\
 3 & $\CCDD$ & & $\frac{1}{2}(2R - T - S) + \BigO(\chi)$ \\
 4 & $\DDCD$ & & $R-P + \BigO(\chi)$ \\
 5 & $\CDCD$ & & $R-P + \BigO(\chi)$ \\
 6 & $\DCCD$ & & $\frac{1}{2}(2R - T - S) + \BigO(\chi)$ \\
 7 & $\CCCD$ & & $\frac{1}{2}(2R - T - S) + \BigO(\chi)$ \\
 8 & $\DDDC$ & $\PDISC$ & $\frac{1}{2} (R-P) + \BigO(\chi)$ \\
 9 & $\CDDC$ & $\ODISC$ & $\frac{1}{2} (R-P) + \BigO(\chi)$ \\
 10 & $\DCDC$ & $\DISC$ & --- \\
 11 & $\CCDC$ & $\CDISC$ & $(P-S) \chi^2 + \BigO(\chi^3)$ \\
 12 & $\DDCC$ & & $\frac{2}{3}(R-P) + \BigO(\chi)$ \\
 13 & $\CDCC$ & & $\frac{2}{3}(R-P) + \BigO(\chi)$ \\
 14 & $\DCCC$ & & $(T-S) \chi + \BigO(\chi^2)$ \\
 15 & $\CCCC$ & $\ALLC$ & $(T-S) \chi + \BigO(\chi^2)$ \\
\end{tabular}
\end{center}
\end{table}

A direct inspection of Table~\ref{tab:stability-L8} reveals four conditions for L8 to resist invasion by a mutant strategy $\tilde{p} \neq p$:
\begin{enumerate}
 \item If $\tilde{p}=\wildcard\D\wildcard\wildcard$, then L8 resists invasion attempts if and only if $R>P$, which always holds across cooperative dilemmas~\eqref{R>P}.
 \item If $\tilde{p}=\wildcard\C\C\C$, then L8 resists invasion attempts if and only if $T>S$, which always holds across cooperative dilemmas~\eqref{T>S}.
 \item If $\tilde{p}=\CCDC$ (i.e., $\CDISC$), then L8 resists invasion attempts if and only if $P>S$, which holds for all prisoner's dilemmas~\eqref{PD} and stag hunts~\eqref{SH}, but for no snowdrift game~\eqref{SG}.
 \item For all remaining mutant strategies, L8 resists invasion attempts if and only if $2R>T+S$, which always holds across cooperative dilemmas~\eqref{2R>T+S}.
\end{enumerate}
All four conditions hold across prisoner's dilemmas and stag hunts, so the corresponding stability indices converge to one.
In contrast, $P>S$ fails for every snowdrift game, in which $\CDISC$ invades; the stability index for this class thus converges to zero.
\end{proof}

\subsubsection{Summary: the leading eight}

Table~\ref{tab:leading-eight-summary} collects the stability results for the eight leading-eight moral systems (Propositions~\ref{prop:stability-L1}--\ref{prop:stability-L8}).
L2 ($\CDISC$, Consistent Standing) is the only universally stable system; every other member can be invaded in at least one class of cooperative dilemma.
All six systems with $\DISC$ as action rule can be invaded by $\CDISC$ in snowdrift games --- in every such game for L3, L5, L7, and L8, and in those satisfying $R+P<2S$ for L4 and L6.
The reason is that $S>P$ holds in snowdrift games: when meeting a co-player in bad standing while in bad standing itself, a contrite discriminator cooperates (earning $S$) where the resident defects (earning $P$), and thus profits from the deviation.
The four systems with $\dGBC=\G$ (L1, L3, L4, and L7) --- among them L1, which pairs the $\CDISC$ action rule with the inconsistent Standing norm --- fail in every stag hunt, where $R>T$ holds: mutants that also cooperate with co-players in bad standing retain their good standing and earn $R$ where residents earn only $T$.
These results reinforce the conclusion of the preceding sections: among the leading eight, universal stability requires both consistency and contrition, a combination found only in L2.

\begin{table}[t]
\caption{Stability of the leading eight across cooperative dilemmas, for $\epsilon = 0$ and in the limit $\chi \to 0$. Each entry gives the limiting value of the stability index for the indicated game class: a value of $1$ indicates stability across the entire class, a value of $2/3$ stability across a strict subset, and a value of $0$ instability throughout. The last column gives the average stability index $\sigmabar$~\eqref{avg-sigma}.}
\begin{center}
\begin{tabular}{c | c | c | c c c | c}
 Moral system & Action rule & Assessment rule & $\sigmaPD$ & $\sigmaSG$ & $\sigmaSH$ & $\sigmabar$ \\
 \hline
 L1 & $\CDISC$ & Standing & $1$ & $1$ & $0$ & $2/3$ \\
 L2 & $\CDISC$ & Consistent Standing & $1$ & $1$ & $1$ & $1$ \\
 L3 & $\DISC$ & Simple Standing & $1$ & $0$ & $0$ & $1/3$ \\
 L4 & $\DISC$ & SS-SJ & $1$ & $2/3$ & $0$ & $5/9$ \\
 L5 & $\DISC$ & SJ-SS & $1$ & $0$ & $1$ & $2/3$ \\
 L6 & $\DISC$ & Stern Judging & $1$ & $2/3$ & $1$ & $8/9$ \\
 L7 & $\DISC$ & Staying & $1$ & $0$ & $0$ & $1/3$ \\
 L8 & $\DISC$ & Judging & $1$ & $0$ & $1$ & $2/3$ \\
\end{tabular}
\end{center}
\label{tab:leading-eight-summary}
\end{table}

\section{Numerical results}
\label{sec:supp-numerical}

The analytical results of the preceding sections were derived in the idealized regime of zero execution errors ($\epsilon = 0$) and vanishing assignment errors ($\chi \to 0$), and the numerical results of the main text were computed for a single combination of small error rates ($\epsilon = \chi = 0.02$).
This section presents numerical results verifying that our findings are robust to these choices.
We first count moral systems up to the mirror symmetry of Section~\ref{subsec:mirror-systems} and specify the representatives our numerical explorations evaluate (Section~\ref{subsec:mirror-classes}), and then detail the computational pipeline behind these numerical results (Section~\ref{subsec:numerical-pipeline}).
We finally examine how the stability and cooperation indices behave across a grid of error-rate combinations (Section~\ref{subsec:alternative-errors}), how stability is distributed across the three classes of cooperative dilemma (Section~\ref{subsec:disaggregated-indices}), and how results change under the alternative, area-weighted construction of the average stability index (Section~\ref{subsec:area-weighted}).

\subsection{Mirror classes and representatives}
\label{subsec:mirror-classes}

Section~\ref{subsec:mirror-systems} established that a moral system and its mirror are behaviorally equivalent: they share the same cooperation and stability indices (Proposition~\ref{prop:mirror-invariance}), whereas label-dependent properties such as self-approval hold for at most one member of each mirror pair (Corollary~\ref{cor:mirror-classes}).
Here we extend these results to consistency, count moral systems up to the mirror symmetry, and specify the representatives evaluated in our numerical explorations.

Consistency is a label-dependent property: it requires conformity with the resident prescription to be rewarded with the particular label $\G$.
Unsurprisingly, it is not mirror-invariant.

\begin{proposition}[Mirrors of consistent systems] \label{prop:mirror-consistent}
Let $(p,d)$ be a consistent system.
Then its mirror $(\hat{p},\hat{d})$ satisfies
\begin{subequations} \label{mirror-consistent}
\begin{align}
 \hat{d}(\rf,\rc,\hat{p}(\rf,\rc),\hat{p}(\rc,\rf)) & = \B, \text{ and } \label{mirror-consistent-B} \\
 \hat{d}(\rf,\rc,\neg\hat{p}(\rf,\rc),\hat{p}(\rc,\rf)) & = \G \label{mirror-consistent-G}
\end{align}
\end{subequations}
for all $\rf,\rc \in \left\{\B,\G\right\}$.
Consequently, the mirror of a consistent system is not consistent, no consistent system is a self-mirror, and each mirror class contains at most one consistent system.
\end{proposition}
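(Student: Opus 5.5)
The plan is to compute directly from the mirror definition~\eqref{mirror-def}, then read off the three consequences.

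For \eqref{mirror-consistent-B}, fix $\rf,\rc$. By \eqref{mirror-def-p}, $\hat{p}(\rf,\rc)=p(\neg\rf,\neg\rc)$ and $\hat{p}(\rc,\rf)=p(\neg\rc,\neg\rf)$. Substituting into \eqref{mirror-def-d} gives
\[
\hat{d}(\rf,\rc,\hat{p}(\rf,\rc),\hat{p}(\rc,\rf)) = \neg\, d(\neg\rf,\neg\rc,p(\neg\rf,\neg\rc),p(\neg\rc,\neg\rf)).
\]
The inner expression is $\mathsf{d}_{\neg\rf\,\neg\rc}$ in the shorthand~\eqref{dr1r2-shorthand}. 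Consistency \eqref{si-consistent-cond-G} sets it to $\G$ for every reputation pair, so its negation is $\B$.

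For \eqref{mirror-consistent-G}, the same substitution applies. Actions are untouched by the mirror, so $\neg\hat{p}(\rf,\rc)=\neg p(\neg\rf,\neg\rc)$. The inner expression is therefore $\bar{\mathsf{d}}_{\neg\rf\,\neg\rc}$ from~\eqref{notdr1r2-shorthand}. Condition \eqref{si-consistent-cond-B} sets this to $\B$, so the mirror yields $\G$. Both identities hold for all four pairs because $(\rf,\rc)\mapsto(\neg\rf,\neg\rc)$ is a bijection on $\{\B,\G\}^2$.

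The consequences follow in order.

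\emph{The mirror is not consistent.} Consistency of $(\hat{p},\hat{d})$ under Definition~\ref{def:consistent} would need $\hat{d}(\rf,\rc,\hat{p}(\rf,\rc),\hat{p}(\rc,\rf))=\G$ for some (indeed every) pair, which contradicts \eqref{mirror-consistent-B}. Note that $\hat{p}$ is conditional whenever $p$ is, so the failure comes from the assessment bits and not from the conditionality requirement.

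\emph{No consistent system is a self-mirror.} If $(p,d)=(\hat{p},\hat{d})$, then $(p,d)$ would be both consistent and, as its own mirror, inconsistent, which is impossible.

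\emph{Each mirror class contains at most one consistent system.} A mirror class is either a single self-mirror, which cannot be consistent, or a pair $\{(p,d),(\hat{p},\hat{d})\}$. In a pair, if one member is consistent the other is not, by the first consequence. If instead $(\hat{p},\hat{d})$ is the consistent one, apply the same argument with the roles swapped, using that the mirror map is an involution.

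An alternative route to the self-mirror claim uses Corollary~\ref{cor:mirror-classes}(iii) together with Proposition~\ref{prop:cons->SA}: consistent systems are self-approving, and no self-mirror is. The direct argument above suffices, however.

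I expect no real obstacle. The only care needed is bookkeeping: checking that the mirror negates the reputation arguments while preserving the actions, so that the deviation action $\neg p$ maps exactly onto the $\bar{\mathsf{d}}$ bits.
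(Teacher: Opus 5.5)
Your proposal is correct and matches the paper's proof: the same direct substitution of the mirror definition, applying the consistency conditions at the negated reputation profile $(\neg r_1,\neg r_2)$, followed by the same three consequences. Your side remark via Corollary~\ref{cor:mirror-classes}(iii) and Proposition~\ref{prop:cons->SA} is also a valid alternative for the self-mirror claim, though the paper does not use it.
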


\begin{proof}
By~\eqref{mirror-def}, $\hat{d}(\rf,\rc,\hat{p}(\rf,\rc),\hat{p}(\rc,\rf)) = \neg\, d(\neg\rf,\neg\rc,p(\neg\rf,\neg\rc),p(\neg\rc,\neg\rf)) = \neg\,\G = \B$, using the consistency condition~\eqref{si-consistent-cond-G} at the reputation profile $(\neg\rf,\neg\rc)$; this establishes Eq.~\eqref{mirror-consistent-B}, and Eq.~\eqref{mirror-consistent-G} follows in the same way from~\eqref{si-consistent-cond-B}.
Since consistency~\eqref{si-consistent-cond-G} requires the left-hand side of Eq.~\eqref{mirror-consistent-B} to equal $\G$ for all reputation profiles, $(\hat{p},\hat{d})$ is not consistent.
If a consistent system were a self-mirror, its mirror -- the system itself -- would be consistent, contradicting the non-consistency of the mirror just established; the same argument shows that the two members of a mirror pair cannot both be consistent.
\end{proof}

In words, Eq.~\eqref{mirror-consistent} says that the mirror of a consistent system is consistent with the labels swapped: it rewards conformity with the label $\B$, which plays for the mirror system the role that $\G$ plays for the original one.
This is precisely the sense in which the two systems are relabelings of one another.

We now count self-mirrors, and with them the mirror classes.

\begin{proposition}[Number of self-mirrors] \label{prop:selfmirror-count}
A moral system $(p,d)$ is a self-mirror if and only if both
\begin{subequations} \label{selfmirror-cond}
\begin{align}
 \pBB & = \pGG \ \text{ and } \ \pBG = \pGB, \text{ and } \label{selfmirror-cond-p} \\
 \mathsf{d}_{\rf\rc\af\ac} & = \neg\, \mathsf{d}_{\neg\rf\,\neg\rc\,\af\ac} \ \text{ for all } \rf,\rc,\af,\ac \label{selfmirror-cond-d}
\end{align}
\end{subequations}
hold.
Exactly four strategies satisfy condition~\eqref{selfmirror-cond-p} (the palindromic bitstrings $\DDDD$, $\DCCD$, $\CDDC$, and $\CCCC$) and exactly $2^8 = 256$ assessment rules satisfy condition~\eqref{selfmirror-cond-d}, so there are $4 \times 256 = \numprint{1024}$ self-mirrors.
\end{proposition}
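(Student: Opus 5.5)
The plan is to unwind Definition~\ref{def:mirror-system} directly: a system is a self-mirror if and only if $\hat p = p$ and $\hat d = d$, and these two equalities decouple.

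\emph{Strategy condition.} By~\eqref{mirror-def-p}, $\hat p = p$ means $p(\rf,\rc) = p(\neg\rf,\neg\rc)$ for all reputation pairs. The map $(\rf,\rc)\mapsto(\neg\rf,\neg\rc)$ has two orbits, $\{\B\B,\G\G\}$ and $\{\B\G,\G\B\}$. So the condition says exactly that $\pBB=\pGG$ and $\pBG=\pGB$, which is~\eqref{selfmirror-cond-p}. In the bitstring ordering $\pBB\pBG\pGB\pGG$ these are the palindromes. There are two free choices, one per orbit, giving $2^2=4$ strategies: $\DDDD$, $\DCCD$, $\CDDC$, $\CCCC$.

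\emph{Norm condition.} By~\eqref{mirror-def-d}, $\hat d = d$ means $d(\rf,\rc,\af,\ac) = \neg d(\neg\rf,\neg\rc,\af,\ac)$ for all arguments; in the shorthand this is~\eqref{selfmirror-cond-d}. To count the solutions, note that the involution $(\rf,\rc,\af,\ac)\mapsto(\neg\rf,\neg\rc,\af,\ac)$ on the $16$ argument tuples has no fixed points, since $\rf\neq\neg\rf$. It therefore splits the tuples into $8$ pairs. On each pair the condition fixes one bit as the negation of the other, and the pairs do not interact. So there is exactly one free bit per pair, giving $2^8=256$ assessment rules. One should also check that the condition is self-consistent: applying it twice gives $d=\neg\neg d$, so no contradiction can arise within a pair.

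\emph{Combining.} Self-mirroring of $p$ and of $d$ are independent conditions, so the total count is $4\times256=\numprint{1024}$. No step is a real obstacle. The only point needing care is the fixed-point-freeness of the argument involution, which guarantees that every pair contributes exactly one free bit rather than forcing a contradiction such as $x=\neg x$.
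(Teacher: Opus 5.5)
Your proposal is correct and follows essentially the same route as the paper: both restate $\hat p = p$ and $\hat d = d$ via the mirror definition, read off the palindromic strategies, and count assessment rules by pairing the sixteen argument profiles under the fixed-point-free involution $(\rf,\rc,\af,\ac)\mapsto(\neg\rf,\neg\rc,\af,\ac)$, with two admissible assignments per pair. Your explicit remarks that the constraint is self-consistent within each pair, and that the strategy and norm conditions decouple, are small but welcome additions to the same argument.
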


\begin{proof}
Conditions~\eqref{selfmirror-cond-p} and~\eqref{selfmirror-cond-d} restate $\hat{p} = p$ and $\hat{d} = d$, respectively, using~\eqref{mirror-def}.
For strategies, $\hat{p} = p$ says that the bitstring $\pBB\pBG\pGB\pGG$ is a palindrome, which leaves the two bits $\pBB$ and $\pBG$ free: the four solutions are $\DDDD$ ($\ALLD$), $\DCCD$, $\CDDC$ ($\ODISC$), and $\CCCC$ ($\ALLC$).
For assessment rules, the map $(\rf,\rc,\af,\ac) \mapsto (\neg\rf,\neg\rc,\af,\ac)$ pairs the sixteen argument profiles into eight pairs (it has no fixed point, since $(\rf,\rc) \neq (\neg\rf,\neg\rc)$ always holds), and on each pair the constraint in~\eqref{selfmirror-cond-d} admits exactly two of the four joint assignments; hence $2^8=256$ solutions.
\end{proof}

There are exactly as many self-mirrors as consistent discriminating systems -- $\numprint{1024}$ in both cases -- but the coincidence is accidental: the former number arises as (four palindromic strategies) $\times$ ($256$ self-mirror assessment rules), the latter as (four discriminating strategies) $\times$ ($256$ consistent assessment rules per strategy), and the two sets are disjoint by Proposition~\ref{prop:mirror-consistent}.
The two sets even share a strategy: $\ODISC$ ($p=\CDDC$) is the unique discriminating strategy equal to its own mirror -- yet, again by Proposition~\ref{prop:mirror-consistent}, no consistent $\ODISC$ system is a self-mirror.

\begin{proposition}[Number of mirror classes] \label{prop:mirror-count}
Let $\mathcal{M}$ be a set of moral systems that is closed under the mirror map, and let $m$ be the number of self-mirrors in $\mathcal{M}$.
Then $\mathcal{M}$ partitions into
\begin{equation} \label{mirror-class-count}
 \frac{|\mathcal{M}| - m}{2} + m = \frac{|\mathcal{M}| + m}{2}
\end{equation}
mirror classes.
In particular, the full space of $2^{20} = \numprint{1048576}$ moral systems, which contains $m = \numprint{1024}$ self-mirrors (Proposition~\ref{prop:selfmirror-count}), partitions into $\left(2^{20} + \numprint{1024}\right)/2 = \numprint{524800}$ mirror classes.
\end{proposition}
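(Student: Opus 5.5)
The count rests on the fact that the mirror map is an involution on $\mathcal{M}$, a consequence of Definition~\ref{def:mirror-system}. Since $\mathcal{M}$ is closed under mirroring, this involution restricts to a well-defined map $\mathcal{M}\to\mathcal{M}$, and the restriction is still an involution. I would first establish this directly from Eqs.~\eqref{mirror-def}. Applying the map twice gives $\hat{\hat{p}}(\rf,\rc)=p(\neg\neg\rf,\neg\neg\rc)=p(\rf,\rc)$. For the norm, $\hat{\hat{d}}(\rf,\rc,\af,\ac)=\neg\neg d(\rf,\rc,\af,\ac)=d(\rf,\rc,\af,\ac)$. The mirror classes are the orbits of the group $\{\mathrm{id},\text{mirror}\}\cong\mathbb{Z}/2$ acting on $\mathcal{M}$. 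Each orbit therefore has size one or two, and it has size one exactly when its element is a self-mirror.

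Next I would count the orbits. There are $m$ orbits of size one, one for each self-mirror in $\mathcal{M}$. The remaining $|\mathcal{M}|-m$ systems are not self-mirrors, and they split into disjoint pairs $\{x,\hat{x}\}$ with $x\neq\hat{x}$. Every such pair lies inside $\mathcal{M}$ because $\mathcal{M}$ is closed under the map. Hence $|\mathcal{M}|-m$ is even, and it contributes $(|\mathcal{M}|-m)/2$ classes. Adding the two contributions gives $(|\mathcal{M}|-m)/2+m=(|\mathcal{M}|+m)/2$, which is Eq.~\eqref{mirror-class-count}.

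Finally I would specialise to the full space, which is trivially closed under mirroring. By Proposition~\ref{prop:selfmirror-count} it contains $m=1024$ self-mirrors. Substituting $|\mathcal{M}|=2^{20}$ gives $(1048576+1024)/2=524800$. There is no real obstacle in this argument. The only point needing care is the closure hypothesis, which guarantees that no mirror partner falls outside $\mathcal{M}$; without it the pairing could leave unmatched elements.
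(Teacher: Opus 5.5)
Your proposal is correct and follows essentially the same route as the paper. The paper's proof likewise uses that the mirror map restricts to an involution on $\mathcal{M}$, so the classes have size one (exactly the self-mirrors) or two, and counts $m + (|\mathcal{M}|-m)/2$ before substituting $m=1024$ from Proposition~\ref{prop:selfmirror-count}.
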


\begin{proof}
The mirror map restricts to an involution on $\mathcal{M}$, so it partitions $\mathcal{M}$ into classes of size one or two, and the classes of size one are exactly the self-mirrors in $\mathcal{M}$.
The number of classes is therefore $m$ plus half the number $|\mathcal{M}| - m$ of remaining systems.
\end{proof}

Equation~\eqref{mirror-class-count} is an instance of Burnside's lemma (ref.~\citenum{si-vanlint2001course}, Chapter~10), applied to the two-element group generated by the mirror map: the number of orbits (the mirror classes) equals the average number of elements of $\mathcal{M}$ fixed by the group elements -- $|\mathcal{M}|$ for the identity and $m$ for the mirror.

Beyond the full space, Eq.~\eqref{mirror-class-count} applies to any mirror-closed subset, such as those obtained by pairing the $2^4$ strategies with a type class $\mathcal{T}_{\mathcal{U}}$ of assessment rules (Section~\ref{subsec:norm-counting}): the mirror of a rule of type $\mathcal{U}$ is again of type $\mathcal{U}$, since the mirror negates reputation arguments and outputs but never adds a dependence on a new argument.
If $\mathcal{U}$ contains at least one reputation argument, the negation $(\rf,\rc) \mapsto (\neg\rf,\neg\rc)$ pairs the $2^{|\mathcal{U}|}$ argument profiles of type $\mathcal{U}$ into $2^{|\mathcal{U}|-1}$ pairs, so that, by the argument in the proof of Proposition~\ref{prop:selfmirror-count}, exactly $2^{2^{|\mathcal{U}|-1}}$ rules of type $\mathcal{U}$ are their own mirrors; if $\mathcal{U}$ contains no reputation argument, a self-mirror rule would have to equal its own negation, so there is none.
For instance, the space explored by Ohtsuki and Iwasa\citep{si-ohtsuki:JTB:2004a} pairs the $2^4$ strategies with the $2^{2^3} = 256$ third-order rules of type $\af\rf\rc$: it contains $4 \times 2^{2^2} = 2^6$ self-mirrors (four palindromic strategies times $2^{2^2} = 16$ self-mirror rules), so Eq.~\eqref{mirror-class-count} yields $\left(2^{12} + 2^6\right)/2 = \numprint{2080}$ mirror classes, recovering the count of `effective pairs' derived in their Appendix~A.

Because both members of a mirror pair share the same cooperation and stability indices (Proposition~\ref{prop:mirror-invariance}), numerical explorations of the space of moral systems need to evaluate only one member per mirror class -- a reduction already exploited by Ohtsuki and Iwasa\citep{si-ohtsuki:JTB:2004a}, who report a single member of each pair and omit its mirror image.
Sweeping the full space of moral systems, however, requires an explicit selection rule, which the following definition provides.

\begin{definition}[Majority-good representatives] \label{def:mirror-representative}
Fix error rates $\chi \in (0,1/2)$ and $\epsilon \in [0,1/2)$.
The \emph{representative} of a mirror class is:
its unique member, if the class is a self-mirror;
otherwise, the member with $g^*_{p,d} > 1/2$, if there is one;
otherwise (that is, if both members of the pair satisfy $g^*_{p,d} = 1/2$), the member with the larger strategy ID (Table~\ref{tab:strategies}) or, if the two members share the same strategy, the member whose assessment rule has the larger ID (the sixteen bits $\mathsf{d}_{\rf\rc\af\ac}$ read as a binary number, in the same little-endian convention as in Table~\ref{tab:strategies}).
\end{definition}

Definition~\ref{def:mirror-representative} is exhaustive and unambiguous: by Lemma~\ref{lem:mirror-reputation}, the two members of a mirror pair have resident reputation equilibria $g^*$ and $1-g^*$, so either exactly one member satisfies $g^* > 1/2$ or both sit at $g^* = 1/2$, in which case their strategy IDs differ unless $\hat{p}=p$, and then their assessment-rule IDs differ.
We call the representatives `majority-good' because, in the generic case, the representative is the member of the pair whose residents mostly hold good reputations at the working error rates.

All numerical explorations reported in the main text and in this section evaluate one representative per mirror class; by Proposition~\ref{prop:mirror-invariance}, which member is evaluated is immaterial.
Selecting majority-good representatives parallels, at finite error rates, the restriction to self-approving systems (Corollary~\ref{cor:mirror-classes}): a self-approving system satisfies $g^*_{p,d} > 1/2$ for sufficiently small error rates, and is then the representative of its mirror class.

\subsection{Computation of the cooperation and stability indices}
\label{subsec:numerical-pipeline}

We now describe the pipeline used to compute the cooperation and stability indices reported in the main text (Fig.~\ref{MT-fig:scatter}; $\chi = \epsilon = 0.02$) and in the robustness analyses of this section.
For each combination of error rates, we evaluate the $\numprint{524800}$ majority-good representatives of the mirror classes (Definition~\ref{def:mirror-representative}), obtained by enumerating all $2^{20}$ moral systems -- with conditional or unconditional action rules -- and retaining one representative per mirror class (Proposition~\ref{prop:mirror-count}).

For each representative $(p,d)$, we first compute the resident reputation equilibrium $g^*_{p,d}$, locating the unique root in $(0,1)$ of the quadratic with coefficients~\eqref{abc} by bisection, and from it the cooperation index $\kappa_{p,d}$~\eqref{coop-index}.

The stability indices are estimated by Monte Carlo integration.
For each game class $\Gamma \in \left\{\mathrm{PD},\mathrm{SG},\mathrm{SH}\right\}$, we draw once $\numprint{1000}$ payoff configurations uniformly at random from the normalized payoff region $\mathcal{G}_\Gamma$ by rejection sampling:
for prisoner's dilemmas, $T=1$ and $S=0$ are fixed and $(R,P)$ is drawn uniformly subject to~\eqref{normalized-PD};
for snowdrift games, $T=1$ and $P=0$ are fixed and $(R,S)$ is drawn uniformly subject to~\eqref{normalized-SG};
and for stag hunts, $R=1$ and $S=0$ are fixed and $(T,P)$ is drawn uniformly subject to~\eqref{normalized-SH}.
The resulting three samples are archived and reused for every moral system and every error-rate combination, so that comparisons across systems and error rates are based on the same payoff configurations.
For each system and each payoff configuration, we then check the stability condition~\eqref{evol-stab-cond} directly: we compute the resident payoff $w_{p,d}$~\eqref{w} and the mutant payoffs $\tilde{w}_{\tilde{p},p,d}$~\eqref{tildew} for all $15$ alternative strategies $\tilde{p} \neq p$, and record whether $w_{p,d} > \tilde{w}_{\tilde{p},p,d}$ holds for all of them.
The estimate of $\sigma^\Gamma_{p,d}(\epsilon,\chi)$ is the fraction of the $\numprint{1000}$ payoff configurations at which $(p,d)$ is stable -- a binomial proportion, whose standard error $\sqrt{\sigma^\Gamma_{p,d} \left(1-\sigma^\Gamma_{p,d}\right)/\numprint{1000}}$ is at most $\sqrt{1/(4 \times \numprint{1000})} \approx 0.016$.
The average stability index then follows from Eq.~\eqref{avg-sigma}.

All computations were implemented in a dedicated Julia package accompanying this paper, with the archived payoff samples distributed alongside the code; evaluating all representatives takes on the order of one hour per game class and error-rate combination on a standard desktop computer.
The code, the archived payoff samples, and the computed indices are all openly available; see the Data availability and Code availability statements in the main text.

\subsection{Alternative error rates}\label{subsec:alternative-errors}

Fig.~\ref{fig:alternative-errors} shows the stability and cooperation indices of all moral systems for a grid of error-rate combinations, with execution error rates $\epsilon \in \{0, 0.01, 0.02, 0.1\}$ and assignment error rates $\chi \in \{0.01, 0.02, 0.05, 0.1\}$.
Larger error rates lower both indices, but the cluster of consistent discriminating systems in the upper-right corner persists across all combinations examined, confirming that our results are not an artifact of the error rates chosen for the baseline analysis ($\epsilon = \chi = 0.02$).

\begin{figure}[tp]
    \centering
    \includegraphics[width=\textwidth]{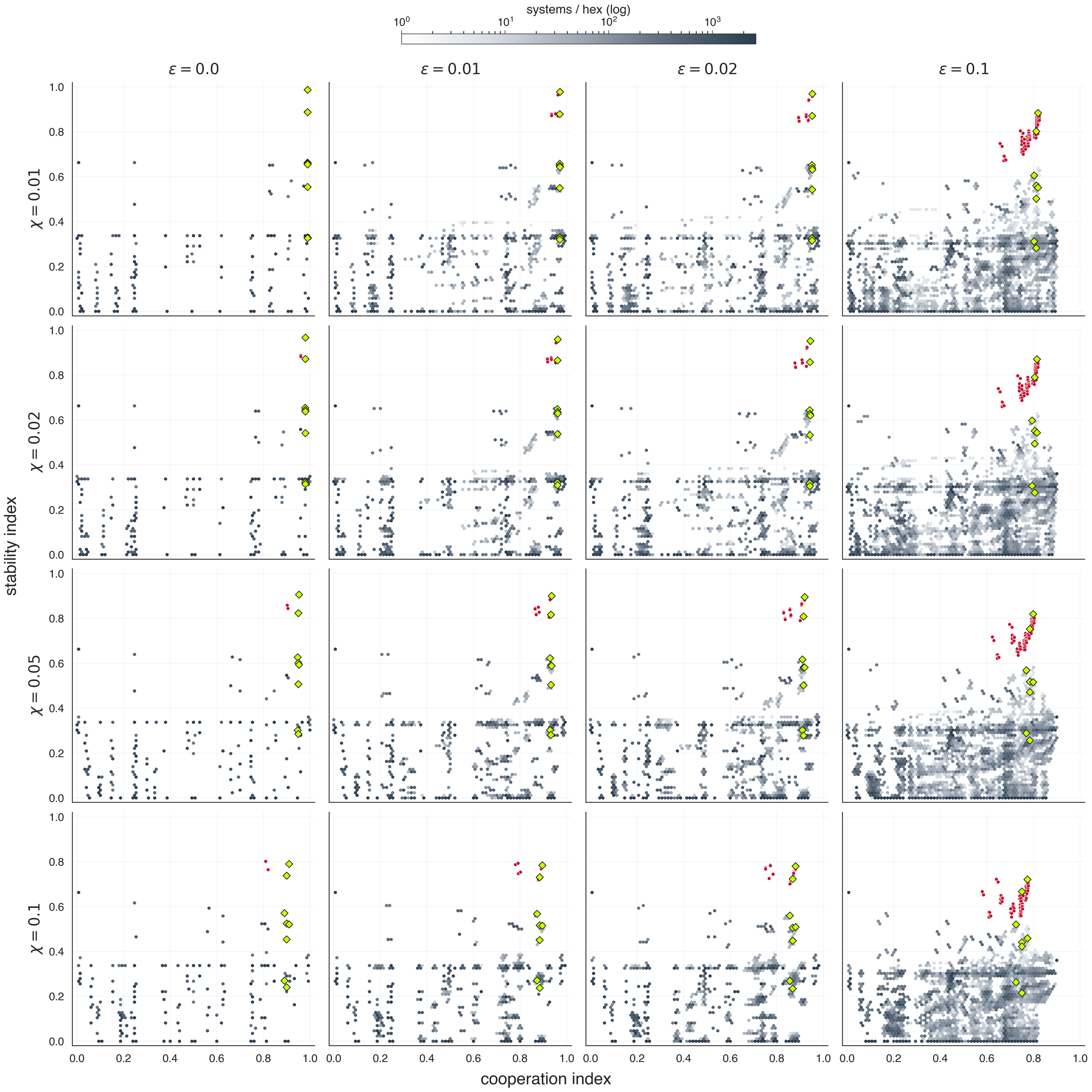}
    \caption{Stability and cooperation indices across a range of error rates. Each panel shows, for one combination of execution error rate $\epsilon \in \{0, 0.01, 0.02, 0.1\}$ (columns) and assignment error rate $\chi \in \{0.01, 0.02, 0.05, 0.1\}$ (rows), the cooperation index ($\kappa$) and average stability index ($\sigma$) of each of the $\numprint{524800}$ mirror-class representatives (Section~\ref{subsec:mirror-classes}). The consistent discriminating systems are shown as red circles, the leading eight as green diamonds, and the remaining systems as a hexbin density layer on a logarithmic color scale.}
    \label{fig:alternative-errors}
\end{figure}

Fig.~\ref{fig:alternative_errors_zoomed_scatter} zooms in on the consistent discriminating systems, showing their stability and cooperation indices over the same grid of error rates so that the variation within the cluster becomes visible.
L2 is among the most stable systems at every combination.
At the largest assignment error rate we examine ($\chi = 0.1$), however, the most stable consistent $\PDISC$ systems match or slightly exceed it in stability -- the differences lie at the resolution of our Monte Carlo estimates (Section~\ref{subsec:numerical-pipeline}) -- while cooperating markedly less.
Whatever the error rates, the consistent $\CDISC$ systems remain the most cooperative of the consistent discriminating systems.

\begin{figure}[tp]
  \centering
  \includegraphics[width=\textwidth]{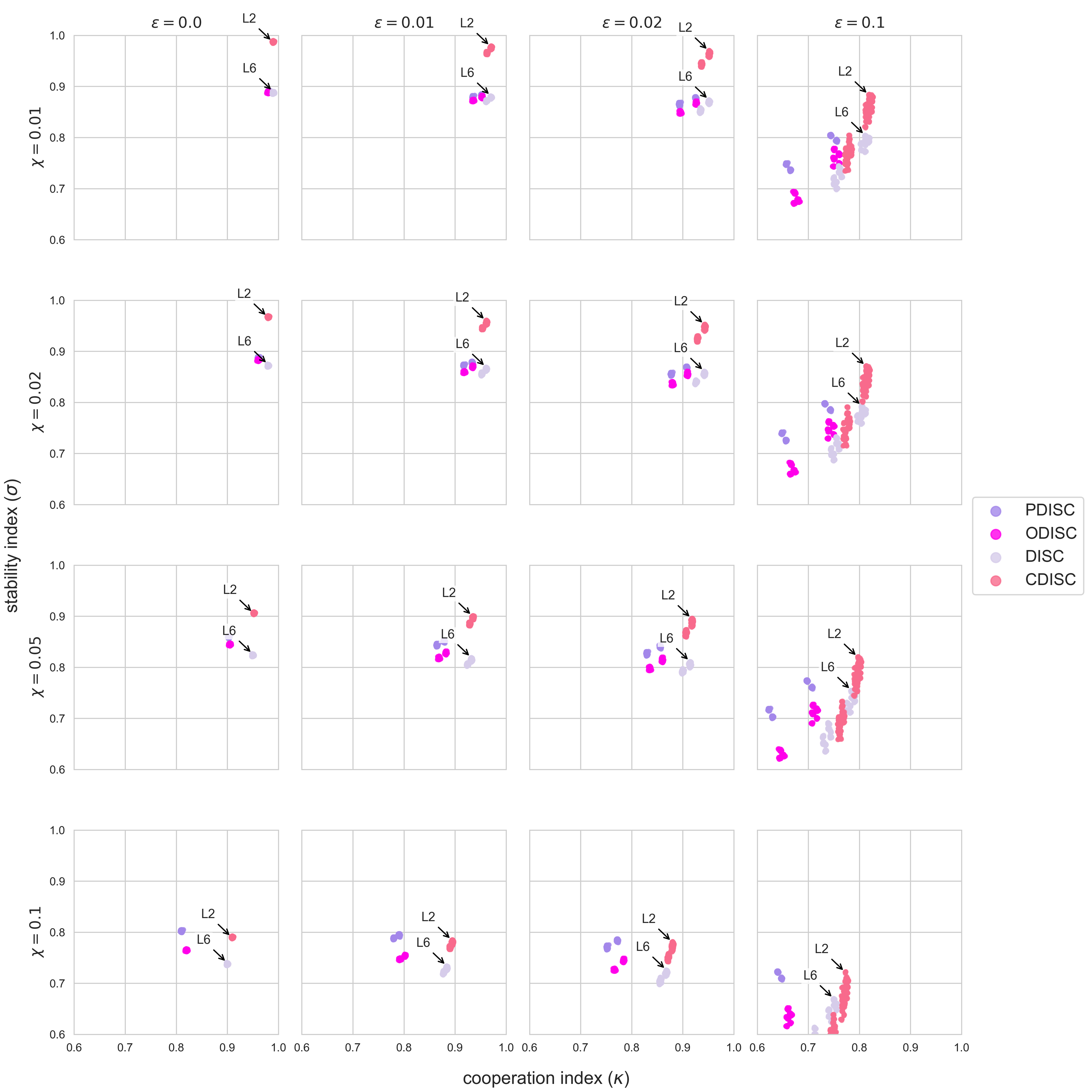}
  \caption{Zoomed scatter plots of the stability and cooperation indices of the consistent discriminating systems, for each combination of execution error rate $\epsilon \in \{0, 0.01, 0.02, 0.1\}$ (columns) and assignment error rate $\chi \in \{0.01, 0.02, 0.05, 0.1\}$ (rows), colored by strategy. L2 and L6, the only consistent discriminating systems among the leading eight, are marked with arrows. L2 is among the most stable systems at every combination. Points are jittered for clarity. \label{fig:alternative_errors_zoomed_scatter}
  }
\end{figure}

For completeness, Figs.~\ref{fig:nojitter-chi001-eps00}--\ref{fig:nojitter-chi01-eps1} reproduce Fig.~\ref{MT-fig:scatter} of the main text for each of the sixteen combinations of the error rates $\chi \in \{0.01, 0.02, 0.05, 0.1\}$ and $\epsilon \in \{0, 0.01, 0.02, 0.1\}$, with no jitter applied to the scatter points, so that small differences in stability and cooperation between individual systems can be read off reliably.

\begin{figure}[p]
  \centering
  \includegraphics[width=\textwidth]{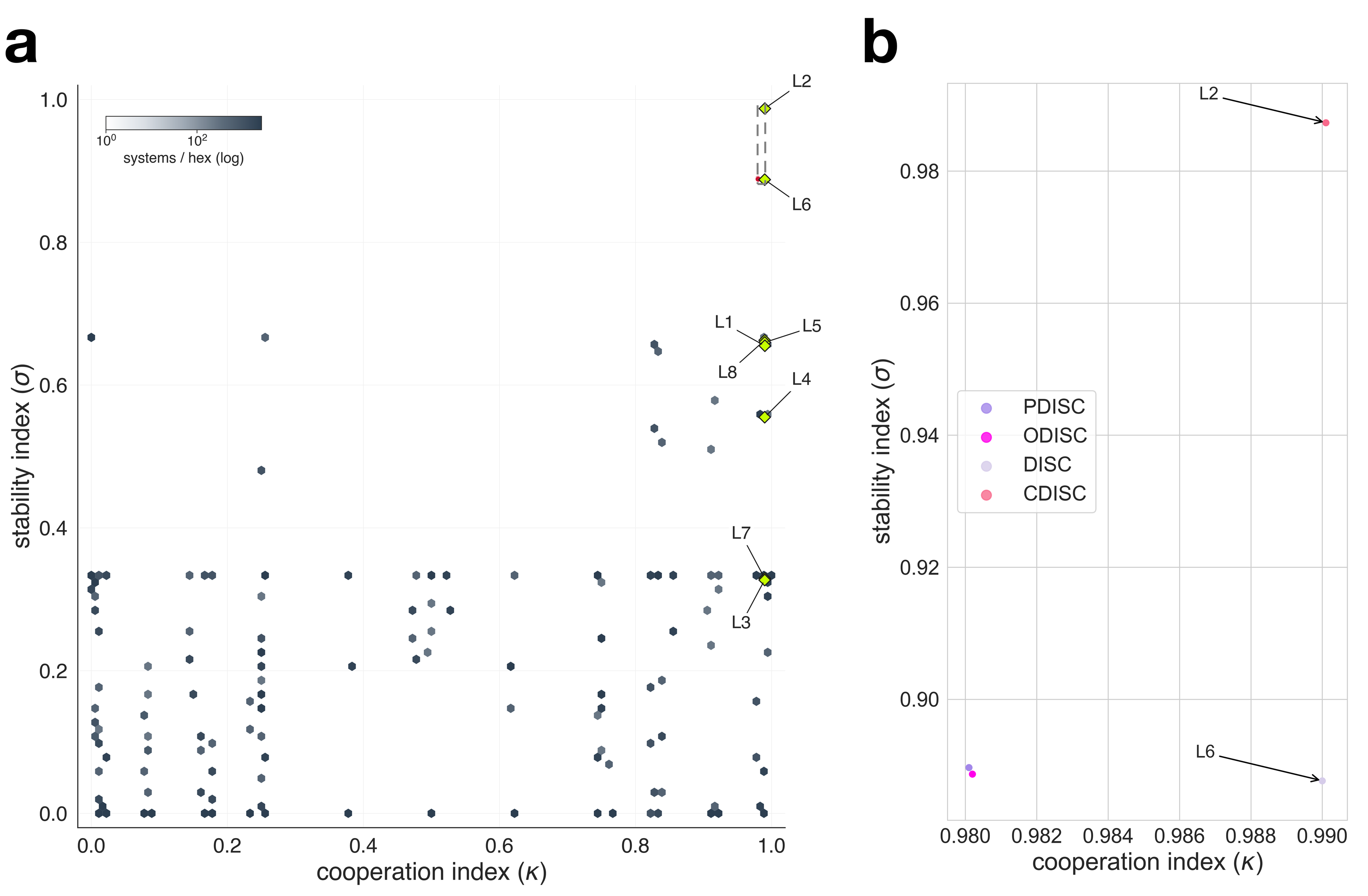}
  \caption{Stability and cooperation indices of all moral systems, without jitter, for $\chi = 0.01$ and $\epsilon = 0$. \textbf{a}, Each of the $\numprint{524800}$ moral systems that are distinct up to the relabeling of reputations (one representative per mirror class; Section~\ref{subsec:mirror-classes}) is positioned by its resident cooperation index ($\kappa$) and average stability index ($\sigma$); the consistent discriminating systems are shown as red circles, the leading eight as green diamonds, and the remaining systems as a hexbin density layer on a logarithmic color scale. \textbf{b}, Zoomed-in scatter plot of the consistent discriminating systems, colored by strategy ($\PDISC$, $\ODISC$, $\DISC$, $\CDISC$). No jitter is applied in either panel, so that small differences between individual systems can be read off reliably.\label{fig:nojitter-chi001-eps00}}
\end{figure}

\begin{figure}[p]
  \centering
  \includegraphics[width=\textwidth]{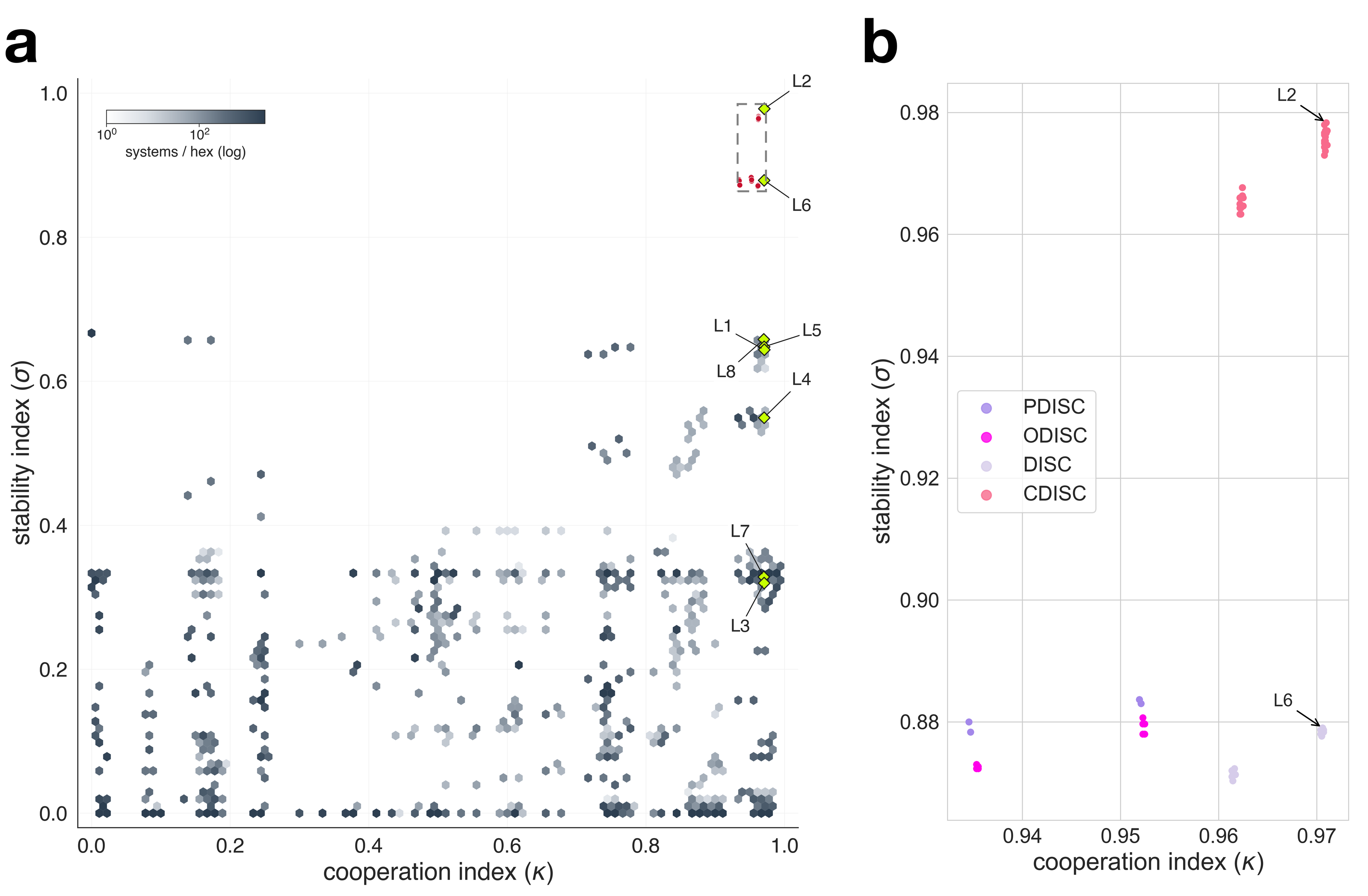}
  \caption{Stability and cooperation indices of all moral systems, without jitter, for $\chi = 0.01$ and $\epsilon = 0.01$. \textbf{a}, Each of the $\numprint{524800}$ moral systems that are distinct up to the relabeling of reputations (one representative per mirror class; Section~\ref{subsec:mirror-classes}) is positioned by its resident cooperation index ($\kappa$) and average stability index ($\sigma$); the consistent discriminating systems are shown as red circles, the leading eight as green diamonds, and the remaining systems as a hexbin density layer on a logarithmic color scale. \textbf{b}, Zoomed-in scatter plot of the consistent discriminating systems, colored by strategy ($\PDISC$, $\ODISC$, $\DISC$, $\CDISC$). No jitter is applied in either panel, so that small differences between individual systems can be read off reliably.\label{fig:nojitter-chi001-eps01}}
\end{figure}

\begin{figure}[p]
  \centering
  \includegraphics[width=\textwidth]{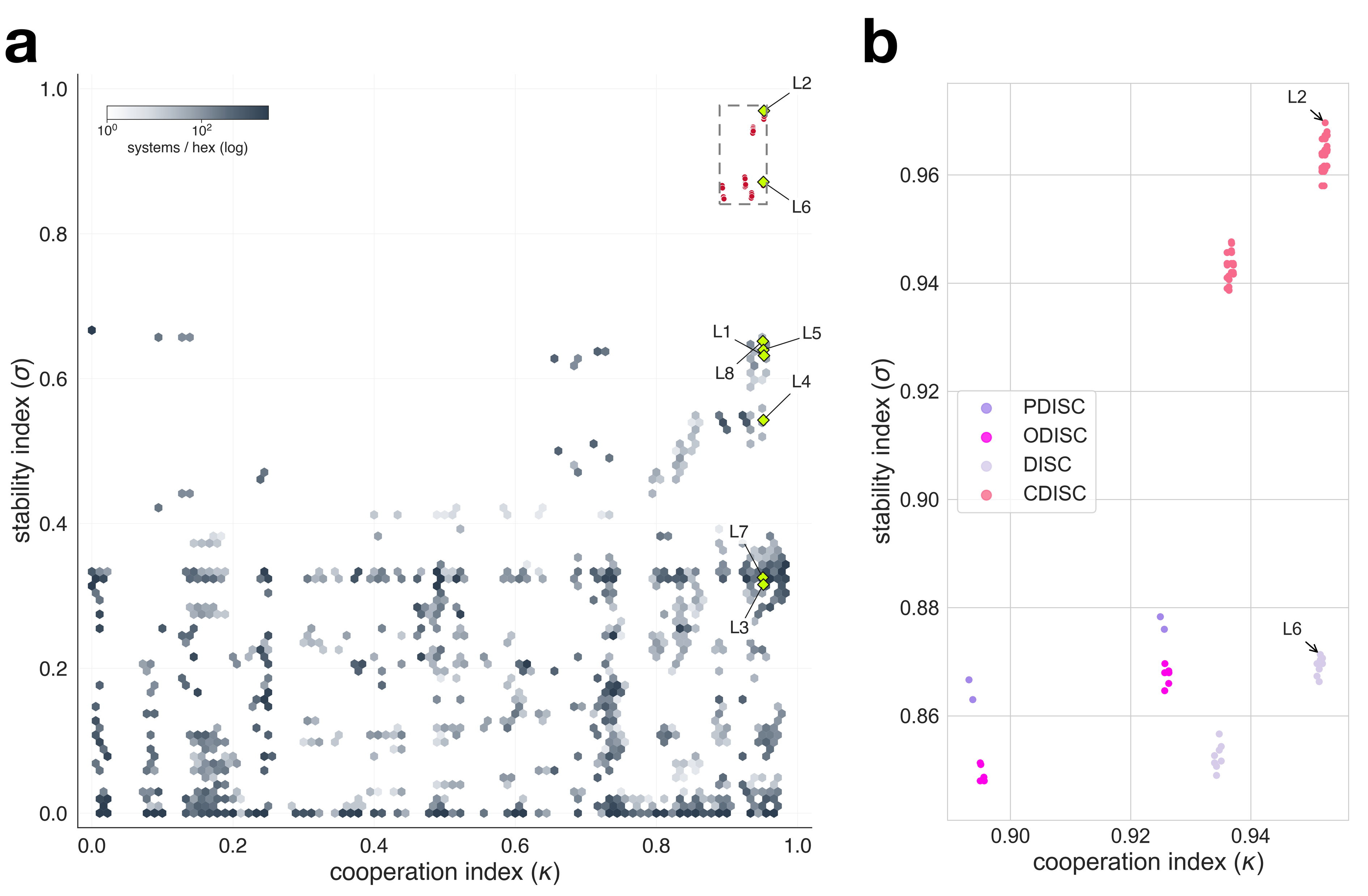}
  \caption{Stability and cooperation indices of all moral systems, without jitter, for $\chi = 0.01$ and $\epsilon = 0.02$. \textbf{a}, Each of the $\numprint{524800}$ moral systems that are distinct up to the relabeling of reputations (one representative per mirror class; Section~\ref{subsec:mirror-classes}) is positioned by its resident cooperation index ($\kappa$) and average stability index ($\sigma$); the consistent discriminating systems are shown as red circles, the leading eight as green diamonds, and the remaining systems as a hexbin density layer on a logarithmic color scale. \textbf{b}, Zoomed-in scatter plot of the consistent discriminating systems, colored by strategy ($\PDISC$, $\ODISC$, $\DISC$, $\CDISC$). No jitter is applied in either panel, so that small differences between individual systems can be read off reliably.\label{fig:nojitter-chi001-eps02}}
\end{figure}

\begin{figure}[p]
  \centering
  \includegraphics[width=\textwidth]{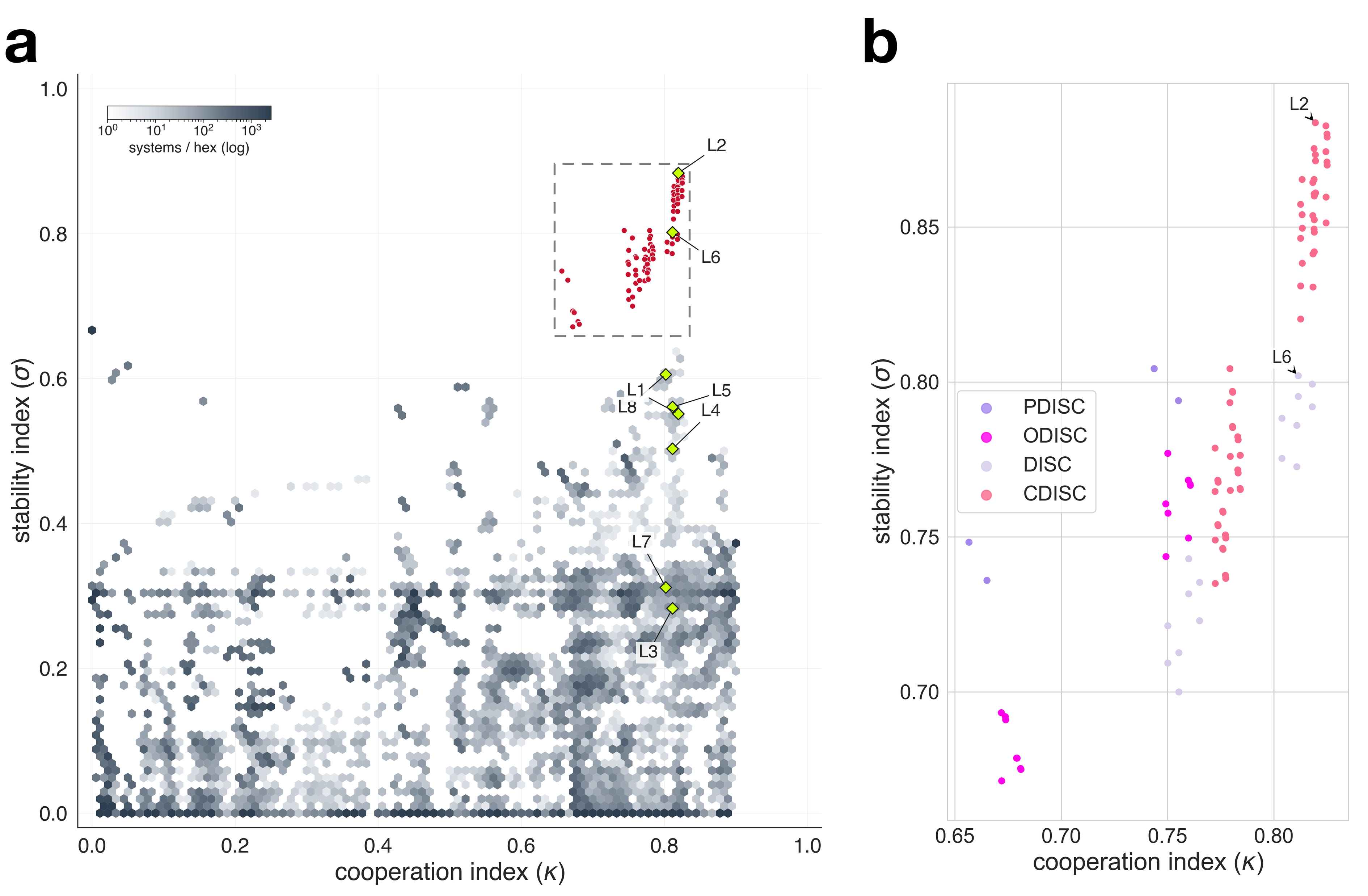}
  \caption{Stability and cooperation indices of all moral systems, without jitter, for $\chi = 0.01$ and $\epsilon = 0.1$. \textbf{a}, Each of the $\numprint{524800}$ moral systems that are distinct up to the relabeling of reputations (one representative per mirror class; Section~\ref{subsec:mirror-classes}) is positioned by its resident cooperation index ($\kappa$) and average stability index ($\sigma$); the consistent discriminating systems are shown as red circles, the leading eight as green diamonds, and the remaining systems as a hexbin density layer on a logarithmic color scale. \textbf{b}, Zoomed-in scatter plot of the consistent discriminating systems, colored by strategy ($\PDISC$, $\ODISC$, $\DISC$, $\CDISC$). No jitter is applied in either panel, so that small differences between individual systems can be read off reliably.\label{fig:nojitter-chi001-eps1}}
\end{figure}

\begin{figure}[p]
  \centering
  \includegraphics[width=\textwidth]{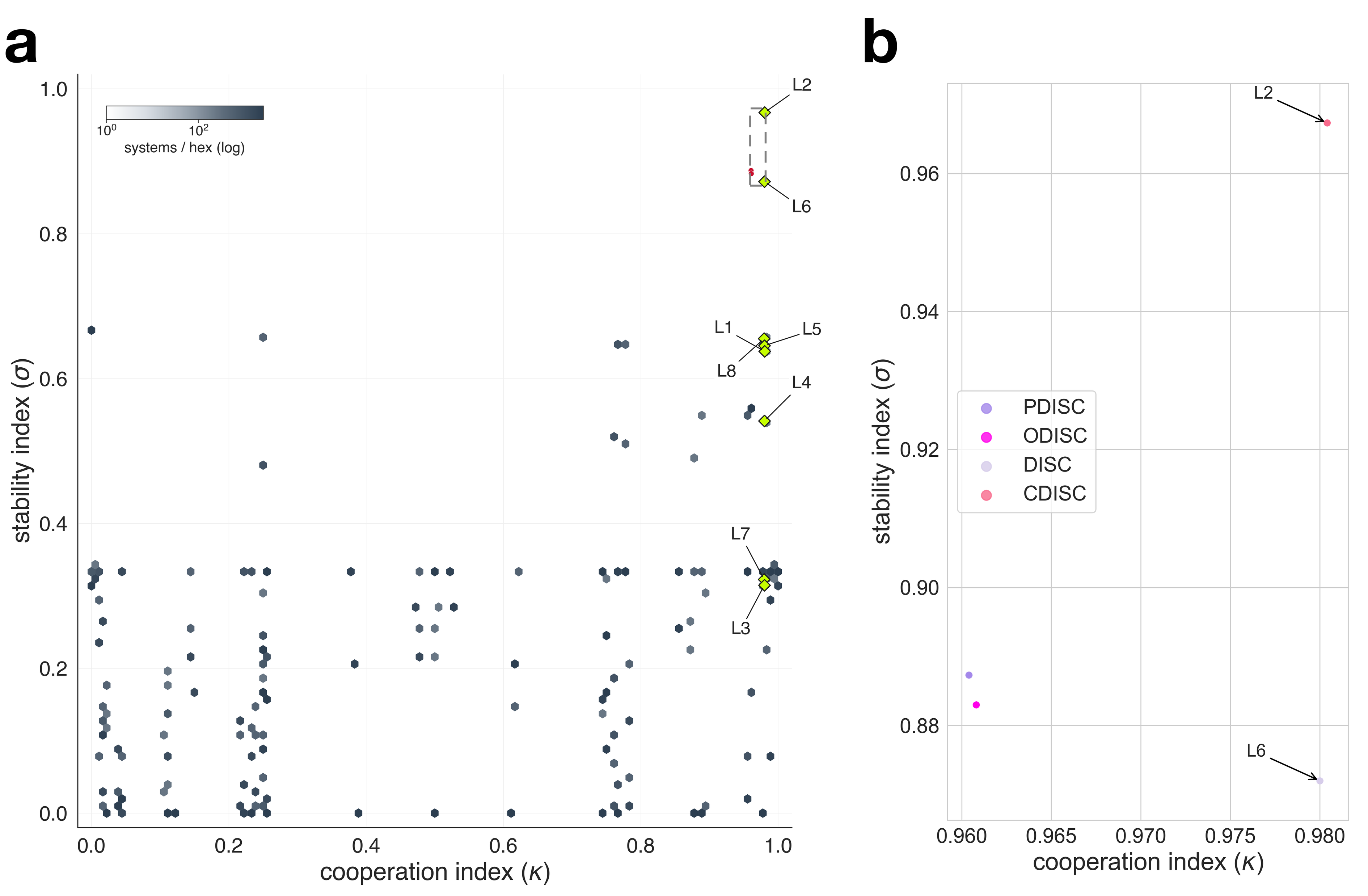}
  \caption{Stability and cooperation indices of all moral systems, without jitter, for $\chi = 0.02$ and $\epsilon = 0$. \textbf{a}, Each of the $\numprint{524800}$ moral systems that are distinct up to the relabeling of reputations (one representative per mirror class; Section~\ref{subsec:mirror-classes}) is positioned by its resident cooperation index ($\kappa$) and average stability index ($\sigma$); the consistent discriminating systems are shown as red circles, the leading eight as green diamonds, and the remaining systems as a hexbin density layer on a logarithmic color scale. \textbf{b}, Zoomed-in scatter plot of the consistent discriminating systems, colored by strategy ($\PDISC$, $\ODISC$, $\DISC$, $\CDISC$). No jitter is applied in either panel, so that small differences between individual systems can be read off reliably.\label{fig:nojitter-chi002-eps00}}
\end{figure}

\begin{figure}[p]
  \centering
  \includegraphics[width=\textwidth]{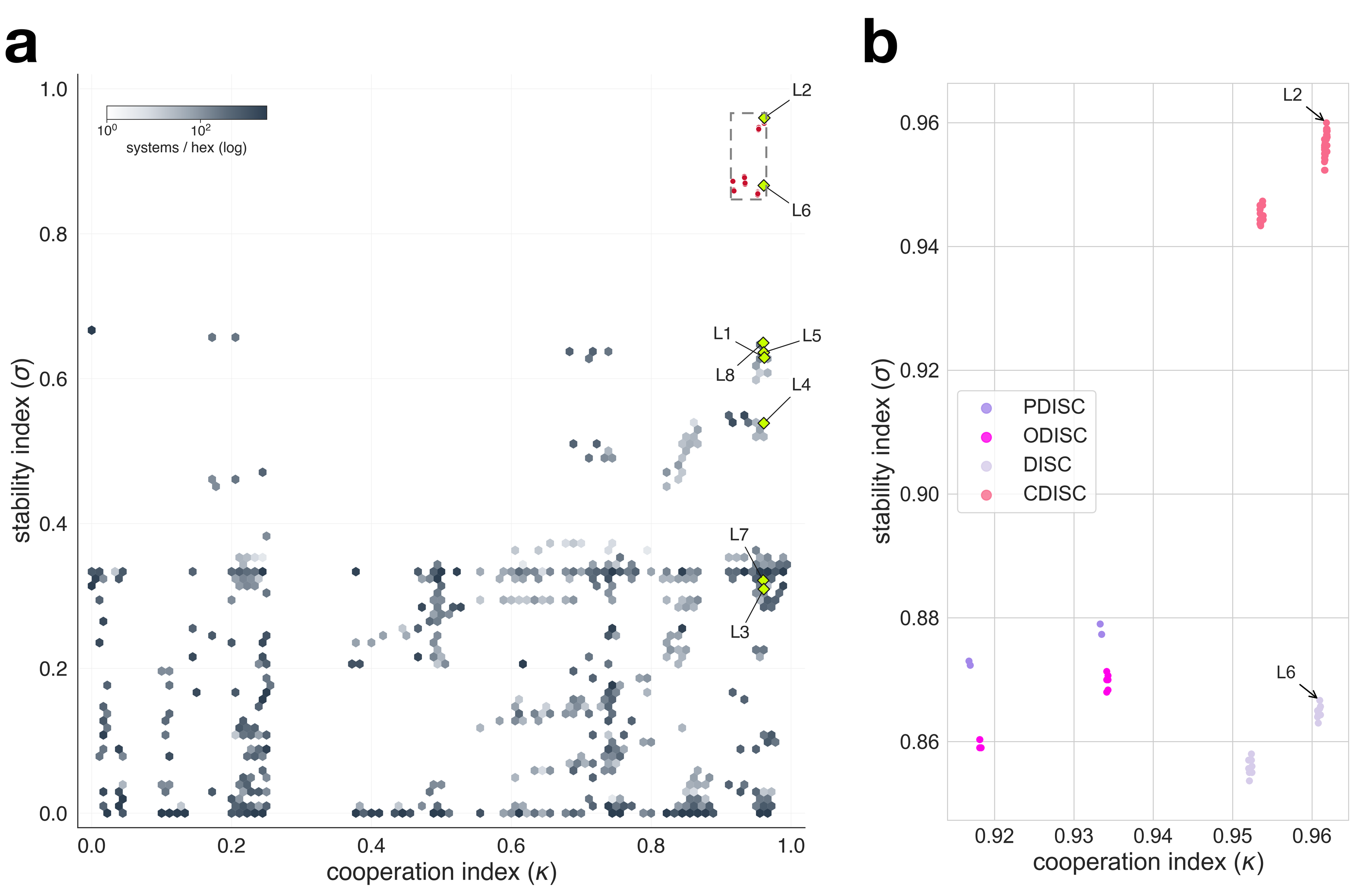}
  \caption{Stability and cooperation indices of all moral systems, without jitter, for $\chi = 0.02$ and $\epsilon = 0.01$. \textbf{a}, Each of the $\numprint{524800}$ moral systems that are distinct up to the relabeling of reputations (one representative per mirror class; Section~\ref{subsec:mirror-classes}) is positioned by its resident cooperation index ($\kappa$) and average stability index ($\sigma$); the consistent discriminating systems are shown as red circles, the leading eight as green diamonds, and the remaining systems as a hexbin density layer on a logarithmic color scale. \textbf{b}, Zoomed-in scatter plot of the consistent discriminating systems, colored by strategy ($\PDISC$, $\ODISC$, $\DISC$, $\CDISC$). No jitter is applied in either panel, so that small differences between individual systems can be read off reliably.\label{fig:nojitter-chi002-eps01}}
\end{figure}

\begin{figure}[p]
  \centering
  \includegraphics[width=\textwidth]{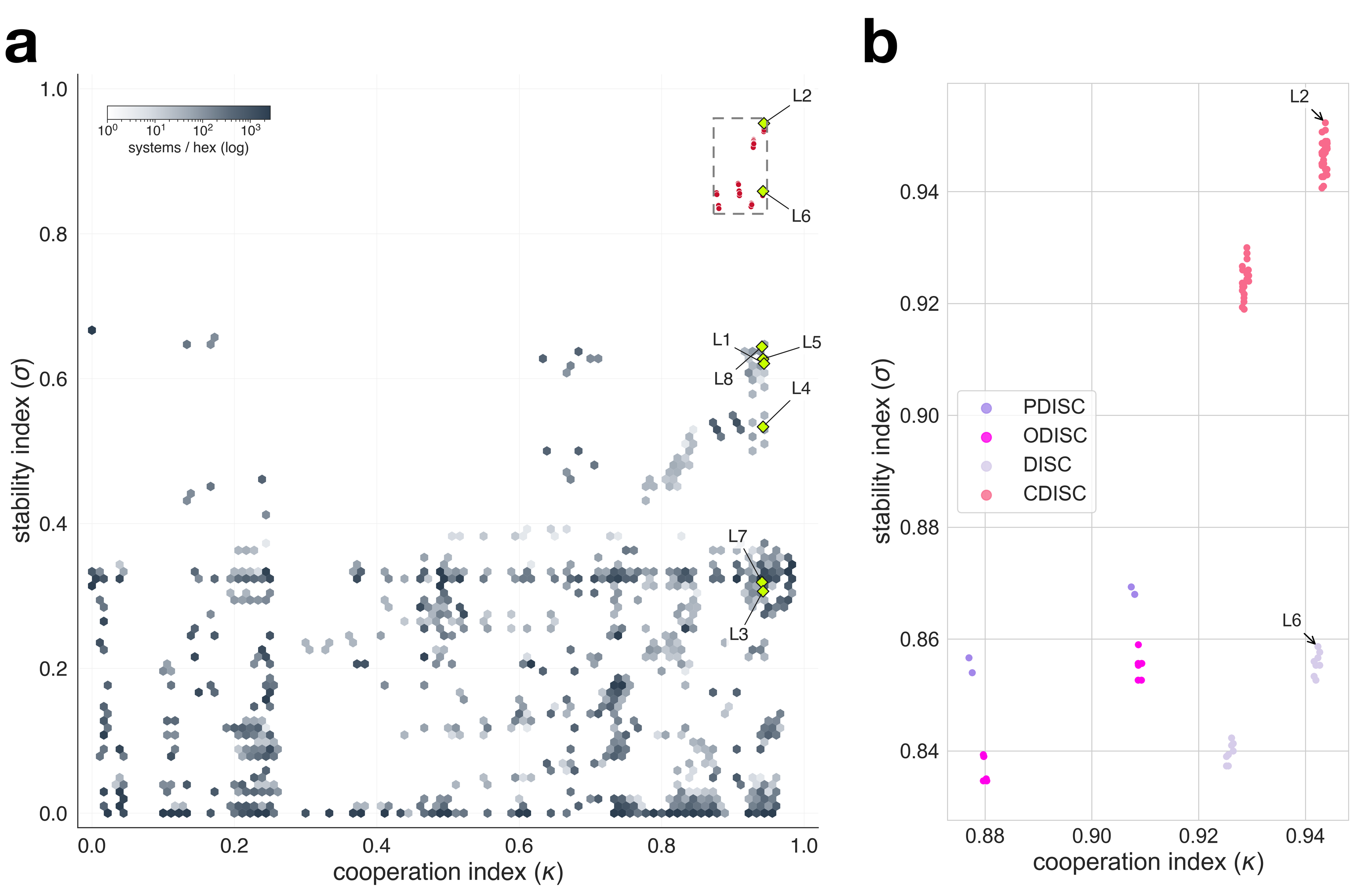}
  \caption{Stability and cooperation indices of all moral systems, without jitter, for $\chi = 0.02$ and $\epsilon = 0.02$. \textbf{a}, Each of the $\numprint{524800}$ moral systems that are distinct up to the relabeling of reputations (one representative per mirror class; Section~\ref{subsec:mirror-classes}) is positioned by its resident cooperation index ($\kappa$) and average stability index ($\sigma$); the consistent discriminating systems are shown as red circles, the leading eight as green diamonds, and the remaining systems as a hexbin density layer on a logarithmic color scale. \textbf{b}, Zoomed-in scatter plot of the consistent discriminating systems, colored by strategy ($\PDISC$, $\ODISC$, $\DISC$, $\CDISC$). No jitter is applied in either panel, so that small differences between individual systems can be read off reliably. This figure is the unjittered counterpart of Fig.~\ref{MT-fig:scatter} of the main text.\label{fig:nojitter-chi002-eps02}}
\end{figure}

\begin{figure}[p]
  \centering
  \includegraphics[width=\textwidth]{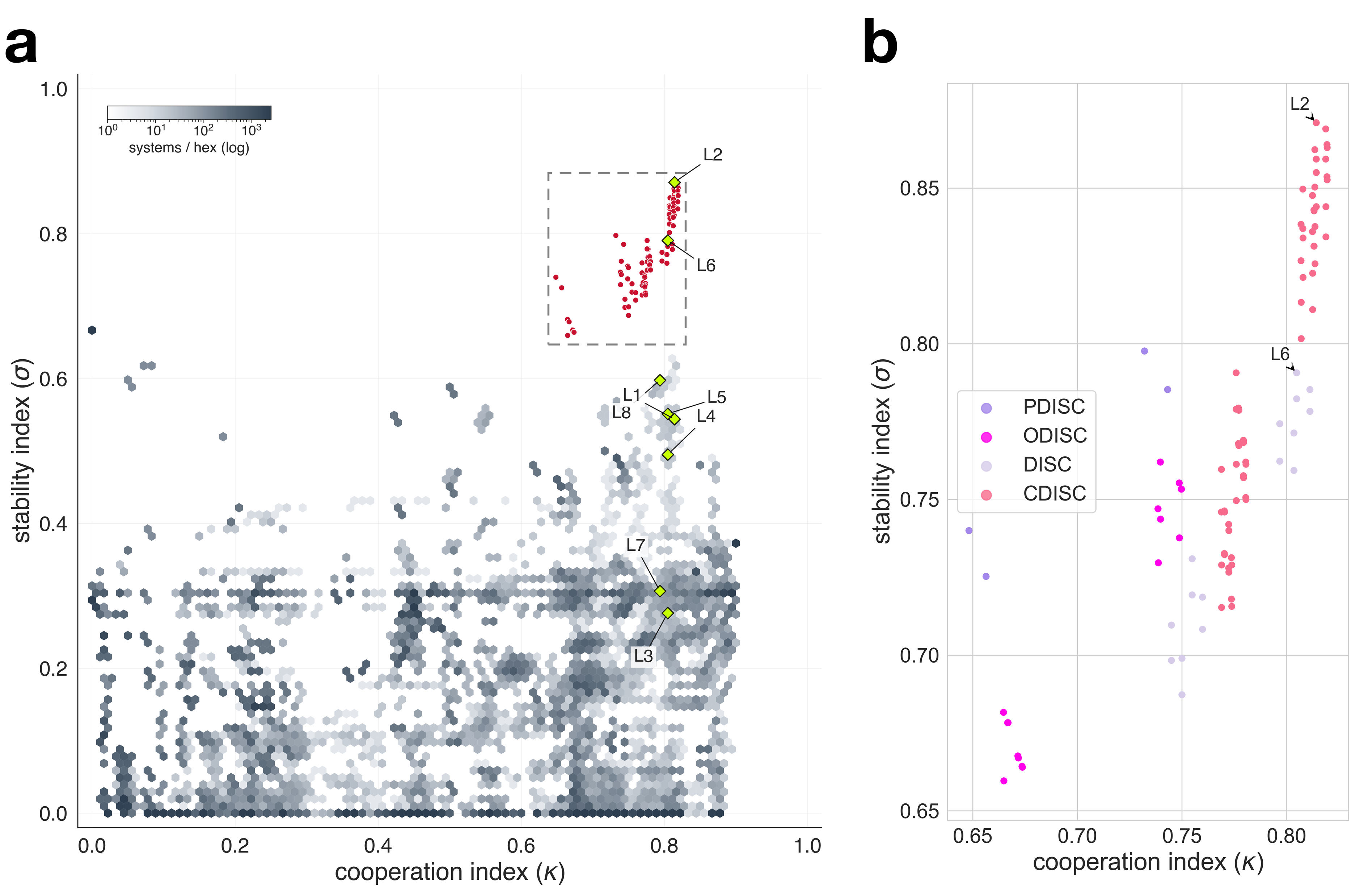}
  \caption{Stability and cooperation indices of all moral systems, without jitter, for $\chi = 0.02$ and $\epsilon = 0.1$. \textbf{a}, Each of the $\numprint{524800}$ moral systems that are distinct up to the relabeling of reputations (one representative per mirror class; Section~\ref{subsec:mirror-classes}) is positioned by its resident cooperation index ($\kappa$) and average stability index ($\sigma$); the consistent discriminating systems are shown as red circles, the leading eight as green diamonds, and the remaining systems as a hexbin density layer on a logarithmic color scale. \textbf{b}, Zoomed-in scatter plot of the consistent discriminating systems, colored by strategy ($\PDISC$, $\ODISC$, $\DISC$, $\CDISC$). No jitter is applied in either panel, so that small differences between individual systems can be read off reliably.\label{fig:nojitter-chi002-eps1}}
\end{figure}

\begin{figure}[p]
  \centering
  \includegraphics[width=\textwidth]{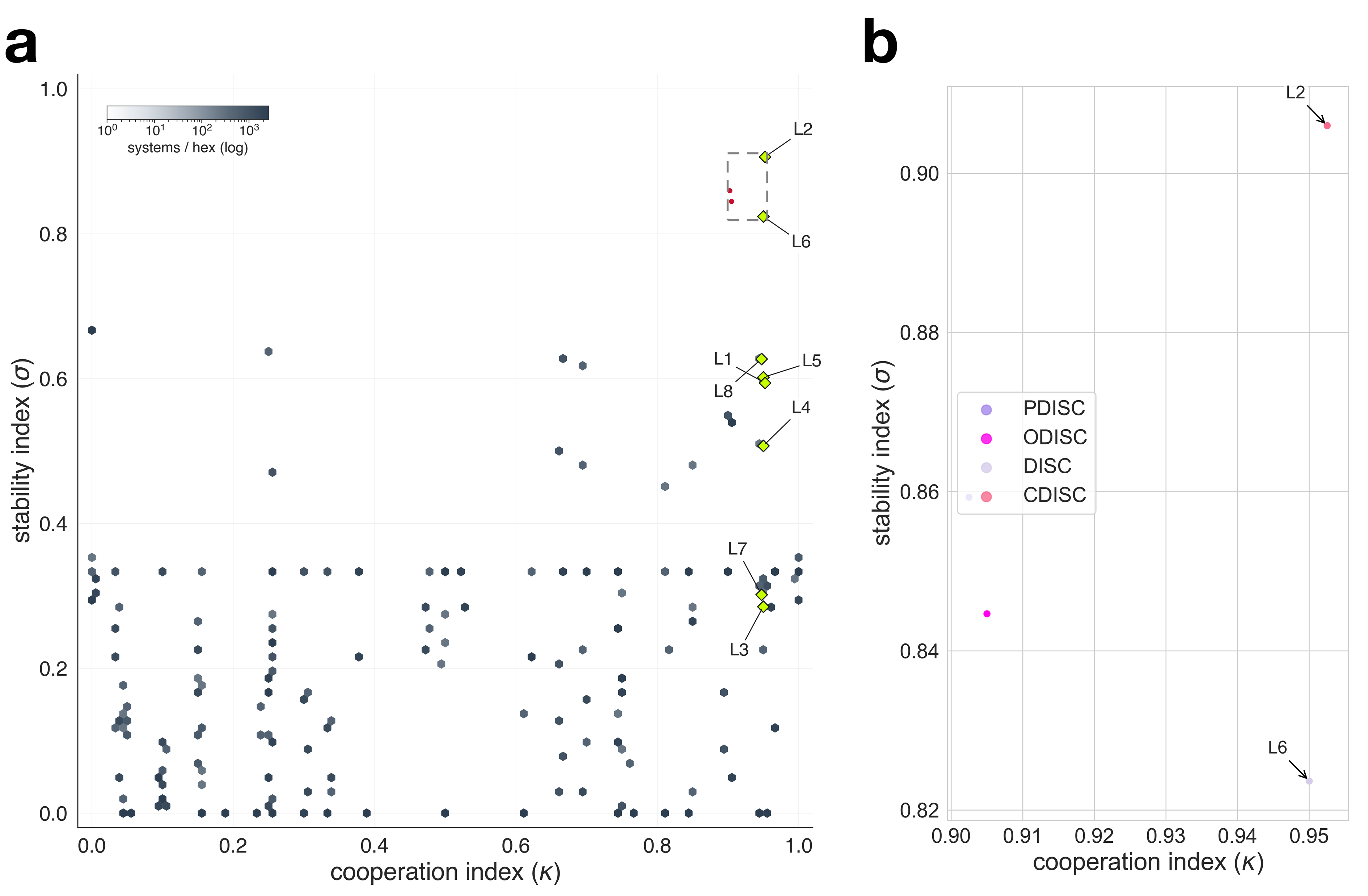}
  \caption{Stability and cooperation indices of all moral systems, without jitter, for $\chi = 0.05$ and $\epsilon = 0$. \textbf{a}, Each of the $\numprint{524800}$ moral systems that are distinct up to the relabeling of reputations (one representative per mirror class; Section~\ref{subsec:mirror-classes}) is positioned by its resident cooperation index ($\kappa$) and average stability index ($\sigma$); the consistent discriminating systems are shown as red circles, the leading eight as green diamonds, and the remaining systems as a hexbin density layer on a logarithmic color scale. \textbf{b}, Zoomed-in scatter plot of the consistent discriminating systems, colored by strategy ($\PDISC$, $\ODISC$, $\DISC$, $\CDISC$). No jitter is applied in either panel, so that small differences between individual systems can be read off reliably.\label{fig:nojitter-chi005-eps00}}
\end{figure}

\begin{figure}[p]
  \centering
  \includegraphics[width=\textwidth]{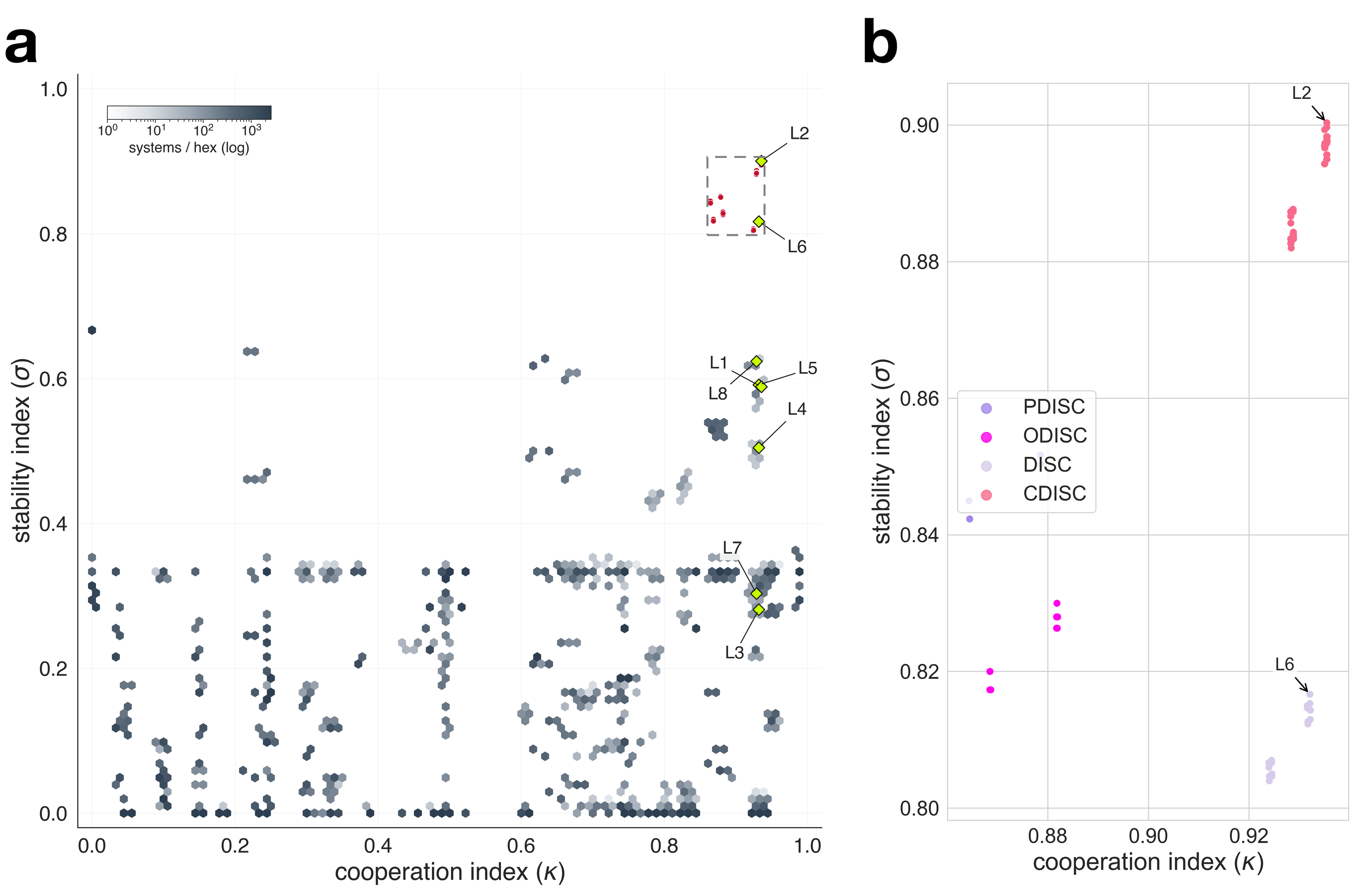}
  \caption{Stability and cooperation indices of all moral systems, without jitter, for $\chi = 0.05$ and $\epsilon = 0.01$. \textbf{a}, Each of the $\numprint{524800}$ moral systems that are distinct up to the relabeling of reputations (one representative per mirror class; Section~\ref{subsec:mirror-classes}) is positioned by its resident cooperation index ($\kappa$) and average stability index ($\sigma$); the consistent discriminating systems are shown as red circles, the leading eight as green diamonds, and the remaining systems as a hexbin density layer on a logarithmic color scale. \textbf{b}, Zoomed-in scatter plot of the consistent discriminating systems, colored by strategy ($\PDISC$, $\ODISC$, $\DISC$, $\CDISC$). No jitter is applied in either panel, so that small differences between individual systems can be read off reliably.\label{fig:nojitter-chi005-eps01}}
\end{figure}

\begin{figure}[p]
  \centering
  \includegraphics[width=\textwidth]{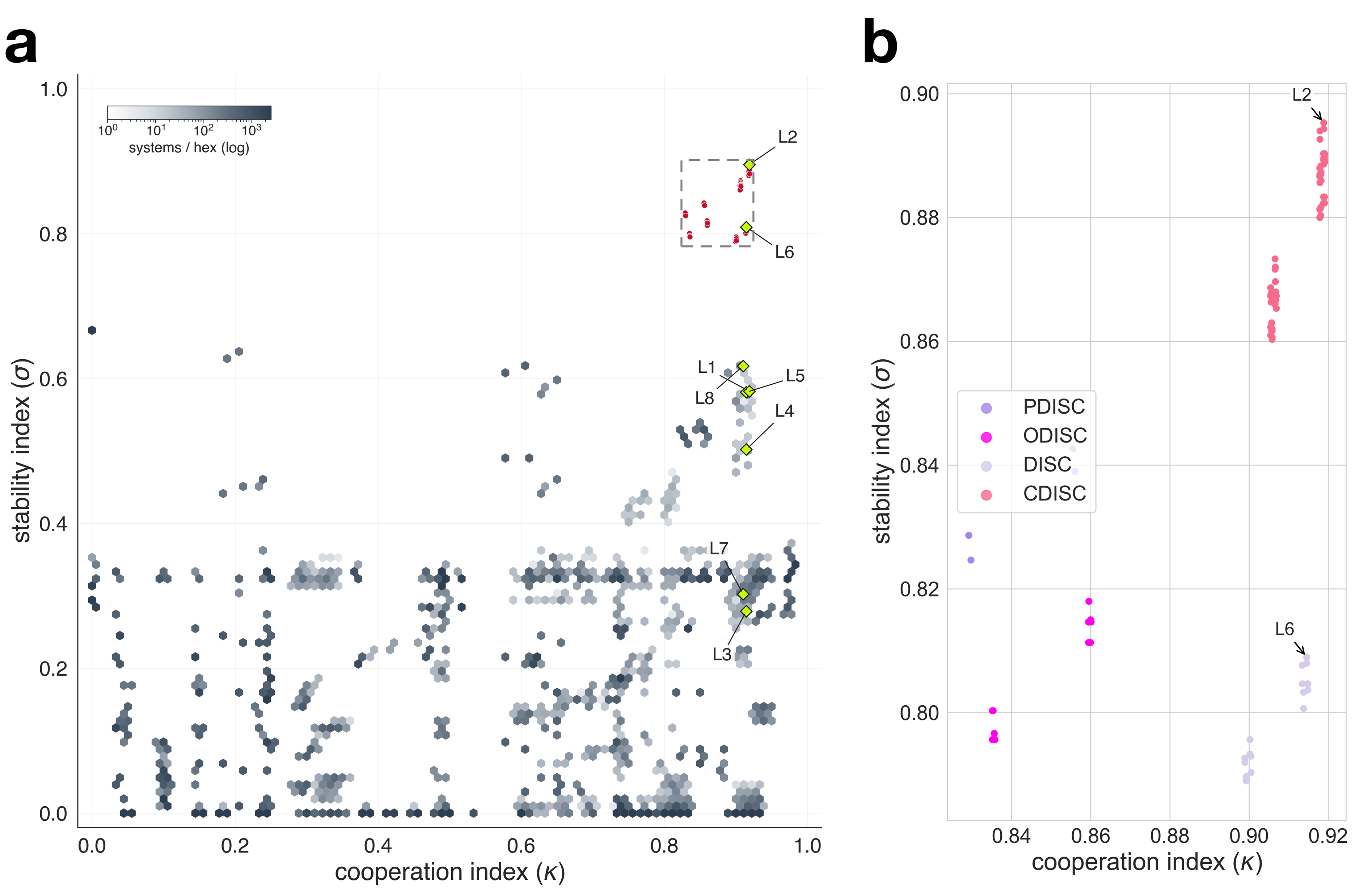}
  \caption{Stability and cooperation indices of all moral systems, without jitter, for $\chi = 0.05$ and $\epsilon = 0.02$. \textbf{a}, Each of the $\numprint{524800}$ moral systems that are distinct up to the relabeling of reputations (one representative per mirror class; Section~\ref{subsec:mirror-classes}) is positioned by its resident cooperation index ($\kappa$) and average stability index ($\sigma$); the consistent discriminating systems are shown as red circles, the leading eight as green diamonds, and the remaining systems as a hexbin density layer on a logarithmic color scale. \textbf{b}, Zoomed-in scatter plot of the consistent discriminating systems, colored by strategy ($\PDISC$, $\ODISC$, $\DISC$, $\CDISC$). No jitter is applied in either panel, so that small differences between individual systems can be read off reliably.\label{fig:nojitter-chi005-eps02}}
\end{figure}

\begin{figure}[p]
  \centering
  \includegraphics[width=\textwidth]{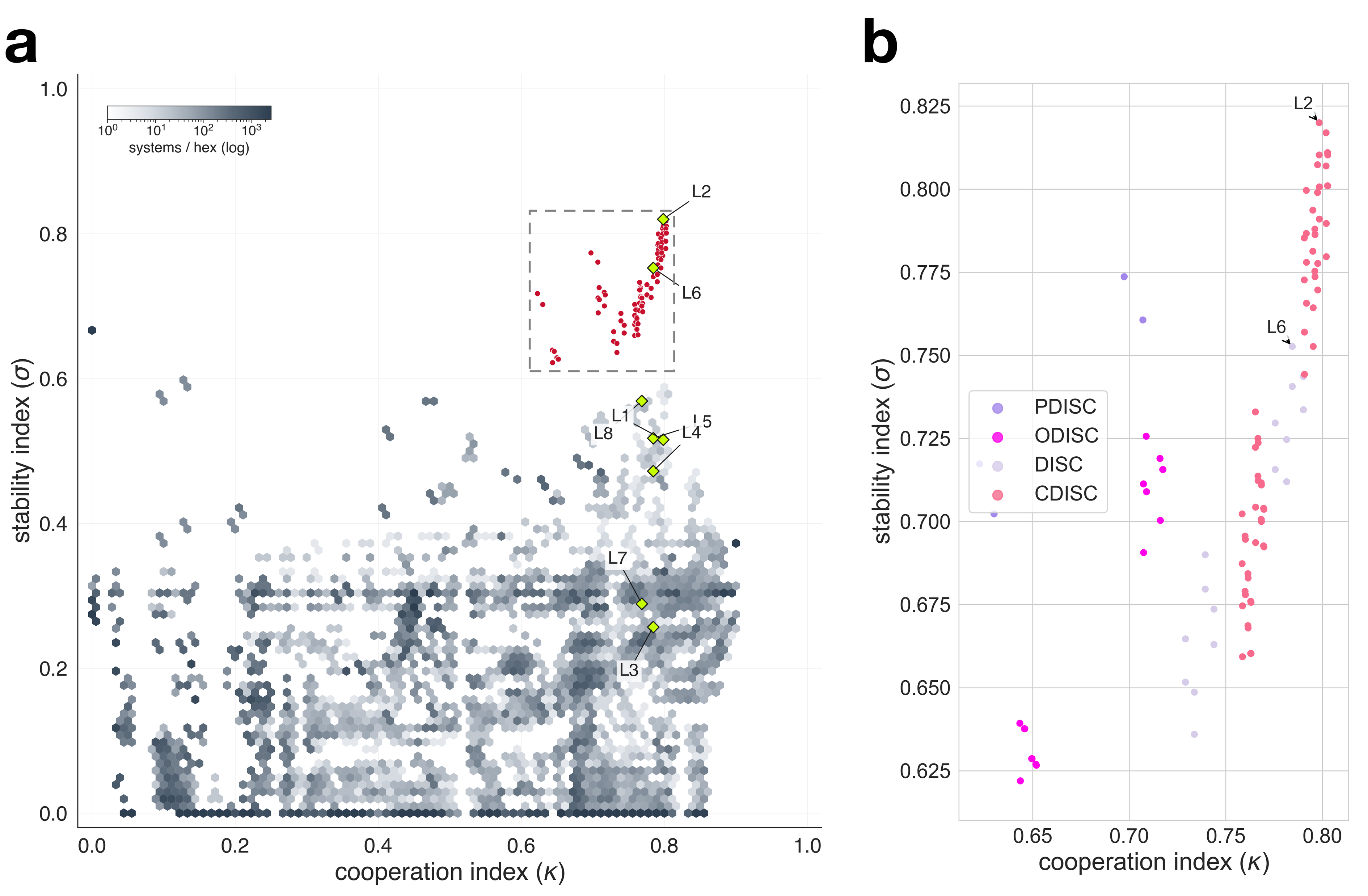}
  \caption{Stability and cooperation indices of all moral systems, without jitter, for $\chi = 0.05$ and $\epsilon = 0.1$. \textbf{a}, Each of the $\numprint{524800}$ moral systems that are distinct up to the relabeling of reputations (one representative per mirror class; Section~\ref{subsec:mirror-classes}) is positioned by its resident cooperation index ($\kappa$) and average stability index ($\sigma$); the consistent discriminating systems are shown as red circles, the leading eight as green diamonds, and the remaining systems as a hexbin density layer on a logarithmic color scale. \textbf{b}, Zoomed-in scatter plot of the consistent discriminating systems, colored by strategy ($\PDISC$, $\ODISC$, $\DISC$, $\CDISC$). No jitter is applied in either panel, so that small differences between individual systems can be read off reliably.\label{fig:nojitter-chi005-eps1}}
\end{figure}

\begin{figure}[p]
  \centering
  \includegraphics[width=\textwidth]{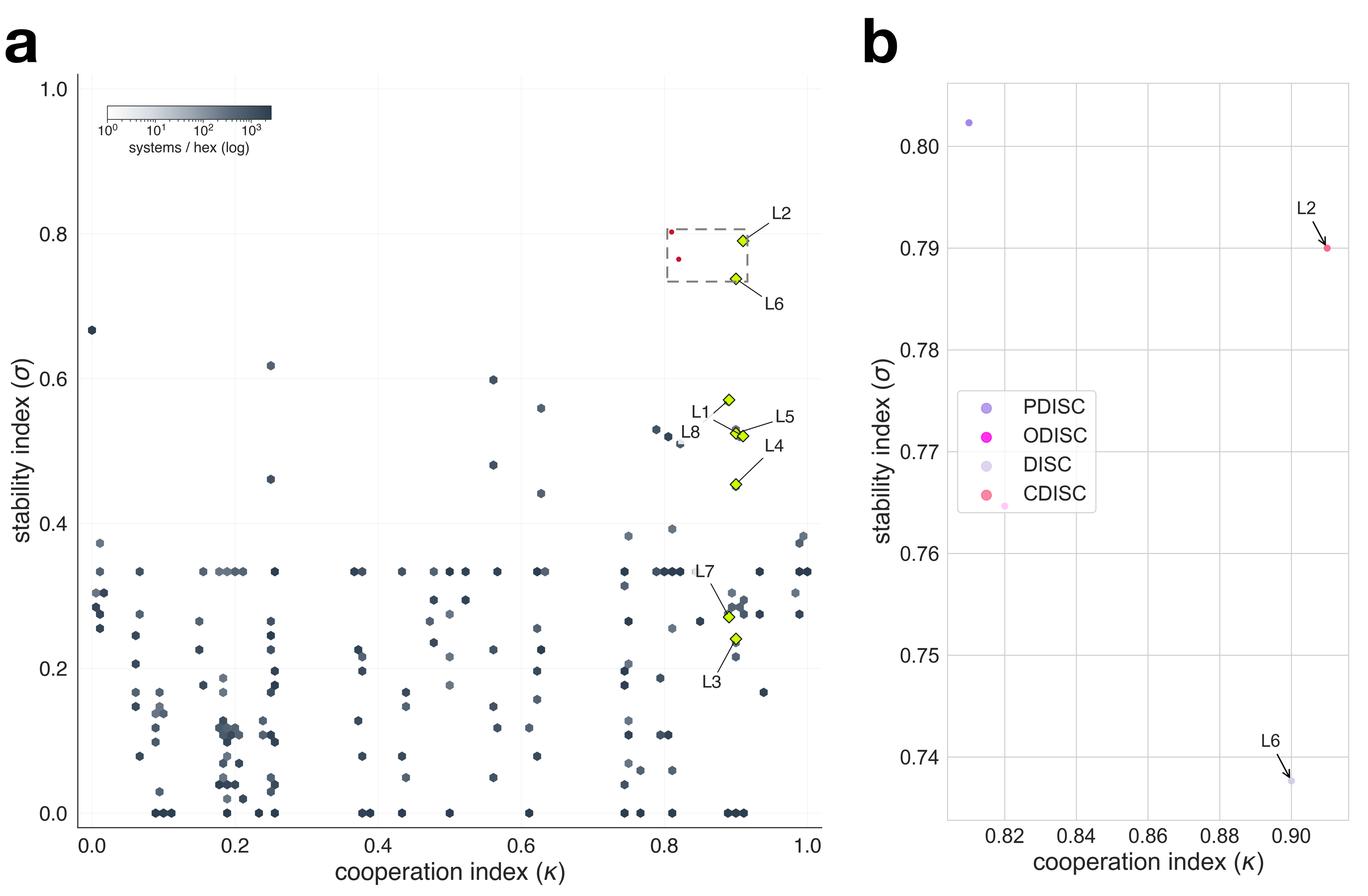}
  \caption{Stability and cooperation indices of all moral systems, without jitter, for $\chi = 0.1$ and $\epsilon = 0$. \textbf{a}, Each of the $\numprint{524800}$ moral systems that are distinct up to the relabeling of reputations (one representative per mirror class; Section~\ref{subsec:mirror-classes}) is positioned by its resident cooperation index ($\kappa$) and average stability index ($\sigma$); the consistent discriminating systems are shown as red circles, the leading eight as green diamonds, and the remaining systems as a hexbin density layer on a logarithmic color scale. \textbf{b}, Zoomed-in scatter plot of the consistent discriminating systems, colored by strategy ($\PDISC$, $\ODISC$, $\DISC$, $\CDISC$). No jitter is applied in either panel, so that small differences between individual systems can be read off reliably.\label{fig:nojitter-chi01-eps00}}
\end{figure}

\begin{figure}[p]
  \centering
  \includegraphics[width=\textwidth]{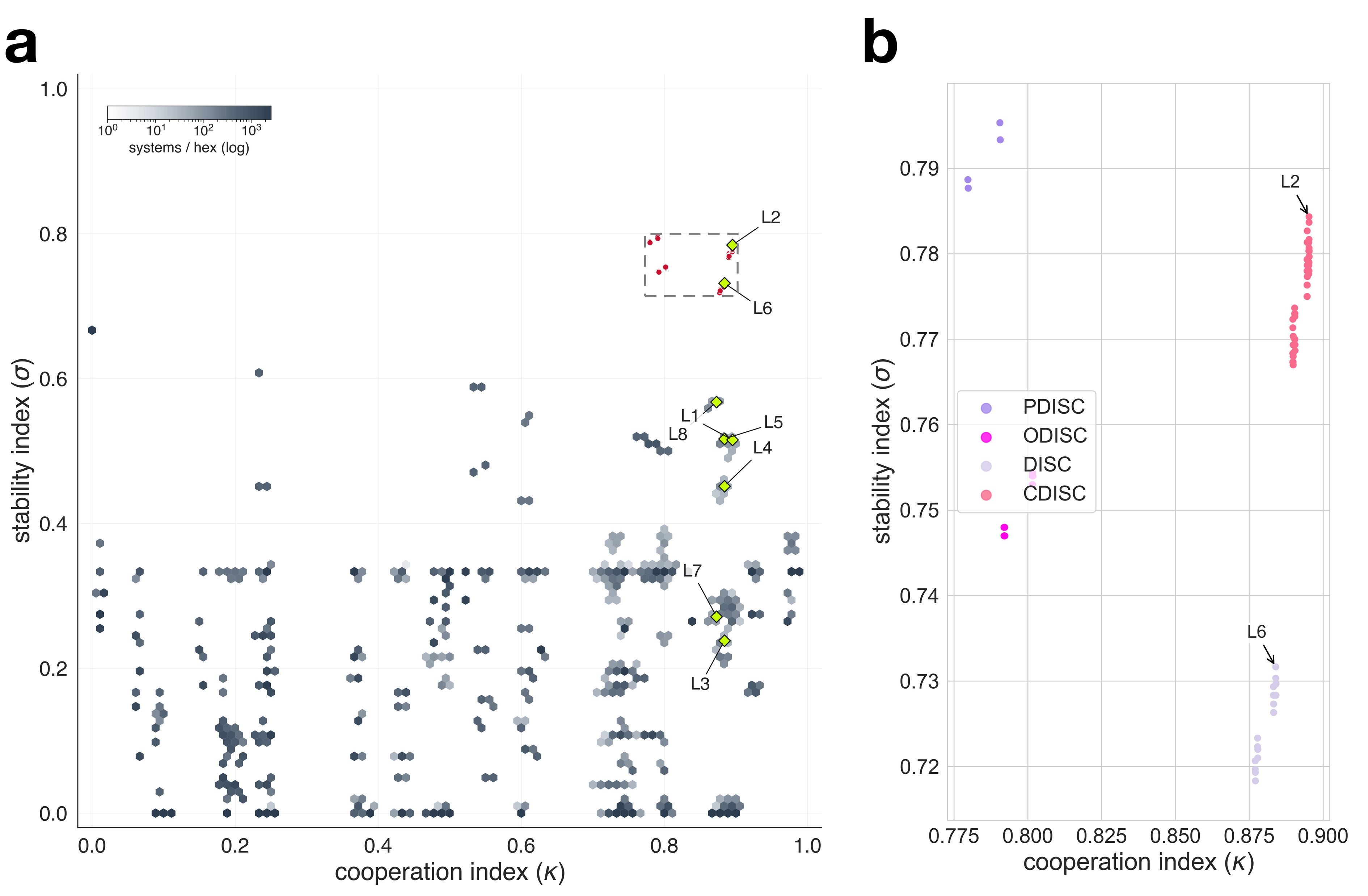}
  \caption{Stability and cooperation indices of all moral systems, without jitter, for $\chi = 0.1$ and $\epsilon = 0.01$. \textbf{a}, Each of the $\numprint{524800}$ moral systems that are distinct up to the relabeling of reputations (one representative per mirror class; Section~\ref{subsec:mirror-classes}) is positioned by its resident cooperation index ($\kappa$) and average stability index ($\sigma$); the consistent discriminating systems are shown as red circles, the leading eight as green diamonds, and the remaining systems as a hexbin density layer on a logarithmic color scale. \textbf{b}, Zoomed-in scatter plot of the consistent discriminating systems, colored by strategy ($\PDISC$, $\ODISC$, $\DISC$, $\CDISC$). No jitter is applied in either panel, so that small differences between individual systems can be read off reliably.\label{fig:nojitter-chi01-eps01}}
\end{figure}

\begin{figure}[p]
  \centering
  \includegraphics[width=\textwidth]{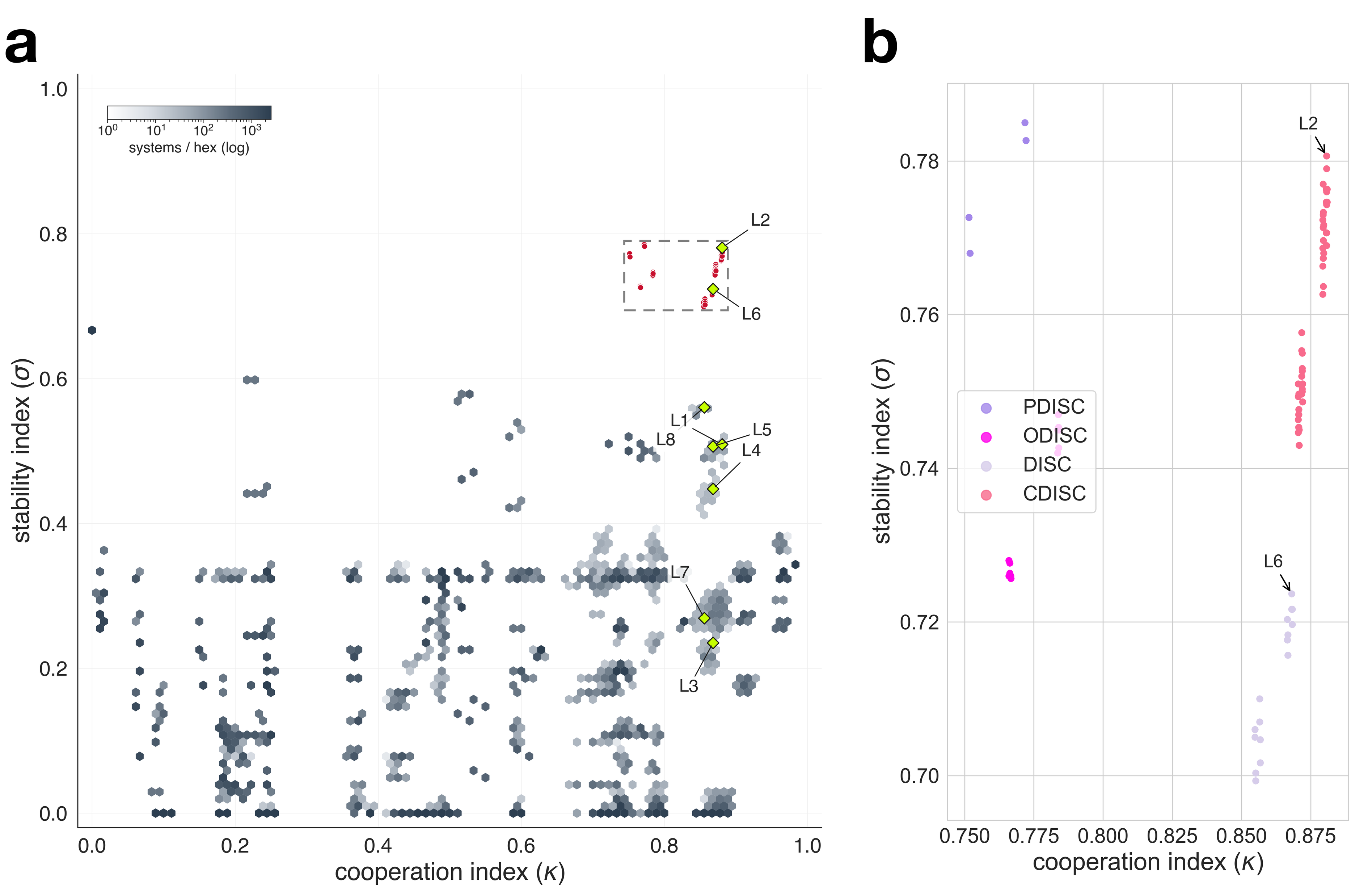}
  \caption{Stability and cooperation indices of all moral systems, without jitter, for $\chi = 0.1$ and $\epsilon = 0.02$. \textbf{a}, Each of the $\numprint{524800}$ moral systems that are distinct up to the relabeling of reputations (one representative per mirror class; Section~\ref{subsec:mirror-classes}) is positioned by its resident cooperation index ($\kappa$) and average stability index ($\sigma$); the consistent discriminating systems are shown as red circles, the leading eight as green diamonds, and the remaining systems as a hexbin density layer on a logarithmic color scale. \textbf{b}, Zoomed-in scatter plot of the consistent discriminating systems, colored by strategy ($\PDISC$, $\ODISC$, $\DISC$, $\CDISC$). No jitter is applied in either panel, so that small differences between individual systems can be read off reliably.\label{fig:nojitter-chi01-eps02}}
\end{figure}

\begin{figure}[p]
  \centering
  \includegraphics[width=\textwidth]{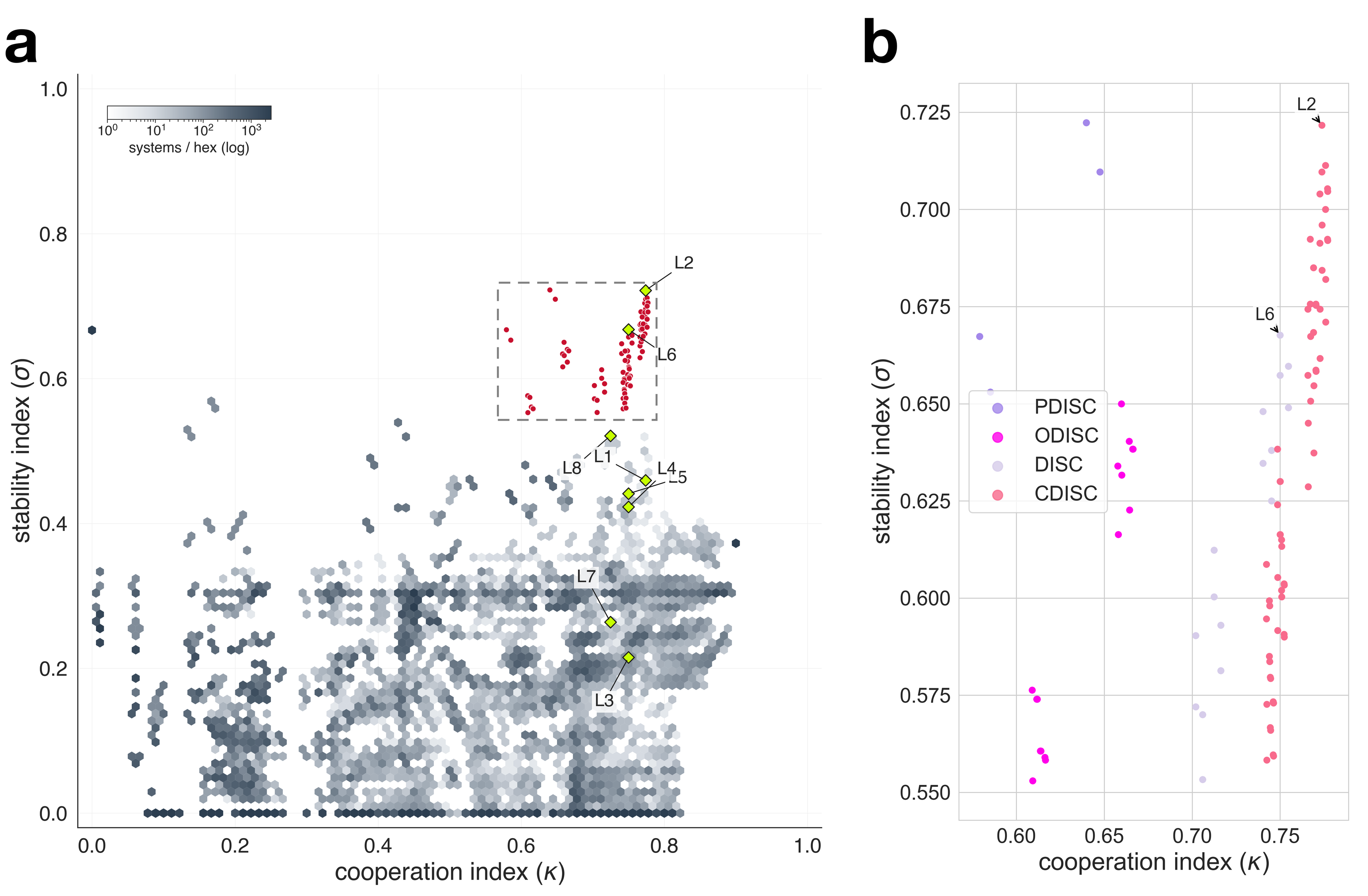}
  \caption{Stability and cooperation indices of all moral systems, without jitter, for $\chi = 0.1$ and $\epsilon = 0.1$. \textbf{a}, Each of the $\numprint{524800}$ moral systems that are distinct up to the relabeling of reputations (one representative per mirror class; Section~\ref{subsec:mirror-classes}) is positioned by its resident cooperation index ($\kappa$) and average stability index ($\sigma$); the consistent discriminating systems are shown as red circles, the leading eight as green diamonds, and the remaining systems as a hexbin density layer on a logarithmic color scale. \textbf{b}, Zoomed-in scatter plot of the consistent discriminating systems, colored by strategy ($\PDISC$, $\ODISC$, $\DISC$, $\CDISC$). No jitter is applied in either panel, so that small differences between individual systems can be read off reliably.\label{fig:nojitter-chi01-eps1}}
\end{figure}

\clearpage

\subsection{Disaggregated stability indices}\label{subsec:disaggregated-indices}

The average stability index $\sigmabar$~\eqref{avg-sigma} used in Fig.~\ref{MT-fig:scatter} of the main text compresses the three classes of cooperative dilemma into a single number, so that systems with similar averages may owe their stability to very different classes.
Figs.~\ref{fig:disaggregated_pd}--\ref{fig:disaggregated_sh} therefore plot the stability index of each class separately.
At the small error rates considered ($\chi = \epsilon = 0.02$), the Monte Carlo estimates lie close to the limiting indices derived analytically in Sections~\ref{subsec:stab-anal-disc} and~\ref{subsec:stab-anal-leading-eight} (Tables~\ref{tab:consistent-disc-summary} and~\ref{tab:leading-eight-summary}), which they thus corroborate.

The zoomed-in panels b show how the stability of the consistent discriminating systems is distributed across classes.
Except for those using $\CDISC$, all consistent discriminating systems have one weak class, in which they fall well below the rest and cluster near the limiting index of $2/3$: the systems using $\ODISC$ or $\PDISC$ in prisoner's dilemmas ($\sigmaPD \approx 0.6$; Fig.~\ref{fig:disaggregated_pd}b), and the systems using $\DISC$ in snowdrift games ($\sigmaSG \approx 0.7$; Fig.~\ref{fig:disaggregated_sg}b).
No system is weak in stag hunts, where all consistent discriminating systems are highly stable ($\sigmaSH > 0.92$; Fig.~\ref{fig:disaggregated_sh}b).
Within any single class, the systems using $\CDISC$ need not be the most stable: those using $\DISC$ attain comparable values in prisoner's dilemmas, and those using $\ODISC$ or $\PDISC$ are stable in essentially every stag hunt ($\sigmaSH \approx 1$).
What distinguishes the systems using $\CDISC$ is instead the absence of a weak class, in line with their universal stability (Theorem~\ref{thm:universal-stability}).

The disaggregated indices also clarify the position of the leading eight (green diamonds in panels a).
All eight are highly stable in prisoner's dilemmas ($\sigmaPD \approx 0.9$; Fig.~\ref{fig:disaggregated_pd}a), the class that contains the donation game for which they were originally derived; outside it, they scatter in line with Table~\ref{tab:leading-eight-summary}.
In snowdrift games (Fig.~\ref{fig:disaggregated_sg}a), only L1 and L2 remain highly stable; L4 and L6 sit near their limiting index of $2/3$, alongside the consistent systems using $\DISC$, while L3, L5, L7, and L8, which can be invaded by $\CDISC$ in every snowdrift game, drop to $\sigmaSG \approx 0$.
In stag hunts (Fig.~\ref{fig:disaggregated_sh}a), L2, L5, L6, and L8 remain highly stable, while L1, L3, L4, and L7, which can be invaded by unconditional cooperators, fall to $\sigmaSH \approx 0$.
L2 is thus the only leading-eight system that is highly stable in all three classes, and the high average stability of L6 in Fig.~\ref{MT-fig:scatter} of the main text rests on prisoner's dilemmas and stag hunts alone.

\begin{figure}[p]
  \centering
  \includegraphics[width=\textwidth]{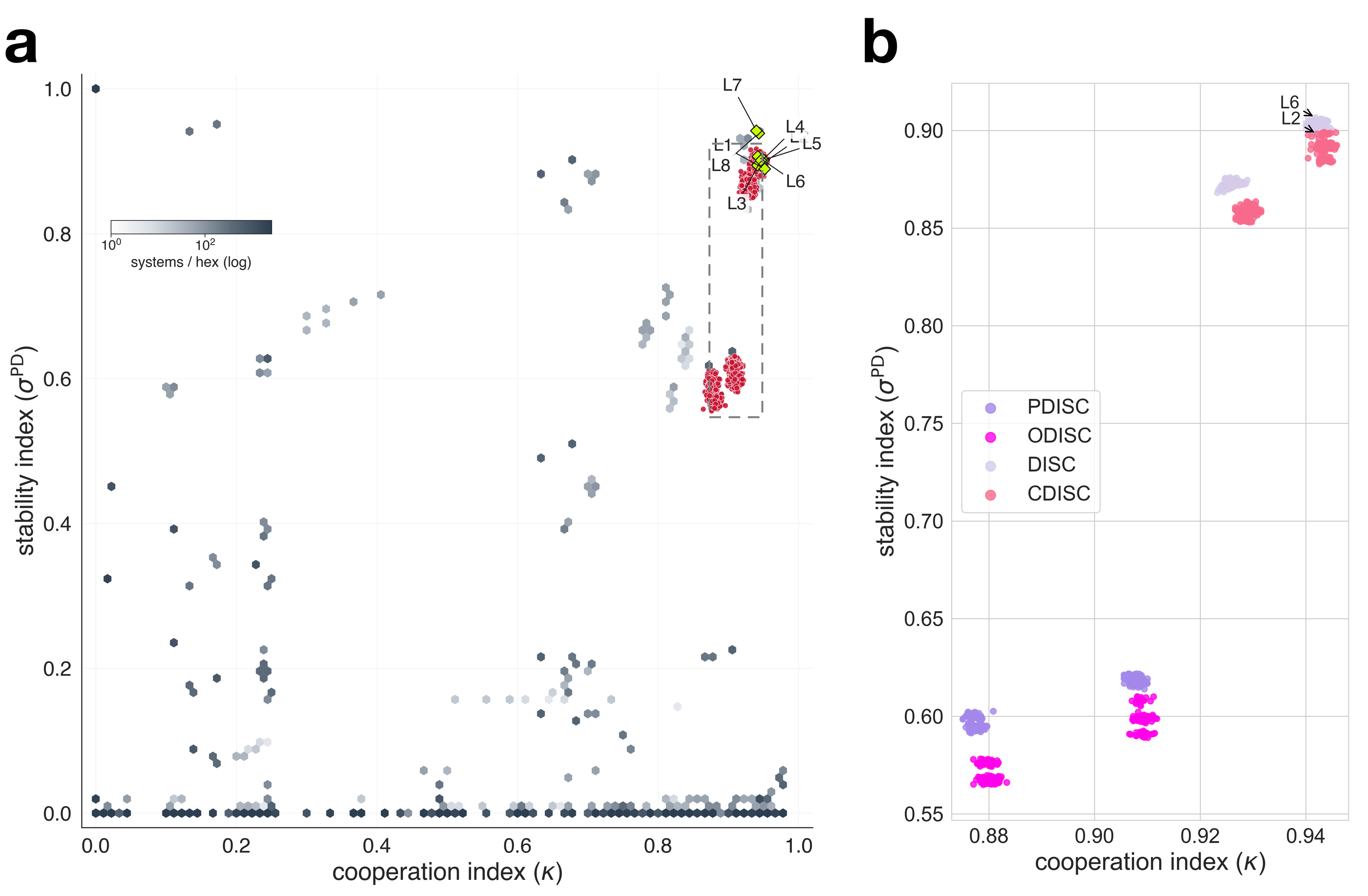}
  \caption{Stability and cooperation indices of all moral systems in the space of prisoner's dilemmas only, for $\chi = \epsilon = 0.02$. Unlike Fig.~\ref{MT-fig:scatter} of the main text, the vertical axis is the stability index of a single class of dilemma, $\sigma^{\mathrm{PD}}$, rather than the average $\sigmabar$ across the three classes. \textbf{a}, Each of the $\numprint{524800}$ moral systems that are distinct up to the relabeling of reputations (one representative per mirror class; Section~\ref{subsec:mirror-classes}) is positioned by its resident cooperation index ($\kappa$) and its prisoner's dilemma stability index ($\sigma^{\mathrm{PD}}$); the consistent discriminating systems are shown as red circles, the leading eight as green diamonds, and the remaining systems as a hexbin density layer on a logarithmic color scale. \textbf{b}, Zoomed-in scatter plot of the consistent discriminating systems, colored by strategy ($\PDISC$, $\ODISC$, $\DISC$, $\CDISC$). Points are jittered for clarity in both panels.\label{fig:disaggregated_pd}}
\end{figure}

\begin{figure}[p]
  \centering
  \includegraphics[width=\textwidth]{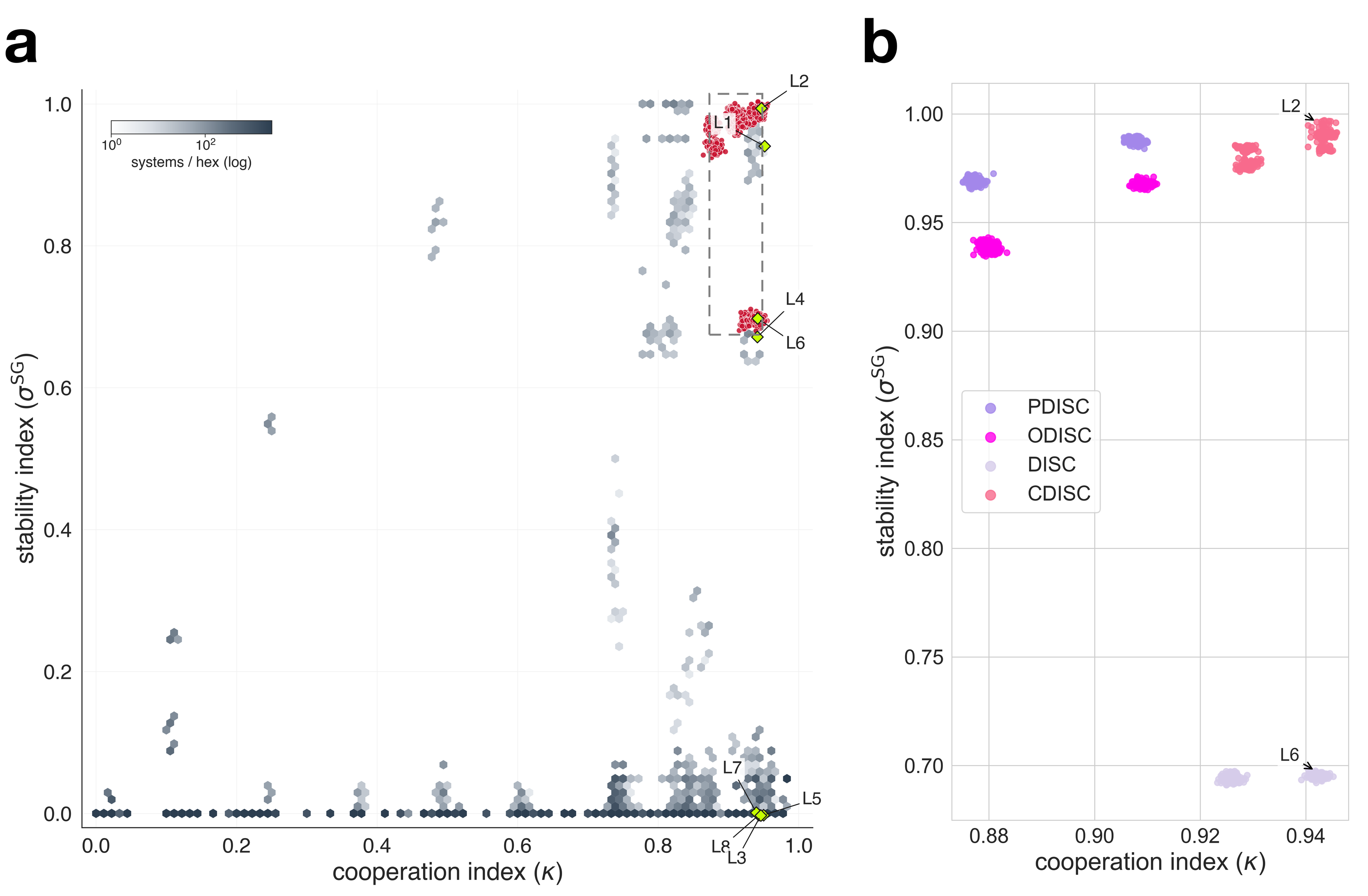}
  \caption{Stability and cooperation indices of all moral systems in the space of snowdrift games only, for $\chi = \epsilon = 0.02$. Unlike Fig.~\ref{MT-fig:scatter} of the main text, the vertical axis is the stability index of a single class of dilemma, $\sigma^{\mathrm{SG}}$, rather than the average $\sigmabar$ across the three classes. \textbf{a}, Each of the $\numprint{524800}$ moral systems that are distinct up to the relabeling of reputations (one representative per mirror class; Section~\ref{subsec:mirror-classes}) is positioned by its resident cooperation index ($\kappa$) and its snowdrift game stability index ($\sigma^{\mathrm{SG}}$); the consistent discriminating systems are shown as red circles, the leading eight as green diamonds, and the remaining systems as a hexbin density layer on a logarithmic color scale. \textbf{b}, Zoomed-in scatter plot of the consistent discriminating systems, colored by strategy ($\PDISC$, $\ODISC$, $\DISC$, $\CDISC$). Points are jittered for clarity in both panels.\label{fig:disaggregated_sg}}
\end{figure}

\begin{figure}[p]
  \centering
  \includegraphics[width=\textwidth]{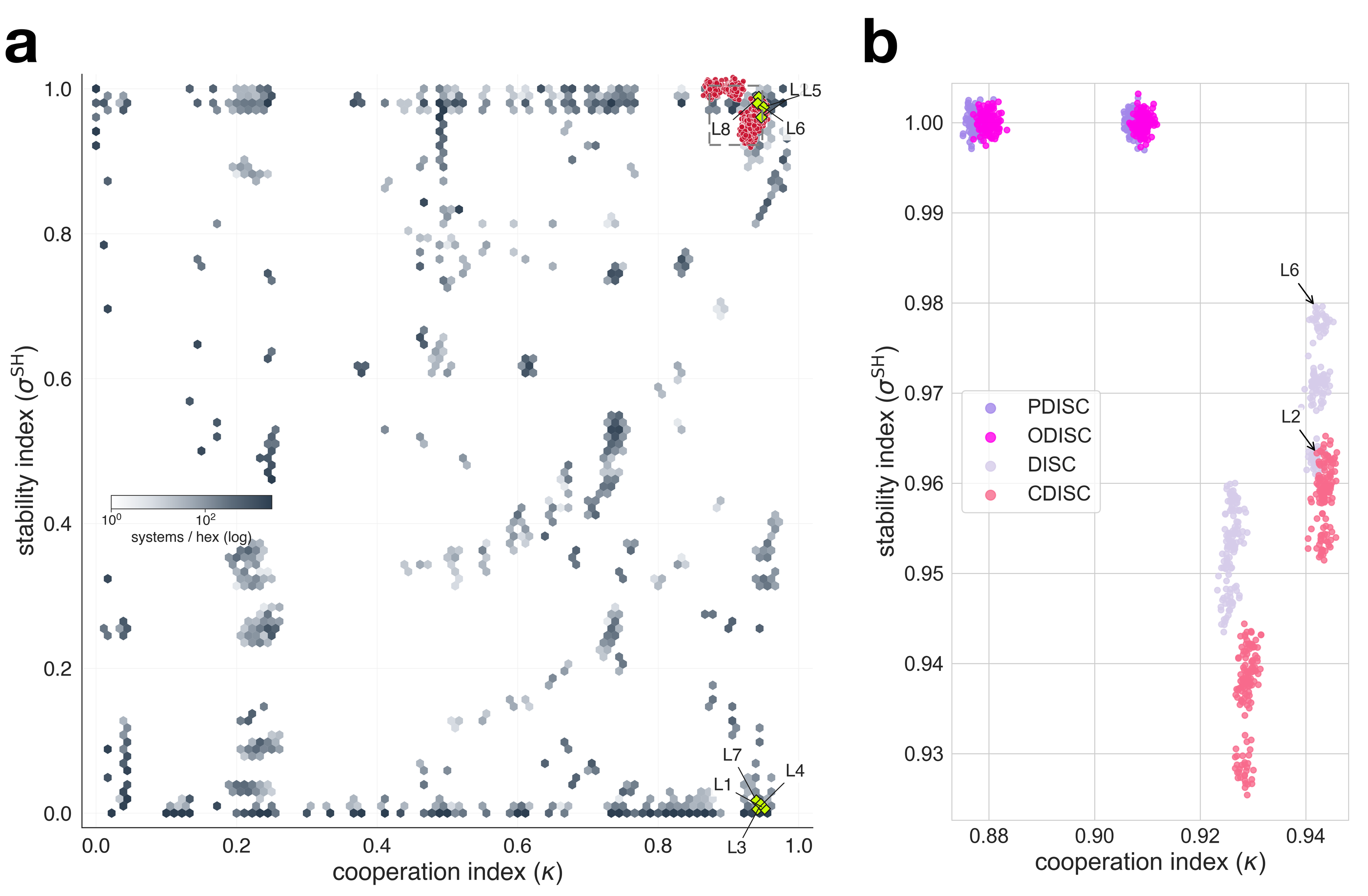}
  \caption{Stability and cooperation indices of all moral systems in the space of stag hunts only, for $\chi = \epsilon = 0.02$. Unlike Fig.~\ref{MT-fig:scatter} of the main text, the vertical axis is the stability index of a single class of dilemma, $\sigma^{\mathrm{SH}}$, rather than the average $\sigmabar$ across the three classes. \textbf{a}, Each of the $\numprint{524800}$ moral systems that are distinct up to the relabeling of reputations (one representative per mirror class; Section~\ref{subsec:mirror-classes}) is positioned by its resident cooperation index ($\kappa$) and its stag hunt stability index ($\sigma^{\mathrm{SH}}$); the consistent discriminating systems are shown as red circles, the leading eight as green diamonds, and the remaining systems as a hexbin density layer on a logarithmic color scale. \textbf{b}, Zoomed-in scatter plot of the consistent discriminating systems, colored by strategy ($\PDISC$, $\ODISC$, $\DISC$, $\CDISC$); note the compressed vertical scale, spanning less than $0.08$. Points are jittered for clarity in both panels.\label{fig:disaggregated_sh}}
\end{figure}

\clearpage

\subsection{Area-weighted stability index}\label{subsec:area-weighted}

The average stability index $\sigmabar$~\eqref{avg-sigma} gives the three classes of cooperative dilemma equal weight.
The area-weighted index $\sigmabarw$ of Section~\ref{subsubsec:weighted-sigma} instead weights each class in proportion to the area of its normalized payoff region.
Figs.~\ref{fig:scatter_zoom_area} and~\ref{fig:scatter_panel_area} show that this choice makes no qualitative difference.
The consistent discriminating systems remain the only cluster to combine high stability with high cooperation, both at the baseline error rates -- where the systems using $\CDISC$ are again the most stable of all -- and across the error-rate grid of Fig.~\ref{fig:alternative-errors}.
The main visible change is a tighter gap between the cluster and the rest, as the ceiling for systems outside the cluster rises from $\sigmabar = 2/3$ to $\sigmabarw = 7/9$ (exact values, attained at every error-rate combination we examine); both are attained by fully uncooperative systems ($\kappa = 0$), which are stable in every prisoner's dilemma and stag hunt but in no snowdrift game.

\begin{figure}[tp]
    \centering
    \includegraphics[width=\textwidth]{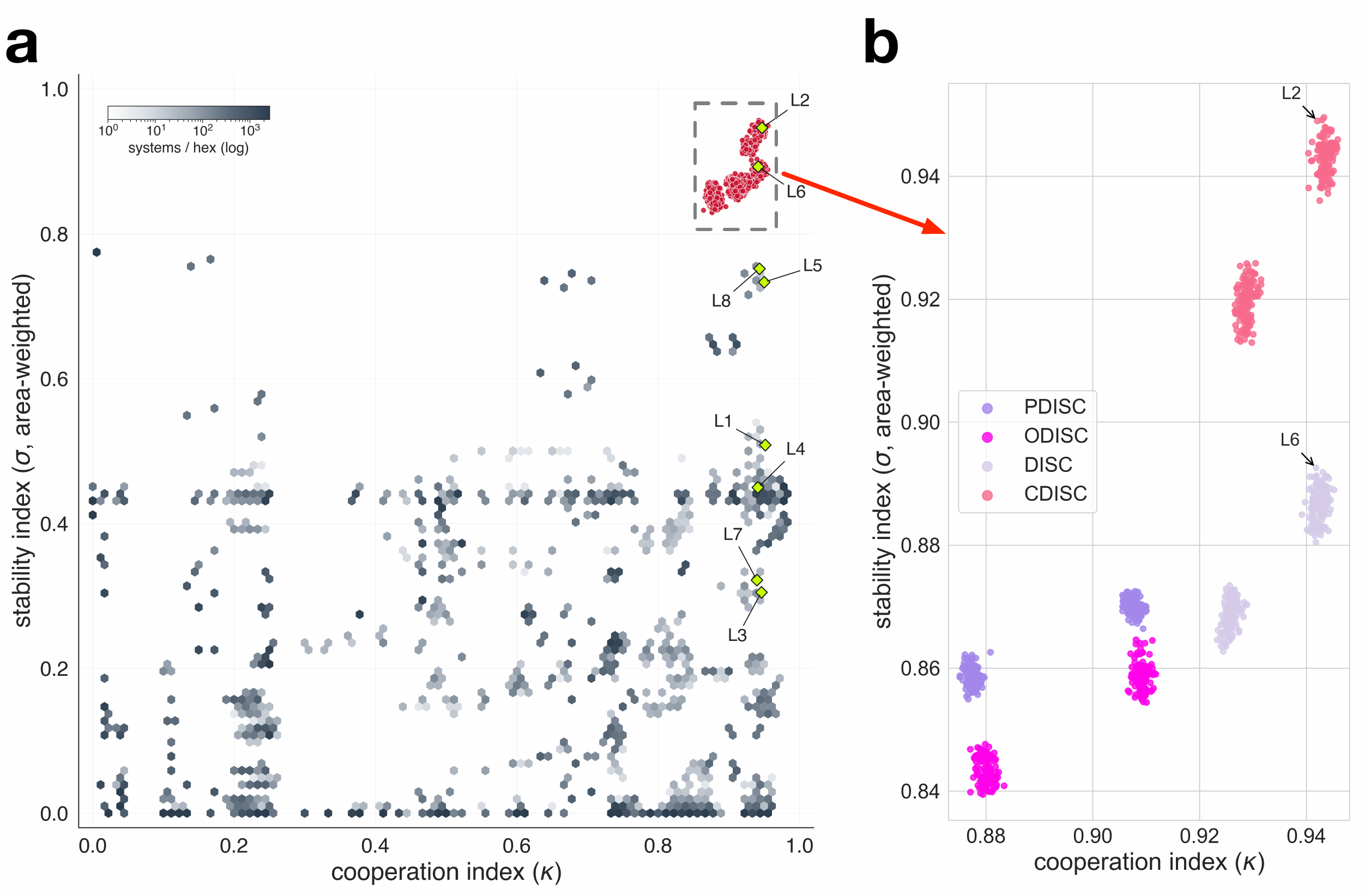}
    \caption{Stability and cooperation indices of all moral systems under the area-weighted index $\sigmabarw$ of Section~\ref{subsubsec:weighted-sigma}, for $\chi = \epsilon = 0.02$; compare Fig.~\ref{MT-fig:scatter} of the main text, which uses the equally weighted index $\sigmabar$. \textbf{a}, Each of the $\numprint{524800}$ moral systems that are distinct up to the relabeling of reputations (one representative per mirror class; Section~\ref{subsec:mirror-classes}) is positioned by its resident cooperation index ($\kappa$) and area-weighted stability index ($\sigmabarw$); the consistent discriminating systems are shown as red circles, the leading eight as green diamonds, and the remaining systems as a hexbin density layer on a logarithmic color scale. \textbf{b}, Zoomed-in scatter plot of the consistent discriminating systems (jittered for clarity), colored by strategy ($\PDISC$, $\ODISC$, $\DISC$, $\CDISC$); as in Fig.~\ref{MT-fig:scatter}b, the systems using $\CDISC$ achieve the highest stability.\label{fig:scatter_zoom_area}}
\end{figure}

\begin{figure}[tp]
    \centering
    \includegraphics[width=\textwidth]{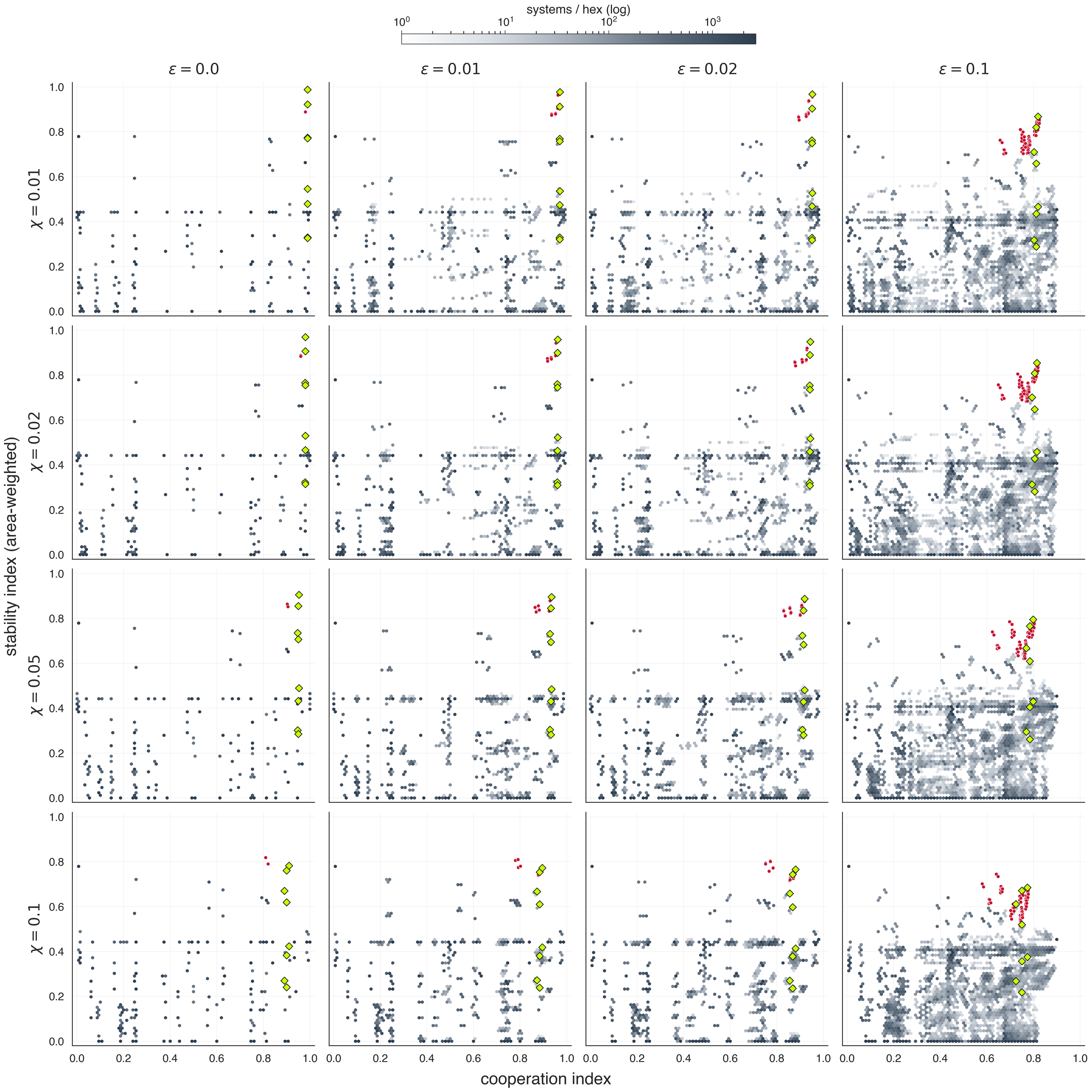}
    \caption{Stability and cooperation indices across a range of error rates, under the area-weighted index $\sigmabarw$ of Section~\ref{subsubsec:weighted-sigma}; compare Fig.~\ref{fig:alternative-errors}, which uses the equally weighted index $\sigmabar$. Each panel shows, for one combination of execution error rate $\epsilon \in \{0, 0.01, 0.02, 0.1\}$ (columns) and assignment error rate $\chi \in \{0.01, 0.02, 0.05, 0.1\}$ (rows), the cooperation index ($\kappa$) and area-weighted stability index ($\sigmabarw$) of each of the $\numprint{524800}$ mirror-class representatives (Section~\ref{subsec:mirror-classes}). The consistent discriminating systems are shown as red circles, the leading eight as green diamonds, and the remaining systems as a hexbin density layer on a logarithmic color scale.}
    \label{fig:scatter_panel_area}
\end{figure}

\end{document}